\newif\iflong
\newif\ifshort
\tikzset{
  circ/.style = {circle,draw,fill,inner sep=1.3pt},
  mcirc/.style = {circle,draw,fill,inner sep=1pt},
  circR/.style = {circle,draw=red,fill=red,text=red,inner sep=1.3pt},
  circG/.style = {circle,draw=green,fill=green,text=green,inner sep=1.3pt},
  circB/.style = {circle,draw=blue,fill=blue,text=blue,inner sep=1.3pt},
  circb/.style = {circle,draw=blue,fill=blue,text=blue,inner sep=1.1pt},
  circr/.style = {circle,draw=red,fill=red,inner sep=1pt},
  scirc/.style = {circle,draw,fill,inner sep=.8pt},
  invisible/.style = {draw=none,inner sep=0pt,font=\tiny},
  nonedge/.style={decorate,decoration={snake,amplitude=.3mm,segment length=1mm},draw}
}
\theoremstyle{plain}
  \newtheorem{theorem}{Theorem}
  \newtheorem{observation}[theorem]{Observation}
  \newtheorem{lemma}[theorem]{Lemma}
  \newtheorem{proposition}[theorem]{Proposition}
\newcounter{specialtheorem}
  \newtheorem{claim}{Claim}[specialtheorem]
  \newtheorem*{theorem*}{Theorem}
  \newtheorem*{corollary*}{Corollary}
  \newtheorem*{lemma*}{Lemma}
  \newtheorem*{proposition*}{Proposition}
  \newtheorem*{claim*}{Claim}
\theoremstyle{definition}
  \newtheorem{definition}[theorem]{Definition}
  \newtheorem*{definition*}{Definition}
  \newtheorem*{example*}{Example}
  \newtheorem*{question*}{Question}
  \newtheorem*{conjecture*}{Conjecture}
\DeclarePairedDelimiter\abs{\lvert}{\rvert}
\DeclarePairedDelimiter\norm{\lVert}{\rVert}
\DeclarePairedDelimiter\close{\langle}{\rangle}
\mathchardef\mhyphen="2D
\newcommand*{\rationals}{\mathbb{Q}}
\newcommand*{\naturals}{\mathbb{N}}
\newcommand{\cc}[1]{{\mbox{\textnormal{\textsf{#1}}}}\xspace}  %
\renewcommand{\P}{\cc{P}}
\newcommand{\NP}{\cc{NP}}
\newcommand{\FPT}{\cc{FPT}}
\newcommand{\Weft}{{\cc{W}}}
\newcommand{\W}[1]{{\Weft}{{[#1]}}}
\newcommand*{\mincsp}[1]{\textsc{MinCSP}\ensuremath{(#1)}}
\newcommand*{\csp}[1]{\textsc{CSP}\ensuremath{(#1)}}
\def \simcut    {\textsc{Sim-Cut}\xspace}
\def \simsep    {\textsc{Sim-Separator}\xspace}
\def \mbicl     {\textsc{MC-BiClique}\xspace}
\def \simdfas   {\textsc{Sim-DFAS}\xspace}
\def \subdfas   {\textsc{Sub-DFAS}\xspace}
\def \lsmfas    {\textsc{LS-MFAS}\xspace}
\def \probbc    {\textsc{Bundled Cut}\xspace}
\DeclareMathOperator{\cost}{cost}
\def \bundles {\mathcal{B}}
\def \ordering {\ensuremath{\sigma}}
\def \A   {\ensuremath{\cal A}\xspace}
\def \p   {{\sf p}\xspace}
\def \o   {{\sf o}\xspace}
\def \e   {\ensuremath{\equiv}\xspace}
\def \m   {{\sf m}\xspace}
\def \d   {{\sf d}\xspace}
\def \s   {{\sf s}\xspace}
\def \f   {{\sf f}\xspace}
\def \oi  {{\sf oi}\xspace}
\def \pi  {{\sf pi}\xspace}
\def \di  {{\sf di}\xspace}
\def \si  {{\sf si}\xspace}
\def \fii {{\sf fi}\xspace}
\def \rr  {{\sf r}\xspace}
\newcommand{\spii}{\ \ }
\newcommand{\spiv}{\ \ \ \ }
\newcommand{\spv}{\ \ \ \ \ }
\newcommand{\stexttt}[1]{{\tiny\tt #1}}
\newcommand{\pbDef}[3]{%
  \noindent
  \begin{center}
  \begin{boxedminipage}{0.98 \columnwidth}
  {\sc #1}\\[5pt]
  \begin{tabular}{l p{0.82 \columnwidth}}
  {\sc Instance}: & #2\\
  {\sc Question}: & #3
  \end{tabular}
  \end{boxedminipage}
  \end{center}
}
\newcommand{\pbDefP}[4]{%
  \noindent
  \begin{center}
  \begin{boxedminipage}{0.98 \columnwidth}
  {\sc #1}\\[5pt]
  \begin{tabular}{l p{0.82 \columnwidth}}
  {\sc Instance}: & #2\\
  {\sc Parameter}: & #3\\
  {\sc Question}: & #4
  \end{tabular}
  \end{boxedminipage}
  \end{center}
}
\newcommand{\inn}{\texttt{in}}
\newcommand{\out}{\texttt{out}}
\newcommand{\sss}{\texttt{s}}
\newcommand{\ttt}{\texttt{t}}
\newcommand{\Ac}{\textsf{\bf A}}
\newcommand{\B}{\textsf{\bf B}}
\newcommand{\D}{\textsf{\bf D}}
\newcommand{\PP}{\textsf{\bf P}}
\newcommand{\pth}{\rho}
\newcommand{\Whard}{\textsf{W}[1]-\textsf{hard}}
\newcommand{\Wh}{\textsf{W}[1]-\textsf{h}}
\newcommand{\yes}{\textsc{Yes}}
\newcommand{\III}{\mathcal{I}}
\newcommand{\CCC}{\mathcal{C}}
\title{Parameterized Complexity Classification for Interval Constraints\thanks{
  The second and the fourth authors were supported by
  the Wallenberg AI, Autonomous Systems and Software Program (WASP) funded
  by the Knut and Alice Wallenberg Foundation.
  In addition, the second author was partially supported 
  by the Swedish Research Council (VR)
  under grant 2021-04371. The third author acknowledges support from the
  Engineering and Physical Sciences Research Council (EPSRC, project EP/V00252X/1).
  The research of the fifth author is a part of a project that has received funding from the European Research Council (ERC)
under the European Union's Horizon 2020 research and innovation programme
Grant Agreement 714704.
  In addition, the fifth author is also part of BARC, supported by the VILLUM Foundation grant 16582.
  }
}
\author{
    Konrad K. Dabrowski\thanks{School of Computing, Newcastle University, UK, \texttt{konrad.dabrowski@newcastle.ac.uk}} \and
    Peter Jonsson\thanks{Department of Computer and Information Science, Link{\"o}ping University, Sweden, \texttt{peter.jonsson@liu.se}}  \and
    Sebastian Ordyniak\thanks{School of Computing, University of Leeds, UK, \texttt{sordyniak@gmail.com}} \and
    George Osipov\thanks{Department of Computer and Information Science, Link{\"o}ping University, Sweden, \texttt{george.osipov@pm.me}} \and
    Marcin Pilipczuk\thanks{Institute of Informatics, University of Warsaw, Poland and ITU Copenhagen, Denmark, \texttt{malcin@mimuw.edu.pl}} \and
    Roohani Sharma\thanks{Max Planck Institute for Informatics, Saarland Informatics Campus, Saarbr\"ucken, Germany, \texttt{rsharma@mpi-inf.mpg.de}}
}
\date{\today}
\begin{document}

\begin{titlepage}
\def\thepage{}
\thispagestyle{empty}
\maketitle

\begin{abstract}
Constraint satisfaction problems form
a nicely behaved class of problems that lends itself
to complexity classification results.
From the point of view of parameterized complexity, a natural task is to classify the
parameterized complexity of \textsc{MinCSP} problems parameterized
by the number of unsatisfied constraints. In other words, 
we ask whether we can delete at most $k$ constraints, where $k$ is the parameter, 
to get a satisfiable instance.
In this work, we take a step towards classifying the parameterized complexity for an important
infinite-domain CSP: {\em Allen's interval algebra} (IA).
This CSP has closed intervals with rational endpoints as domain values and employs a set $\A$ of 13 basic comparison relations such as ``precedes'' or ``during''
for relating intervals.
IA is a highly influential and well-studied
formalism within AI and qualitative reasoning that has numerous applications in, for instance,
planning, natural language processing and molecular biology. 
We provide an \FPT vs. \W{1}-hard dichotomy for \textsc{MinCSP}$(\Gamma)$ for all 
$\Gamma \subseteq \A$.
IA is sometimes extended with unions of the relations in $\A$
or first-order definable relations over $\A$, but extending our results to these cases would require first solving the parameterized complexity of \textsc{Directed Symmetric Multicut}, which is a notorious open problem.
Already in this limited setting, we uncover connections to new variants 
of graph cut and separation problems. 
This includes hardness proofs for simultaneous cuts or feedback arc set
problems in directed graphs, as well as new tractable cases with algorithms based
on the recently introduced \emph{flow augmentation} technique.
Given the intractability of \textsc{MinCSP}$(\A)$ in general, we then consider
(parameterized) approximation algorithms. We first show that \textsc{MinCSP}$(\A)$ cannot be
polynomial-time approximated within any constant factor
and continue by presenting a factor-$2$
fpt-approximation algorithm. Once again, this algorithm has its roots in flow augmentation.
\end{abstract}

\end{titlepage}

\newpage

\tableofcontents

\newpage

\section{Introduction}
{\bf Background.}
The {\em constraint satisfaction problem} over a constraint language $\Gamma$ (CSP$(\Gamma)$) is
the problem of deciding whether there is a variable assignment which satisfies a set of constraints,
where each constraint is constructed from a relation in $\Gamma$.
CSPs
over different constraint languages
is a nicely behaved class of problems that lends itself
to complexity classification results.
Such results are an important testbed for studying the power of algorithmic techniques and proving their limitations---
\iflong
the inability to obtain a full classification indicates 
that new algorithmic tools or new lower bound techniques are required.
\fi
\ifshort
the inability to obtain a full classification indicates 
that new tools are required.
\fi
A prime example is the dichotomy theorem for finite-domain CSPs that
was conjectured by Feder and Vardi~\cite{Feder:Vardi:sicomp98} and independently proved by Bulatov~\cite{Bulatov:focs2017}
and Zhuk~\cite{Zhuk:jacm2020}. Here, all hardness results were known since the work of Bulatov, Jeavons and Krokhin~\cite{Bulatov:etal:sicomp2005}
and the algorithms of \cite{Bulatov:focs2017} and \cite{Zhuk:jacm2020} completed the proof of the conjecture. In between, {\em lots} of work went into studying the problem
from various algorithmic and algebraic angles, and
many ideas emerging from this project have been re-used 
in different contexts 
\iflong
(such as
proving complexity results for
infinite-domain CSPs~\cite{Bodirsky:InfDom} or for analysing {\em promise CSPs}~\cite{Krokhin:Oprsal:siglognews2022} and other generalisations).
\fi
\ifshort
(such as
infinite-domain CSPs~\cite{Bodirsky:InfDom} or {\em promise CSPs}~\cite{Krokhin:Oprsal:siglognews2022}).
\fi
\iflong
Optimization versions of CSP such as 
the 
Valued CSP (\textsc{VCSP}) have also been intensively studied.
An instance of the \textsc{VCSP} consists of a set of variables, a variable domain and a sum of functions, each function depending on a subset of the variables. Each function can take finite values specifying costs of assignments to its variables or the infinite value, indicating an infeasible assignment. The goal is to find an assignment to the
variables that minimizes the sum. 
It is obvious that many other CSP optimisation problems can be viewed
as \textsc{VCSP}s; examples include the \textsc{MaxCSP} and \textsc{MinCSP} problems where the instance
is a CSP instance and the goal is to find an assignment that maximises the number
of satisfied constraints (\textsc{MaxCSP}) or minimises the number of unsatisfied constraints (\textsc{MinCSP}).
\fi
\ifshort
Optimization versions of the CSP such as \textsc{MaxCSP} and \textsc{MinCSP} (where the goal is to find an assignment that maximises the number
of satisfied constraints (\textsc{MaxCSP}) or minimises the number of unsatisfied constraints (\textsc{MinCSP})) and the generalisation
Valued CSP (\textsc{VCSP}) have also been intensively studied.
\fi
\iflong
Such problems
have been the subject of various complexity classification projects and this has generated a wide range of interesting results.
\fi
Some notable results include the proof of
that every finite-domain \textsc{VCSP} is either
polynomial-time solvable or \NP-complete~\cite{Kolmogorov:etal:sicomp2017}, and
\iflong
the fine-grained result by Raghavendra~\cite{Raghavendra:stoc2008} stating that, under the Unique Games Conjecture, 
the semi-definite programming relaxation of finite-domain \textsc{MaxCSP}
achieves the best possible approximation factor in polynomial time.
\fi
\ifshort
the optimal approximability result for finite-domain \textsc{MaxCSP} under
the Unique Games Conjecture~\cite{Raghavendra:stoc2008}.
\fi
One should note that even if the \P/\NP borderline for finite-domain \textsc{VCSP}s
is fully known, there are big gaps in our understanding of the corresponding \FPT/\W{1} borderline (with parameter solution weight).
The situation is even worse if we consider infinite-domain optimization versions of the \textsc{CSP}s, since we cannot
expect to get a full picture of the \P/\NP borderline even for the basic CSP problem~\cite{Bodirsky:Grohe:icalp2008}. 
\iflong
Thus, one must concentrate on subclasses
that have sufficiently advantageous properties.
\fi

In the parameterized complexity world, 
\textsc{MinCSP} is a natural problem to study.
\iflong
The parameterized complexity of \textsc{MinCSP} is far less 
well studied than 
its classical computational complexity.
\fi
Subproblems that have gained attraction include Boolean constraint
languages~\cite{Bonnet:etal:esa2016,Kim:etal:FA3,razgon2009almost},
Dechter et al.'s~\cite{dechter1991temporal} simple temporal problem
(STP)~\cite{Dabrowski:etal:aaai2022},
linear inequalities~\cite{berczi2022resolving} and
linear equations~\cite{crowston2013parameterized,Dabrowski:etal:soda2023}, to list a few. Highly interesting results have emerged from studying 
the parameterized complexity of problems like these. For instance,
the recent dichotomy
for \textsc{MinCSP}s over the Boolean domain by Kim et al.~\cite{Kim:etal:FA3}
was obtained using a novel technique called \emph{directed flow augmentation}.
\iflong
The showcase application of this tool is a parameterized algorithm for \textsc{Chain SAT}~\cite{Kim:etal:stoc2022}.
This notorious open problem appeared as the only missing step 
in the study by Chitnis, Egri and Marx~\cite{Chitnis:etal:algorithmica2017} towards classifying
the complexity of list $H$-colouring a graph by removing few vertices.
\fi
Recent work has indicated {\em temporal} CSPs 
\iflong
as the next possible area of interest in this context~\cite{EibenRW22,kim2023flowaugmentation}. 
\fi
\ifshort
as a possible next step~\cite{EibenRW22,kim2023flowaugmentation}. 
\fi
Temporal CSPs are CSPs where the relations underlying the constraints are first-order
definable in $({\mathbb Q};<)$.
The computational complexity of temporal CSPs where we fix the set of allowed constraints
exhibits a dichotomy: every such problem is either polynomial-time solvable
or NP-complete~\cite{Bodirsky:Kara:jacm2010}. 
The \textsc{MinCSP} problem for temporal CSPs is closely related to a number
of graph separation problems. For example,
if we take the rationals as the domain and allow constraints $\leq$ and
$<$, the \textsc{MinCSP} problem is equivalent to \textsc{Directed Subset Feedback Arc Set}~\cite{kim2022weighted},
a problem known to be fixed-parameter tractable for two different, but both quite involved
reasons~\cite{chitnis2015directed,kim2022weighted}. 
If we allow the relations $\leq$ and $\neq$, we obtain a problem
equivalent to \textsc{Directed Symmetric Multicut}, whose parameterized complexity
is identified as the main open problem in the area of directed graph separation
problems~\cite{EibenRW22,kim2022weighted}.
Another related way forward is to analyse the \textsc{MinCSP}
for {\em Allen's interval algebra}. 
Allen's interval algebra is a highly influential formalism within AI and qualitative reasoning that has numerous applications, e.g. in
planning~\cite{Allen:Koomen:ijcai83,Mudrova:Hawes:icra2015,Pelavin:Allen:aaai87}, natural language processing~\cite{Denis:Muller:ijcai2011,Song:Cohen:aaai88} and molecular biology~\cite{Golumbic:Shamir:jacm93}. 
This CSP uses closed intervals with rational endpoints as domain values
and employs a set $\A$ of 13 basic comparison relations such as ``precedes'' (one interval finishes before the other
starts) or ``during'' (one interval is a strict subset of the other); see Table~\ref{tb:allen}.
Formally speaking, the
CSP for the interval algebra is not
a temporal CSP, since the underlying domain is based on intervals instead of points.
This difference is important: complexity classifications for the interval algebra
have been harder to obtain than for temporal constraints. There are full classifications
for binary relations~\cite{Krokhin:etal:jacm2003} and for first-order definable constraint languages
containing all basic relations~\cite{Bodirsky:etal:arxiv2022}; a classification
for all first-order definable constraint languages appears remote.

\smallskip

\noindent
{\bf Our contributions.}
The aim in this paper is to initiate a study of \textsc{MinCSP} in the context
of Allen's interval algebra.
Obtaining a full parameterized complexity classification for Allen's interval algebra
would
entail resolving the status of \textsc{Directed Symmetric Multicut}
and we do not aim at this very ambitious task.
Instead, we restrict ourselves to languages that are subsets of $\A$ and
do not consider more involved expressions (say, first-order logic)
built on top of $\A$.
Even in this limited quest, we are able to uncover new relations to graph separation problems,
and new areas of both tractability and intractability.
One of the main ingredients for both our tractability and intractability
results is a characterization of unsatifiable instances of \textsc{CSP}$(\A)$
in terms of minimal obstructions given by certain arc-labelled mixed cycles of the instance. That is, for certain key subsets $\Gamma \subseteq \A$, we provide a complete description of \emph{bad} cycles such that an instance of \textsc{CSP}$(\Gamma)$ is satisfiable if and only if it does not contain a bad cycle. This allows us show that \textsc{MinCSP}$(\Gamma)$ is equivalent to the problem of finding a minimum set of arcs that hit every bad cycle in an arc-labelled mixed graph.
We prove that there are seven inclusion-wise maximal subsets $\Gamma$
of $\A$ such that \textsc{MinCSP}$(\Gamma)$ is in \FPT, and that \textsc{MinCSP}$(\Gamma)$
is W[1]-hard in all other cases. 
We show that $\mincsp{\A}$
is not approximable in polynomial time within {\em any} constant under the UGC. 
In fact, we prove this to hold for \textsc{MinCSP}$(\rr)$
whenever $\rr \in \A \setminus \{\equiv\}$.
As a response to this,
we suggest the use of {\em fixed-parameter approximation algorithms}. 
We show that \textsc{MinCSP}$(\A)$ admits such an algorithm with approximation ratio~$2$ and a
substantially faster algorithm with approximation ratio~$4$.
We describe the results in greater detail below.

\smallskip

\noindent
{\bf Intractability results.}
Our intractability results are based on novel \W{1}-hardness
results for a variety of natural \emph{paired} and \emph{simultaneous}
cut and separation problems, which we believe to be of independent interest. 
Here, the input consists of two (directed or
undirected) graphs and the task is to find a ``generalized'' cut that
extends to both graphs. 
The two input graphs
share some arcs/edges that 
can be deleted simultaneously at unit cost, and
the goal is to compute a set of $k$ arcs/edges
that is a cut in both graphs.
Both paired and simultaneous problems have recently 
received attention from the parameterized complexity 
community~\cite{agrawal2018simultaneous,agrawal2021simultaneous,Kim:etal:FA3}.
In the FPT/\W{1}-hardness dichotomy for the Boolean domain~\cite{Kim:etal:FA3}, 
the fundamental difficult problem is \textsc{Paired Cut}
(proven to be \W{1}-hard by Marx and Razgon~\cite{MarxR09}): given an integer $k$
and a directed graph
$G$ with two terminals $s,t \in V(G)$ and some arcs grouped into \emph{pairs}, 
delete at most $k$ pairs to cut all paths from $s$ to $t$. 
An intuitive reason why \textsc{Paired Cut} is difficult can be seen as follows.
Assume $G$ contains two long $st$-paths $P$ and $Q$ and the arcs of $P$ are arbitrarily
paired with the arcs of $Q$. Then, one cuts both paths with a cost of only one pair,
but the arbitrary pairing of the arcs allow us to encode an arbitrary permutation
--- which is very powerful for encoding edge-choice gadgets when
reducing from \textsc{Multicolored Clique}. Our strategy for proving \W{1}-hardness
of paired problems elaborates upon this idea.
Our reductions of simultaneous problems use a similar leverage. 
While the simultaneous setting may not be as versatile at the first glance as
the arbitrary pairing of \textsc{Paired Cut}, the possibility of choosing common arcs/edges
while deleting them at a unit costs still leaves enough freedom to encode
arbitrary permutations.

Altogether, we obtain novel \W{1}-hardness results for \textsc{Paired Cut Feedback Arc Set}, \textsc{Simultaneous st-Separator}, \textsc{Simultaneous Directed st-Cut}, and \textsc{Simultaneous Directed Feedback Arc Set}. 
This allows us to identify six intractable fragments subsets of basic interval relations:
$\{\m,\rr_1\}$ for $\rr_1 \in \{\e\nobreak,\allowbreak\s,\f\}$, $\{\d, \rr_2\}$ for $\rr_2 \in \{\o,\p\}$ and $\{\p,\o\}$.
The hardness reduction for $\{\m,\rr_1\}$, is based on the hardness of \textsc{Paired Cut} and \textsc{Paired Cut Feedback Arc Set}. The other hardness results follow from reductions from \textsc{Simultaneous Directed Feedback Arc Set}, whose \W{1}-hardness is shown by a reduction from \textsc{Simultaneous st-Separator}.

\smallskip

\noindent
{\bf Tractability results.}
We identify seven maximal tractable sets of basic interval relations:
$\{\m, \p\}$ and $\{r_1, r_2, \e\}$ for
$r_1 \in \{\s, \f\}$ and $r_2 \in \{\p, \d, \o\}$.
All problems are handled by reductions to variants of
\textsc{Directed Feedback Arc Set} (DFAS).
\iflong
This classical problem asks, given a directed graph $G$ and a budget $k$,
whether one can make $G$ acyclic by deleting $k$ arcs.
\fi
DFAS and variations are extensively studied in parameterized complexity~\cite{DBLP:journals/algorithmica/BergougnouxEGOR21,DBLP:conf/wg/BonamyKNPSW18,chen2008fixed,chitnis2015directed,DBLP:conf/icalp/GokeMM20,DBLP:journals/disopt/GokeMM22,DBLP:journals/talg/LokshtanovRS18,DBLP:conf/soda/LokshtanovR018}.
\iflong
One variant particularly relevant to our work is \textsc{Subset DFAS} (\subdfas).
Here the input comes with a subset of red arcs,
and the goal is to delete $k$ arcs so that the resulting
graph has no directed cycles containing red arcs.
Note that directed cycles of black (i.e. non-red) arcs are allowed to stay.
\subdfas is equivalent to $\mincsp{<, \leq}$:
regard the constraints $u < v$ as red $(u, v)$-arcs
and constraints $u \leq v$ as black $(u, v)$-arcs,
and observe that directed cycles of $\leq$-constraints are satisfiable,
while any directed cycle with a $<$-constraint is not.
\subdfas is fixed-parameter tractable~\cite{chitnis2015directed},
and the weighted version of the problem (where the input arcs have weights,
and a separate weight budget is given in the input)
was recently~\cite{kim2022weighted} solved using {\em flow augmentation}~\cite{Kim:etal:stoc2022}.
To show that \mincsp{\m, \p} is in FPT, we use a fairly straightforward
reduction to \mincsp{<, =} which, in turn, reduces to \mincsp{<, \leq}.
\fi
\ifshort
DFAS is equivalent to $\mincsp{<}$,
and a variant particularly important
in our work is \subdfas equivalent to $\mincsp{<,\leq}$, where
the goal is to destroy only 
directed cycles that have a $<$-arc.
To show that \mincsp{\m, \p} is in FPT, we use a straightforward
reduction to \mincsp{<, =} which, in turn, reduces to \mincsp{<, \leq}.
\fi

For the remaining six tractable cases, we reduce them all
to a new variant of DFAS called 
\textsc{Mixed Feedback Arc Set with Short and Long Arcs} (\lsmfas).
\iflong
The input to this problem is a mixed graph, i.e. a graph with
both arcs and undirected edges; the arcs are either short or long.
\fi 
\ifshort
The input is a mixed graph with edges and
long and short arcs.
\fi
Forbidden cycles in this graph are of two types:
(1) cycles with at least one short arc, no long arcs, and all short arcs in the same direction, and
(2) cycles with at least one long arc, all long arcs in the same direction, but 
the short arcs can be traversed in arbitrary direction.
One intuitive way %
to think about the problem is to observe that a graph $G$ has no forbidden cycle
if there exists a placement of the vertices on the number line
with certain distance constraints represented by the edges and arcs.
Vertices connected by edges (which correspond to $\e$-constraints) should be placed at the same point.
If there is an arc $(u, v)$, we need to place $u$ before $v$.
Moreover, if the arc is long, then the distance from $u$ to $v$ should be big 
(say, greater than twice the number of vertices), while is the arc is short,
then the distance from $u$ to $v$ should be small (say, at most $1$).
\iflong
In other words, the graph $G$ after contracting all undirected edges is acyclic
and we request a topological ordering of this contracted graph where
all (weak) connected components of short arcs are grouped together and long arcs
go only between distinct connected components of short arcs. 
\lsmfas generalizes \mincsp{<, =}, which is a special case without long arcs.
\fi
\ifshort
The reduction from $\mincsp{\rr_1, \rr_2, \e}$ to \lsmfas
creates a mixed graph with edges for $\e$-constraints,
short arcs for $\rr_1$-constraints and
long arcs for $\rr_2$-constraint.
For correctness, consider for example the case with
$\rr_1 = \s$ and $\rr_2 = \p$. 
Forbidden cycles of the first and the second kind
imply an inconsistent order on the right and left endpoints, respectively. 
On the other hand, if bad cycles are absent, we can assign intervals as follows:
if two variables are $\e$-connected, they are assigned the same interval,
if they are $\s$-connected, their intervals have the same left endpoints,
left endpoints are ordered according to $\p$-constraints,
and the right endpoints of $\s$-connected intervals
are ordered according to the $\s$-constraints.
\fi

Our algorithm for \lsmfas builds upon  
the algorithm of~\cite{kim2022weighted} for \subdfas.
\iflong
The opening steps are standard for parameterized deletion problems. 
We start with iterative compression,
which reduces the problem to a special case
when the input comes together with
a set of $k + 1$ vertices intersecting all forbidden cycles.
We call these vertices \emph{terminals} and
guess how they are positioned relative to each other in the graph
obtained by deleting the arcs of a hypothetical optimal solution.
Altogether, this requires $O^*(2^{O(k \log k)})$ time.
\fi
\ifshort
By iterative compression and branching,
we may assume access to $k + 1$ vertices
that intersect all forbidden cycles,
and we know their relative positions
in the graph obtained after deleting a hypothetical optimal solution.
\fi
The aforementioned ``placing on a line'' way of phrasing the lack of forbidden cycles
is the main reason why this leads to a complete algorithm. 
In the following step, we try to place 
all remaining vertices relative to the terminals
while breaking at most $k$ distance constraints.
Note that the number of ways in which a vertex can relate to a terminal is constant:
it may be placed in a short/long distance 
before/after the terminal, or in the same position.
Thus, we can define $O(k)$ types for each vertex,
and the types determine whether distance constraints are satisfied or not.
The optimal type assignment is then obtained by a reduction to
\textsc{Bundled Cut} with pairwise linked deletable arcs,
a workhorse problem shown to be fixed-parameter tractable in~\cite{Kim:etal:FA3}.
\ifshort
We remark that our algorithms also handle the weighted versions of the problems.
\fi

\iflong 
To see how \textsc{MinCSP}$(\rr_1,\rr_2,\equiv)$ for $\rr_1 \in \{\s, \f\}$ and $\rr_2 \in \{\p, \d, \o\}$
relates to \lsmfas{}, consider the exemplary case of \textsc{MinCSP}$(\s,\p,\equiv)$.
Represent variables as vertices, $\s$ relations as short arcs and $\p$ relations as long arcs. 
Observe that in a satisfiable instance, the resulting graph has no forbidden cycles
for \lsmfas{}: a cycle consisting of undirected edges and short arcs oriented in the same direction
corresponds to a loop on $\s$ relations, while a cycle consisting of undirected edges,
short arcs in arbitrary directions, and long arcs in the same direction
corresponds to a cycle in the \textsc{MinCSP} instance with $\equiv$, $\s$ and $\p$ relations, where
all $\p$ relations are in the same direction; such a cycle imposes an unsafisfiable cyclic order
on the starting points of the corresponding intervals. 
Interestingly, one can prove an if-and-only-if statement: the satisfiability of a \textsc{MinCSP}$(\s,\p,\equiv)$
instance is equivalent to the absence of forbidden cycles in the corresponding graph. 
Hence, it suffices to solve \lsmfas there.
\fi

\smallskip

\noindent
{\bf Approximation results.}
\iflong
As a response to the negative complexity results for $\mincsp{\A}$, it is natural to consider approximation algorithms.
However, we show that $\mincsp{\A}$
is not approximable in polynomial time within any constant under the UGC. 
In fact, we prove this to hold for all $\mincsp{\rr}$
with $\rr \in \A \setminus \{\equiv\}$.

An obvious next step is to
use {\em fixed-parameter approximation algorithms}. This is an approach combining parameterized complexity with
approximability and
it has received rapidly increasing interest
(see, for instance, the surveys by Feldmann et al.~\cite{Feldmann:etal:algorithms2020} and Marx~\cite{Marx:tcj2008}). 
Let $c > 1$ be a constant. A factor-$c$ {\em fpt-approximation
algorithm} for \textsc{MinCSP}$(\A)$
takes an instance
$(I,k)$ of \textsc{MinCSP}$(\A)$ and
either returns that there is no solution of size at most $k$ or returns that there is 
a solution of size at most $c \cdot k$. The running time of the algorithm is bounded by 
$f(k) \cdot \norm{I}^{O(1)}$ where $f: \naturals \rightarrow \naturals$
is some computable function. Thus, the algorithm is given more
time to compute the solution (compared to a polynomial-time
approximation algorithm) and it may output a
slightly oversized solution (unlike an exact fpt algorithm for $\mincsp{\A}$).
This combination turns out to be successful.
We show that $\mincsp{\A}$ admits such an algorithm with $c=2$
and a substantially faster algorithm with $c=4$. These results
are (once again) based on the
\subdfas problem.
The idea is to look at two relaxations and consider the left and the right endpoints of the intervals separately.
The problem for left/right endpoints is an instance of \subdfas, which we can solve in fpt time.
If either of the relaxations is not solvable, then we deduce that the original problem has no solution either.
Otherwise, we have two deletion sets, each of size at most $k$, that ensure consistency for the left and the right endpoints, respectively.
Combining them, we obtain a deletion set for the initial problem of size at most $2k$.
If we use a factor-$2$ fpt approximation algorithm for \subdfas instead of the slower exact fpt algorithm
when solving the relaxations,
we obtain two solutions of size at most $2k$, and combined they yield a solution of size at most $4k$.
We also show that there exists a factor-$2$ fpt approximation algorithm for the weighted version of $\mincsp{\A}$.
This harder problem can be solved with
directed flow augmentation~\cite{Kim:etal:stoc2022},
albeit at the price of increased time complexity. \fi
\ifshort 
In response to the negative complexity results for $\mincsp{\A}$, 
we consider approximation algorithms.
We show that $\mincsp{\A}$
is not approximable in polynomial time within any constant under the UGC. 
We relax the restrictions even more by allowing
our approximation algorithms to run in fixed-parameter tractable time.
We show that $\mincsp{\A}$ admits such an algorithm with $c=2$
and a substantially faster algorithm with $c=4$. 
These results are based on the observation
that every relation in $\A$ can be defined
as a conjunction of $\{<,=\}$-constraints
on the endpoints.
In the relaxation we disregard conjunctions
and view all $\{<,=\}$-constraints as an instance
of $\mincsp{<,=}$, which is then reduced to \subdfas.
By invoking \subdfas algorithm of~\cite{kim2022weighted},
one obtains a $2$-approximation algorithm
for the weighted variant of the problem,
albeit at the price of increased time complexity.
\fi 

\smallskip

\iflong
\noindent
{\bf Roadmap.}
\label{sec:roadmap}
We begin by introducing some necessary preliminaries in Section~\ref{sec:prelims}.
Section~\ref{sec:classification} contains a bird's eye view of our 
dichotomy
theorem for the parameterized complexity of $\mincsp{\Gamma}$ where
$\Gamma \subseteq \A$, and the technical details are collected in the
following sections. 
We describe the minimal obstructions to satisfiability for certain subsets of~$\A$ in Section~\ref{sec:cycles}. These results are essential for connecting the \textsc{MinCSP}$(\A)$
problem with the graph-oriented problems that we use in the paper.
We complete our dichotomy result by presenting a number of
fixed-parameter algorithms 
in Section~\ref{sec:fpt-algs} and a collection of \W{1}-hardness results
in Section~\ref{sec:w-hard}. Our approximability results can be found
in Section~\ref{sec:approx}.
We conclude the paper with a brief discussion of our results and future
research directions in Section~\ref{sec:discussion}.
\fi

\ifshort
\noindent
{\bf Roadmap.}
\label{sec:roadmap}
We present the necessary preliminaries in Section~\ref{sec:prelims}.
Section~\ref{sec:classification} is a bird's eye view of our 
results for the parameterized complexity and approximability of $\mincsp{\Gamma}$ and the technical details are collected in the
following sections. 
We describe the minimal obstructions to satisfiability for certain subsets of~$\A$ in Section~\ref{sec:cycles}. These results are essential for connecting the \textsc{MinCSP}$(\A)$
problem with the graph-oriented view that we use.
We complete our dichotomy result by a number of
fixed-parameter algorithms 
in Section~\ref{sec:fpt-algs} and a collection of \W{1}-hardness results
in Section~\ref{sec:w-hard}. 
We conclude the paper with a discussion of our results and future
research directions in Section~\ref{sec:discussion}.
This is a shortened version of the full paper where, for instance, some proofs are sketched. The full version can be found in the supplementary material.
\fi

\iflong\begin{table}[bt]
    \begin{center}
      {\normalsize
    \begin{tabular}{|ll|c|l|}\hline
              Basic relation        &               & Example                   & Endpoint Relations\\
        \hline\hline
              $I$ precedes      $J$ & ${\sf p}$     & \stexttt{iii\spv}         & $I^{+}<J^{-}$ \\ 
              $J$ preceded by   $I$ & ${\sf pi}$    & \stexttt{\spv jjj}        & \\
        \hline
              $I$ meets         $J$ & ${\sf m}$     & \stexttt{iiii\spiv}       & $I^{+}=J^{-}$\\ 
              $J$ met-by        $I$ & ${\sf mi}$    & \stexttt{\spiv jjjj}      & \\
        \hline
              $I$ overlaps      $J$ & ${\sf o}$     & \stexttt{iiii\spii}       & $I^{-}<J^{-}<I^{+}<J^{+}$\\
              $J$ overlapped-by $I$     & ${\sf oi}$    & \stexttt{\spii jjjj}      & \\
        \hline
              $I$ during        $J$ & ${\sf d}$     & \stexttt{\spii iii\spii}  & $I^{-}>J^{-}$, $I^{+}<J^{+}$ \\
              $J$ includes      $I$ & ${\sf di}$    & \stexttt{jjjjjjj}         &  \\
        \hline
              $I$ starts        $J$ & ${\sf s}$     & \stexttt{iii\spiv}        & $I^{-}=J^{-}$, $I^{+}<J^{+}$
        \\ 
              $J$ started by    $I$ & ${\sf si}$    & \stexttt{jjjjjjj}         & \\
        \hline
              $I$ finishes      $J$ & ${\sf f}$     & \stexttt{\spiv iii}       & $I^{+}=J^{+}$, $I^{-}>J^{-}$
        \\ 
              $J$ finished by   $I$ & ${\sf fi}$    & \stexttt{jjjjjjj}         & \\
        \hline
              $I$ equals        $J$ & $\equiv$     & \stexttt{iiii}            & $I^{-}=J^{-}$, $I^{+}=J^{+}$\\
                                    &               & \stexttt{jjjj}            & \\
\hline
    \end{tabular}}
    \caption{The thirteen basic relations in Allen's Interval Algebra. The
    endpoint relations $I^- < I^+$ and $J^- < J^+$ that are valid for
    all relations have been omitted.}
    \label{tb:allen}
    \end{center}
\end{table}\fi
\ifshort
\begin{table}[bt]
    \noindent\adjustbox{left=0.5\textwidth,valign=t}{%
    \begin{tabular}{|ll|c|l|}\hline
              Basic relation        &               & Ex.                   & Endpoint Relations\\
        \hline\hline
              $I$ precedes      $J$ & ${\sf p}$     & \stexttt{iii\spv}         & $I^{+}<J^{-}$\\
              $J$ preceded by   $I$ & ${\sf pi}$    & \stexttt{\spv jjj}        & \\
        \hline
              $I$ meets         $J$ & ${\sf m}$     & \stexttt{iiii\spiv}       & $I^{+}=J^{-}$\\
              $J$ met-by        $I$ & ${\sf mi}$    & \stexttt{\spiv jjjj}      & \\
        \hline
              $I$ overlaps      $J$ & ${\sf o}$     & \stexttt{iiii\spii}       & $I^{-}<J^{-}<I^{+}<J^{+}$\\
              $J$ overlapped-by $I$     & ${\sf oi}$    & \stexttt{\spii jjjj}      & \\
        \hline
                            $I$ during        $J$ & ${\sf d}$     & \stexttt{\spii iii\spii}  & $I^{-}>J^{-}$, $I^{+}<J^{+}$ \\
              $J$ includes      $I$ & ${\sf di}$    & \stexttt{jjjjjjj}         &  \\
        \hline
    \end{tabular}
    }%
    \adjustbox{right=0.5\textwidth,valign=t}{%
    \begin{tabular}{|ll|c|l|}\hline
              Basic relation        &               & Ex.                   & Endpoint Relations\\
        \hline\hline
              $I$ starts        $J$ & ${\sf s}$     & \stexttt{iii\spiv}        & $I^{-}=J^{-}$, $I^{+}<J^{+}$
        \\ 
              $J$ started by    $I$ & ${\sf si}$    & \stexttt{jjjjjjj}         & \\
        \hline
              $I$ finishes      $J$ & ${\sf f}$     & \stexttt{\spiv iii}       & $I^{+}=J^{+}$, $I^{-}>J^{-}$
        \\ 
              $J$ finished by   $I$ & ${\sf fi}$    & \stexttt{jjjjjjj}         & \\
        \hline
              $I$ equals        $J$ & $\equiv$     & \stexttt{iiii}            & $I^{-}=J^{-}$, $I^{+}=J^{+}$\\
                                    &               & \stexttt{jjjj}            & \\
\hline
    \end{tabular}
    }
    \caption{The thirteen basic relations in Allen's Interval Algebra. The
    endpoint relations $I^- < I^+$ and $J^- < J^+$ that are valid for
    all relations have been omitted.}
    \label{tb:allen}
    
\end{table}
\fi

\section{Preliminaries}
\label{sec:prelims}

\iflong
In this section we introduce the necessary prerequisites: we briefly present the rudiments of parameterized complexity in Section~\ref{sec:paramcomplexity}, we define the CSP and \textsc{MinCSP} problems in Section~\ref{sec:csp}, and we 
provide some basics concerning interval relations in Section~\ref{sec:intrel}.
Before we begin, we need some terminology and notation for graphs.
Let $G$ be a (directed or undirected) graph; we allow graphs to contain loops.
We denote the set of vertices in $G$ by $V(G)$.
If $G$ is undirected, let $E(G)$ denote the set of edges in $G$.
If $G$ is directed, let $A(G)$ denote the set of arcs in $G$,
and $E(G)$ denote the set of edges in the underlying undirected graph of $G$.
We use~$uv$ to denote an undirected edge with end-vertices $u$ and~$v$.
We use~$(u,v)$ to denote a directed arc from~$u$ to~$v$; the end-vertex~$u$ is the \emph{tail} of the arc and the end-vertex~$v$ is the \emph{head}.
For every subset $X \subseteq E(G)$,
we write $G - X$ to denote the directed graph obtained by
removing all edges/arcs corresponding to $X$ from $G$
(formally, $V(G - X) = V(G)$ with $E(G - X) = E(G) \setminus X)$ if $G$ is undirected
and $A(G - X) = A(G) \setminus \{ (u, v), (v, u) \mid \{u,v\} \in X \})$ if $G$ is directed).
If $X \subseteq V(G)$ is a set of vertices,
then we let $G - X = G[V(G) \setminus X]$ be the subgraph
induced in $G$ by $V(G) \setminus X$.
An $st$-cut in $G$ is a set of edges/arcs $X$ such that the vertices $s$ and $t$
are separated in $G - X$.
If $n \in \mathbb{N}$, then $[n]$ denotes the set $\{1,\ldots,n\}$.

\subsection{Parameterized Complexity}
\label{sec:paramcomplexity}

A {\em parameterized} problem is a subset of $\Sigma^* \times \naturals$,
where $\Sigma$ is the input alphabet.
The parameterized complexity class \FPT contains the problems decidable in $f(k)\cdot n^{O(1)}$ time, where $f$ is a computable function and $n$ is the instance size; this forms the set of the computationally most tractable problems.
Reductions between parameterized problems need to take
the parameter into account. 
To this end, we use \emph{parameterized reductions} (or fpt-reductions).
Consider two parameterized problems $L_1, L_2 \subseteq \Sigma^* \times \naturals$.
A mapping $P: \Sigma^* \times \naturals \rightarrow \Sigma^* \times \naturals$
is a parameterized reduction from $L_1$ to $L_2$ if
\begin{enumerate}[(1)]
  \item $(x, k) \in  L_1$ if and only if $P((x, k)) \in L_2$, 
  \item the mapping can be computed in $f(k) \cdot n^{O(1)}$ time for some computable function $f$, and 
  \item there is a computable function $g : \naturals \rightarrow \naturals$  
  such that for all $(x,k) \in \Sigma^* \times \naturals$, if $(x', k') = P((x, k))$, then $k' \leq g(k)$.
\end{enumerate}
We will sometimes prove
that certain problems are not in $\FPT$.
The class $\Weft[1]$ contains all problems that are fpt-reducible to \textsc{Independent Set} 
parameterized by the solution size, i.e. the number of vertices in the independent set.
Showing $\Weft[1]$-hardness (by an fpt-reduction) for a problem rules out the existence of an fpt
algorithm under the standard assumption that $\FPT \neq \Weft[1]$.

\subsection{Constraint Satisfaction}
\label{sec:csp}

A {\em constraint language} $\Gamma$ is a set of relations over a domain $D$.
Each relation $R \in \Gamma$ has an associated \emph{arity} $r \in \naturals$ and $R \subseteq D^r$.
All relations considered in this paper are binary
and all constraint languages are finite.
An instance $\III$ of $\csp{\Gamma}$ consists of a set of variables $V(\III)$ 
and a set of constraints $C(\III)$ of the form $R(x,y)$,
where $R \in \Gamma$ and $x,y \in V(\III)$.
To simplify notation, we may write $R(x,y)$ as $xRy$.
An assignment $\varphi : V(\III) \rightarrow D$ \emph{satisfies} a constraint $R(x,y)$
if $(\varphi(x), \varphi(y)) \in R$
and \emph{violates} $R(x,y)$ if $(\varphi(x), \varphi(y)) \notin R$. 
The assignment $\varphi$ is a {\em satisfying assignment} (or a {\em solution}) if it satisfies
every constraint in $C(\III)$.

\pbDef{\csp{\Gamma}}
{An instance $\III$ of $\csp{\Gamma}$.}
{Does $\III$ admit a satisfying assignment?}

The {\em value} of an assignment $\varphi$ for $\III$ is the number of constraints in $C(\III)$ satisfied by $\varphi$.
For any subset of constraints $X \subseteq C(\III)$, 
let $\III-X$ denote the instance with $V(\III-X)=V(\III)$ and $C(\III-X) = C(\III) \setminus X$.
The (parameterized) \emph{almost constraint satisfaction problem} ($\mincsp{\Gamma}$)
is defined as follows:

\pbDefP{\mincsp{\Gamma}}
{An instance $\III$ of $\csp{\Gamma}$ and an integer $k$.}
{$k$.}
{Is there a set $X \subseteq C(\III)$ such that $\abs{X} \leq k$ and $\III-X$ is satisfiable?}

Given an instance $\close{\III,k}$ of $\mincsp{\Gamma}$, the set 
$X$ can be computed 
with $|C(\III)|$ calls to an algorithm for \mincsp{\A}. Hence, we can view \mincsp{\Gamma} as a decision problem without loss of generality. \
Additionally note that if \mincsp{\Gamma} is in \FPT, then \csp{\Gamma} must be polynomial-time solvable: an instance $\III$ of
\csp{\Gamma} is satisfiable if and only if the instance $\close{\III,0}$ of \mincsp{\Gamma} is a yes-instance, and instances of the form
$\close{\III,0}$ are solvable in polynomial time if \mincsp{\Gamma} is in \FPT.

\subsection{Interval Relations}
\label{sec:intrel}

We begin by reviewing the basics of
{\em Allen's interval algebra}~\cite{Allen:cacm83} (IA). 
Its domain is the set
${\mathbb I}$ of all pairs $(x,y) \in {\mathbb Q}^2$ such that $x<y$, i.e. ${\mathbb I}$ can be viewed as the set of all closed intervals $[a,b]$ of 
rational numbers. If $I=[a,b] \in \mathbb I$, then we write $I^-$ for $a$ and $I^+$ for $b$. 
Let $\A$ denote the set of 13 {\em basic} relations
that are presented in Table~\ref{tb:allen}, and let
$2^{\A}$ denote the 8192
binary relations that can be formed by
taking unions of relations in $\A$.
The complexity of CSP$(\Gamma)$ is known for every
$\Gamma \subseteq 2^{\A}$~\cite{Krokhin:etal:jacm2003} and in each case
CSP$(X)$ is either polynomial-time solvable or NP-complete. In particular, CSP$(\A)$ is in P.
When considering subsets $\Gamma \subseteq \A$,
note that any constraint $x {\sf ri} y$ is equivalent to $y \rr x$
for $\rr \in \{\p, \m, \o, \d, \s, \f\}$,
so we may assume that $\rr \in \Gamma$ if and only if ${\sf ri} \in \Gamma$.
Furthermore, for the remainder of the paper, we may assume $\A = \{\p,\m,\o,\d,\s,\f,\e\}$.

When studying the optimization problem \textsc{MinCSP},
it is convenient to allow \emph{crisp} constraints,
i.e. constraints that cannot be deleted.
Formally, for a language $\Gamma \subseteq \A$ 
and a relation $\rr \in \Gamma$, we say that 
\emph{$\Gamma$ supports crisp $\rr$-constraints}
if, for every value of the parameter $k \in \naturals$,
we can construct an instance $\III_\rr$ of \mincsp{\Gamma}
with primary variables $x,y \in V(\III_\rr)$
such that the constraint $x \rr y$ is equivalent to $\III_\rr - X$ 
for all $X \subseteq C(\III)$ such that $|X| \leq k$.
Then, if we want to enforce a constraint $x \rr y$ in
an instance of \mincsp{\Gamma}, we can use $\III_\rr$ with fresh variables $V(\III) \setminus \{x,y\}$ in its place.
We collect the tricks to make constraints crisp in the following two lemmas.
First, we show the result for \emph{idempotent} binary relations over an infinite domain $D$,
i.e. relations $\rr$ that are transitive and dense:
$a \rr b$ and $b \rr c$ implies $a \rr c$ for all $a,b,c \in D$, and, 
for every $a,c \in D$ such that $a \rr c$ holds,
there exists $b \in D$ such that $a \rr b$ and $b \rr c$ hold.

\begin{lemma}
  \label{lem:idempotent}
  Let $\rr$ be an idempotent binary relation over an infinite domain $D$.
  Then $\{\rr\}$ supports crisp $\rr$-constraints.
\end{lemma}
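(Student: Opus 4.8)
The plan is to construct, for each parameter value $k$, an instance $\III_\rr$ of $\mincsp{\{\rr\}}$ whose behaviour under the deletion of any $k$ constraints is exactly that of a single crisp constraint $x \rr y$. The natural construction is a ``parallel bundle'' of redundant paths: take primary variables $x, y$, and for each $i \in [k+1]$ introduce a fresh internal variable $z_i$ together with the two constraints $x \rr z_i$ and $z_i \rr y$. So $\III_\rr$ has $2(k+1)$ constraints arranged in $k+1$ internally disjoint length-two $x$-to-$y$ ``paths''.

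The key steps are then the following. First I would observe that $\III_\rr$ forces $x \rr y$ in every satisfying assignment: by density-as-transitivity (the transitive half of idempotence), $a \rr b$ and $b \rr c$ imply $a \rr c$, so any assignment satisfying $x \rr z_1$ and $z_1 \rr y$ already satisfies $x \rr y$. Hence $\III_\rr$ (with nothing deleted) implies $x \rr y$. Second, I would show the converse direction needed for the ``equivalence under $\le k$ deletions'': if $X \subseteq C(\III_\rr)$ with $|X| \le k$ and $\varphi$ is any assignment of $x, y$ with $(\varphi(x),\varphi(y)) \in \rr$, then $\varphi$ extends to a satisfying assignment of $\III_\rr - X$. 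Since $|X| \le k$ and there are $k+1$ disjoint paths, at least one index $i$ has both of its constraints $x \rr z_i$, $z_i \rr y$ outside $X$; for that $i$, the density half of idempotence gives a value $b \in D$ with $\varphi(x) \rr b$ and $b \rr \varphi(y)$, so set $\varphi(z_i) = b$. For every other index $j$, at least one of the two constraints $x \rr z_j$, $z_j \rr y$ is in $X$ (could be either, or both); if $x \rr z_j \notin X$ we must satisfy it, and if $z_j \rr y \notin X$ we must satisfy it, but never both simultaneously for $j \ne i$ in the worst case — so I would pick $\varphi(z_j)$ to satisfy whichever of the two constraints survives: if $x \rr z_j$ survives, pick any $z_j$ with $\varphi(x) \rr z_j$ (exists since $\rr$ is serial, which follows from density applied to any witnessed pair — or more carefully, since $D$ is infinite and $\rr$ is dense we can always find such a point; this is the one spot to be slightly careful), and symmetrically if $z_j \rr y$ survives. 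Since for $j \ne i$ the surviving constraints among $\{x \rr z_j, z_j \rr y\}$ number at most one... actually they may number two if $|X|$ hits fewer than all paths, but then we just use the density argument again on that path. So in all cases $\varphi$ extends, proving $\III_\rr - X$ is equivalent to the single constraint $x \rr y$.

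The main obstacle I anticipate is the bookkeeping in the second step, specifically ensuring that for \emph{every} $X$ with $|X| \le k$ the pigeonhole index $i$ exists and that the remaining paths can always be completed — this requires that $\rr$ be ``everywhere defined'' in the sense that $\{b : a \rr b\}$ and $\{b : b \rr c\}$ are nonempty for the relevant $a, c$, which I would extract from density plus infinitude of $D$: density says $\rr$ is nonempty (it relates something), and from $a_0 \rr c_0$ one gets an infinite strictly-$\rr$-increasing sequence, so in particular every point occurring in the construction has successors and predecessors. A secondary point to state cleanly is that $\III_\rr$ uses only the relation $\rr$ and so is a legitimate $\mincsp{\{\rr\}}$ instance, and that the number of fresh variables ($k+1$) and constraints ($2(k+1)$) is polynomial in $k$, as required by the definition of ``supports crisp $\rr$-constraints''.
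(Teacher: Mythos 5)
Your construction and argument are essentially the paper's own proof: the same $k+1$ auxiliary variables with constraints $x \rr u_i$ and $u_i \rr y$, pigeonhole to obtain a path untouched by the deleted set, transitivity to force $x \rr y$, and density to extend any assignment satisfying $x \rr y$. Two small points of hygiene: the pigeonhole is what is needed in the \emph{forcing} direction (so that $\III_\rr - X$, not merely the undeleted $\III_\rr$, implies $x \rr y$), not in the extension direction, and your case analysis over the other paths (and the worry about seriality) is unnecessary, since density applied to the pair $(\varphi(x),\varphi(y))$ gives a single witness $b$ with $\varphi(x) \rr b$ and $b \rr \varphi(y)$, and assigning $b$ to every auxiliary variable satisfies all of $\III_\rr$, deleted constraints included --- which is exactly how the paper concludes.
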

\begin{proof}
The instance $\III_\rr$ of $\mincsp{\rr}$ can be constructed as follows:
introduce $k + 1$ auxiliary variables $u_1, \dots, u_{k+1}$ and
let $\III_\rr$ contain constraints $x \rr u_i$, $u_i \rr y$ for all $i \in [k+1]$.
After deleting any $k$ constraints from $I$,
there remains $i \in [k+1]$ such that 
$x \rr u_i \rr y$ is still in the instance.
Correctness follows by idempotence of $\rr$:
if an assignment satisfies $x \rr u_i \rr y$, then it satisfies $x \rr y$ by transitivity of $\rr$;
conversely, if an assignment satisfies $x \rr y$,
then it can be extended to the auxiliary variables since $D$ is infinite and $\rr$ is dense.
\end{proof}

We now show that all interval languages $\Gamma \subseteq \A$ support crisp $\rr$-constraints 
for all relations $\rr \in \Gamma$.

\begin{lemma}
  \label{lem:crisp}
  Let $\rr \in \A$ be an interval relation.
  Then $\{\rr\}$ supports crisp $\rr$-constraints.
\end{lemma}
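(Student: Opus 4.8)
The plan is to reduce the problem to Lemma~\ref{lem:idempotent} by using the precede relation \p, which is clearly idempotent on the infinite domain $\mathbb{I}$ (it is transitive, and given $I \p K$ one can always find an interval $J$ strictly between them since $\mathbb{Q}$ is dense). So by Lemma~\ref{lem:idempotent}, $\{\p\}$ supports crisp \p-constraints. The first step is therefore to handle $\rr = \p$ directly via Lemma~\ref{lem:idempotent}, and also to observe that the same idempotence argument applies to \d (``during'' is transitive and dense on $\mathbb{I}$), so $\{\d\}$ supports crisp \d-constraints. For the remaining relations $\rr \in \{\m, \o, \s, \f, \e\}$, the idea is to express a crisp $x \rr y$ constraint using only (many copies of) the single relation $\rr$ together with fresh variables, exploiting the endpoint characterisations in Table~\ref{tb:allen}.

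The key technical step is to show, for each such $\rr$, a small gadget built purely from $\rr$-constraints on fresh variables that forces $x \rr y$ as long as at most $k$ of its constraints are deleted. For $\rr = \e$ this is immediate: take $k+1$ fresh variables $z_1,\dots,z_{k+1}$ with constraints $x \e z_i$ and $z_i \e y$ for each $i$; after deleting $k$ constraints some $z_i$ survives both, and by transitivity of equality $x \e y$ is forced (and conversely $x \e y$ extends trivially). For $\rr \in \{\m,\o,\s,\f\}$ I would similarly build a ``chain/parallel'' gadget: since each of these relations pins down a specific relationship between the four endpoints $x^-,x^+,y^-,y^+$, one chains together $\Theta(k)$ fresh intervals so that surviving any $k$ deletions still leaves a full path of $\rr$-constraints from $x$ to $y$ that propagates the required endpoint equalities/inequalities. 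The nontrivial part is that $\m$, $\o$, $\s$, $\f$ are \emph{not} transitive, so a naive series composition $x \rr z \rr y$ does not give $x \rr y$; instead one must combine series and parallel compositions so that the composed relation along the gadget equals exactly $\rr$ (not a union). Concretely, one can use the standard composition table of Allen's algebra to find, for each $\rr$, a constant-size pattern of $\rr$-constraints whose composition is the singleton $\{\rr\}$, and then blow up each constraint of that pattern into $k+1$ parallel copies through fresh variables so it becomes effectively crisp; iterating, the whole gadget becomes crisp.

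The main obstacle is verifying that for each of $\m,\o,\s,\f$ there genuinely is a composition of $\rr$-constraints (over fresh variables) that yields exactly the relation $\rr$ between $x$ and $y$ and nothing weaker --- i.e.\ that these relations are ``self-generating'' under composition without needing any other basic relation. For $\s$ and $\f$ this is plausible because they behave like a constrained equality on one endpoint combined with a strict order on the other; for $\m$ one uses that $\m \circ \m$ relates $x^+ = z^-$ and $z^+ = y^-$, which together with parallel constraints controlling the other endpoints can be arranged to force $x \m y$; for $\o$ one exploits that overlapping chains can be made to land exactly on an overlap. Once the composition pattern is fixed for each relation, the remaining argument --- replacing each edge of the pattern by $k+1$ parallel paths so deletion of $k$ constraints cannot break the pattern, and checking the ``conversely, any solution of the original constraint extends'' direction using density of $\mathbb{Q}$ --- is routine. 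So the write-up would: (1) dispatch \p and \d by Lemma~\ref{lem:idempotent}; (2) dispatch \e by the trivial parallel gadget; (3) for each of $\m,\o,\s,\f$ exhibit the composition pattern together with its $(k+1)$-fold parallel blow-up, and verify both directions of equivalence.
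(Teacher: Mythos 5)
Your overall skeleton (handle the idempotent relations via Lemma~\ref{lem:idempotent}, build explicit gadgets for the rest) is the paper's skeleton, but there are two problems. First, a misclassification: $\s$ and $\f$ \emph{are} transitive and dense ($x \s y$ and $y \s z$ give $x^- = y^- = z^-$ and $x^+ < y^+ < z^+$, hence $x \s z$, and density of $\rationals$ gives an interpolating interval), so they are idempotent and are covered by Lemma~\ref{lem:idempotent} exactly like $\p$, $\d$ and $\e$; your claim that they are not transitive and need a composition gadget is false, though harmlessly so. Second, and this is the real gap: for the two relations that genuinely are not idempotent, $\m$ and $\o$, you never exhibit the gadget. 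You reduce the lemma to the claim that ``there is a constant-size pattern of $\rr$-constraints whose composition is exactly $\{\rr\}$'' and yourself flag verifying this as the main obstacle --- but that verification is the entire content of the lemma for these two cases, so the proof is not complete. Worse, the phrasing in terms of composition along a chain cannot be made to work: any forward composition of $\m$-constraints yields $\p$ (since $x \m z \m y$ forces $x^+ = z^- < z^+ = y^-$), and forward compositions of $\o$ yield the union $\{\p,\m,\o\}$, so no series/parallel blow-up of same-direction chains returns the singleton relation. The gadget must use constraints oriented both ways relative to $x$ and $y$; the paper's constructions are $\{x \m u_i,\ v_i \m u_i,\ v_i \m y\}$ (forcing $x^+ = u_i^- = v_i^+ = y^-$) and $\{x \o u_i,\ u_i \o y,\ v_i \o x,\ v_i \o y\}$ (forcing $x^- < y^- < x^+ < y^+$), repeated for $i \in [k+1]$ with disjoint auxiliaries.

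A related soft spot is your robustness step: ``blow up each constraint of the pattern into $k+1$ parallel copies through fresh variables'' is either circular (making a single $\rr$-constraint crisp is precisely what the lemma asks for) or wrong (routing a copy through a fresh midpoint replaces the constraint by a composition, which is a different relation). The correct move, as in the paper, is to replicate the \emph{whole} constant-size gadget $k+1$ times, sharing only $x$ and $y$, so that deleting $k$ constraints leaves at least one intact copy; the converse direction then only needs the gadget to be satisfiable whenever $x \rr y$ holds, which the explicit constructions make immediate.
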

\begin{proof}
First, observe that $\e, \p, \d, \s, \f \in \A$ are idempotent,
so Lemma~\ref{lem:idempotent} applies.
There are two relations left, namely $\m$ and $\o$.
To construct $\III_\m$, introduce $2(k + 1)$ auxiliary variables $u_i$, $v_i$, 
and create constraints 
$x \m u_i$, $v_i \m u_i$, $v_i \m y$ for all $i \in [k+1]$.
For one direction,
observe that after deleting any $k$ constraints,
three constraints with $x, y$ and $u_i, v_i$ for some $i$ remain.
By definition of the relation $\m$,
these four constraints imply that $x^+ = u_i^- = v_i^+ = y^-$,
hence they imply $x \m y$.
For the converse,
if an assignment satisfies $x \m y$,
then setting $u_i = y$ and $v_i = x$ satisfies $\III_\m$.

To construct $\III_\o$, introduce $2(k + 1)$ auxiliary variables
$u_i, v_i$, and create constraints
$x \o u_i$, $u_i \o y$,
$v_i \o x$, $v_i \o y$
for all $i \in [k+1]$.
For one direction, observe that
constraints $x \o u_i$, $u_i \o y$ imply 
$x^- < u_i^- < x^+ < u_i^+$
and $u_i^- < y^- < u_i^+ < y^+$, hence
$x^- < y^-$ and $x^+ < y^+$.
The constraints $v_i \o x$, $v_i \o y$ imply
$v_i^- < x^- < v_i^+ < x^+$ and
$v_i^- < y^- < v_i^+ < y^+$,
hence $y^- < v_i^+ < x^+$.
We obtain that $x^- < y^- < x^+ < y^+$, i.e. $x \o y$ by definition.
For the converse,
if an assignment satisfies $x \o y$,
we can extend it to satisfy $\III_\o$
by choosing any two intervals $u$ and $v$ such that
$v^- < x^- < u^- < y^- < v^+ < x^+ < u^+ < y^+$ and setting $u_i=u$ and $v_i=v$ for all $i \in [k+1]$.
\end{proof}
\fi

\ifshort
In this section, we briefly present the rudiments of parameterized complexity, define the CSP and \textsc{MinCSP} problems, and  
provide some basics concerning interval relations.
Before we begin, we need some terminology and notation for graphs.
Let $G$ be a (directed or undirected) graph; we allow graphs to contain loops.
We denote the set of vertices in $G$ by $V(G)$.
If $G$ is undirected, then $E(G)$ denotes the set of edges in $G$.
If $G$ is directed, then $A(G)$ denotes the set of arcs in $G$,
and $E(G)$ denotes the set of edges in the underlying undirected graph of $G$.
We use~$uv$ to denote an undirected edge with end-vertices $u$ and~$v$.
We use~$(u,v)$ to denote a directed arc from~$u$ to~$v$; $u$ is the \emph{tail} and $v$ is the \emph{head}.
For $X \subseteq E(G)$,
we write $G - X$ to denote the directed graph obtained by
removing all edges/arcs corresponding to $X$ from $G$
if $G$ is undirected
and $A(G - X) = A(G) \setminus \{ (u, v), (v, u) \mid \{u,v\} \in X \})$ if $G$ is directed.
If $X \subseteq V(G)$,
then we let $G - X = G[V(G) \setminus X]$ be the subgraph
induced in $G$ by $V(G) \setminus X$.
An $st$-cut in $G$ is a set of edges/arcs $X$ such that the vertices $s$ and $t$
are separated in $G - X$.

A {\em parameterized} problem is a subset of $\Sigma^* \times \naturals$,
where $\Sigma$ is the input alphabet.
The parameterized complexity class \FPT contains the problems decidable in $f(k)\cdot n^{O(1)}$ time, where $f$ is a computable function and $n$ is the instance size.
Reductions between parameterized problems need to take
the parameter into account. 
To this end, we use \emph{parameterized reductions} (or fpt-reductions).
Consider two parameterized problems $L_1, L_2 \subseteq \Sigma^* \times \naturals$.
A mapping $P: \Sigma^* \times \naturals \rightarrow \Sigma^* \times \naturals$
is a parameterized reduction from $L_1$ to $L_2$ if
(1) $(x, k) \in  L_1$ if and only if $P((x, k)) \in L_2$, 
(2) the mapping can be computed in $f(k) \cdot n^{O(1)}$ time for some computable function $f$, and
(3) there is a computable function $g : \naturals \rightarrow \naturals$  
  such that for all $(x,k) \in \Sigma^* \times \naturals$, if $(x', k') = P((x, k))$, then $k' \leq g(k)$.
We will sometimes prove
that certain problems are not in $\FPT$.
The class $\Weft[1]$ contains all problems that are fpt-reducible to \textsc{Independent Set} 
parameterized by the number of vertices in the independent set.
Showing $\Weft[1]$-hardness (by an fpt-reduction) for a problem rules out the existence of an fpt
algorithm under the standard assumption that $\FPT \neq \Weft[1]$.

We continue by defining CSPs.
A {\em constraint language} $\Gamma$ is a set of relations over a domain $D$.
Each relation $R \in \Gamma$ has an associated \emph{arity} $r \in \naturals$ and $R \subseteq D^r$.
All relations considered in this paper are binary
and all constraint languages are finite.
An instance $\III$ of $\csp{\Gamma}$ consists of a set of variables $V(\III)$ 
and a set of constraints $C(\III)$ of the form $R(x,y)$,
where $R \in \Gamma$ and $x,y \in V(\III)$.
To simplify notation, we may write $R(x,y)$ as $xRy$.
An assignment $\varphi : V(\III) \rightarrow D$ \emph{satisfies} a constraint $R(x,y)$
if $(\varphi(x), \varphi(y)) \in R$
and \emph{violates} $R(x,y)$ if $(\varphi(x), \varphi(y)) \notin R$. 
The assignment $\varphi$ is a {\em satisfying assignment} (or a {\em solution}) if it satisfies
every constraint in $C(\III)$.

\pbDef{\csp{\Gamma}}
{An instance $\III$ of $\csp{\Gamma}$.}
{Does $\III$ admit a satisfying assignment?}

The {\em value} of an assignment $\varphi$ for $\III$ is the number of constraints in $C(\III)$ satisfied by $\varphi$.
For any subset of constraints $X \subseteq C(\III)$, 
let $\III-X$ denote the instance with $V(\III-X)=V(\III)$ and $C(\III-X) = C(\III) \setminus X$.
The (parameterized) \emph{almost constraint satisfaction problem} ($\mincsp{\Gamma}$)
is defined as follows:

\pbDefP{\mincsp{\Gamma}}
{An instance $\III$ of $\csp{\Gamma}$ and an integer $k$.}
{$k$.}
{Is there a set $X \subseteq C(\III)$ such that $\abs{X} \leq k$ and $\III-X$ is satisfiable?}

\iflong
Given an instance $\close{\III,k}$ of $\mincsp{\Gamma}$, the set 
$X$ can be computed 
with $|C(\III)|$ calls to an algorithm for \mincsp{\A}. Hence, we can view \mincsp{\Gamma} as a decision problem without loss of generality. \
Additionally note that if \mincsp{\Gamma} is in \FPT, then \csp{\Gamma} must be polynomial-time solvable: an instance $\III$ of
\csp{\Gamma} is satisfiable if and only if the instance $\close{\III,0}$ of \mincsp{\Gamma} is a yes-instance, and instances of the form
$\close{\III,0}$ are solvable in polynomial time if \mincsp{\Gamma} is in \FPT. 
\fi

Next, we review the basics of
{\em Allen's interval algebra}~\cite{Allen:cacm83} (IA). 
Its domain is the set
${\mathbb I}$ of all pairs $(x,y) \in {\mathbb Q}^2$ such that $x<y$, i.e. ${\mathbb I}$ can be viewed as the set of all closed intervals $[a,b]$ of 
rational numbers. If $I=[a,b] \in \mathbb I$, then we write $I^-$ for $a$ and $I^+$ for $b$. 
Let $\A$ denote the set of 13 {\em basic} relations
that are presented in Table~\ref{tb:allen}, and let
$2^{\A}$ denote the 8192
binary relations that can be formed by
taking unions of relations in $\A$.
The complexity of CSP$(\Gamma)$ is known for every
$\Gamma \subseteq 2^{\A}$~\cite{Krokhin:etal:jacm2003} and in each case
CSP$(X)$ is either polynomial-time solvable or NP-complete. In particular, CSP$(\A)$ is in P.
When considering subsets $\Gamma \subseteq \A$,
note that any constraint $x {\sf ri} y$ is equivalent to $y \rr x$
for $\rr \in \{\p, \m, \o, \d, \s, \f\}$,
so we may assume that $\rr \in \Gamma$ if and only if ${\sf ri} \in \Gamma$.
Furthermore, for the remainder of the paper, we may assume $\A = \{\p,\m,\o,\d,\s,\f,\e\}$.

When studying \textsc{MinCSP} and its parameterized complexity,
it is convenient to allow \emph{crisp} constraints,
i.e. constraints that cannot be deleted.
Formally, for a language $\Gamma \subseteq \A$ 
and a relation $\rr \in \Gamma$, we say that 
\emph{$\Gamma$ supports crisp $\rr$-constraints}
if, for every value of the parameter $k \in \naturals$,
we can construct an instance $\III_\rr$ of \mincsp{\Gamma}
with primary variables $x,y \in V(\III_\rr)$
such that the constraint $x \rr y$ is equivalent to $\III_\rr - X$ 
for all $X \subseteq C(\III)$ such that $|X| \leq k$.
Then, if we want to enforce a constraint $x \rr y$ in
an instance of \mincsp{\Gamma}, we can use $\III_\rr$ with fresh variables $V(\III) \setminus \{x,y\}$ in its place. Straightforward reasoning about interval constraints readily shows that every $\rr \in \A$ supports crisp constraints.

\fi

\iflong
\section{Classification Result} 
\fi 

\ifshort
\section{Overview}
\fi

\label{sec:classification}

In this section we prove the dichotomy theorem for the parameterized complexity of $\mincsp{\Gamma}$
for every subset $\Gamma \subseteq \A$ of interval relations.
We defer the proofs of the main technical details to the following
parts of the paper, while keeping this section relatively high-level
and focusing on the proof structure.
\ifshort
We also discuss constant-factor approximation algorithms for $\mincsp{\A}$.
\fi
Some observations reduce
the number of subsets of relations that 
we need to consider in the classification.
For the first one, we need a simplified definition of \emph{implementations}.
More general definitions are used in e.g.~\cite{Khanna:etal:sicomp2000}~and~\cite{kim2021solving}.

\iflong
\begin{definition}
  Let $\Gamma$ be a constraint language and $\rr$ be a binary relation over the same domain.
  A \emph{(simple) implementation} of a relation $\rr$ in $\Gamma$ 
  is an instance $\CCC_\rr$ of $\csp{\Gamma}$ with
  primary variables $x_1,x_2$ and, possibly, some auxiliary variables 
  $y_1,\dots,y_\ell$ such that:
  \begin{itemize}
    \item if an assignment $\varphi$ satisfies $\CCC_\rr$, 
    then it satisfies the constraint $x_1 \rr x_2$;
    \item if an assignment $\varphi$ does not satisfy $x_1 \rr x_2$,
    then it can be extended to the auxiliary variables $y_1,\dots,y_\ell$
    so that all but one constraint in $\CCC_\rr$ are satisfied. 
  \end{itemize}
  In this case we say that $\Gamma$ \emph{implements} $\rr$.
\end{definition}
\fi

\ifshort
\begin{definition}
  Let $\Gamma$ be a constraint language and $\rr$ be a binary relation over the same domain.
  A \emph{(simple) implementation} of a relation $\rr$ in $\Gamma$ 
  is an instance $\CCC_\rr$ of $\csp{\Gamma}$ with
  primary variables $x_1,x_2$ and, possibly, some auxiliary variables 
  $y_1,\dots,y_\ell$ such that (1)
  if an assignment $\varphi$ satisfies $\CCC_\rr$, 
    then it satisfies the constraint $x_1 \rr x_2$, and
    (2) if an assignment $\varphi$ does not satisfy $x_1 \rr x_2$,
    then it can be extended to the variables $y_1,\dots,y_\ell$
    so that all but one constraint in $\CCC_\rr$ are satisfied. 
  In this case we say that $\Gamma$ \emph{implements} $\rr$.
\end{definition}
\fi

\iflong
Intuitively, we can replace every occurrence of a constraint
$x \rr y$ with its implementation in $\Gamma$ while preserving
the cost of making the instance consistent.
This intuition is made precise in the following lemma.
\fi
\ifshort
Intuitively, we can replace every occurrence of a constraint
$x \rr y$ with its implementation in $\Gamma$ while preserving
the cost of making the instance consistent.
This intuition is made precise in the following lemma, and identifying the two
implementations in Lemma~\ref{lem:implementations} is left to the reader.
\fi

\begin{lemma}[Proposition~5.2~in~\cite{kim2021solving}]
\label{lem:impl-reduction}
  Let $\Gamma$ be a constraint language that implements a relation $\rr$.
  If $\mincsp{\Gamma}$ is in \FPT, then so is $\mincsp{\Gamma \cup \{\rr\}}$.
  If $\mincsp{\Gamma \cup \{\rr\}}$ is \W{1}-hard, then so is $\mincsp{\Gamma}$.
\end{lemma}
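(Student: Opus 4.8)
The plan is to prove both implications by a single substitution argument: given an instance of $\mincsp{\Gamma \cup \{\rr\}}$ (respectively $\mincsp{\Gamma}$), replace constraints one at a time, with copies of the implementation $\CCC_\rr$ using fresh auxiliary variables, and show that the optimal deletion cost is preserved. First I would set up notation: let $\CCC_\rr$ be the implementation of $\rr$ in $\Gamma$ with primary variables $x_1, x_2$ and auxiliary variables $y_1, \dots, y_\ell$, and let $m = |C(\CCC_\rr)|$. Given an instance $\III$ of $\mincsp{\Gamma \cup \{\rr\}}$ with parameter $k$, I would build $\III'$ by taking each constraint $c = (x \rr y) \in C(\III)$ and substituting a fresh copy $\CCC_\rr^c$ of $\CCC_\rr$ in which $x_1, x_2$ are identified with $x, y$ and $y_1, \dots, y_\ell$ are replaced by brand-new variables; constraints of $\III$ that already use relations from $\Gamma$ are left untouched. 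The parameter is unchanged, $k' = k$, so condition (3) of an fpt-reduction is immediate, and the construction is clearly polynomial, giving (2).

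The core is the equivalence in condition (1): $\III$ has a solution of cost $\leq k$ iff $\III'$ does. For the forward direction, take a deletion set $X \subseteq C(\III)$ with $|X| \leq k$ and a satisfying assignment $\varphi$ of $\III - X$. Build $X' \subseteq C(\III')$ by: for each deleted $\rr$-constraint $c \in X$, put into $X'$ one constraint of $\CCC_\rr^c$ — specifically, by the second bullet of the implementation definition, since $\varphi$ (extended arbitrarily, say not satisfying $x_1 \rr x_2$ — or more carefully, we may need to first extend $\varphi$ to the auxiliary variables) there is an extension satisfying all but one constraint of $\CCC_\rr^c$, so delete that one; for each retained $\rr$-constraint, $\varphi$ satisfies $x \rr y$, and I would need the implementation to guarantee that a satisfying assignment of $x_1 \rr x_2$ extends to satisfy all of $\CCC_\rr$ — this is exactly where I need to double-check the definition, since as stated it only gives the "all but one" direction for non-satisfying assignments. (In fact the standard definition of implementation requires: $\varphi$ satisfies $x_1\rr x_2$ iff $\varphi$ extends to satisfy all of $\CCC_\rr$; the "cost $1$" condition handles the failure case.) Retained $\Gamma$-constraints of $\III$ stay satisfied. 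Thus $|X'| \leq |X| \leq k$ and $\III' - X'$ is satisfiable. For the backward direction, take $X' \subseteq C(\III')$ with $|X'| \leq k$ and a satisfying assignment $\psi$ of $\III' - X'$. Define $X \subseteq C(\III)$ to contain every $\rr$-constraint $c$ such that $X'$ touches $\CCC_\rr^c$, plus every $\Gamma$-constraint of $\III$ in $X'$; then $|X| \leq |X'| \leq k$. For each $\rr$-constraint $c \notin X$, $\CCC_\rr^c$ is entirely retained, so $\psi$ satisfies all of $\CCC_\rr^c$, hence (first bullet) satisfies $x \rr y$. So $\psi$ restricted to $V(\III)$ satisfies $\III - X$.

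The main obstacle — really the only subtle point — is making sure the notion of implementation is "tight enough" in both directions: we genuinely need that a satisfying assignment of $x_1 \rr x_2$ lifts to a full satisfying assignment of $\CCC_\rr$ (not merely that any assignment violating $x_1 \rr x_2$ can be lifted at cost $1$). Since the lemma is quoted from~\cite{kim2021solving} as Proposition~5.2, I would either appeal to it directly or, if reproving, first record the observation that the displayed two-bullet definition in this paper should be read with the implicit strengthening that a satisfying assignment of the primary constraint extends to $\CCC_\rr$ — this follows because if it did not, one could never make $\CCC_\rr$ consistent at cost $0$ on the good assignments, contradicting the intended behaviour; alternatively one adds this as an explicit clause. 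Once that is pinned down, the rest is bookkeeping: iterate the single-constraint substitution over all $\rr$-constraints of the instance (the cost and satisfiability arguments compose), and conclude that $\mincsp{\Gamma}$ in \FPT{} transfers to $\mincsp{\Gamma \cup \{\rr\}}$, and \W{1}-hardness of the latter transfers to the former, since the reduction runs in both relevant directions with the roles of source and target swapped.

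\begin{proof}[Proof sketch]
Let $\CCC_\rr$ be an implementation of $\rr$ in $\Gamma$ with primary variables $x_1, x_2$ and auxiliary variables $y_1, \dots, y_\ell$. Given an instance $(\III, k)$ of $\mincsp{\Gamma \cup \{\rr\}}$, construct $(\III', k)$ as follows: keep all variables of $\III$ and all constraints of $\III$ that use relations from $\Gamma$; for each constraint $x \rr y$ of $\III$, introduce a fresh copy $\CCC_\rr^{x,y}$ of $\CCC_\rr$ in which $x_1 := x$, $x_2 := y$, and $y_1, \dots, y_\ell$ are replaced by new variables private to this copy. This is clearly computable in polynomial time and the parameter is unchanged.

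We claim $\III$ has a solution of size $\le k$ iff $\III'$ does. $(\Rightarrow)$ Let $X$ be such a solution and $\varphi \models \III - X$. For each $\rr$-constraint in $X$, $\varphi$ (suitably extended to the private auxiliary variables of the corresponding copy) fails $x_1 \rr x_2$, so by the definition of implementation it can be extended so that all but one constraint of that copy is satisfied; put that one constraint into $X'$. For each $\rr$-constraint not in $X$, $\varphi$ satisfies $x \rr y$, hence extends to satisfy all of the copy. Keeping also the $\Gamma$-constraints of $X$ in $X'$, we get $|X'| \le |X| \le k$ and a satisfying assignment of $\III' - X'$. $(\Leftarrow)$ Let $X'$ solve $\III'$ with $\psi \models \III' - X'$. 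Let $X$ consist of all $\rr$-constraints whose copy meets $X'$, together with all $\Gamma$-constraints of $\III$ in $X'$; then $|X| \le |X'| \le k$. For each $\rr$-constraint $x \rr y \notin X$, its copy is untouched, so $\psi$ satisfies every constraint of $\CCC_\rr^{x,y}$, and the first property of an implementation yields $\psi \models x \rr y$. Hence $\psi$ restricted to $V(\III)$ satisfies $\III - X$.

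The map $(\III, k) \mapsto (\III', k)$ is therefore a parameterized reduction from $\mincsp{\Gamma \cup \{\rr\}}$ to $\mincsp{\Gamma}$. Consequently, if $\mincsp{\Gamma}$ is in \FPT{} then so is $\mincsp{\Gamma \cup \{\rr\}}$, and if $\mincsp{\Gamma \cup \{\rr\}}$ is \W{1}-hard then so is $\mincsp{\Gamma}$.
\end{proof}
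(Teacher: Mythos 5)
Your argument is correct and is exactly the standard substitution proof; note that the paper itself does not prove this lemma at all but imports it as Proposition~5.2 of the cited work, so there is no in-paper proof to diverge from. The one substantive point is the definitional subtlety you flagged, and your instinct is right: the two-bullet definition of a simple implementation, read literally, only gives ``satisfying all of $\CCC_\rr$ implies $x_1 \rr x_2$'' plus the cost-$1$ extension for violating assignments, whereas your forward direction needs that an assignment satisfying $x_1 \rr x_2$ extends to satisfy \emph{all} of $\CCC_\rr$; without that, a degenerate gadget could force cost on every retained $\rr$-constraint and the reduction would not preserve the budget. This strengthening is indeed part of the full definition in the cited source and is easily checked for the concrete gadgets of Lemma~\ref{lem:implementations} (e.g.\ for $\CCC_\p$ take $y=[x_1^+,x_2^-]$), so reading it as an implicit clause, as you do, is the intended interpretation rather than a gap in your proof. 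One small slip in your sketch: for an $\rr$-constraint placed in the deletion set $X$, the assignment $\varphi$ need not violate $x \rr y$ (deleted is not the same as violated); the fix is immediate --- if $\varphi$ happens to satisfy it, extend to satisfy the whole copy and delete nothing, otherwise use the cost-$1$ extension --- and in both cases $|X'|\le|X|$, so the bookkeeping and the two complexity transfers (\FPT{} backwards along the reduction, \W{1}-hardness forwards) go through as you state.
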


\iflong
\begin{figure}
  \centering
  \begin{tikzpicture}
  \def \offset {0.1};
  \coordinate (Lx1) at (0,2);
  \coordinate (Rx1) at (1.5,2);
  \coordinate (Ly)  at (1.5,1);
  \coordinate (Ry)  at (3,1);
  \coordinate (Lx2) at (3,0);
  \coordinate (Rx2) at (4.5,0);

  \coordinate (Mx1) at (0.75,2);
  \coordinate (Mx2) at (2.25,1);
  \coordinate (My)  at (3.75,0);

  \draw (Lx1) -- (Rx1);
  \node[anchor=south] at (Mx1) {$x_1$};
  \draw ($ (Lx1) + (0, \offset) $) -- ($ (Lx1) - (0, \offset) $);
  \draw ($ (Rx1) + (0, \offset) $) -- ($ (Rx1) - (0, \offset) $);
  
  \draw (Lx2) -- (Rx2);
  \node[anchor=south] at (Mx2) {$y$};
  \draw ($ (Lx2) + (0, \offset) $) -- ($ (Lx2) - (0, \offset) $);
  \draw ($ (Rx2) + (0, \offset) $) -- ($ (Rx2) - (0, \offset) $);

  \draw[red] (Ly) -- (Ry);
  \node[anchor=south] at (My) {$x_2$};
  \draw[red] ($ (Ly) + (0, \offset) $) -- ($ (Ly) - (0, \offset) $);
  \draw[red] ($ (Ry) + (0, \offset) $) -- ($ (Ry) - (0, \offset) $);

  \draw[dashed] (Rx1) -- (Ly);
  \draw[dashed] (Lx2) -- (Ry);
\end{tikzpicture}
  \hfill
  \begin{tikzpicture}
  \def \offset {0.1};
  \coordinate (Lx1) at (1,2);
  \coordinate (Rx1) at (3,2);
  \coordinate (Ly)  at (0,1);
  \coordinate (Ry)  at (3,1);
  \coordinate (Lx2) at (0,0);
  \coordinate (Rx2) at (4,0);

  \coordinate (Mx1) at (2,2);
  \coordinate (My)  at (2,1);
  \coordinate (Mx2) at (2,0);

  \draw (Lx1) -- (Rx1);
  \node[anchor=south] at (Mx1) {$x_1$};
  \draw ($ (Lx1) + (0, \offset) $) -- ($ (Lx1) - (0, \offset) $);
  \draw ($ (Rx1) + (0, \offset) $) -- ($ (Rx1) - (0, \offset) $);
  
  \draw (Lx2) -- (Rx2);
  \node[anchor=south] at (Mx2) {$x_2$};
  \draw ($ (Lx2) + (0, \offset) $) -- ($ (Lx2) - (0, \offset) $);
  \draw ($ (Rx2) + (0, \offset) $) -- ($ (Rx2) - (0, \offset) $);

  \draw[red] (Ly) -- (Ry);
  \node[anchor=south] at (My) {$y$};
  \draw[red] ($ (Ly) + (0, \offset) $) -- ($ (Ly) - (0, \offset) $);
  \draw[red] ($ (Ry) + (0, \offset) $) -- ($ (Ry) - (0, \offset) $);

  \draw[dashed] (Lx2) -- (Ly);
  \draw[dashed] (Rx1) -- (Ry);
\end{tikzpicture}  
  \hfill
  \begin{tikzpicture}
  \def \offset {0.1};
  \coordinate (Lx1) at (0,2);
  \coordinate (Rx1) at (3,2);
  \coordinate (Ly)  at (1,1);
  \coordinate (Ry)  at (3,1);
  \coordinate (Lx2) at (1,0);
  \coordinate (Rx2) at (4,0);

  \coordinate (Mx1) at (2,2);
  \coordinate (My)  at (2,1);
  \coordinate (Mx2) at (2,0);

  \draw (Lx1) -- (Rx1);
  \node[anchor=south] at (Mx1) {$x_1$};
  \draw ($ (Lx1) + (0, \offset) $) -- ($ (Lx1) - (0, \offset) $);
  \draw ($ (Rx1) + (0, \offset) $) -- ($ (Rx1) - (0, \offset) $);
  
  \draw (Lx2) -- (Rx2);
  \node[anchor=south] at (Mx2) {$x_2$};
  \draw ($ (Lx2) + (0, \offset) $) -- ($ (Lx2) - (0, \offset) $);
  \draw ($ (Rx2) + (0, \offset) $) -- ($ (Rx2) - (0, \offset) $);

  \draw[red] (Ly) -- (Ry);
  \node[anchor=south] at (My) {$y$};
  \draw[red] ($ (Ly) + (0, \offset) $) -- ($ (Ly) - (0, \offset) $);
  \draw[red] ($ (Ry) + (0, \offset) $) -- ($ (Ry) - (0, \offset) $);

  \draw[dashed] (Lx2) -- (Ly);
  \draw[dashed] (Rx1) -- (Ry);
\end{tikzpicture}  

  \caption{Implementations of the relations $\p$, $\d$ and $\o$ from Lemma~\ref{lem:implementations}.}
  \label{fig:implementations}
\end{figure}
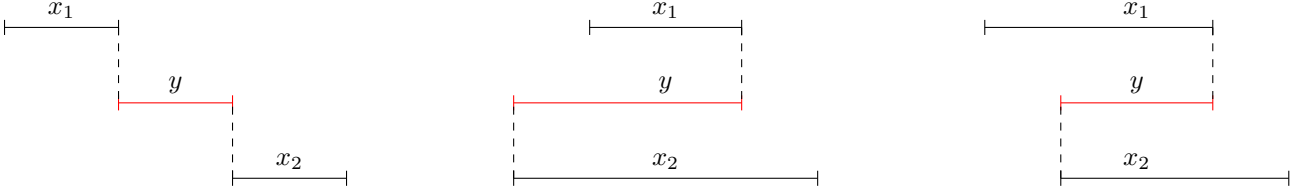
\fi

\ifshort
\begin{lemma}[Implementations] \label{lem:implementations}
  Let $\Gamma \subseteq \A$ be a subset of interval relations.
  If $\Gamma$ contains $\m$, then $\Gamma$ implements $\p$, and
    if $\Gamma$ contains $\f$ and $\s$, then $\Gamma$ implements $\d$ and $\o$. 
\end{lemma}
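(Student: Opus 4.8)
The plan is to realise each of the three relations by an explicit $\csp{\Gamma}$-gadget on the primary variables $x_1, x_2$ together with a single auxiliary interval variable $y$, and then to verify the two conditions in the definition of an implementation by reading off the endpoint descriptions in Table~\ref{tb:allen}. In each case the gadget will consist of exactly two constraints, so condition~(2) reduces to the following: for every placement of $x_1, x_2$ violating $x_1 \rr x_2$, choose an interval for $y$ satisfying exactly one of the two constraints. The only slightly delicate point I anticipate is ensuring, in condition~(2), that the constraint meant to be broken is genuinely broken rather than satisfied by coincidence; I would handle this by leaving one endpoint of $y$ free and choosing it to avoid a single forbidden value, which is always possible since $\rationals$ is dense and unbounded. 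Beyond that, each check is a short endpoint computation, so I expect no real obstacle.

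\textbf{$\p$ from $\m$.} Take $\CCC_\p = \{\, x_1 \m y,\ y \m x_2 \,\}$. Any solution satisfies $x_1^+ = y^- < y^+ = x_2^-$, hence $x_1 \p x_2$, giving condition~(1). For condition~(2), if $x_1^+ \ge x_2^-$, set $y^- := x_1^+$ and pick a rational $y^+ > x_1^+$ with $y^+ \neq x_2^-$; then $x_1 \m y$ holds while $y \m x_2$ fails. (When $x_1 \p x_2$ does hold, $y := [x_1^+, x_2^-]$ satisfies $\CCC_\p$.)

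\textbf{$\d$ and $\o$ from $\{\f, \s\}$.} Take $\CCC_\d = \{\, x_1 \f y,\ y \s x_2 \,\}$ and $\CCC_\o = \{\, y \f x_1,\ y \s x_2 \,\}$. A solution of $\CCC_\d$ satisfies $x_1^- > y^- = x_2^-$ and $x_1^+ = y^+ < x_2^+$, i.e. $x_1 \d x_2$; a solution of $\CCC_\o$ satisfies $x_1^- < y^- = x_2^-$, $y^- < y^+$, and $x_1^+ = y^+ < x_2^+$, which chain to $x_1^- < x_2^- < x_1^+ < x_2^+$, i.e. $x_1 \o x_2$. This gives condition~(1). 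For condition~(2) of $\CCC_\d$, split on which conjunct of $\d$ fails: if $x_1^- \le x_2^-$, take $y := [\,x_2^-, z\,]$ with $x_2^- < z < x_2^+$, so $y \s x_2$ holds while $x_1 \f y$ fails (its requirement $x_1^- > y^-$ is violated); if instead $x_1^- > x_2^-$ but $x_1^+ \ge x_2^+$, take $y := [\,z, x_1^+\,]$ with $z < x_1^-$, so $x_1 \f y$ holds while $y \s x_2$ fails (its requirement $y^+ < x_2^+$ is violated). For condition~(2) of $\CCC_\o$ no case split is needed: for \emph{any} placement of $x_1, x_2$, take $y := [\,z, x_1^+\,]$ with $x_1^- < z < x_1^+$ and $z \neq x_2^-$; then $y \f x_1$ holds while $y \s x_2$ fails. (When $x_1 \d x_2$ or $x_1 \o x_2$ holds, $y := [x_2^-, x_1^+]$ satisfies the corresponding gadget.)

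Finally I would record that in each case a solution of the gadget exists precisely when the implemented relation holds — the converses noted above — so that the three gadgets are genuine implementations in the required sense; the lemma then follows, and the statement is consumed downstream through Lemma~\ref{lem:impl-reduction} to transfer \FPT-membership and \W{1}-hardness between $\Gamma$ and $\Gamma \cup \{\rr\}$.
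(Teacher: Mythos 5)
Your proposal is correct and uses exactly the same gadgets as the paper ($\CCC_\p = \{x_1 \m y,\ y \m x_2\}$, $\CCC_\d = \{x_1 \f y,\ y \s x_2\}$, $\CCC_\o = \{y \f x_1,\ y \s x_2\}$), with the same endpoint computations for condition~(1) and, for condition~(2), the paper's simpler uniform observation that one can always extend to $y$ so that the constraint involving $x_1$ (resp.\ the first constraint) is satisfied regardless of $x_2$. Your extra care — the case split for $\CCC_\d$, the requirement that the remaining constraint be genuinely violated (automatic from condition~(1)), and the check that the gadget is satisfiable when $x_1 \rr x_2$ holds — is harmless but not needed under the paper's simplified definition of implementation.
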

\fi

\iflong
\begin{lemma}[Implementations] \label{lem:implementations}
  Let $\Gamma \subseteq \A$ be a subset of interval relations.
  \begin{enumerate}
    \item If $\Gamma$ contains $\m$, then $\Gamma$ implements $\p$.
    \item If $\Gamma$ contains $\f$ and $\s$, then $\Gamma$ implements $\d$ and $\o$. 
  \end{enumerate}
\end{lemma}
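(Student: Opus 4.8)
The plan is to exhibit, for each target relation $\rr$, a small constraint instance $\CCC_\rr$ over $\Gamma$ with primary variables $x_1,x_2$ and a single fresh auxiliary variable $y$, and then verify the two conditions in the definition of an implementation directly from the endpoint descriptions in Table~\ref{tb:allen}. The common idea is that every basic relation is a conjunction of comparisons between the four endpoints involved, so a two-constraint ``chain'' routed through $y$ can reproduce exactly the comparisons defining $\rr$, while the slack in where $y$ may be placed makes the total cost of the chain exactly one whenever $x_1\rr x_2$ is forced to fail.

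For part~(1) I would take $\CCC_\p$ to consist of the constraints $x_1\m y$ and $y\m x_2$. Condition~(1) is immediate: a satisfying assignment has $x_1^+=y^-$ and $y^+=x_2^-$, and $y^-<y^+$ forces $x_1^+<x_2^-$, i.e.\ $x_1\p x_2$. For condition~(2), given $\varphi$ that violates $x_1\p x_2$ (so $x_1^+\ge x_2^-$) I would set $\varphi(y):=[x_1^+,x_1^++1]$, which satisfies $x_1\m y$; and the two constraints cannot both be satisfied, since by condition~(1) that would force $x_1^+<x_2^-$, so exactly one of them is violated.

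For part~(2) I would take $\CCC_\d$ to consist of $x_1\f y$ and $y\s x_2$, and $\CCC_\o$ to consist of $y\f x_1$ and $y\s x_2$\iflong\ (these three gadgets are the ones depicted in Figure~\ref{fig:implementations})\fi. Condition~(1) is again a short endpoint chase: for $\CCC_\d$ one gets $x_2^-=y^-<x_1^-<x_1^+=y^+<x_2^+$, which is $x_1\d x_2$; for $\CCC_\o$ one gets $x_1^-<x_2^-=y^-<y^+=x_1^+<x_2^+$, which is $x_1\o x_2$. For condition~(2), the key point is that in both gadgets the constraint $y\s x_2$ involves only $x_2$, and every interval $I$ admits a $J$ with $J\s I$; hence I can always extend $\varphi$ so that $y\s x_2$ holds, and since by condition~(1) the two constraints cannot both be satisfied unless $\varphi$ already satisfies the target relation, exactly one constraint of the gadget is violated whenever $\varphi$ violates $x_1\d x_2$ (resp.\ $x_1\o x_2$). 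If a stronger notion of implementation requiring a completeness clause is intended, I would additionally record that $\varphi(y):=[x_1^+,x_2^-]$, $[x_2^-,x_1^+]$ and $[x_2^-,x_1^+]$, respectively, extend any satisfying assignment of the target constraint to all of the corresponding gadget. Together with Lemma~\ref{lem:impl-reduction}, the lemma then lets us ignore $\p$ in the classification whenever $\m$ is available, and ignore $\d$ and $\o$ whenever $\s$ and $\f$ are available.

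I do not expect a genuine obstacle: all the work is routine endpoint arithmetic. The only thing that needs care is choosing the right ``orientation'' when composing $\s$ and $\f$ through $y$ --- for $\d$ the auxiliary $y$ must share $x_2$'s left endpoint and $x_1$'s right endpoint, so that $x_1$ sits strictly inside $x_2$, whereas for $\o$ letting $y$ \emph{finish} $x_1$ instead pushes $x_1$'s left endpoint strictly to the left of $x_2$'s, turning the inclusion into an overlap. Keeping straight which endpoint of $y$ is shared with which primary variable is the only place an error could slip in.
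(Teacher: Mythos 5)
Your proposal is correct and matches the paper's proof essentially verbatim: it uses the identical gadgets $\CCC_\p = \{x_1\m y,\ y\m x_2\}$, $\CCC_\d = \{x_1\f y,\ y\s x_2\}$ and $\CCC_\o = \{y\f x_1,\ y\s x_2\}$, with the same endpoint chases for soundness and the same observation that one of the two constraints can always be satisfied by a suitable placement of $y$, so at most one constraint of the gadget is violated. The only cosmetic difference is that the paper always extends $y$ to satisfy the constraint touching $x_1$, whereas you sometimes satisfy the one touching $x_2$; this changes nothing.
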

\begin{proof}
Let
$\CCC_\p = \{ x_1 \{ \m \} y, y \{ \m \} x_2 \}$,
$\CCC_\d = \{ x_1 \{ \f \} y, y \{ \s \} x_2 \}$ and
$\CCC_\o = \{ y \{ \f \} x_1, y \{ \s \} x_2 \}$.
See Figure~\ref{fig:implementations} for illustrations.
One direction of the correctness proof is the same for all three cases:
any assignment to $x_1$ and $x_2$ can be extended to $y$ 
so that it satisfies the constraint between $x_1$ and $y$
regardless of the interval assigned to $x_2$.
For the other direction, recall the definitions of the interval relations.
An assignment satisfying $\CCC_\p$ satisfies
$x_1^+ = y^- < y^+ = x_2^-$, i.e. $x_1^+ < x_2^-$ and so $x_1 \{\p\} x_2$ holds.
An assignment satisfying $\CCC_\d$ satisfies
$x_1^- > y^- = x_2^-$ and $x_1^+ = y^+ < x_2^+$, 
i.e. $x_1^- > x_2^- \land x_1^+ < x_2^+$ and so $x_1 \{\d\} x_2$ holds.
Finally, an assignment satisfying $\CCC_\o$ satisfies
$x_1^- < y^- = x_2^-$, $x_2^- = y^- < y^+ = x_1^+$, and $x_1^+ = y^+ < x_2^+$,
i.e. $x_1^- < x_2^- < x_1^+ < x_2^+$ and so $x_1 \{\o\} x_2$ holds.
\end{proof}
\fi

Another observation utilizes the symmetry of interval relations.
By switching the left and the right endpoints of all intervals in 
an instance $\III$ of $\mincsp{\A}$ and then negating their values,
we obtain a \emph{reversed instance} $\III^R$.
\iflong
This way, in every constraint of $\III$, all relations except $\d$, $\s$ and $\f$ 
are mapped to their inverses, $\d$ stays unchanged,
while $\s$ and $\f$ are switched.
\fi
Formally, instance $\III^R$ of $\csp{\A}$ has the same set of variables as $\III$, and
contains a constraint $u f(\rr) v$ for every $u \rr v$ in $C(\III)$,
where $f : \A \to \A$ is defined as
$f(\rr) = {\sf ri}$ for $\rr \in \{\m, \p, \o\}$,
$f(\e) = \e$, $f(\d) = \d$, $f(\s) = \f$ and $f(\f) = \s$.

\begin{lemma}[Lemma~4.2~of~\cite{Krokhin:etal:jacm2003}]
  \label{lem:reversal}
  An instance $\III$ of $\csp{\A}$ is satisfiable if and only if
  the reversed instance $\III^R$ is satisfiable.
\end{lemma}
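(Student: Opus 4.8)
The plan is to verify the claim directly from the endpoint semantics of the interval relations, using the observation that the map sending an interval $I = [I^-, I^+]$ to $I^R := [-I^+, -I^-]$ is an order-reversing bijection on $\mathbb{I}$ (it is well-defined since $I^- < I^+$ implies $-I^+ < -I^-$, and it is clearly an involution). First I would fix the correspondence: given a satisfying assignment $\varphi$ for $\III$, define $\varphi^R(v) := (\varphi(v))^R$ for every variable $v$, so that $(\varphi^R(v))^- = -(\varphi(v))^+$ and $(\varphi^R(v))^+ = -(\varphi(v))^-$. Since $I \mapsto I^R$ is a bijection, this is a one-to-one correspondence between assignments for $\III$ and assignments for $\III^R$, and it is its own inverse; hence it suffices to prove one direction, namely that if $\varphi$ satisfies $\III$ then $\varphi^R$ satisfies $\III^R$.

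The core step is then a purely mechanical check, constraint by constraint, that for each basic relation $\rr \in \A$ and each pair of intervals $I, J$, we have $I \rr J$ if and only if $I^R \, f(\rr) \, J^R$. This is done by reading off the endpoint descriptions in Table~\ref{tb:allen} and applying the fact that $a < b \iff -b < -a$ together with $a = b \iff -b = -a$. For instance, $I \p J$ means $I^+ < J^-$, which is equivalent to $-J^- < -I^+$, i.e. $(J^R)^+ < (I^R)^-$ after the endpoint swap — wait, more carefully: $(I^R)^- = -I^+$ and $(J^R)^+ = -J^-$, so $I^+ < J^-$ is equivalent to $-J^- < -I^+$, i.e. $(J^R)^+ < (I^R)^-$, which says $J^R \p I^R$, i.e. $I^R \, {\sf pi} \, J^R = I^R f(\p) J^R$. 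The relation $\m$ ($I^+ = J^-$) behaves analogously with equality, giving $f(\m) = {\sf mi}$; $\o$ (the chain $I^- < J^- < I^+ < J^+$) reverses to $-J^+ < -I^+ < -J^- < -I^-$, which after the endpoint swap is the $\o$-chain for $(J^R, I^R)$, giving $f(\o) = {\sf oi}$; $\e$ and $\d$ are visibly self-symmetric under simultaneous negation-and-swap; and $\s$ (conditions $I^- = J^-$, $I^+ < J^+$) turns into $(I^R)^+ = (J^R)^+$ and $(I^R)^- > (J^R)^-$, which is exactly the $\f$ condition for $(I^R, J^R)$, so $f(\s) = \f$ and symmetrically $f(\f) = \s$. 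Each of the remaining inverse relations is handled by the same computation applied to the already-verified base relation.

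Having established this equivalence for every single constraint, the lemma follows immediately: $\varphi$ satisfies $u \rr v$ in $\III$ $\iff$ $\varphi(u) \rr \varphi(v)$ $\iff$ $\varphi^R(u)\, f(\rr)\, \varphi^R(v)$ $\iff$ $\varphi^R$ satisfies $u f(\rr) v$ in $\III^R$. Taking the conjunction over all constraints of $\III$ (and noting $C(\III^R)$ is exactly $\{u f(\rr) v : (u \rr v) \in C(\III)\}$), we get that $\varphi$ is a solution to $\III$ iff $\varphi^R$ is a solution to $\III^R$. Combined with the observation that $\III \mapsto \III^R$ is an involution on instances (since $f$ is an involution on $\A$ and the endpoint operation squares to the identity), both implications of the ``if and only if'' follow.

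I do not expect any genuine obstacle here — the only thing to be careful about is bookkeeping: keeping straight which endpoint of $I^R$ corresponds to which endpoint of $I$ (the swap together with the sign change), and correctly reading Table~\ref{tb:allen}, whose caption notes that the universal conditions $I^- < I^+$, $J^- < J^+$ are suppressed. One should double-check that these suppressed conditions are preserved, which is exactly the statement that $I \mapsto I^R$ maps $\mathbb{I}$ to $\mathbb{I}$. Since this is Lemma~4.2 of~\cite{Krokhin:etal:jacm2003}, in the paper one may simply cite it, but the verification above is short enough to include if a self-contained argument is wanted.
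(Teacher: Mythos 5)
Your proof is correct. The paper does not actually prove this lemma --- it simply cites it as Lemma~4.2 of Krokhin, Jeavons and Jonsson --- and your direct verification (the order-reversing involution $I \mapsto [-I^+,-I^-]$ on $\mathbb{I}$, followed by a relation-by-relation endpoint check that $I \rr J$ holds iff $I^R f(\rr) J^R$ holds, matching the map $f$ as defined in the paper) is exactly the standard argument one would give; all of your individual computations, including $f(\s)=\f$, $f(\f)=\s$, $f(\d)=\d$ and $f(\rr)={\sf ri}$ for $\rr\in\{\p,\m,\o\}$, are right.
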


\iflong
For example, this implies that obstructions for satisfiability
of $\csp{\s, \p}$ are the same as those for $\csp{\f, \p}$
up to relabelling $\s$-arcs into $\f$-arcs, as shown in Lemma~\ref{lem:bad-cycles}.
\fi
\ifshort
\fi

To obtain our results, we use combinatorial tools and 
represent an instance $\III$ of \csp{\A} as
an \emph{arc-labelled mixed graph} $G_\III$,
i.e. a graph that contains edges for symmetric constraints
and labelled arcs for asymmetric ones.
More precisely, the graph $G_\III$ is obtained by introducing
all variables of $\III$ as vertices,
adding directed arcs $(u, v)$ labelled with $\rr \in \A \setminus \{\equiv\}$
for every constraint $u \rr v$ in $C(\III)$,
and undirected edges $uv$ for every constraint $u \equiv v$ in $\III$.
Note that $G_\III$ may have parallel arcs with different labels and may contain loops.
The undirected graph underlying $G_\III$ is called
the \emph{primal graph} of $\III$; we allow the primal graph to contain loops and parallel edges (in both cases, this will mean the primal graph contains a cycle).
The advantage of the graph representation is 
supported by the following lemma:

\begin{restatable}[Cycles]{lemma}{cycleslemma}
\label{lem:cycles}
  Let $\III$ be an inclusion-wise minimal unsatisfiable instance of \csp{\A}
  (i.e. removing any constraint of~$\III$ results in a satisfiable instance).
  Then the primal graph of~$\III$ is a cycle.
\end{restatable}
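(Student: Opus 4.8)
The plan is to show that a minimal unsatisfiable instance has a primal graph that is connected, has no vertex of degree~$1$, and has exactly one cycle; together with minimality this forces the primal graph to be a single cycle (possibly a loop or a pair of parallel edges).

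First I would argue \textbf{connectivity}: if the primal graph of $\III$ were disconnected, then $\III$ would be the disjoint union of instances $\III_1,\dots,\III_m$ on disjoint variable sets with no shared constraints. An instance is satisfiable iff each component is, so if $\III$ is unsatisfiable then some $\III_i$ is unsatisfiable; but then $\III_i \subsetneq \III$ is a smaller unsatisfiable instance, contradicting minimality (unless $\III = \III_i$, in which case $\III$ was already connected). Next I would handle \textbf{degree-$1$ vertices}: suppose $v$ has degree~$1$ in the primal graph, so $v$ is incident to exactly one constraint $c$ (I would first note that parallel edges or a loop at $v$ would already give degree $\ge 2$ or a cycle, and the latter case is handled separately). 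Remove $c$ to get $\III - \{c\}$, which is satisfiable by minimality; take a satisfying assignment $\varphi$ of $\III - \{c\}$. Since $v$ appears in no other constraint, I can freely reassign $\varphi(v)$ to any interval and still satisfy all of $\III - \{c\}$. It remains to check that $c$ (a constraint $v\,\rr\,w$ or $w\,\rr\,v$ for some relation $\rr \in \A$, or an $\equiv$-constraint) can always be satisfied by choosing $\varphi(v)$ appropriately once $\varphi(w)$ is fixed: this is immediate because each of the thirteen basic relations is \emph{nonempty in each coordinate}, i.e.\ for any fixed interval $J$ there is an interval $I$ with $I\,\rr\,J$ (and symmetrically $I\,{\sf ri}\,J$). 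Hence $\varphi$ extends to a satisfying assignment of $\III$, contradicting unsatisfiability. Therefore the primal graph has minimum degree at least~$2$.

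A connected multigraph with minimum degree at least~$2$ contains a cycle~$C$ (in the multigraph sense, so a loop or a pair of parallel edges counts as a cycle). Let $X \subseteq C(\III)$ be the set of constraints corresponding to the edges of~$C$. I would now invoke \textbf{Lemma~\ref{lem:cycles}}'s companion result, namely Lemma~\ref{lem:bad-cycles}-type characterizations --- more precisely, I only need the weak direction that \emph{any satisfiable instance whose primal graph is a cycle}, \ldots but that is not quite what I want; instead the cleanest route is: by minimality, $\III - (X \setminus \{c\})$ is satisfiable for every single $c \in X$, but I want to conclude $\III = X$. For this, consider the sub-instance $\III[C]$ of $\III$ consisting only of the constraints in $X$. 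If $\III[C]$ were satisfiable, fix a satisfying assignment $\psi$ of it; since $\III$ is minimal and $\III[C] \subseteq \III$, if $\III[C] \subsetneq \III$ then $\III[C]$ being satisfiable contradicts nothing directly --- so instead I argue the contrapositive via minimality differently: since $\III$ is \emph{inclusion-wise minimal} unsatisfiable, \emph{every} proper subset is satisfiable; if $X \subsetneq C(\III)$ then $\III[C] = (V(\III), X)$ restricted to $V(C)$ is a proper sub-instance (after deleting isolated variables it is still a proper sub-instance of $\III$ in the sense of constraints), hence satisfiable. But I also claim $\III[C]$ must be unsatisfiable: this is the place where the structure of the thirteen interval relations enters --- every cycle of basic interval constraints, read as a system on the $2$-element endpoint sets, is satisfiable \emph{unless} it forces a strict cyclic inequality, and even so, if $\III[C]$ is satisfiable then any assignment of it extends to all of $\III$ by the degree-$2$/free-reassignment argument applied iteratively to the tree $\III - X$ hanging off $C$. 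Concretely: $\III - X$ (as a graph) is a forest since deleting the edges of a cycle from a connected graph with exactly one cycle yields a forest; by repeatedly picking leaves of this forest and extending $\psi$ (each leaf has exactly one constraint to the rest, satisfiable by nonemptiness-in-each-coordinate of basic relations), I extend $\psi$ to a satisfying assignment of $\III$, contradiction. Hence $\III[C]$ is unsatisfiable, so by minimality $\III = \III[C]$ and $C(\III) = X$; combined with minimum degree~$2$ this means the primal graph equals the cycle~$C$.

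The step I expect to be the \textbf{main obstacle} is making the ``extend a partial satisfying assignment across a leaf'' argument fully rigorous and uniform across all thirteen basic relations and the $\equiv$-relation: I must check that for every basic relation $\rr$ and every fixed interval $J$, both $\{I : I\,\rr\,J\}$ and $\{I : J\,\rr\,I\}$ are nonempty, which is a finite case analysis on Table~\ref{tb:allen} (straightforward but needs to be stated), and I must ensure that the forest $\III - X$ genuinely has a leaf at each step and that deleting a leaf keeps it a forest on the remaining variables. A secondary subtlety is the bookkeeping when the ``cycle'' $C$ is degenerate (a loop at a vertex, or two parallel edges between two vertices): in both cases the argument is actually easier since $|X| \in \{1,2\}$, but the statement ``the primal graph is a cycle'' should be read to include these degenerate cycles, so I would remark on this explicitly to match the conventions set up just before the lemma statement.
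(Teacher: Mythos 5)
Your connectivity and minimum-degree-two steps are fine (and the fact you rely on there, that each basic relation is nonempty in each coordinate, is correct), but the heart of the lemma is the final step, and there your argument has a genuine gap. You never prove that the primal graph has exactly one cycle; "connected with minimum degree at least $2$" does not give unicyclicity (theta graphs and two cycles joined by a path satisfy both), yet your extension argument explicitly uses "deleting the edges of a cycle from a connected graph with exactly one cycle yields a forest". Worse, even when $G-E(C)$ happens to be a forest (e.g.\ a theta graph, where removing one cycle leaves the third path), the leaf-peeling extension does not go through: a component of that forest can meet $C$ in \emph{two} already-assigned vertices, so at some point you must assign a vertex carrying two constraints into the assigned part, and such a pinned path need not be completable (a path of $\p$-constraints between two intervals that the cycle-assignment has placed in the wrong order is already unsatisfiable). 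So the claim that a satisfying assignment of $\III[C]$ extends to all of $\III$ is unjustified precisely in the cases you need to exclude, and with it the claim that $\III[C]$ is unsatisfiable. Note also that the only relation-specific fact you invoke (extendability across a single constraint) cannot suffice in principle: it holds for, say, disequality over a three-element domain, where minimal unsatisfiable instances ($K_4$ with $\neq$-constraints) are not cycles; your scheme applied there would break in exactly the same place, with the fourth vertex having three constraints into the assigned triangle.

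What is missing is a localization argument showing that unsatisfiability of an interval instance is always witnessed by a cyclic sub-instance, and this is where the paper does real work. It encodes $\III$ into the point algebra over $\{<,=\}$ (variables $I^-,I^+$ with crisp $I^-<I^+$, plus the endpoint translations from Table~\ref{tb:allen}), uses the standard fact that an unsatisfiable $\{<,=\}$-instance contains an inconsistent cycle of constraints, chooses that cycle to minimize its length and maximize the number of $I^-<I^+$ constraints on it (which forces that whenever both endpoints of an interval occur on the cycle they are joined by the $I^-<I^+$ constraint), and then pulls the cycle back to an unsatisfiable sub-instance $\III'''\subseteq\III$ whose primal graph is a cycle; minimality of $\III$ then gives $\III=\III'''$. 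To repair your proof you would need some substitute for this step (for instance the point-algebra translation, or a composition/path-consistency argument specific to $\A$); the purely graph-theoretic reasoning plus coordinatewise nonemptiness cannot deliver it.
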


\iflong
The proof of the lemma is deferred to Section~\ref{sec:cycles}.
\fi
\ifshort
A sketch of the proof of the lemma is deferred to Section~\ref{sec:cycles}.
\fi
From the combinatorial point of view,
minimal unsatisfiable instances are 
\emph{bad cycles} in the labelled graph.
For example, in $\mincsp{\p}$, the bad cycles are all directed cycles.
For $\mincsp{\p, \e}$, the bad cycles contain at least one $\p$-arc
and all $\p$-arcs in the same direction.
Thus, $\mincsp{\Gamma}$ can now be cast as 
a certain feedback edge set problem --
our goal is to find a set of $k$ edges in the primal graph
that intersects all bad cycles.
\iflong
Understanding the complexity of $\mincsp{\Gamma}$ 
for a particular subset of interval relations $\Gamma \subseteq \A$
requires characterizing the bad cycles explicitly.
\fi
\iflong
We present such a characterization for several cases below,
and refer to Section~\ref{sec:cycles} for the proof.
As it turns out, these are sufficient for proving the classification theorem.

\begin{restatable}[Bad Cycles]{lemma}{badcycleslemma}
\label{lem:bad-cycles}
  Let $\III$ be an instance of $\csp{\rr_1, \rr_2}$
  for some $\rr_1, \rr_2 \in \A$,
  and consider the arc-labelled mixed graph $G_\III$.
  Then $\III$ is satisfiable if and only if $G_\III$ does not contain
  any bad cycles described below.
  \begin{enumerate}
    \item If $\rr_1 = \d$ and $\rr_2 = \p$, then the bad cycles are
    cycles with $\p$-arcs in the same direction and
    no $\d$-arcs meeting head-to-head. 
    \item\label{case:bad-cycles-do}If $\rr_1 = \d$ and $\rr_2 = \o$, then the bad cycles are
    cycles with all $\d$-arcs in the same direction and 
    all $\o$-arcs in the same direction 
    (the direction of the $\d$-arcs may differ from that of the $\o$-arcs).
    \item If $\rr_1 = \o$ and $\rr_2 = \p$, then the bad cycles are
    \begin{itemize}
    \item directed cycles of $\o$-arcs and 
    \item cycles with all $\p$-arcs in the forward direction, with
    every consecutive pair of $\o$-arcs in the reverse direction
    separated by a $\p$-arc (this case includes directed cycles of $\p$-arcs).
    \end{itemize}
    \item If $\rr_1 \in \{\f,\s\}$ and $\rr_2 \in \{\d, \o, \p\}$, then the bad cycles are
    \begin{itemize}
    \item directed cycles of $\rr_1$-arcs and
    \item cycles with at least one $\rr_2$-arc and all $\rr_2$-arcs in the same direction
    (and $\rr_1$-arcs directed arbitrarily).  
    \end{itemize}
  \end{enumerate}
\end{restatable}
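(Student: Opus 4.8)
The starting point is Lemma~\ref{lem:cycles}. An instance $\III$ of $\csp{\rr_1,\rr_2}$ is unsatisfiable if and only if it contains an inclusion‑minimal unsatisfiable sub‑instance, and by Lemma~\ref{lem:cycles} the primal graph of such a sub‑instance is a cycle (possibly a loop or a pair of parallel arcs); such a sub‑instance is exactly a cycle in $G_\III$. Hence the lemma reduces to the following purely combinatorial task: for each pair $(\rr_1,\rr_2)$ and each cycle $C$ in $G_\III$, decide whether the sub‑instance carried by $C$ is satisfiable, and show that the unsatisfiable cycles are \emph{precisely} those described in the statement. Using Lemma~\ref{lem:reversal} we may moreover drop the cases with $\rr_1=\f$: reversing an instance fixes $\d$ and sends $\p,\o$ to their converses, so the obstructions for $\csp{\f,\rr_2}$ are obtained from those for $\csp{\s,f(\rr_2)}$ by relabelling $\f$‑arcs to $\s$‑arcs and reversing the remaining arcs. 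Thus it suffices to treat the six pairs $\{\d,\p\}$, $\{\d,\o\}$, $\{\o,\p\}$, $\{\s,\d\}$, $\{\s,\o\}$, $\{\s,\p\}$.

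The main tool is the translation to endpoint constraints from Table~\ref{tb:allen}. Given a cycle $C=v_1v_2\cdots v_mv_1$ whose arcs are labelled $\rr_1$ or $\rr_2$, introduce the $2m$ point variables $v_i^-,v_i^+$, add $v_i^-<v_i^+$ for each $i$, and replace every arc of $C$ by the conjunction of $<$‑ and $=$‑constraints it imposes on the four endpoints of its two intervals. We then invoke the standard fact that a finite set of $<$‑ and $=$‑constraints over $(\mathbb{Q};<,=)$ is satisfiable if and only if, after contracting all $=$‑edges, the digraph of strict arcs is acyclic; so $C$ is unsatisfiable exactly when this endpoint digraph $D_C$ contains a directed cycle, and from a topological order of $D_C$ one reads off intervals witnessing satisfiability (placing each endpoint at the coordinate of its class). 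The structure of $D_C$ is easiest to see when $\rr_1=\s$: $\s$‑arcs identify left endpoints, so contracting them partitions $C$ into consecutive segments delimited by the $\rr_2$‑arcs; the contracted left endpoints then form a cycle of ``blobs'' whose edges (one per $\rr_2$‑arc) are oriented according to the direction of the corresponding $\rr_2$‑arc, while inside each blob the right endpoints form an acyclically oriented path hanging off the common left endpoint. Since an orientation of a cycle graph has a directed cycle only if all its edges are oriented consistently, $D_C$ has a directed cycle precisely when all $\rr_2$‑arcs point the same way around $C$, or when there is no $\rr_2$‑arc and all $\s$‑arcs point the same way — which is exactly item~4. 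The pairs $\{\d,\p\}$, $\{\d,\o\}$ and $\{\o,\p\}$ are handled by the same recipe, tracking the strict inequalities each constrained pair forces on all four endpoints and locating the directed cycles of $D_C$; for $\{\o,\p\}$ one must also account for the ``staircase'' effect of reversed $\o$‑arcs, which is what produces the refined description in item~3.

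For every pair both implications must be verified. The ``only if'' direction is routine: for a cycle of each claimed bad shape one exhibits an explicit contradictory chain of strict inequalities among the endpoints — e.g.\ a directed $\s$‑cycle forces $v_1^-=\cdots=v_m^-$ together with $v_1^+<v_2^+<\cdots<v_m^+<v_1^+$, and a cycle with all $\rr_2$‑arcs aligned closes up a directed cycle through the left endpoints as sketched above. The substantial part is the converse: showing that any cycle avoiding all listed shapes is satisfiable, i.e.\ that $D_C$ is acyclic. I expect the main obstacle to be the cases $\rr_2\in\{\d,\o\}$, where a single arc couples \emph{both} sides of its intervals (a $\d$‑arc forces $v_{i+1}^-<v_i^-$ and $v_i^+<v_{i+1}^+$, i.e.\ nested intervals, and an $\o$‑arc forces the four endpoints to interleave). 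Then the ``left‑endpoint layer'' and ``right‑endpoint layer'' of $D_C$ interact, and one has to rule out directed cycles that alternate between the two layers. The plan to overcome this is to first contract the $\rr_1$‑arcs, which behave rigidly (either identifying an endpoint, for $\s$, or forcing a fixed inequality), reducing $C$ to a cycle of super‑vertices joined only by $\rr_2$‑arcs; a short case check on how a $\rr_2$‑arc can be entered and left in $D_C$ then shows that every directed cycle of $D_C$ must traverse the whole super‑cycle, which again forces all $\rr_2$‑arcs to be consistently oriented, contradicting the assumption. Assembling the six cases (and their $\f$‑counterparts via Lemma~\ref{lem:reversal}) yields the statement.
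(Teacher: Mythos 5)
Your overall framework is sound and matches the paper's starting point: reduce to cycles via Lemma~\ref{lem:cycles}, use Lemma~\ref{lem:reversal} to identify \f with \s, translate each arc into endpoint constraints and use the fact that a $\{<,=\}$-instance is satisfiable iff the strict digraph is acyclic after contracting equalities. Your analysis of item~4 is essentially complete (with the minor caveat that the ``blob'' picture should also account for the right-layer arcs $u^+<v^+$ contributed by $\rr_2\in\{\d,\o\}$ between blobs and the cross arcs $v^-<u^+$ of \o and $u^+<v^-$ of \p; these do not change the conclusion, but they are the reason the claim needs the layer-by-layer check rather than a pure cycle-orientation argument). The paper instead proves satisfiability of non-bad cycles by explicit interval constructions (splitting at two opposite \p-arcs, absorbing head-to-head \d-arcs into one large interval, an interpolation ``staircase'' for $\{\o,\p\}$, a perturbation of right endpoints for item~4), so your uniform $D_C$-acyclicity route is a legitimately different way to organize the converse direction.

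The genuine gap is items~1--3. There your proposal is only a statement of intent, and the one concrete mechanism you offer --- ``contract the $\rr_1$-arcs \dots{} every directed cycle of $D_C$ must traverse the whole super-cycle, which forces all $\rr_2$-arcs to be consistently oriented'' --- does not apply and would give the wrong answer. For the pairs $\{\d,\p\}$, $\{\d,\o\}$, $\{\o,\p\}$ neither relation identifies endpoints, so there is nothing to contract; and, more importantly, ``all $\p$-arcs consistently oriented'' is \emph{not} the right characterization of badness for items~1 and~3. For $\{\d,\p\}$ the cycle $x\,\d\,y$, $z\,\d\,y$, $x\,\p\,z$ has its single \p-arc trivially consistent yet is satisfiable (take $y$ large enough to contain both $x$ and $z$); badness additionally requires that no two \d-arcs meet head-to-head. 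Similarly, for $\{\o,\p\}$ a cycle with all \p-arcs forward is satisfiable whenever two reverse \o-arcs are not separated by a \p-arc, which is exactly the refined condition in item~3 that you defer. In your framework this is precisely where directed cycles of $D_C$ must weave between the left- and right-endpoint layers through the \p-arcs ($u^+<v^-$) and the interval arcs ($w^-<w^+$), and deciding exactly when such a weaving closes up --- i.e.\ deriving the head-to-head condition of item~1 and the separation condition of item~3, and, for item~2, the two decoupled layer conditions --- is the substantive content of the lemma that your proposal does not yet supply (the paper spends its Claims on opposite \p-arcs and head-to-head \d-arcs, plus per-case constructions, on exactly this). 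Until you carry out that per-pair cycle analysis of $D_C$, items~1--3 remain unproven.
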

\setcounter{specialtheorem}{\thetheorem} %
\fi
\ifshort
We present such a characterization for several cases 
in Section~\ref{sec:cycles}.
\fi

\ifshort

Our algorithmic results can be summarised as follows.

\begin{lemma} \label{lem:mp-fpt}
  $\mincsp{\m, \p}$ and $\mincsp{\rr_1, \rr_2, \e}$ are in \FPT for 
  $\rr_1 \in \{\s, \f\}$ and $\rr_2 \in \{ \p, \o, \d \}$.
\end{lemma}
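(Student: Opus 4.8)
I would prove the two halves of the statement separately. For $\mincsp{\m,\p}$ the plan is to pass to endpoint variables. Replace each interval variable $I$ by two point variables $I^{-},I^{+}$ together with a crisp constraint $I^{-}<I^{+}$ (crispness of $<$-constraints is available since $<$ is idempotent on $\mathbb{Q}$, so Lemma~\ref{lem:idempotent} applies), translate every constraint $x\,\m\,y$ into $x^{+}=y^{-}$ and every $x\,\p\,y$ into $x^{+}<y^{-}$. Reading Table~\ref{tb:allen}, a point assignment satisfying $I^{-}<I^{+}$ for all $I$ corresponds to an interval assignment, and it satisfies a translated constraint iff the original one holds; since each interval constraint is translated to a single deletable constraint, the optimum is preserved. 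This yields an equivalent instance of $\mincsp{<,=}$. Finally, $\{<,\le\}$ implements $=$ (take $\CCC_{=}=\{x_1\le x_2,\ x_2\le x_1\}$: if both hold then $x_1=x_2$, and if $x_1\ne x_2$ then exactly one of the two constraints is violated), so by Lemma~\ref{lem:impl-reduction}, $\mincsp{<,=}$ reduces to $\mincsp{<,\le}$, equivalently \subdfas, which is in \FPT~\cite{chitnis2015directed} (even in the weighted setting~\cite{kim2022weighted}).

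\textbf{The six cases via bad cycles.} For $\mincsp{\rr_1,\rr_2,\e}$ with $\rr_1\in\{\s,\f\}$ and $\rr_2\in\{\p,\o,\d\}$ I would first invoke the reversal symmetry (Lemma~\ref{lem:reversal}): reversing an instance fixes $\e$ and $\d$, swaps $\s$ and $\f$, and only flips the direction (not the relation) of $\p$- and $\o$-arcs, so $\mincsp{\f,\rr_2,\e}$ reduces to $\mincsp{\s,\rr_2,\e}$ and we may assume $\rr_1=\s$. Represent the instance $\III$ by the arc-labelled mixed graph $G_\III$ with an undirected edge per $\e$-constraint, a \emph{short} arc per $\s$-constraint, and a \emph{long} arc per $\rr_2$-constraint. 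By Lemma~\ref{lem:cycles} the minimal unsatisfiable subinstances are cycles of the primal graph, and by Lemma~\ref{lem:bad-cycles}(4) the bad cycles are exactly the directed cycles of $\s$-arcs (together with $\e$-edges) and the cycles containing at least one $\rr_2$-arc with all $\rr_2$-arcs in one direction ($\s$-arcs arbitrary) -- which are precisely the two forbidden-cycle families of \lsmfas for this short/long/edge labelling. Hence $\III$ is satisfiable iff $G_\III$ has no forbidden cycle, and since deleting an interval constraint is deleting the corresponding edge/arc, $\mincsp{\rr_1,\rr_2,\e}$ is the same problem as \lsmfas on $G_\III$ (likewise in the weighted setting).

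\textbf{Solving \lsmfas.} It then remains to prove that \lsmfas is in \FPT, building on the \subdfas algorithm of~\cite{kim2022weighted}. I would start with iterative compression, reducing to the situation where we are given a set $W$ of $k+1$ arcs/edges whose removal destroys all forbidden cycles and seek a solution of size at most $k$; the at most $2(k+1)$ endpoints of $W$ form a set $T$ of \emph{terminals} meeting every forbidden cycle. Next, guess how the terminals sit on the number line in the graph obtained by deleting a hypothetical optimal solution: an assignment of integer coordinates to $T$ in a bounded range, where $\e$-edges force equality, short arcs force distance $1$ and long arcs force distance exceeding $2|V(G_\III)|$; there are $k^{O(k)}$ such guesses and we discard those inconsistent with the arcs of $G_\III$ inside $T$. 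For a fixed guess, every remaining vertex admits only $O(k)$ \emph{types} describing its position relative to $T$ (coinciding with a terminal, or at short/long distance before/after one), the type of a vertex determines exactly which of its constraints towards terminals are satisfied, and -- since every forbidden cycle passes through $T$ -- the remaining decisions split along the terminals into \subdfas-like subproblems whose joint optimum over type assignments is computed by a reduction to \textsc{Bundled Cut} with pairwise linked deletable arcs, which is in \FPT by~\cite{Kim:etal:FA3}. Using the weighted variants of \subdfas and \textsc{Bundled Cut} gives the weighted version of \lsmfas.

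\textbf{Main obstacle.} The combinatorial reductions are comparatively routine once the endpoint inequalities are spelled out (for every $\rr_2\in\{\p,\o,\d\}$, traversing an $\rr_2$-arc strictly increases or strictly decreases the left endpoint while an $\s$-arc keeps it fixed and only constrains right endpoints, which is exactly why Lemma~\ref{lem:bad-cycles}(4) is the right statement). The substantive step is the \FPT algorithm for \lsmfas: the real work is establishing the ``placement on a line'' equivalence precisely enough that, after fixing terminal coordinates, an optimal type assignment corresponds to a bundled cut, and arranging the bundles and the pairwise linking so that choosing a short- versus long-distance placement -- and paying the correct deletion cost -- is modelled faithfully. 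I expect this, together with the iterative-compression bookkeeping, to be the technically demanding part.
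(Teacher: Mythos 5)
Your proposal is correct and follows essentially the same route as the paper: pass to endpoint variables to reduce $\mincsp{\m,\p}$ to $\mincsp{<,=}$ and then to \subdfas, and handle the six $\{\rr_1,\rr_2,\e\}$ cases via the bad-cycle characterization (Lemma~\ref{lem:bad-cycles}.4), the equivalence with \lsmfas, and the iterative-compression / terminal-placement / \textsc{Bundled Cut} algorithm of Theorem~\ref{thm:lsmfas-algo}. The only cosmetic deviations are that the paper first implements $\p$ from $\m$ and encodes $\mincsp{<,=}$ directly as a red/black arc graph, whereas you translate $\p$ directly to $x^{+}<y^{-}$ and implement $=$ by two $\le$-constraints; both variants work.
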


The algorithm for $\mincsp{\m, \p}$ is obtained using a simple reduction to
\textsc{Subset Directed Feedback Arc Set}.

\pbDefP{Subset Directed Feedback Arc Set (\subdfas)}
{A directed graph $G$, a subset of red arcs $R \subseteq A(G)$, 
and an integer $k$.}
{$k$.}
{Is there a subset $Z \subseteq A(G)$ of size at most $k$ 
such that $G - Z$ contains no cycles with at least one red arc?}

Chitnis et al.~\cite{chitnis2015directed} have proved that
  \subdfas is solvable in $O^*(2^{O(k^3)})$ time.
The algorithm for the remaining cases is
more complicated and relies on the bad cycle characterization in
Section~\ref{sec:cycles} 
and a sophisticated modification
of the algorithm for \subdfas from~\cite{kim2022weighted}.

\fi

\iflong
Our first algorithmic result is based on a reduction to the
\textsc{Subset Directed Feedback Arc Set} (\subdfas) problem.
The formal definition of \subdfas and the proof 
can be found in Section~\ref{sec:fpt-algs}.

\begin{lemma} \label{lem:mp-fpt}
  $\mincsp{\m, \p}$ is in \FPT.
\end{lemma}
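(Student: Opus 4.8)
The plan is to reduce $\mincsp{\m, \p}$ to $\mincsp{<, =}$ over the rational domain (viewing constraints on interval endpoints as point constraints), and then to observe that $\mincsp{<, =}$ reduces to $\mincsp{<, \leq}$, which is known to coincide with \subdfas\ and is therefore in \FPT\ by the result of Chitnis et al.~\cite{chitnis2015directed}. The key algebraic fact driving the first reduction is the endpoint characterisation of the two relations in Table~\ref{tb:allen}: $x \m y$ holds iff $x^+ = y^-$ (together with the always-present constraints $x^- < x^+$ and $y^- < y^+$), and $x \p y$ holds iff $x^+ < y^-$. So from an instance $\III$ of $\mincsp{\m, \p}$ with variables $V(\III)$, I would build an instance $\III'$ of $\mincsp{<, =}$ whose variables are the endpoint symbols $\{v^-, v^+ : v \in V(\III)\}$, with a crisp constraint $v^- < v^+$ for each $v$ (crispness is available because $\{<\}$ is idempotent, so Lemma~\ref{lem:idempotent} applies, or one can simply observe these constraints are never worth deleting), and, for each original constraint: an $=$-constraint $x^+ = y^-$ for $x \m y$, and a $<$-constraint $x^+ < y^-$ for $x \p y$.

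Next I would argue the reduction is solution-preserving with the parameter unchanged. In one direction, any interval solution $\varphi$ of $\III - X$ induces a point assignment $v \mapsto (\varphi(v)^-, \varphi(v)^+)$ satisfying $\III' - X'$, where $X'$ is the image of $X$; conversely, given a point assignment $\psi$ of $\III' - X'$ satisfying all the crisp constraints $v^- < v^+$, the map $v \mapsto [\psi(v^-), \psi(v^+)]$ is a well-defined interval assignment (since $\psi(v^-) < \psi(v^+)$) satisfying the corresponding surviving interval constraints, because the endpoint characterisations are if-and-only-if. Hence $\III$ has a solution of size $\le k$ iff $\III'$ does, and $|X'| = |X|$. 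Finally, $\mincsp{<, =}$ reduces to $\mincsp{<, \leq}$: replace each $a = b$ by the pair $a \le b$, $b \le a$ (at the cost of at most doubling the budget to $2k$, since deleting one of the pair suffices to ``remove'' the equality), or—cleaner for keeping the parameter linear—note that an $=$-constraint in a minimal unsatisfiable instance behaves like a contracted pair and one can instead identify endpoints connected by a chain of $=$-constraints after guessing which $=$-constraints to delete; either way the parameter stays bounded by a computable function of $k$. Then invoke the fact that $\mincsp{<, \leq} = \subdfas$ (constraints $u < v$ as red arcs, $u \le v$ as black arcs; a cycle is unsatisfiable iff it contains a red arc) together with the \FPT\ algorithm of Chitnis et al.~\cite{chitnis2015directed}, running in $O^*(2^{O(k^3)})$ time.

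The only mild subtlety—and the step I'd be most careful about—is handling the $=$-constraints without blowing up the parameter super-linearly or, worse, mishandling the interaction between deleting an $=$-constraint and the crisp endpoint-order constraints $v^- < v^+$. The safest route is: first preprocess $\III'$ by contracting all $=$-edges (so endpoints in the same $=$-class become a single vertex); if this creates a loop on a $<$-arc or violates some $v^- < v^+$, the instance is infeasible regardless of deletions of $\m$/$\p$ constraints, but since $=$-constraints are themselves deletable we must instead branch. A clean way to avoid branching entirely is to keep $=$-constraints and reduce $\mincsp{<,=}$ directly to \subdfas\ with a slightly richer gadget: represent $=$ by two anti-parallel black arcs and $<$ by a red arc, and observe that a cycle is ``bad'' iff it uses a red arc, so $\mincsp{<,=}$ is literally \subdfas\ with budget $k$ (deleting one of the two anti-parallel black arcs removes the $=$-constraint). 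This keeps the parameter exactly $k$. I would present this direct gadget, verify via Lemma~\ref{lem:cycles} (or a direct Gaussian-type argument on orderings) that satisfiability of $\mincsp{<,=}$ is equivalent to acyclicity-after-contraction which is exactly the absence of red-arc cycles, and conclude by citing \cite{chitnis2015directed}.
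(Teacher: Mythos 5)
Your proposal is correct and takes essentially the same route as the paper: translate the interval constraints into endpoint constraints over $\{<,=\}$ with crisp $v^- < v^+$ constraints, encode $\mincsp{<,=}$ as \subdfas (anti-parallel black arcs for $=$, a red arc for $<$, budget unchanged), and invoke the algorithm of Chitnis et al.~\cite{chitnis2015directed}; your final gadget is exactly the paper's Lemma~\ref{lem:less-eq=sdfas}. The only cosmetic difference is that the paper first replaces each $\p$-constraint by its $\m$-implementation (Lemmas~\ref{lem:implementations} and~\ref{lem:impl-reduction}) and then translates only $\m$, whereas you translate both relations directly via Table~\ref{tb:allen}, which is equally valid.
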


While Lemma~\ref{lem:implementations} implies that we can replace
a $\p$-constraint with an implementation
using $\m$ constraints in the above case, such an implementation is not possible for our second algorithmic case.
The latter relies on 
the much more complicated characterization in Lemma~\ref{lem:bad-cycles}, 
the symmetry of $\s$ and $\f$ implied by Lemma~\ref{lem:reversal}, 
and a sophisticated modification
of the algorithm for \subdfas from~\cite{kim2022weighted}.
The algorithm is based on the novel flow augmentation technique~\cite{Kim:etal:stoc2022}.
See Section~\ref{sec:fpt-algs} for the details.

\begin{lemma} \label{lem:sf-pod-fpt}
  $\mincsp{\rr_1, \rr_2, \e}$ is in \FPT for 
  $\rr_1 \in \{\s, \f\}$ and $\rr_2 \in \{ \p, \o, \d \}$.
\end{lemma}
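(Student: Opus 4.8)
The plan is to reduce $\mincsp{\rr_1, \rr_2, \e}$ to a new feedback arc set problem on mixed graphs with two types of arcs, which we call \lsmfas{} (Mixed Feedback Arc Set with Short and Long Arcs), and then solve that problem directly. First I would set up the reduction: given an instance $\III$, build the arc-labelled mixed graph $G_\III$, keeping undirected edges for $\e$-constraints, turning $\rr_1$-constraints into short arcs and $\rr_2$-constraints into long arcs (the orientation of each arc follows the orientation of the corresponding constraint). The correctness of this reduction is exactly the content of Lemma~\ref{lem:bad-cycles}, case~4: the directed cycles of $\rr_1$-arcs correspond to forbidden cycles of type~(1) in \lsmfas{} (at least one short arc, no long arcs, all short arcs in the same direction), and the cycles with at least one $\rr_2$-arc all oriented consistently correspond to forbidden cycles of type~(2). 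By Lemma~\ref{lem:reversal} it suffices to handle $\rr_1 = \s$ (the $\rr_1 = \f$ case is obtained by reversal), so there are really only three base cases $\rr_2 \in \{\p, \d, \o\}$ to verify, and these follow from the bad-cycle description together with an explicit interval assignment when no bad cycle is present (place $\e$-connected variables at a common interval, give $\s$-connected variables a common left endpoint, order left endpoints by $\rr_2$ and right endpoints within an $\s$-class by the $\rr_1$-constraints).

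Once the reduction is in place, the core work is an fpt algorithm for \lsmfas{}, which builds on the \subdfas{} algorithm of~\cite{kim2022weighted}. I would start with iterative compression, reducing to the situation where the input comes with a set $T$ of at most $k+1$ vertices meeting every forbidden cycle; equivalently, deleting $T$ makes the graph "forbidden-cycle-free", which by the line-embedding characterization means the remaining graph admits a placement on $\mathbb{R}$ respecting the edge/short-arc/long-arc distance constraints. Then I would branch over the relative order and distance-class of the terminals in a hypothetical optimal solution — each terminal lands somewhere on the line, and the pattern of which terminals are $\e$-identified, which are at short distance, and which at long distance from each other is one of $2^{O(k \log k)}$ options. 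After guessing this skeleton, the task becomes: place every non-terminal vertex relative to the terminals (before/after at short or long distance, or coincident), breaking at most $k$ arc/edge constraints. Since a vertex can relate to a fixed terminal in only a constant number of ways, there are $O(k)$ possible "types" per vertex, and the type assignment determines exactly which constraints are violated.

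The final step is to solve the optimal type-assignment problem by reducing it to \textsc{Bundled Cut} with pairwise-linked deletable arcs, which is fixed-parameter tractable by~\cite{Kim:etal:FA3}; this is where the flow-augmentation machinery enters. I would model each vertex's type choice as a path through a gadget, bundle together the arcs whose simultaneous deletion corresponds to violating a single constraint of $\III$, and argue that a bundled cut of size $k$ corresponds to a valid type assignment violating at most $k$ constraints, and conversely. The same framework handles the weighted version, since the \subdfas{} algorithm of~\cite{kim2022weighted} and the \textsc{Bundled Cut} algorithm are both weight-aware.

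The main obstacle I anticipate is the correctness of the reduction to \textsc{Bundled Cut}: one must verify that the "placing on a line" viewpoint is genuinely equivalent to the absence of forbidden cycles (so that type assignments and solutions correspond), and that the bundling faithfully encodes the constraints of $\III$ without introducing spurious forbidden cycles involving terminals — in particular, long arcs must be forced to go only between distinct short-arc components, which requires the distance thresholds (short $\le 1$, long $>$ twice the number of vertices) to interact correctly with the terminal skeleton. Getting the gadget to simultaneously enforce "short arc $\Rightarrow$ small positive distance" and "long arc $\Rightarrow$ large positive distance" while remaining within the \textsc{Bundled Cut} format, and proving both directions of the cost equivalence, is the delicate heart of the argument.
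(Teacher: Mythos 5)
Your proposal follows essentially the same route as the paper: reduce $\mincsp{\rr_1,\rr_2,\e}$ to \lsmfas{} via the bad-cycle characterization (Lemma~\ref{lem:bad-cycles}.4), then solve \lsmfas{} by iterative compression, the line-placement structure lemma, guessing the $2^{O(k\log k)}$ terminal orderings, and reducing the resulting type-assignment problem to \textsc{Bundled Cut} with pairwise linked deletable arcs solved by flow augmentation. The delicate point you flag — building the layered gadget so that short/long reachability of non-terminals is encoded faithfully and the pairwise-linked property holds — is exactly where the paper invests its effort (via the downward arcs and the per-edge/arc bundles in Theorem~\ref{thm:lsmfas-algo}), so your plan matches the published argument.
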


\fi

\iflong
For the negative results, we start by proving \W{1}-hardness
for paired and simultaneous graph cut problems in 
Sections~\ref{ssec:me-ms-hard}~and~\ref{ssec:do-dp-op-hard}, respectively.
In paired problems, the input consists of two graphs
together with a pairing of their edges.
The goal is to compute cuts in both graphs 
using at most $k$ pairs.
While the problems on individual graphs are solvable in polynomial time,
the pairing requirement leads to \W{1}-hardness.
For certain $\Gamma \subseteq \A$, 
paired problems reduce to $\mincsp{\Gamma}$.
For intuition, consider a constraint $x \e y$.
If we consider the left and the right endpoints separately,
then $\e$ implies two equalities: $x^- = y^-$ and $x^+ = y^+$.
Together with another relation (e.g. $\m$), 
this double-equality relation can be used to encode
the pairing of the edges of two graphs
(namely, the left-endpoint graph and the right-endpoint graph).
We note that the double-equality relation is also
the cornerstone of all hardness results in the parameterized
complexity classification of \textsc{Boolean MinCSP}~\cite{Kim:etal:FA3}.
This intuition is crystallized in Section~\ref{ssec:me-ms-hard}
in the proof of the following lemma.
\fi

\ifshort
For the negative results, we start by proving \W{1}-hardness
for certain paired and simultaneous graph cut problems, and we
identify $\Gamma \subseteq \A$ such that 
paired or simultaneous problems reduce to $\mincsp{\Gamma}$.
For intuition, consider a constraint $x \e y$.
If we consider the left and the right endpoints separately,
then $\e$ implies two equalities: $x^- = y^-$ and $x^+ = y^+$.
Together with another relation (e.g. $\m$), 
this double-equality relation can be used to encode
the pairing of the edges of two graphs
(namely, the left-endpoint graph and the right-endpoint graph).
We note that the double-equality relation is also
the cornerstone of all hardness results in the parameterized
complexity classification of \textsc{Boolean MinCSP}~\cite{Kim:etal:FA3}.
Lemma~\ref{lem:me-ms-mf-hard} is based on paired problems
and Lemma~\ref{lem:do-po-dp-hard} is based on simultaneous problems.
\fi

\begin{lemma} \label{lem:me-ms-mf-hard}
  \mincsp{\m, \e}, \mincsp{\m, \s} and \mincsp{\m, \f} are \textnormal{\W{1}}-hard.
\end{lemma}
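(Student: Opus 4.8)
The plan is to reduce from \textsc{Paired Cut} (equivalently \textsc{Paired Cut Feedback Arc Set}), which is known to be \W{1}-hard by Marx and Razgon~\cite{MarxR09}. The core idea, as hinted in the introduction, is to exploit the fact that an $\m$-constraint $x\,\m\,y$ forces $x^+ = y^-$ on endpoints while an $\e$-constraint $x\,\e\,y$ forces both $x^-=y^-$ and $x^+=y^+$ simultaneously. Thinking of the interval-algebra instance through its endpoint variables, each interval $x$ contributes two point variables $x^-<x^+$; a chain of $\m$-constraints becomes a directed path on point variables (each constraint identifying a right endpoint with the next left endpoint), and an $\e$-constraint is a ``linked pair'' of identifications acting simultaneously on a left-endpoint point and a right-endpoint point. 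This is exactly the mechanism that realizes a \emph{pair} of deletable arcs in \textsc{Paired Cut}.

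First I would set up the reduction from \textsc{Paired Cut}: given a digraph $G$ with terminals $s,t$ and arcs grouped into pairs, plus budget $k$, I build an instance of $\mincsp{\m,\e}$ whose underlying structure encodes two $s$--$t$ ``flow-like'' gadgets (one living on left endpoints, one on right endpoints) so that a pair of arcs $\{a_1,a_2\}$ of $G$ corresponds to a single $\e$-constraint in the $\mincsp$ instance, deletable at unit cost, while the non-paired structural arcs are implemented as crisp $\m$-constraints (using Lemma~\ref{lem:crisp}/\ref{lem:implementations} to make them uncuttable, by taking $k'=k$). Concretely, using the crisp constraints I force long directed chains of $\m$'s that realize the two copies of $G$ with a common source and common sink on each side; a bad cycle (an obstruction, by Lemma~\ref{lem:cycles}) then corresponds to an $s$--$t$ path surviving in $G$. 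I must check that the only way to kill all obstructions with budget $k$ is to delete $\e$-constraints corresponding to a set of $\le k$ pairs that forms a \textsc{Paired Cut}, and conversely; the $\e$-constraint's double identification is what forces the deletion to act on \emph{both} copies of the corresponding arc at once, which is precisely the paired-deletion semantics.

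For $\mincsp{\m,\s}$ and $\mincsp{\m,\f}$ I would either give analogous reductions or, more economically, observe that $\s$ (resp.\ $\f$) also imposes a shared equality on one endpoint together with a strict inequality on the other ($x^-=y^-$, $x^+<y^+$ for $\s$), and show that in the gadget the strict inequality can be absorbed so that $\{\m,\s\}$ (resp.\ $\{\m,\f\}$) still implements, or directly simulates, the same linked-pair behavior needed for the \textsc{Paired Cut} encoding. By the reversal symmetry (Lemma~\ref{lem:reversal}), the $\f$ case follows from the $\s$ case once the latter is established, since $f$ swaps $\s$ and $\f$ and fixes $\m$'s inverse class appropriately (one has to be slightly careful: $\m$ maps to $\mi$, but since $\mincsp{\m}=\mincsp{\mi}$ up to reversing arcs, hardness transfers). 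Alternatively, all three reductions can be carried out in parallel with a single gadget parameterized by the choice of $\rr_1\in\{\e,\s,\f\}$, checking in each case that the $\rr_1$-constraint yields a deletable ``pair'' and that the crisp $\m$-chains produce exactly the desired obstructions.

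The main obstacle I anticipate is the soundness direction of the reduction: ensuring that an optimal solution of size $\le k$ to the $\mincsp$ instance cannot ``cheat'' by deleting something other than a clean set of $\le k$ $\rr_1$-constraints, and that the set it deletes projects to an honest \textsc{Paired Cut} in $G$. This requires (i) making the $\m$-chains genuinely crisp with the right choice of internal budget so deleting them is never helpful, (ii) verifying via the cycle/obstruction characterization (Lemma~\ref{lem:cycles}, and the bad-cycle descriptions) that every surviving $s$--$t$ path in $G$ genuinely produces an unsatisfiable sub-instance, so that hitting all obstructions is equivalent to cutting $G$, and (iii) confirming that in the $\s$ and $\f$ variants the extra strict inequalities never create spurious obstructions or, conversely, never accidentally satisfy an instance that should be unsatisfiable. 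Handling the interaction of the two endpoint ``layers'' correctly—so that one $\rr_1$-deletion removes one arc from \emph{each} layer, matching paired deletion—is the delicate combinatorial heart of the argument.
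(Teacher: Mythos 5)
Your overall plan coincides with the paper's: hardness of \mincsp{\m, \e} is obtained by reduction from a paired cut problem, with the crisp scaffolding built from $\m$ (via the left- and right-endpoint equality relations that $\m$ implements), one soft $\e$-constraint per pair so that a unit-cost deletion acts simultaneously on the left-endpoint layer and the right-endpoint layer, and crisp $\m$-constraints between the terminals making any surviving connection inconsistent; the $\f$ case is indeed obtained from the $\s$ case via the reversal symmetry of Lemma~\ref{lem:reversal}.

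Two points you gloss over are where the real work lies. First, a single $\e$- or $\s$-constraint supplies exactly one identification per endpoint layer, so every pair must consist of one edge intended for the left-endpoint graph and one for the right-endpoint graph. Accordingly, the paper does not reduce from the single-digraph \textsc{Paired Cut} of Marx and Razgon with arbitrary pairing, but formulates two-graph variants (\textsc{Paired Cut} and \textsc{Paired Cut FAS}, with pairs taken from $E(G_1) \times E(G_2)$) and proves these \W{1}-hard directly from \textsc{Multicolored Biclique}. Second, for \mincsp{\m, \s} the strict inequality cannot simply be ``absorbed'' into the same cut--cut encoding: a chain of strict inequalities along an undirected path yields no contradiction, so the right-endpoint layer cannot express an undirected cut condition at all. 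The paper instead reduces from the asymmetric \textsc{Paired Cut FAS} problem: the equality half of the soft $\s$-constraint (left endpoints, together with crisp equalities and a crisp $s \m t$ constraint) encodes an $st$-cut in an undirected graph $G_1$, while the strict half (right endpoints, together with crisp right-endpoint equalities) encodes acyclicity of a digraph $G_2$, since a surviving directed cycle would force a strict cyclic order on right endpoints. Your concern (iii) is precisely this issue, and switching one side from a cut requirement to a feedback-arc-set requirement is how the paper resolves it; with those two adjustments your outline matches the published proof.
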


\iflong
In simultaneous cut problems, the input also consists of two graphs.
Here, the graphs share some arcs/edges that 
can be deleted simultaneously at unit cost.
The goal is to compute a set of $k$ arcs/edges
that is a cut in both graphs.
As for the paired problems, computing cuts for individual graphs
is feasible in polynomial time.
However, the possibility of choosing
common arcs/edges while deleting them at a unit cost
correlates the choices in one graph with the choices in the other,
and leads to \W{1}-hardness.
Starting from this problem, we prove the following
lemma in Section~\ref{ssec:sim-graph}.
\fi

\begin{lemma} \label{lem:do-po-dp-hard}
  \mincsp{\d, \o}, \mincsp{\p, \o} and
  \mincsp{\d, \p} are \textnormal{\W{1}}-hard.
\end{lemma}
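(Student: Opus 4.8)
The plan is to reduce from \textsc{Simultaneous Directed Feedback Arc Set} (or its variants on $st$-cuts), whose \W{1}-hardness is established earlier in the paper via a reduction from \textsc{Simultaneous $st$-Separator}. The common theme is that each relation pair here---$\{\d,\o\}$, $\{\p,\o\}$, $\{\d,\p\}$---when we split intervals into their left and right endpoints, gives us a constraint on the left endpoints \emph{and} a constraint on the right endpoints, and the shared "budget" for deleting an interval constraint corresponds exactly to the shared deletable arcs in the simultaneous problem. Concretely, I would use the bad-cycle characterization from Lemma~\ref{lem:bad-cycles}: for $\{\d,\p\}$, a bad cycle has all $\p$-arcs in one direction and no $\d$-arcs head-to-head, which on the endpoint level says that projecting to left endpoints yields one directed-cycle obstruction and projecting to right endpoints yields another; similarly for cases~\ref{case:bad-cycles-do} and the $\{\p,\o\}$ case. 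Thus an instance of $\mincsp{\rr_1,\rr_2}$ is, after translating to the arc-labelled mixed graph $G_\III$, essentially asking for a minimum arc set hitting bad cycles, and the two "projections" of $G_\III$ are precisely the two graphs of a simultaneous DFAS/cut instance, with $\rr_1$- and $\rr_2$-arcs playing the roles of private and shared arcs (or crisp versus deletable, using Lemma~\ref{lem:crisp}).

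The concrete steps, in order, are: (1) recall the \W{1}-hard source problem \textsc{Simultaneous Directed Feedback Arc Set} with its two directed graphs $G_1,G_2$ sharing arc set $S$, parameter $k$ = number of (shared or private) arcs deleted; (2) build an instance of $\mincsp{\d,\p}$ (resp. $\{\d,\o\}$, $\{\p,\o\}$) whose variables are the vertices of $G_1 \cup G_2$ (possibly with a few auxiliary gadget vertices to merge the vertex sets consistently), placing a $\p$-constraint (a "long" relation affecting one endpoint strongly) along each arc of $G_1$ and a $\d$- or $\o$-constraint along each arc of $G_2$, arranged so that a directed cycle in $G_1$ becomes a bad cycle of one type in $G_\III$ and a directed cycle in $G_2$ becomes a bad cycle of the other type, while a shared arc of $S$ is encoded by a single interval constraint (so deleting it costs one and kills the corresponding cycle in both graphs); (3) use crisp constraints (Lemma~\ref{lem:crisp}) to make any auxiliary "frame" constraints undeletable so they do not interfere with the budget; (4) verify forward soundness---a simultaneous feedback arc set of size $\le k$ translates to a deletion set of $\le k$ interval constraints, using Lemma~\ref{lem:bad-cycles} to confirm the remaining instance is satisfiable---and backward soundness---any deletion set of $\le k$ constraints for $\mincsp{\rr_1,\rr_2}$ projects to arc deletions in $G_1$ and $G_2$ that are simultaneous because shared arcs were encoded as single constraints; (5) check the reduction runs in fpt time and the parameter stays $O(k)$, and invoke Lemma~\ref{lem:impl-reduction} or direct argument if implementations are needed. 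For the $\{\p,\o\}$ case the bad cycles are more delicate (two separate families, one being directed $\o$-cycles and one involving alternating $\o$-arcs separated by $\p$-arcs), so there I would likely pad each $\o$-arc with auxiliary $\p$-arcs to normalize the obstruction structure before applying the same template.

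The main obstacle I anticipate is matching the \emph{exact} bad-cycle structure of each relation pair to the cycle structure of the simultaneous cut problem without introducing spurious bad cycles through the gadgets. In particular, in $\csp{\d,\p}$ the condition "no $\d$-arcs meeting head-to-head" is a genuinely two-sided (both-endpoints) condition rather than a clean product of two independent directed-cycle conditions, so the gadget that realizes a $G_2$-arc as a $\d$-constraint must be designed so that $\d$-arcs are consistently oriented along any cycle that passes through the gadget region; this is the delicate part and will probably require orienting all $\d$-constraints coherently (e.g. making the $G_2$ part a DAG-plus-back-arcs layout) and possibly using crisp $\d$-constraints for the frame. The $\{\p,\o\}$ case is the other sticking point because of its second, non-product family of bad cycles; controlling which of the two families gets "activated" by a cycle through a gadget is where most of the care in the correctness proof will go.
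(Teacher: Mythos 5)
Your choice of source problem (\simdfas{}) and your overall strategy --- encode each arc of $D_1\cup D_2$ by a small bundle of interval constraints so that deleting any single constraint of the bundle corresponds to deleting the arc, give shared arcs a single constraint, and match directed cycles of $D_1$ and $D_2$ to the families of bad cycles in Lemma~\ref{lem:bad-cycles} --- are exactly the paper's. The gap is in your concrete step (2): putting \p-constraints on the arcs of $G_1$ and \d- (or \o-) constraints on the arcs of $G_2$ does not work, and the ``spurious bad cycle'' issue you flag at the end is fatal for this encoding rather than a technicality. Under that encoding, any cycle that mixes private arcs of $D_1$ and private arcs of $D_2$, all traversed forward, has all its \p-arcs in one direction and no two \d-arcs head-to-head, so it is a bad cycle of type~(1) in Lemma~\ref{lem:bad-cycles} even though it is a directed cycle in neither $D_1$ nor $D_2$. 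Consequently the forward direction of the reduction fails: a simultaneous feedback arc set of size $k$ need not yield a size-$k$ deletion set for the CSP instance. None of your suggested remedies (coherently orienting the \d-constraints, crisp frame constraints, padding \o-arcs with \p-arcs) removes these mixed cycles.

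The missing idea is to use the second relation as an \emph{orientation marker} rather than as the carrier of one graph's arcs. In the paper's gadget for $\mincsp{\d,\p}$, \emph{every} arc of either graph contributes a forward \d-arc (this is what makes directed cycles of $D_1$ and of $D_2$ both unsatisfiable), while the \p-arc is attached through an auxiliary vertex $h_{u,v}$ so that it points forward for arcs private to $D_1$ ($u \p h_{u,v}$, $h_{u,v} \d v$) and backward for arcs private to $D_2$ ($h_{u,v} \p u$, $h_{u,v} \d v$); a shared arc gets the single constraint $u \d v$. A mixed cycle then contains \p-arcs in both directions and is therefore satisfiable by the bad-cycle characterization, so it is not an obstruction. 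The same principle, instantiated with the satisfiable configurations of cases~(2) and~(3) of Lemma~\ref{lem:bad-cycles} (opposite \o-arcs; two consecutive reverse \o-arcs not separated by a \p-arc), yields the gadgets for $\{\d,\o\}$ and $\{\p,\o\}$. Without this device for making the $D_1$-private and $D_2$-private gadgets mutually annihilating on mixed cycles, the reduction is not correct.
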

\iflong
\begin{table}[tb]
  \centering
  \begin{tabular}{| c | c c c c c c |}
    \hline
    & \m & \p & \s & \f & \d & \o \\
    \hline
    \e & \Wh  & \FPT & \FPT & \FPT & \FPT & \FPT \\
    \m &      & \FPT & \Wh  & \Wh  & \Wh  & \Wh  \\
    \p &      &      & \FPT & \FPT & \Wh  & \Wh  \\
    \s &      &      &      &  \Wh & \FPT & \FPT \\
    \f &      &      &      &      & \FPT & \FPT \\
    \d &      &      &      &      &      & \Wh  \\
    \hline
  \end{tabular}
\caption{Complexity of \textsc{MinCSP} with two interval relations.}
\label{tb:two-relations}
\end{table}    

\begin{figure}[tb]
  \centering
  \begin{tikzpicture}[%
scale=.8,
transform shape,
myvertex/.style={circ,scale=2}
]

\node[myvertex,label=above:{\m}] (m) at (0-0.866,1) {};
\node[myvertex,label=above:{\p}] (p) at (2-0.866,1) {};
\node[myvertex,label=above:{\f/\s}] (sf) at (3,2) {};
\node[myvertex,label=below:{\e}] (e) at (3,0) {};
\node[myvertex,label=above:{\o}] (o) at (5,2) {};
\node[myvertex,label=below:{\d}] (d) at (5,0) {};

\draw[thick] (e) -- (p);
\draw[thick] (e) -- (sf);
\draw[thick] (e) -- (o);
\draw[thick] (e) -- (d);
\draw[thick] (m) -- (p);
\draw[thick] (p) -- (sf);
\draw[thick] (sf) -- (d);
\draw[thick] (sf) -- (o);

\end{tikzpicture}
  \caption{Compatibility graph for interval relations. Two vertices $u$, $v$ are adjacent in this graph if $\mincsp{u,v}$ is in \FPT. The $\f/\s$ vertex represents either of these two relations.}
  \label{fig:compat-graph}
\end{figure}
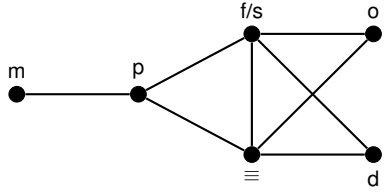
\fi
\ifshort
\begin{figure}[tb]
  \begin{minipage}{0.6\textwidth}
  \centering
  \small
  \begin{tabular}{| c | c c c c c c |}
    \hline
    & \m & \p & \s & \f & \d & \o \\
    \hline
    \e & \Wh  & \FPT & \FPT & \FPT & \FPT & \FPT \\
    \m &      & \FPT & \Wh  & \Wh  & \Wh  & \Wh  \\
    \p &      &      & \FPT & \FPT & \Wh  & \Wh  \\
    \s &      &      &      &  \Wh & \FPT & \FPT \\
    \f &      &      &      &      & \FPT & \FPT \\
    \d &      &      &      &      &      & \Wh  \\
    \hline
  \end{tabular}
  \end{minipage}%
  \begin{minipage}{0.4\textwidth}
    \centering
    \begin{tikzpicture}[%
scale=.8,
transform shape,
myvertex/.style={circ,scale=2}
]

\node[myvertex,label=above:{\m}] (m) at (0-0.866,1) {};
\node[myvertex,label=above:{\p}] (p) at (2-0.866,1) {};
\node[myvertex,label=above:{\f/\s}] (sf) at (3,2) {};
\node[myvertex,label=below:{\e}] (e) at (3,0) {};
\node[myvertex,label=above:{\o}] (o) at (5,2) {};
\node[myvertex,label=below:{\d}] (d) at (5,0) {};

\draw[thick] (e) -- (p);
\draw[thick] (e) -- (sf);
\draw[thick] (e) -- (o);
\draw[thick] (e) -- (d);
\draw[thick] (m) -- (p);
\draw[thick] (p) -- (sf);
\draw[thick] (sf) -- (d);
\draw[thick] (sf) -- (o);

\end{tikzpicture}
    \end{minipage}
\caption{\textbf{Left:} Complexity of \textsc{MinCSP} with two interval relations. \textbf{Right:} Compatibility graph for interval relations. Two vertices $u$, $v$ are adjacent in this graph if $\mincsp{u,v}$ is in \FPT. The $\f/\s$ vertex represents either of these two relations.}
\label{tb:two-relations-comp}
\end{figure}
\fi

Combining all results above, we are ready to present
the full classification.

\begin{theorem}[Full classification] \label{thm:classification}
  Let $\Gamma \subseteq \A$ be a subset of interval relations.
  Then $\mincsp{\Gamma}$ is in \FPT if
  $\Gamma \subseteq \{\m, \p\}$ or $\Gamma \subseteq \{\rr_1, \rr_2, \e\}$ 
  for any $\rr_1 \in \{\s, \f\}$ and $\rr_2 \in \{\p, \o, \d\}$, and 
  \textnormal{\W{1}}-hard otherwise.
\end{theorem}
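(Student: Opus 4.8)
The plan is to assemble Theorem~\ref{thm:classification} from the lemmas already stated, using Lemmas~\ref{lem:impl-reduction},~\ref{lem:implementations}~and~\ref{lem:reversal} to reduce the full power set of $\A$ to a small number of representative cases. First I would dispose of the tractable side. If $\Gamma \subseteq \{\m, \p\}$, then $\mincsp{\Gamma}$ is in \FPT by Lemma~\ref{lem:mp-fpt} (monotonicity of \FPT membership under taking subsets of the language is immediate, since any instance over a sublanguage is already an instance over the larger one). If $\Gamma \subseteq \{\rr_1, \rr_2, \e\}$ with $\rr_1 \in \{\s,\f\}$ and $\rr_2 \in \{\p, \o, \d\}$, then $\mincsp{\Gamma}$ is in \FPT by Lemma~\ref{lem:sf-pod-fpt} together with the same subset monotonicity. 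This covers all seven maximal tractable families named in the statement.

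Next I would handle the hard side, and the main work is bookkeeping over pairs. I claim it suffices to show $\mincsp{\rr, \rr'}$ is \W{1}-hard for the pairs $\{\m,\e\}$, $\{\m,\s\}$, $\{\m,\f\}$, $\{\s,\f\}$, $\{\d,\o\}$, $\{\d,\p\}$ and $\{\p,\o\}$; all of these are given by Lemmas~\ref{lem:me-ms-mf-hard}~and~\ref{lem:do-po-dp-hard}, except $\{\s,\f\}$, which I would obtain from $\mincsp{\m,\s}$ or $\mincsp{\m,\f}$ via Lemma~\ref{lem:implementations} (a language containing $\f$ and $\s$ implements $\d$ and $\o$, but not $\m$; instead $\{\s,\f\}$ is handled directly — one checks that the minimal-obstruction characterization of Lemma~\ref{lem:bad-cycles}(4), or an independent short argument, yields hardness, or more cleanly one observes $\{\s,\f,\e\}$ is already on the tractable list so $\{\s,\f\}$ requires its own hardness argument; I would include the short reduction). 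Since $\mincsp{\Gamma'}$ \W{1}-hard and $\Gamma' \subseteq \Gamma$ implies $\mincsp{\Gamma}$ \W{1}-hard, it is enough to show that every $\Gamma \subseteq \A$ not contained in one of the seven tractable families contains one of these hard pairs as a subset.

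The core combinatorial step is therefore: enumerate the minimal ``forbidden'' subsets of $\{\m,\p,\s,\f,\d,\o,\e\}$. Working with $\A = \{\m,\p,\o,\d,\s,\f,\e\}$, I would argue case-analytically on which relations $\Gamma$ contains. If $\m \in \Gamma$ and $\Gamma \not\subseteq \{\m,\p\}$, then $\Gamma$ contains one of $\e,\s,\f,\d,\o$; if it contains $\e$, $\s$ or $\f$ we have a hard pair directly; if it contains $\d$ or $\o$ but not $\s,\f,\e$, then using Lemma~\ref{lem:implementations} we note $\{\m\}$ implements $\p$, so $\mincsp{\m,\d}$ is at least as hard as $\mincsp{\p,\d}$ and $\mincsp{\m,\o}$ at least as hard as $\mincsp{\p,\o}$, both \W{1}-hard. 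If $\m \notin \Gamma$, then $\Gamma \subseteq \{\p,\o,\d,\s,\f,\e\}$; if $\Gamma$ contains both $\s$ and $\f$ we invoke the $\{\s,\f\}$ hardness; otherwise $\Gamma$ contains at most one of $\s,\f$ — say $\Gamma$ avoids $\f$ (the case avoiding $\s$ is symmetric via Lemma~\ref{lem:reversal}, which swaps $\s$ and $\f$). So $\Gamma \subseteq \{\p,\o,\d,\s,\e\}$. Now if $\Gamma$ contains $\d$ together with $\o$ or $\p$, or contains $\p$ together with $\o$, we have a hard pair. The remaining subsets of $\{\p,\o,\d,\s,\e\}$ avoiding all of $\{\d,\o\},\{\d,\p\},\{\p,\o\}$ are exactly those that contain at most one of $\p,\o,\d$; such a $\Gamma$ is a subset of $\{\s,\e,\rr_2\}$ for some $\rr_2 \in \{\p,\o,\d\}$, i.e. tractable. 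This closes the analysis.

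\textbf{Expected obstacle.} The genuinely nontrivial inputs are Lemmas~\ref{lem:mp-fpt},~\ref{lem:sf-pod-fpt},~\ref{lem:me-ms-mf-hard}~and~\ref{lem:do-po-dp-hard}, proved elsewhere in the paper; given those, the proof of Theorem~\ref{thm:classification} is the case analysis above and the only delicate point is making sure the \emph{reversal} symmetry (Lemma~\ref{lem:reversal}) is applied correctly — it fixes $\e,\d$, swaps $\s \leftrightarrow \f$, and inverts $\m,\p,\o$ — so that the case ``$\Gamma$ avoids $\s$'' really does reduce to ``$\Gamma$ avoids $\f$'', and that the hard pair $\{\s,\f\}$ (which is not covered by implementing $\m$) is given its own short hardness proof rather than being silently absorbed. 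I would double-check that no subset slips through by listing the seven maximal tractable families explicitly — $\{\m,\p\}$, and $\{\s,\e,\p\},\{\s,\e,\o\},\{\s,\e,\d\},\{\f,\e,\p\},\{\f,\e,\o\},\{\f,\e,\d\}$ — and verifying their pairwise unions each contain a hard pair.
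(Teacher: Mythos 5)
Your overall decomposition is the same as the paper's: the tractable side is exactly Lemmas~\ref{lem:mp-fpt} and~\ref{lem:sf-pod-fpt} plus subset monotonicity, and the hard side is reduced to a list of hard pairs using Lemma~\ref{lem:impl-reduction} with the fact that $\m$ implements $\p$ (giving $\{\m,\d\}$ and $\{\m,\o\}$ from Lemma~\ref{lem:do-po-dp-hard}), the reversal symmetry of Lemma~\ref{lem:reversal} for $\s$ versus $\f$, and a covering argument showing every non-tractable $\Gamma$ contains a hard pair. Your explicit case analysis is complete and is just a rephrasing of the paper's presentation via the compatibility graph (the statement that the graph has exactly three triangles and no $K_4$ is the same bookkeeping); that part is fine.

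The genuine gap is the hard pair $\{\s,\f\}$, which your write-up leaves unproved. The parenthetical where you handle it is both incomplete and partly wrong: you correctly recall that a language containing $\s$ and $\f$ implements $\d$ and $\o$, but then do not use this fact; you assert that $\{\s,\f,\e\}$ ``is already on the tractable list,'' which is false (the tractable triples require $\rr_2\in\{\p,\o,\d\}$, and tractability of $\{\s,\f,\e\}$ would contradict the very hardness of its subset $\{\s,\f\}$ that you need); and you finally defer to an unspecified ``short reduction'' that is never supplied. Since \W{1}-hardness of $\mincsp{\s,\f}$ is precisely what forces the split into separate $\s$- and $\f$-families and blocks any four-element tractable language, the theorem does not follow without it. The fix is the argument the paper uses and that you already had in hand: $\{\s,\f\}$ implements both $\d$ and $\o$ (Lemma~\ref{lem:implementations}), $\mincsp{\d,\o}$ is \W{1}-hard (Lemma~\ref{lem:do-po-dp-hard}), so applying Lemma~\ref{lem:impl-reduction} (twice, once per implemented relation) transfers the hardness from $\mincsp{\s,\f,\d,\o}$ down to $\mincsp{\s,\f}$; no new reduction and no appeal to Lemma~\ref{lem:bad-cycles} is needed.
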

\begin{proof}
\iflong
The tractable cases follow from 
Lemmas~\ref{lem:mp-fpt}~and~\ref{lem:sf-pod-fpt}.
\fi
\ifshort
The tractable cases follow from 
Lemma~\ref{lem:mp-fpt}.
\fi
For the full classification, we consider subsets of relations
$\Gamma \subseteq \A$ in order of increasing cardinality.
If $|\Gamma| = 1$, then $\Gamma$ is contained
in one of the maximal tractable subsets.
Now consider $|\Gamma| = 2$.
The classification for this case is summarized in 
\iflong Table~\ref{tb:two-relations} \fi\ifshort Figure~\ref{tb:two-relations-comp} \fi.
We consider the table row by row.
We only spell out \W{1}-hardness arguments
since tractability follows in all cases 
by inclusion in one of the
maximal tractable subsets.

\ifshort
\smallskip

\noindent
(1) $\mincsp{\e, \m}$ is \W{1}-hard by Lemma~\ref{lem:me-ms-mf-hard}.
  $\mincsp{\m, \s}$ and $\mincsp{\m, \f}$ are \W{1}-hard by 
  Lemma~\ref{lem:me-ms-mf-hard}; 
  \W{1}-hardness of $\mincsp{\m, \o}$ and $\mincsp{\m, \d}$ follows 
  from the hardness of
  $\mincsp{\p, \o}$ and $\mincsp{\p, \d}$ (Lemma~\ref{lem:do-po-dp-hard}) by using Lemma~\ref{lem:impl-reduction}
  combined with the fact that $\m$ implements $\p$ (Lemma~\ref{lem:implementations}).

  \smallskip

  \noindent
(2)   $\mincsp{\p, \o}$ and $\mincsp{\p, \d}$ are \W{1}-hard by 
  Lemma~\ref{lem:do-po-dp-hard}.

\smallskip

\noindent
(3)  \W{1}-hardness of $\mincsp{\s, \f}$ follows
  from the hardness of $\mincsp{\d, \o}$ (Lemma~\ref{lem:do-po-dp-hard}) by using Lemma~\ref{lem:impl-reduction}
  combined with the fact that $\{\s, \f\}$ implements $\d$ and $\o$ (Lemma~\ref{lem:implementations}).

  \smallskip

  \noindent
(4)  $\mincsp{\d, \o}$ is \W{1}-hard by Lemma~\ref{lem:do-po-dp-hard}.

\smallskip
\fi
\iflong
\begin{itemize}
\item $\mincsp{\e, \m}$ is \W{1}-hard by Lemma~\ref{lem:me-ms-mf-hard}.
  $\mincsp{\m, \s}$ and $\mincsp{\m, \f}$ are \W{1}-hard by 
  Lemma~\ref{lem:me-ms-mf-hard}; 
  \W{1}-hardness of $\mincsp{\m, \o}$ and $\mincsp{\m, \d}$ follows 
  from the hardness of
  $\mincsp{\p, \o}$ and $\mincsp{\p, \d}$ (Lemma~\ref{lem:do-po-dp-hard}) by using Lemma~\ref{lem:impl-reduction}
  combined with the fact that $\m$ implements $\p$ (Lemma~\ref{lem:implementations}).

\item   $\mincsp{\p, \o}$ and $\mincsp{\p, \d}$ are \W{1}-hard by 
  Lemma~\ref{lem:do-po-dp-hard}.

\item  \W{1}-hardness of $\mincsp{\s, \f}$ follows
  from the hardness of $\mincsp{\d, \o}$ (Lemma~\ref{lem:do-po-dp-hard}) by using Lemma~\ref{lem:impl-reduction}
  combined with the fact that $\{\s, \f\}$ implements $\d$ and $\o$ (Lemma~\ref{lem:implementations}).

\item  $\mincsp{\d, \o}$ is \W{1}-hard by Lemma~\ref{lem:do-po-dp-hard}.

\end{itemize}

\fi

\noindent
We proceed to subsets $\Gamma$ of size $3$.
It is useful to consider the compatibility graph of relations,
i.e. a graph with a vertex for every relation in $\A$
and an edge for every pair of relations $\rr_1, \rr_2$ such that
$\mincsp{\rr_1, \rr_2}$ is in \FPT.
See Figure\iflong~\ref{fig:compat-graph}\fi\ifshort~\ref{tb:two-relations-comp}\fi for an illustration.
Since $\mincsp{\s,\f}$ is \W{1}-hard, $\s$ and $\f$ 
cannot both be included in a tractable subset of relations.
By symmetry (Lemma~\ref{lem:reversal}), we can identify
$\s$ and $\f$, thus simplifying the drawing.
Note that if $\Gamma$ contains a pair $\rr_1, \rr_2$
that are not connected by an edge, then 
$\mincsp{\Gamma}$ is \W{1}-hard.
Thus, candidates for tractable constraint languages 
of size $3$ correspond to triangles in the graph.
By examining the graph, we find that there are exactly
$3$ triangles, which correspond to six constraint
languages (recall that $\s$ and $\f$ are identified).
\iflong
\textsc{MinCSP} for all these languages is
in \FPT by Lemma~\ref{lem:sf-pod-fpt}.
\fi
\ifshort
\textsc{MinCSP} for all these languages is
in \FPT by Lemma~\ref{lem:mp-fpt}.
\fi
Since the compatibility graph does not contain
any cliques of size $4$, every $\Gamma \subseteq \A$
with $|\Gamma| \geq 4$ contains an incompatible 
pair of relations, therefore $\mincsp{\Gamma}$
is \W{1}-hard if $|\Gamma| \geq 4$.
This completes the proof.
\end{proof}

\ifshort
W[1]-hardness of \mincsp{\A} motivates us to look at approximation
algorithms for this problem.
Our first observation is that \mincsp{\rr} for any $\rr \in \A \setminus \{\e\}$
is NP-hard to approximate within any constant under the Unique Games Conjecture~(UGC)
of Khot~\cite{khot2002power}.
This follows by combining two facts:
Lemma~\ref{lem:bad-cycles}, which implies
that an instance $\III$ of \csp{\rr} is consistent if and only if
the arc-labeled graph $G_\III$ is acyclic,
and Corollary~1.2~in~\cite{guruswami2008beating} 
which states that under the UGC, 
\textsc{Directed Feedback Arc Set (DFAS)} is
NP-hard to approximate within any constant~\cite{guruswami2008beating}.
If we allow the approximation algorithm to run in fpt time,
then we obtain the following result.
\begin{theorem} \label{thm:fpt-approx-result}
  \mincsp{\A} is $2$-approximable in $O^*(2^{O(k^3)})$ time
  and $4$-approximable in $O^*(2^{O(k)})$ time.
\end{theorem}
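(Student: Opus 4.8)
The plan is to reduce $\mincsp{\A}$ to two independent instances of \subdfas, one governing the left endpoints and one governing the right endpoints, and then combine their solutions. First I would observe that every basic relation $\rr \in \A$ is defined by a conjunction of constraints of the form $\alpha \circ \beta$ with $\circ \in \{<, =\}$, where $\alpha$ and $\beta$ range over the endpoints $\{I^-, I^+, J^-, J^+\}$ (see the last column of Table~\ref{tb:allen}, together with the universally valid constraints $I^- < I^+$ and $J^- < J^+$). Given an instance $(\III, k)$ of $\mincsp{\A}$, I would build an auxiliary instance $\III'$ of $\mincsp{<, =}$ on the variable set $\{x^-, x^+ : x \in V(\III)\}$: for each constraint $c = (x \rr y)$ of $\III$, include all the endpoint constraints in the definition of $\rr$, but \emph{mark} them all as originating from $c$. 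A key point is that $\III$ is satisfiable if and only if $\III'$ is satisfiable (this is exactly the endpoint encoding of Allen's algebra), and moreover a set $X \subseteq C(\III)$ with $|X| \le k$ such that $\III - X$ is satisfiable corresponds to deleting at most $k$ \emph{groups} of endpoint constraints from $\III'$. Conversely, if we can make $\III'$ satisfiable by deleting some set $Y$ of endpoint constraints, then deleting the (at most $|Y|$) original constraints that spawned them makes $\III$ satisfiable.

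The next step is to split $\III'$ by endpoint type. Because each relation's endpoint constraints never mix a left endpoint of one variable with a right endpoint in an equality across the $<$/$=$ structure in a way that couples the two sides — more precisely, the constraints on $\{x^- : x\}$ and on $\{x^+ : x\}$ are only linked through the ``within-interval'' constraints $x^- < x^+$ and through relations like $\o$ that impose $I^- < J^- < I^+ < J^+$ — I would instead handle this by relaxation rather than a clean split. Concretely: form $\III'_L$ by keeping only the constraints of $\III'$ that involve exclusively left-endpoint variables, and $\III'_R$ analogously for right endpoints. Each of $\III'_L, \III'_R$ is an instance of $\mincsp{<, =}$, hence (after contracting $=$-edges and reading the result as in the introduction's discussion of \subdfas) an instance of \subdfas. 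If either $\III'_L$ or $\III'_R$ has no solution of size $\le k$, then neither does $\III$, and we correctly report ``no''. Otherwise we obtain deletion sets $Z_L, Z_R$ of size $\le k$ each; lifting them back to original constraints of $\III$ gives a set $X$ of size $\le 2k$. The crucial claim is that $\III - X$ is satisfiable: after removing $X$, both the left-endpoint subinstance and the right-endpoint subinstance are consistent, and the only remaining cross constraints are the monotone ones $x^- < x^+$ together with the chained inequalities from relations such as $\o$; one checks that a consistent linear order on the left endpoints and a consistent linear order on the right endpoints can always be merged into a single order realizing all interval constraints, by interleaving appropriately and rescaling (this is where one verifies, relation by relation, that no bad cycle survives using Lemma~\ref{lem:cycles} and the endpoint characterization). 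This yields the factor-$2$ algorithm; its running time is dominated by two calls to the \subdfas algorithm of Chitnis et al.~\cite{chitnis2015directed}, giving $O^*(2^{O(k^3)})$.

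For the factor-$4$ bound, I would replace the exact \subdfas solver with a factor-$2$ fpt-approximation for \subdfas running in $O^*(2^{O(k)})$ time. The same relaxation argument applies: if the approximate solver certifies that $\III'_L$ (resp.\ $\III'_R$) has no solution of size $\le k$, then $\III$ has none; otherwise it returns a deletion set of size $\le 2k$ for each side, and combining gives a solution for $\III$ of size $\le 4k$, in time $O^*(2^{O(k)})$. I would cite the existing flow-augmentation based approximation for \subdfas (from~\cite{kim2022weighted}, which solves even the weighted version) to supply this ingredient.

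The main obstacle I expect is the correctness of the \emph{recombination} step: showing that a solution to the left-endpoint relaxation plus a solution to the right-endpoint relaxation really does yield a satisfiable instance of $\mincsp{\A}$, rather than only a satisfiable instance of the two relaxed endpoint problems. This requires carefully checking, for every relation $\rr \in \A$, that the interleaved constraints linking $I^-$ with $J^+$ and the monotonicity constraints $I^- < I^+$ do not create a fresh obstruction once the pure left- and pure right-endpoint obstructions have been destroyed — i.e.\ that ``bad cycles'' of $\csp{\A}$ project onto a bad cycle in the left-endpoint graph or in the right-endpoint graph. This is precisely what Lemma~\ref{lem:cycles} and the endpoint-relation characterization are for, and the argument reduces to a finite case analysis over pairs of relations appearing consecutively on a cycle; the relations $\o$ and $\d$, which constrain both endpoints nontrivially, are the ones that need the most care.
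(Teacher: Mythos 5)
Your overall template (endpoint encoding, relaxation solved via \subdfas, factor $2$ with the exact solver and factor $4$ with the fpt $2$-approximation) matches the paper's intent, but the specific decomposition you propose has a genuine gap: the ``crucial claim'' that a bad cycle of $\csp{\A}$ projects onto a bad cycle of the left-endpoint graph or of the right-endpoint graph is false, and no case analysis or interleaving argument can rescue it. The relations \p and \m translate \emph{entirely} into cross constraints ($I^+<J^-$, resp.\ $I^+=J^-$) between a right endpoint and a left endpoint, and \o contributes the cross inequality $J^-<I^+$; all of these are discarded when you keep only constraints ``exclusively among left endpoints'' and ``exclusively among right endpoints''. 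Concretely, the instance $\{x\p y,\ y\p x\}$ (or any directed cycle of \p-constraints, or $\{x\p y,\ x\o y\}$) is unsatisfiable, yet both $\III'_L$ and $\III'_R$ are empty and hence satisfiable with zero deletions; your algorithm would return the empty set as a purported solution for $k=0$, which is wrong. So combining optimal solutions of the two projections does not in general yield a feasible solution of $\III$, and the soundness of the ``yes'' branch breaks (the ``no'' branch, being a relaxation argument, is fine).

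The paper avoids the split altogether. It first replaces every \o-constraint by its implementation over $\{\s,\f\}$ (Lemmas~\ref{lem:implementations} and~\ref{lem:impl-reduction}), after which every remaining relation is a conjunction of \emph{at most two} atomic $\{<,=\}$-constraints on endpoints. It then throws \emph{all} atomic constraints --- left--left, right--right, and cross --- into a single instance $S$ of $\mincsp{<,=}$ (with the constraints $x^-<x^+$ kept crisp) and runs one \subdfas computation on $(S,2k)$ via Lemma~\ref{lem:less-eq=sdfas}. The factor $2$ comes from ``at most two atoms per deletable constraint'': a size-$k$ solution of $\III$ yields a size-$2k$ solution of $S$, and any size-$2k$ solution of $S$ lifts back to at most $2k$ interval constraints whose removal makes $\III$ satisfiable. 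Using the exact \subdfas algorithm of~\cite{chitnis2015directed} with parameter $2k$ gives the $2$-approximation in $O^*(2^{O(k^3)})$ time, and using the $O^*(2^{O(k)})$-time $2$-approximation of~\cite{lokshtanov2021fpt} (not~\cite{kim2022weighted}, which is the exact weighted solver) gives the $4$-approximation. If you want to keep your two-relaxation framing, you would at minimum have to put the cross constraints somewhere, at which point you are effectively back to the paper's single unbundled instance.
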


\begin{proof}[Proof sketch]
We obtain the algorithms
by reducing the problem to \subdfas 
and invoking the exact algorithm of~\cite{cygan2015parameterized}
and the faster $O^*(2^{O(k)})$ time $2$-approximation algorithm of~\cite{lokshtanov2021fpt}, respectively.
There are straightforward reductions from $\mincsp{<, =}$ to $\mincsp{\leq, =}$ to \subdfas so we focus on the reduction from $\mincsp{\A}$ to $\mincsp{<, =}$.
Let $(\III, k)$ be an instance of \mincsp{\A}.
Replace every constraint $x\{\o\}y$ by
its implementation in $\{\s,\f\}$
according to Lemma~\ref{lem:implementations}.
By Lemma~\ref{lem:impl-reduction}, this does not change
the cost of the instance.
Using Table~\ref{tb:allen}, we can rewrite all constraints of $\III'$
as conjunctions of two atomic constraints of the form $x < y$ and $x = y$.
Disregarding the pairing, let $S$ be the set of all atomic constraints.
Apply one of the \mincsp{<,=} algorithms to $(S, 2k)$.
On the one hand, deleting $k$ constraints from $\III'$
corresponds to deleting at most $2k$ constraints in $S$.
On the other hand, if there is $X \subseteq S$, $|X| \leq 2k$,
such that $S - X$ is consistent, define
the set of interval constraints $X'$ 
such that at least one of the defining
$\{<,=\}$-constraints is in $X$.
Noting that $\III - X'$ is consistent and $|X'| \leq |X| \leq 2k$
completes the proof.
\end{proof}
\fi

\section{Bad Cycles} \label{sec:cycles}

\iflong
In this section, we prove Lemmas~\ref{lem:cycles} and \ref{lem:bad-cycles}, i.e. we describe the minimal obstructions to satisfiability for certain subsets of~$\A$.
\fi
\ifshort
In this section, we sketch the proof of Lemma~\ref{lem:cycles} and describe the minimal obstructions to satisfiability for certain subsets of~$\A$, along with a brief sketch of why these are the minimal obstructions.
\fi
\iflong
To do this, we will use the {\em point algebra} (PA)~\cite{Vilain:Kautz:aaai86}. This a CSP with the rationals
${\mathbb Q}$ as the variable domain and the constraint language
$\{<,=,\leq,\neq\}$, where the relations are interpreted
over ${\mathbb Q}$ in the obvious way. This CSP is polynomial-time
solvable.
For the subsets of~$\A$ we consider, we can encode our problem instances as instances of point algebra that contain only $<$ and $=$ constraints (see also Table~\ref{tb:allen}).
It is easy to verify that if an instance~$\III$ of the point algebra contains only $<$ and $=$ constraints, then it is satisfiable if and only if its primal graph does not contain a cycle in which all $<$ relations are directed in the same way.
This allows us to show that the minimal unsatisfiable instances of $\csp{\A}$ also have primal graphs that are cycles (Lemma~\ref{lem:cycles}).
We then completely describe these unsatisfiable cycles for our chosen subsets of $\A$ in Lemma~\ref{lem:bad-cycles}.
\fi

\cycleslemma*

\ifshort
The proof of Lemma~\ref{lem:cycles} starts by taking a minimal unsatisfiable instance $\III$.
Using Table~\ref{tb:allen}, we write $\III$ as an instance $\III'$ of the {\em point algebra} (PA)~\cite{Vilain:Kautz:aaai86} CSP, which takes rationals ${\mathbb Q}$ as the variable domain and we use only the basic constraint language $\{<,=\}$, where the relations are interpreted in the obvious way.
This instance $\III'$ must contain a minimal unsatisfiable sub-instance $\III''$ of the point algebra, which has a cycle as its primal graph.
We then map the constraints in $\III''$ back to the constraints in $\III$ that implied them, and find that $\III$ must also have a cycle as its primal graph. 
\fi
\iflong
\begin{proof}
Let~$\III$ be an inclusion-wise minimal unsatisfiable instance of \csp{\A}.
We can write~$\III$ as an instance~$\III'$ of \csp{<,=}, a constraint satisfaction problem over the point algebra.
To do this, we set $V(\III') = \{I^-,I^+ \; | \; I \in V(\III)\}$ and $C(\III') = \{(I^- < I^+) \; | \; I \in V(\III)\}$ and then, for each constraint in $C(\III)$, we add the corresponding~$<$ and/or~$=$ constraints to $C(\III')$, as described in Table~\ref{tb:allen}.
Since~$\III$ is unsatisfiable, $\III'$ must also be unsatisfiable.
This means that for some $\ell \geq 1$, there must be a sequence of
variables $v_1,\ldots,v_\ell \in V(\III')$ and a sequence of
constraints $v_1\rr_1v_2, v_2\rr_2v_3, \ldots, v_{\ell-1}\rr_{\ell-1}v_\ell,
v_\ell\rr_\ell v_1 \in C(\III')$, where $\rr_i \in \{<,=\}$ for all~$i$
and~$\rr_i$ is~$<$ for at least one value of~$i$.
We may assume the sequences are chosen such that (1) $\ell$ is minimum and (2) the number of constraints in the sequence of the form $I^-<I^+$ is maximum for this value of~$\ell$.

Let~$\III''$ be the sub-instance of~$\III'$ with
$V(\III'')=\{v_1,\ldots,v_\ell\}$ and $C(\III'')=\{
v_1\rr_1v_2, v_2\rr_2v_3, \ldots ,v_{\ell-1}\rr_{\ell-1}v_\ell, v_\ell
\rr_\ell v_1\}$ and note that~$\III''$ is unsatisfiable.
By minimality of~$\ell$, the primal graph of~$\III''$ is a cycle and the~$v_i$ variables are pairwise distinct.
Next, suppose for contradiction, that there is a variable $I \in \III$ such that both $I^- \in V(\III'')$ and $I^+ \in V(\III'')$, but $(I^- < I^+) \notin C(\III'')$.
Without loss of generality, assume $v_1=I^-$ and $v_i=I^+$.
Replacing $v_1\rr_1v_2, v_2\rr_2v_3, \ldots ,v_{i-1}\rr_{i-1}v_i$ by $(I^- < I^+)$ in $C(\III'')$ yields another unsatisfiable instance, but with either a smaller value of~$\ell$ or more constraints of the form $(I^- < I^+)$.
Since $(I^- < I^+)$ is a constraint in~$\III'$, this contradicts the choice of~$\III''$.
Therefore if $I^- \in V(\III'')$ and $I^+ \in V(\III'')$, then $(I^- < I^+) \in C(\III'')$.

We now construct an instance~$\III'''$ of \csp{\A}.
We set $V(I''') = \{I \; | \; I^- \; \mathrm{or} \;  I^+ \in \{v_1,\ldots,v_\ell\}\}$.
Next, for each constraint in~$\III''$ that is not of the form $(I^-<I^+)$, we add a constraint from~$\III$ that implies this constraint.
Now if we translate the instance~$\III'''$ into the point algebra (like we did translating~$\III$ into~$\III'$), we obtain a superinstance of~$\III''$, so it follows that~$\III'''$ is unsatisfiable.

Furthermore, $\III'''$ is a sub-instance of~$\III$, so by minimality of~$\III$, it follows that $\III=\III'''$.
If the primal graph of~$\III'''$ is a cycle of length~$2$ or a loop, then we are done.
Otherwise, by construction, each variable in~$\III'''$ is contained in two constraints (which either imply two consecutive constraints in the cycle in~$\III''$, or imply two constraints separated by an $(I^-<I^+)$ constraint in~$\III''$) and the primal graph f of $\III'''$ is connected.
Therefore the primal graph of $\III'''$ is a cycle.
This completes the proof.
\end{proof}
\fi

\iflong
In the proof of the next lemma, we will use the following additional notation.
Suppose~$\varphi$ is an assignment for an instance~$\III$ of $\csp{\A}$ and $c \in \mathbb{Q}$.
For $v \in V(\III)$, we let $\varphi_{+c}(v)^-=\varphi(v)^-+c$ and $\varphi_{+c}(v)^-=\varphi(v)^-+c$ (with $\varphi_{-c}$ defined analogously).
Observe that~$\varphi_{+c}$ is a satisfying assignment for~$\III$ if and only if~$\varphi$ is.
Furthermore, let $\varphi_{max} = \max\{\varphi(v)^+ \; | \; v \in V(\III)\}$ and $\varphi_{min} = \min\{\varphi(v)^- \; | \; v \in V(\III)\}$.
\fi

\ifshort
\begin{restatable}[Bad Cycles]{lemma}{badcycleslemma}
\label{lem:bad-cycles}
  Let $\III$ be an instance of $\csp{\rr_1, \rr_2}$
  for some $\rr_1, \rr_2 \in \A$,
  and consider the arc-labelled mixed graph $G_\III$.
  Then $\III$ is satisfiable if and only if $G_\III$ does not contain
  any bad cycles described below.

\smallskip

\noindent
(1) If $\rr_1 = \d$ and $\rr_2 = \p$, then the bad cycles are
    cycles with $\p$-arcs in the same direction and
    no $\d$-arcs meeting head-to-head. 

\smallskip

\noindent    
(2) \label{case:bad-cycles-do}If $\rr_1 = \d$ and $\rr_2 = \o$, then the bad cycles are
    cycles with all $\d$-arcs in the same direction and 
    all $\o$-arcs in the same direction 
    (the direction of the $\d$-arcs may differ from that of the $\o$-arcs).

\smallskip

\noindent    
(3) If $\rr_1 = \o$ and $\rr_2 = \p$, then the bad cycles are
    (a) directed cycles of $\o$-arcs and 
    (b)  cycles with all $\p$-arcs in the forward direction, with
    every consecutive pair of $\o$-arcs in the reverse direction
    separated by a $\p$-arc (this case includes directed cycles of $\p$-arcs).

\smallskip

\noindent
(4) If $\rr_1 \in \{\f,\s\}$ and $\rr_2 \in \{\d, \o, \p\}$, then the bad cycles are
    (a) directed cycles of $\rr_1$-arcs and
    (b) cycles with at least one $\rr_2$-arc and all $\rr_2$-arcs in the same direction
    (and $\rr_1$-arcs directed arbitrarily).  
\end{restatable}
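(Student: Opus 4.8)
The proof follows the recipe established by Lemma~\ref{lem:cycles}: since every minimal unsatisfiable instance has a cycle as its primal graph, it suffices to (a) verify that each listed cycle type is indeed unsatisfiable, and (b) show that if $G_\III$ contains no cycle of the listed type, then $\III$ is satisfiable. For part (a) I would translate the cycle into endpoint constraints over the point algebra using the dictionary in Table~\ref{tb:allen} and exhibit a directed cycle in which all $<$-relations point the same way, which (as recalled in Section~\ref{sec:cycles}) certifies unsatisfiability. For part (b) I would argue by contraposition: take a minimal unsatisfiable $\III$, invoke Lemma~\ref{lem:cycles} to assume its primal graph is a single cycle $C$, and then show that $C$ must be of one of the stated forms. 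This is the crux of the argument and I treat the four cases separately.

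\textbf{Case-by-case analysis.} Fix the cycle $C$ with constraints read in one traversal direction (the ``forward'' direction). Each variable $I$ on $C$ contributes two endpoint variables $I^-,I^+$ with $I^-<I^+$, and each constraint contributes one or two endpoint constraints. Unsatisfiability of $\III$ is equivalent to the existence, in the induced endpoint point-algebra instance, of a directed closed walk using only $<$ and $=$ arcs with at least one $<$ arc and all $<$ arcs oriented consistently. For the $\{\s,\f\}$ vs.\ $\{\d,\o,\p\}$ case (item 4): an $\rr_1$-arc with $\rr_1=\s$ forces $I^-=J^-$ and $I^+<J^+$, so it contributes no constraint in the ``$-$ layer'' beyond an equality and one $<$ in the ``$+$ layer''; hence a directed cycle of $\rr_1$-arcs closes a $<$-cycle in the $+$ layer. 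An $\rr_2$-arc, whatever $\rr_2\in\{\d,\o,\p\}$, always forces $I^+<J^+$ (for $\p,\o$) or $I^+<J^+$ (for $\d$) — in every case it contributes a strict $<$ between the $+$ endpoints in the forward direction; combined with the fact that $\s$-arcs (in either direction) contribute an equality on $-$ endpoints and a $<$ or $>$ on $+$ endpoints that is dominated, one shows that a cycle with all $\rr_2$-arcs forward yields a consistently oriented $<$-cycle among the $+$ endpoints. The $\f$ subcase is symmetric via Lemma~\ref{lem:reversal}, which maps $\f$ to $\s$ and each $\rr_2\in\{\d,\o,\p\}$ to an element of $\{\d,\oi,\pi\}$, preserving the class of bad cycles up to reversal. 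Cases 1--3 for $\{\d,\p\}$, $\{\d,\o\}$, $\{\o,\p\}$ proceed by the same endpoint bookkeeping: $\d$-arcs ``$I\ \d\ J$'' give $J^-<I^-$ and $I^+<J^+$, so head-to-head $\d$-arcs on the $+$ layer point away from the meeting point and \emph{break} the $+$-cycle (explaining the ``no $\d$-arcs meeting head-to-head'' clause in item 1), while in item 2 one must track both the $-$ and $+$ layers and verify that ``all $\d$ forward, all $\o$ forward'' is exactly the condition making one of the two layers a consistent $<$-cycle. Item 3 requires the most care, since an $\o$-arc constrains all four endpoints ($I^-<J^-<I^+<J^+$), and the ``every reverse $\o$-pair separated by a forward $\p$'' condition is precisely what is needed to route a $<$-walk through the start endpoints without a sign flip.

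\textbf{The converse direction.} For the ``if'' direction of each case, assuming $G_\III$ has no bad cycle of the stated form, I build an explicit satisfying assignment. The uniform strategy: orient/contract the graph so that the induced $<$-constraints on the relevant endpoint layer(s) form a DAG, take a topological order, and assign rational values respecting it, placing $\e$-connected or $\s$/$\f$-shared endpoints equal and giving enough slack for the ``$<$'' gaps; then fill in the remaining endpoint (the non-shared one) with a value large/small enough to satisfy the width constraint $I^-<I^+$ and any $\d$/$\o$ betweenness. For $\{\s,\p,\e\}$-type languages this is exactly the assignment sketched in the introduction (equal left endpoints for $\s$-linked intervals, left endpoints ordered by $\p$, right endpoints ordered by $\s$); the general cases are analogous but bookkeeping-heavy.

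\textbf{Main obstacle.} The hard part is item 3 (the $\{\o,\p\}$ case): because an $\o$-constraint pins down the relative order of \emph{all four} endpoints of the two intervals, a cycle mixing forward and reverse $\o$-arcs can still be consistent, and isolating the exact combinatorial pattern — ``all $\p$ forward, consecutive reverse-$\o$ pairs separated by a $\p$'' — requires a careful interleaving argument showing when the endpoint $<$-walk can be closed without an orientation clash. The second most delicate point is verifying that the converse constructions genuinely respect the $I^-<I^+$ width constraints globally; with long chains of $\d$- or $\o$-arcs one must choose the rational values with care (e.g.\ dyadic refinement between consecutive topological levels) so that nested/overlapping intervals actually fit. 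Everything else is a finite case check driven by Table~\ref{tb:allen} and Lemmas~\ref{lem:cycles} and~\ref{lem:reversal}.
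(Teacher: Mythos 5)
Your overall plan (reduce to cycles via Lemma~\ref{lem:cycles}, encode bad cycles in the point algebra to certify unsatisfiability, build explicit assignments for the converse, use Lemma~\ref{lem:reversal} to identify $\s$ and $\f$) is the same as the paper's, but one of your concrete steps is wrong. For item (4) you claim that a cycle with all $\rr_2$-arcs forward and $\s$-arcs in arbitrary directions ``yields a consistently oriented $<$-cycle among the $+$ endpoints'', with the $\s$-contributions ``dominated''. This fails: an $\s$-arc traversed against the cycle direction contributes $J^+ < I^+$, a strict inequality pointing the wrong way in the right-endpoint layer, so no consistently oriented $<$-cycle need exist there. Concretely, for the cycle $x \p y$, $z \s y$, $z \p x$ the right-endpoint constraints $x^+<y^+$, $z^+<y^+$, $z^+<x^+$ are perfectly consistent; the contradiction lives in the \emph{left}-endpoint layer, where $\s$ contributes only the (direction-independent) equality $z^-=y^-$ while $\p$, $\o$ and $\di$ all force $u^-<v^-$ (the paper first replaces $\d$ by $\di$ via the reversal symmetry precisely to make this uniform). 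So the unsatisfiability argument for item (4) must be run on left endpoints, not right endpoints.

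The converse direction (every cycle not of the listed form is satisfiable) is also where most of the real work lies, and your ``uniform topological-order on the relevant layer'' strategy does not by itself deliver it, because $\p$- and $\o$-constraints couple the two layers. The paper needs, and you do not supply, the following specific constructions: (i) a cycle containing two $\p$-arcs in opposite directions is satisfiable --- delete the two $\p$-arcs, satisfy the two resulting paths separately, and translate one far to the left of the other (this claim is reused in items (1), (3) and (4) with $\rr_2=\p$); (ii) a cycle with two $\d$-arcs meeting head-to-head is satisfiable --- delete the shared head vertex, satisfy the path, and assign that vertex an interval containing everything (this is what item (1) really rests on, not just the observation that the $+$-layer cycle is ``broken''); (iii) for item (3), a block of reverse $\o$-arcs separated only by forward $\o$-arcs is contracted to a reverse $\p$-constraint, the contracted cycle is satisfied via (i), and the deleted intervals are recovered by interpolation; (iv) for item (4) with $\rr_2\in\{\o,\d\}$, replace the $\s$-arcs by $\e$, satisfy via item (2), and then perturb the right endpoints by small amounts along the cycle to restore the strict $I^+<J^+$ parts of the $\s$-constraints. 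Your sketch gestures at (iv) (``equal left endpoints, perturb with care'') and flags (iii) as hard, but without (i) and (ii) the completeness of the characterizations in items (1), (3) and (4) is not established.
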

\setcounter{specialtheorem}{\thetheorem} %
\fi

\iflong
\badcycleslemma*

\begin{proof}
By Lemma~\ref{lem:cycles}, we only need to consider instances~$\III$ whose primal graph is a minimal bad cycle.
If~$G_\III$ is a loop with a label other than $\e$, then it is a bad cycle.
If~$G_\III$ contains two parallel arcs with different labels or two anti-parallel arcs with labels in $\{\d,\f,\m,\o,\p,\s\}$, then it is a bad cycle.

\begin{claim}\label{clm:oppposite-p-arcs}
If a cycle contains two $\p$-arcs in opposite directions, then it is satisfiable.
\end{claim}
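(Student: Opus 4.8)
\begin{claimproof}
The plan is to prove satisfiability by exhibiting a solution, exploiting the ``slack'' that a $\p$-constraint allows. Let the cycle visit the variables $v_0, v_1, \dots, v_{\ell-1}$ in cyclic order, with $e_i$ the edge between $v_i$ and $v_{i+1}$ (indices mod $\ell$), and suppose the two $\p$-arcs lie on $e_a$ and $e_b$; relabelling, assume $a < b$. Deleting $e_a$ and $e_b$ leaves two paths, $Q_1$ on $v_{a+1},\dots,v_b$ and $Q_2$ on $v_{b+1},\dots,v_{\ell-1},v_0,\dots,v_a$, whose vertex sets partition the variable set, and --- the primal graph being a cycle --- no constraint other than $e_a$ and $e_b$ joins them. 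Here ``opposite directions'' means that one of the two $\p$-arcs is traversed forwards and the other backwards along the cycle; I would first run a short case check to see that this is equivalent to saying that $Q_1$ and $Q_2$ can be named so that \emph{both} deleted $\p$-constraints point from an endpoint of $Q_2$ to an endpoint of $Q_1$ (the degenerate case $\ell = 2$ of two parallel $\p$-arcs fits this description, with $Q_1$ and $Q_2$ single vertices).

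Next I would observe that $Q_1$ and $Q_2$, regarded as sub-instances of the cycle, each have a forest as primal graph and are therefore satisfiable: an unsatisfiable instance would contain an inclusion-wise minimal unsatisfiable sub-instance, whose primal graph is a cycle by Lemma~\ref{lem:cycles} --- impossible inside a forest. Fix satisfying assignments $\psi_1$ of $Q_1$ and $\psi_2$ of $Q_2$.

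The key step is to glue $\psi_1$ and $\psi_2$ into one assignment $\psi$. Every relation in $\A$ is defined purely by order comparisons between interval endpoints (Table~\ref{tb:allen}), hence is invariant under translating all intervals of an assignment by a common rational; so I would shift $\psi_2$ far to the left, replacing each $\psi_2(v)$ by $[\psi_2(v)^- + c, \psi_2(v)^+ + c]$ for any $c < \bigl( \min_{v \in V(Q_1)} \psi_1(v)^- \bigr) - \bigl( \max_{v \in V(Q_2)} \psi_2(v)^+ \bigr)$, so that every interval the shifted assignment places lies entirely to the left of every interval $\psi_1$ places. Let $\psi$ agree with $\psi_1$ on $V(Q_1)$ and with the shifted $\psi_2$ on $V(Q_2)$. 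Then $\psi$ satisfies every constraint internal to $Q_1$ or $Q_2$, and it satisfies $e_a$ and $e_b$ too, since each of them has its tail in $Q_2$ and its head in $Q_1$, so the required strict inequality between the tail's right endpoint and the head's left endpoint holds by construction. Hence the cycle is satisfiable. I do not expect a genuine obstacle; the only parts needing care are the bookkeeping matching ``opposite directions along the cycle'' with ``both $\p$-arcs cross from $Q_2$ to $Q_1$'', and the remark that one common leftward shift of $\psi_2$ repairs both $\p$-constraints at once --- which is exactly why it matters that the two $\p$-arcs point the same way across this cut.
\end{claimproof}
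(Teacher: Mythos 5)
Your proposal is correct and follows essentially the same route as the paper: delete the two oppositely-directed $\p$-arcs to split the cycle into two satisfiable paths, observe that both deleted arcs point the same way across the resulting cut, and translate one of the two satisfying assignments so that all its intervals lie strictly to one side of the other's, which satisfies both $\p$-constraints simultaneously. The only cosmetic difference is that the paper shifts both assignments (one to negative values, one to positive) rather than shifting just one, which changes nothing.
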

\begin{proof}[Proof of claim]\renewcommand{\qedsymbol}{$\diamond$}
Let~$\III$ be such an instance and let~$\III'$ be the instance obtained by removing a forward $\p$-arc and a reverse $\p$-arc from~$G_{\III}$.
Note that~$G_{\III'}$ is disconnected, and let~$\III''$ and~$\III'''$ be the sub-instances of~$\III'$ that correspond to the two components of~$G_{\III'}$.
Without loss of generality, we may assume that the $\p$-arcs we removed are each from variables in~$V(\III'')$ to variables in~$V(\III''')$.
Since~$G_{\III''}$ and~$G_{\III'''}$ are paths, $\III''$ and~$\III'''$ are satisfiable by Lemma~\ref{lem:cycles}; let~$\varphi''$ and~$\varphi'''$, respectively, be satisfying assignments for these instances.
We now define an assignment~$\varphi$ for~$\III$ by setting $\varphi(v) = \varphi_{-(\varphi_{max}''+1)}''(v)$ if $v \in V(\III'')$ and $\varphi(v) = \varphi_{-(\varphi_{min}'''-1)}'''(v)$ if $v \in V(\III''')$.
Note that~$\varphi(v)^-$ and~$\varphi(v)^+$ are negative if $v \in V(\III'')$ and positive if $v \in V(\III''')$, so the two~$\p$ constraints we removed from~$\III$ are satisfied by~$\varphi$.
Furthermore, all the other constraints in~$\III$ appear in either~$\III''$ or~$\III'''$ and~$\varphi$ satisfies these by construction.
Therefore~$\varphi$ is a satisfying assignment for~$\III$.
\end{proof}

\begin{claim}\label{clm:d-arcs-head-to-head}
If a cycle contains two $\d$-arcs that meet head-to-head, then it is satisfiable.
\end{claim}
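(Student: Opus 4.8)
The plan is to reuse the disconnect-and-recombine argument from the proof of Claim~\ref{clm:oppposite-p-arcs}. Recall that we are dealing with an instance $\III$ whose primal graph is a cycle (Lemma~\ref{lem:cycles}), so every vertex lies on exactly two arcs of $G_\III$; hence two $\d$-arcs that meet head-to-head must be the two arcs incident to a common vertex~$w$, both directed into~$w$. Write them as $(a,w)$ and $(b,w)$. If $a=b$ these two arcs have the same tail and the same head, so $\III$ consists of two copies of the constraint $a\d w$ and is trivially satisfiable; assume therefore that $a\neq b$. First I would delete both $\d$-arcs from $G_\III$ to obtain an instance $\III'$ whose primal graph has $w$ as an isolated vertex and the remaining vertices forming a path~$\III''$ from~$a$ to~$b$.

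Since the primal graph of $\III''$ is a path, it contains no cycle, so $\III''$ is satisfiable by Lemma~\ref{lem:cycles}; fix a satisfying assignment $\varphi''$, and note that $w$ carries no constraint in $\III'$. The crux is to extend $\varphi''$ to~$w$ so that the two deleted constraints $a\d w$ and $b\d w$ hold, i.e.\ so that $\varphi(w)$ is a strict superinterval of every interval used by $\varphi''$. Since $V(\III'')$ is finite I would set $\varphi(w)=[\varphi_{min}''-1,\varphi_{max}''+1]$ and keep $\varphi(v)=\varphi''(v)$ for every $v\in V(\III'')$; then $\varphi(w)^-<\varphi''(v)^-$ and $\varphi''(v)^+<\varphi(w)^+$ for all $v\in V(\III'')$, in particular for $v\in\{a,b\}$, so both $a\d w$ and $b\d w$ are satisfied. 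Every other constraint of $\III$ belongs to $\III''$ and is satisfied by $\varphi''$, so $\varphi$ satisfies $\III$ and the claim follows.

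I do not anticipate a genuine obstacle: this is structurally the same as Claim~\ref{clm:oppposite-p-arcs}, with a single path component and one ``swallowing'' interval in place of shifting two components apart along the line. The only points needing a little care are the degenerate length-two case, handled above, and the observation that $[\varphi_{min}''-1,\varphi_{max}''+1]$ is a legitimate interval, which holds because $\varphi_{min}''\le\varphi_{max}''$ (indeed $V(\III'')$ contains at least the vertex~$a$).
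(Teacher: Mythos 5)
Your proof is correct and follows essentially the same route as the paper: delete the two head-to-head $\d$-constraints (the paper also drops the apex variable), satisfy the remaining path, and then assign the apex vertex the interval $[\varphi_{min}''-1,\varphi_{max}''+1]$, which strictly contains all other intervals and hence satisfies both deleted constraints. The extra care you take with the degenerate length-two case is harmless and consistent with the paper's argument.
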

\begin{proof}[Proof of claim]\renewcommand{\qedsymbol}{$\diamond$}
Let~$\III$ be such an instance and let $x\d y$ and $z \d y$ be the two constraints corresponding to these arcs.
Let~$\III'$ be the instance obtained from~$\III$ by removing these two constraints, along with the variable~$y$.
The primal graph of~$\III'$ is a path, so~$\III$ has a satisfying assignment~$\varphi'$.
We set $\varphi(y)^-=\varphi'_{min}-1$, $\varphi(y)^+=\varphi'_{max}+1$ and $\varphi(v)=\varphi'(v)$ otherwise.
By construction, $\varphi$ satisfies the constraints $x\d y$ and $z \d y$, the constraint $\varphi(y)^- < \varphi(y)^+$ and all other constraints in~$\III$.
Therefore~$\varphi$ satisfies~$\III$.
\end{proof}

We now consider the four cases stated in the lemma.
\begin{enumerate}
\item If we have two $\p$-arcs in opposite directions, then the instance is satisfiable by Claim~\ref{clm:oppposite-p-arcs}.
If we have two $\d$-arcs meeting head-to-head, then the instance is satisfiable by Claim~\ref{clm:d-arcs-head-to-head}.
We claim that otherwise the instance~$\III$ is unsatisfiable.
Suppose, for contradiction, that the instance is satisfiable.
If there are no $\p$-arcs, then this would give a cycle of $\d$-arcs all in the same direction, which is clearly not satisfiable.
We may therefore assume there is at least one $\p$-arc.
Observe the each non-empty set of consecutive $\d$-arcs consists of a non-negative number of reverse $d$-arcs, followed by a non-negative number of forward $d$-arcs i.e. we have a sequence of constraints 
$$x_{k+1}\di x_{k+2}, \ldots, x_{k+i-1}\di x_{k+i}, x_{k+i} \d x_{k+i+1}, \ldots, x_{k+j-1} \d x_{k+j},$$ where $j\geq 1$.
This implies the point algebra constraints
$$x_{k+1}^-<\cdots<x_{k+i}^->\cdots > x_{k+j}^-$$ and 
$$x_{k+1}^+>\cdots>x_{k+i}^+<\cdots < x_{k+j}^+.$$
Since $x_{k+i}^-<x_{k+i}^+$, this means that $x_{k+i}^*=(x_{k+i}^-+x_{k+i}^+)/2$ is contained in every interval in $\{x_{k+1},\ldots,x_{k+j}\}$.
Now if $u \p v$, then for every $x \in [u^-,u^+]$ and every $y \in [v^-,v^+]$  we have $x<y$.
We therefore have a cyclic sequence of points in~$\mathbb{Q}$, where if $u \p v$ is a constraint in~$\III$, then the point corresponding to~$u$ is less than that corresponding to~$v$, and if $u \d v$ is a constraint in~$\III$, then the points corresponding to~$u$ and~$v$ are equal.
We therefore have a satisfied cycle of point algebra relations~$<$ and~$=$ with at least one~$<$ relation and all~$<$ relations in the same direction, a contradiction.
Therefore~$\III$ is indeed unsatisfiable.

\item
A constraint $u \d v$ implies that $u^- > v^-$ and $u^+ < v^+$.
A constraint $u \o v$ implies that $u^- < v^-$ and $u^+ < v^+$.
Thus a cycle of~$\d$ and~$\o$ constraints in which all of these constraints correspond to arcs in the same direction yields an unsatisfiable set of point algebra on $\{v^+ \; | \; v \in V(\III)\}$.
Similarly, a cycle of~$\d$ and~$\o$ constraints in which all of the~$\d$ constraints are in the forward direction, and all of the $\o$-arcs are in the reverse direction yields unsatisfiable set of point algebra on $\{v^- \; | \; v \in V(\III)\}$.
Therefore the described bad cycles really are indeed bad.

Suppose we have a cycle that is not of this form.
Then the set of implied constraints on pairs of variables in $\{v^- \; | \; v \in V(\III)\}$ has a satisfying assignment~$\psi^-$.
Similarly, the set of implied constraints on pairs of variables in $\{v^+ \; | \; v \in V(\III)\}$ has a satisfying assignment~$\psi^+$.
We may assume (by adding a fixed constant to the values) that~$\psi^-$ only takes negative values and~$\psi^+$ only takes positive values.
Now for $v \in V(\III)$, let $\varphi(v)^-=\psi^-(v)$ and $\varphi(v)^+=\psi^+(v)$.
Then~$\varphi$ satisfies all of the~$\d$ and $\o$ constraints in~$\III$, by construction.
Furthermore, for all $v \in V(\III)$, $\varphi(v)^-<0<\varphi(v)^+$, so~$\varphi$ is a satisfying assignment of~$\III$.

\item
Clearly, directed cycles of arcs of the same relation are bad.
If the cycle contains a forward $\p$-arc and a reverse $\p$-arc, then the instance is satisfiable by Claim~\ref{clm:oppposite-p-arcs}.
We may therefore restrict ourselves to looking at cycles in which all $\p$-arcs go in the forward direction and there is at least one $\o$-arc.
If the cycle contains only $\o$-arcs, then we are done by Case~\ref{case:bad-cycles-do}, so we may assume there is at least one $\p$-arc.
Now suppose that the cycle contains a forward $\p$-arc and two reverse $\o$-arcs that do not have a $\p$-arc between then, but may have a non-negative number of forward $\o$-arcs between them, say $x \oi y_1, y_1 \o y_2, \ldots, y_{k-1} \o y_k, y_k \oi z$, where $k \geq 1$.
Let~$\III'$ be the instance obtained from~$\III$ by deleting these constraints and the variables~$y_i$ and adding the constraint $x \pi z$.
Then~$\III'$ contains a forward and a reverse $\p$-arc, so it has a satisfying assignment~$\varphi'$ and note that $\varphi'(x)^-> \varphi'(z)^+$.
We let 
$$\varphi(y_i)^-=\frac{i \cdot \varphi'(z)^-+(k+1-i) \cdot \varphi'(z)^+}{k+1},$$ 
$$\varphi(y_i)^+=\frac{i \cdot \varphi'(x)^-+(k+1-i) \cdot \varphi'(x)^+}{k+1},$$ and let $\varphi(w)=\varphi'(w)$ otherwise.
Now $\varphi(y_i)^-<\varphi(y_{i+1})^-<\varphi(y_i)^+<\varphi(y_{i+1})^+$ for all $i \in [k-1]$.
Furthermore, $\varphi(y_1)^-<\varphi(x)^-<\varphi(y_1)^+<\varphi(x)^+$ and $\varphi(z)^-<\varphi(y_k)^-<\varphi(z)^+<\varphi(y_k)^+$.
Therefore~$\varphi$ satisfies $x \oi y_1, y_1 \o y_2, \ldots, y_{k-1} \o y_k, y_k \oi z$ and all the other constraints in~$\III$.
Therefore every bad cycle is of the form stated in the lemma.

Now suppose there is a cycle of the form stated in the lemma and assume, for contradiction, that it is satisfiable; note that the cycle must contain at least one $\p$-arc.
If there is a consecutive run of a non-negative number of forward $\o$-arcs, say $y_1 \o y_2, \ldots, y_{k-1}\o y_k$ for some $k \geq 1$, then $y_1^- \leq y_k^-$ and $y_1^+ \leq y_k^+$.
If there is a reverse $\o$-arc, say $x \oi y$, then $x^-<y^+$.
Therefore, if there is a consecutive run of $\o$-arcs, at most one of which is a reverse arc, from the variable~$x$ to the variable~$y$, then $x^-<y^+$.
If there is a consecutive run of forward $\p$-arcs from the variable~$x$ to the variable~$z$, say $x \p y_1, y_1 \p y_2, \ldots, y_{k-1} \p y_k, y_k \p z$ for some $k \geq 0$, then $x^+ < y^-$.
Therefore encoding~$\III$ in the point algebra yields an unsatisfiable instance, a contradiction.
This completes the proof for this case.

\item By Lemma~\ref{lem:reversal}, we may assume that $\rr_1=\s$ and we may replace $\d$ by $\di$ in the statement of the lemma.
We first show that the cycles described in the lemma are indeed bad.
Clearly, a directed cycle of $\s$-arcs is not satisfiable.
Now suppose that the $\rr_2$-arcs are all directed in the same direction.
We encode the instance~$\III$ as an instance~$\III'$ of the point algebra and we claim that~$\III'$ is not satisfiable.
Indeed, each $u\s v$ constraint in~$\III$ yields a $u^- = v^-$ constraint in~$\III'$.
If $u \di v$ or $u \o v$ is a constraint in~$\III$, then $u^-<v^-$ is a constraint in $\III'$.
If $u \p v$ is a constraint in~$\III$, then $u^+ < v^-$ is a constraint in~$\III'$; since $u^-<u^+$ is also a constraint in~$\III'$, this implies that $(u^-<v^-)$ must be satisfied by any satisfying assignment for~$\III'$.
We conclude that in each case, if the cycle is as described in the lemma, then the instance~$\III'$ is unsatisfiable.

Next, we show that if a cycle is not of the form written in the lemma, then it is satisfiable. If $\rr_2=\p$, this means that there must be two $\p$-arcs in opposite directions, so the instance is satisfiable by Claim~\ref{clm:oppposite-p-arcs}. 

It remains to consider $\rr_2 \in \{\o,\di\}$.
Let $\III''$ be the instance obtained from~$\III$ by replacing each $u \s v$ constraint by a $u \e v$ constraint.
Observe that a cycle containing an $\e$-edge is satisfiable if and only if the cycle obtained by contracting this edge away is satisfiable.
Now $\III''$ is satisfiable by Case~\ref{case:bad-cycles-do}, as its primal graph does not contain a cycle in which all $\rr_2$ relations are in the same direction.
Let $\varphi''$ be a satisfying assignment for~$\III''$.
Observe that we may assume all values taken by this assignment are pairs of integers (if not, then multiply all values by a common multiple of their divisors).
Let~$n$ be the number of variables in~$\III$.
Let $u_1{\sf t}_1u_2,u_2{\sf t}_2u_3,\ldots,u_{n-1}{\sf t}_{n-1}u_n,u_n{\sf t}_nu_1$ be the cycle that forms the instance~$\III$ and assume without loss of generality that ${\sf t}_n\in \{\rr_2,\rr_2\sf i\}$ (where $\rr_2\sf i=\oi$ or $\d$ if $\rr_2=\o$ or~$\di$, respectively).
We construct a satisfying assignment~$\varphi$ for $\III$ as follows.
For all $v \in V(\III)$, let $\varphi^-(v)=\varphi''^-(v)$.
We let $\varphi(u_1)^+=\varphi''(u_1)^+$.
Next, for each $i \in \{1,\ldots,n-1\}$ in turn, we let $\varphi(u_{i+1})^+=\varphi''(u_{i+1})^+$ if ${\sf t}_i\in \{\rr_2,\rr_2\sf i\}$, $\varphi(u_{i+1})^+=\varphi(u_i)^++\frac{1}{2n}$ if ${\sf t}_i=\s$ and $\varphi(u_{i+1})^+=\varphi(u_i)^+-\frac{1}{2n}$ if ${\sf t}_i=\si$.
This changes the value of the each $u^+$ variables by less than $\frac{1}{2}$, but keeps all $u^-$ variables unchanged.
Therefore all $u^-<u^+$ relations remain satisfied, all $\rr_2$ relations remain satisfied and the $u^-=v^-$ parts of the $u \s v$ relations also remain satisfied.
By construction, the $u^+<v^+$ parts of the $u \s v$ also become satisfied.
Therefore $\varphi$ is a satisfying assignment for~$\III$.

\end{enumerate}
\end{proof}
\fi

\ifshort
\begin{proof}[Proof Sketch]
By Lemma~\ref{lem:cycles}, we only need to consider instances~$\III$ whose primal graph is a minimal bad cycle. 
If a cycle contains two $\p$-arcs in opposite directions, we can delete these arcs to divide the instance into two disjoint sub-instances, which are satisfiable.
We then choose satisfying assignments so that all the intervals in one of the sub-instances are to left of those in the other.
The resulting assignment also satisfies the two constraints we deleted so
cycles with two $\p$-arcs in opposite directions are always satisfiable.

If a cycle contains two $\d$ arcs that meet head-to-head, we can delete these two arcs and their common vertex~$v$.
The remaining instance is satisfiable, so we choose a satisfying assignment and set the interval corresponding to~$v$ to be one that contains all of the other intervals.
The resulting assignment also satisfies the two constraints we deleted.
Therefore cycles with two $\d$-arcs that meet head-to-head are always satisfiable.

Next, observe that for every relation $\rr \in \{\s,\f,\d,\p,\o\}$, if there is a directed cycle of $\rr$-arcs, then the cycle is unsatisfiable.
We now sketch the proofs for cases (1) and (4); the remaining cases can be proved
in similar ways.

\smallskip

\noindent
(1) We only need to consider cycles where all $\p$ arcs lie in the forward direction, and every sequence of consecutive $\d$ arcs contains a non-negative number of reverse $\d$-arcs followed by a non-negative number of forwards $\d$-arcs.
This means that all of the intervals involving this run of $\d$-arcs share a common point.
The $\p$-arcs then imply that these common points must form a cycle, where each of these common points is bigger than the next.
This then shows that the cycle is unsatisfiable.

\smallskip

\noindent
(4) The $\s$ constraints give equality constraints on left end-points and the $\rr_2$ constraints lead to $<$ constraints on the left end-points so
the claimed unsatisfiable cycles are indeed unsatisfiable.
In the $\rr_2=\p$ case, since minimal unsatisfiable instances cannot have $\p$-arcs in opposite directions, this allows us to complete the proof.
In the $\rr_2 \in \{\o,\d\}$ cases, if we replace the $\s$-arcs with $\e$-edges, we get a satisfiable instance.
Taking a satisfying assignment for this modified instance, we can change it to satisfy the original instance by small perturbations to the values of the right end-points of the intervals.
\end{proof}
\fi

\section{FPT Algorithms} \label{sec:fpt-algs}

\iflong
We prove Lemmas~\ref{lem:mp-fpt} and \ref{lem:sf-pod-fpt} in this section.
This amounts to proving that \textsc{MinCSP}$(\Gamma)$ is in \FPT\ for
seven subsets $\Gamma \subseteq \A$ of interval relations:
these are
$\Gamma = \{\m, \p\}$ and
$\Gamma = \{\rr_1, \rr_2, \e\}$ for
$\rr_1 \in \{ \s, \f \}$ and $\rr_2 \in \{ \p, \d, \o \}$.
We begin in Section~\ref{ssec:mp-fpt} by presenting the algorithm for $\{\m,\p\}$,
which works by first reducing the problem to \mincsp{<,=} and 
then to \subdfas.
In Section~\ref{ssec:lsmfas-fpt} we treat the remaining six cases
by reducing all of them to a fairly natural
generalization of \textsc{Directed Feedback Arc Set} problem,
and showing that this problem is fixed-parameter tractable.
\fi

\ifshort
We prove Lemma~\ref{lem:mp-fpt} in this section.
The fpt algorithm for $\{\m,\p\}$ is simple and we omit the details:
it works by first reducing the problem to \mincsp{<,=} and 
then to \subdfas.
The remaining six cases
are handled by reducing them to a fairly natural
generalization of \textsc{Directed Feedback Arc Set} problem,
and showing that this problem is in \FPT.
\fi

\iflong

\subsection{Algorithm for \mincsp{\m, \p}}
\label{ssec:mp-fpt}

We show that \mincsp{\m, \p} is in FPT.
The algorithm works by reducing the problem to
\mincsp{<, =} and then to \textsc{Subset Directed Feedback Arc Set}
defined as follows.

\pbDefP{Subset Directed Feedback Arc Set (\subdfas)}
{A directed graph $G$, a subset of red arcs $R \subseteq A(G)$, 
and an integer $k$.}
{$k$.}
{Is there a subset $Z \subseteq A(G)$ of size at most $k$ 
such that $G - Z$ contains no cycles with at least one red arc?}

We refer to arcs in $A(G) \setminus R$ as  {\em black}.

\begin{theorem}[\cite{chitnis2015directed}] \label{thm:sdfas-exact}
  \subdfas is solvable in $O^*(2^{O(k^3)})$ time.
\end{theorem}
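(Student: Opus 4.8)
The plan is to reduce \subdfas to \textsc{Directed Subset Feedback Vertex Set} (DSFVS)---delete at most $k$ vertices of a directed graph so that no directed cycle passes through a prescribed set of ``red'' vertices---which is solvable in $O^*(2^{O(k^3)})$ time by~\cite{chitnis2015directed}, and then to invoke that result. Given an instance $(G, R, k)$ of \subdfas I would build an instance of DSFVS as follows: subdivide every arc $a = (u,v)$ of $G$ by a fresh vertex $m_a$, replacing $(u,v)$ by the two arcs $(u, m_a)$ and $(m_a, v)$, and declare $m_a$ red exactly when $a \in R$. To guarantee that an optimal solution never removes an original vertex of $G$, replace each such vertex $v$ by $k+1$ twins $v^{(1)}, \dots, v^{(k+1)}$, each inheriting all in- and out-arcs of $v$; a budget-$k$ solution cannot destroy all twins, so we may assume it avoids them entirely. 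A directed cycle of $G$ through a red arc corresponds, up to subdivision, to a directed cycle of the new graph through a red subdivision vertex, and conversely (loops $(v,v)$ become $2$-cycles and are handled the same way). Hence a set of $\le k$ arcs of $G$ hitting all red-arc cycles exists iff a set of $\le k$ vertices of the new graph hitting all red-vertex cycles exists, and from the latter one recovers the former by reading off the arcs corresponding to the chosen subdivision vertices. The parameter is unchanged and the construction is polynomial, so this is a valid parameterized reduction and the running time is inherited from~\cite{chitnis2015directed}.

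\textbf{What lies behind the cited result.} For orientation I recall the shape of the DSFVS algorithm, since this is where the real work sits. One starts with iterative compression over the vertices, reducing to a \emph{disjoint compression} problem: given a solution $W$ of size $k+1$, find one of size $\le k$ disjoint from $W$ (guessing $W \cap X$ costs a factor $2^{k+1}$, and there are $n$ rounds). The core is \emph{shadow removal}: using the random-sampling technique introduced for undirected multicut~\cite{MarxR09} and adapted to the directed setting in~\cite{chitnis2015directed}, with probability $2^{-O(k^2)}$ a randomly chosen subset covers the shadow of some optimal solution, so one may pass to the torso and assume the sought solution is shadowless; this is derandomized by a splitter-based family of size $2^{O(k^2)}\log n$. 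For a shadowless instance the solution collapses to a skew-separator-type object found by branching on important separators in $2^{O(k\log k)}$ time, and combining the size of the covering family with the cost of solving the resulting shadowless instances (plus the compression and disjointness overheads) yields the overall $O^*(2^{O(k^3)})$ bound.

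\textbf{Main obstacle.} In the present context the only thing to verify is the routine reduction of the first paragraph; the substantive difficulty---the directed shadow-removal machinery and the reduction of shadowless instances to a branchable skew-separator problem---is exactly the content of~\cite{chitnis2015directed} and is imported as a black box. I would therefore present the proof as the reduction to DSFVS followed by an appeal to that theorem.
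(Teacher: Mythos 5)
Your proposal is correct and takes essentially the same route as the paper, which states this result purely as a black-box citation of Chitnis et al.~\cite{chitnis2015directed} without giving any proof of its own. Your subdivision-plus-twins reduction from the arc-deletion version to the vertex-deletion version is the standard folklore step (already covered by~\cite{chitnis2015directed}), so importing their $O^*(2^{O(k^3)})$ algorithm as a black box is exactly what the paper does.
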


First, we show that \mincsp{<, =}, i.e. \textsc{MinCSP}
over an infinite ordered domain (say $\rationals$)
with binary relations `less-than' $<$ and `equals' $=$,
reduces to \subdfas.

\begin{lemma} \label{lem:less-eq=sdfas}
  There is an algorithm that takes an instance $(\III, k)$
  of ${\mincsp{<,=}}$ as input and produces in polynomial time 
  an instance $(G, R, k)$ of \subdfas
  such that $(\III, k)$ is a yes-instance if and only if
  $(G, R, k)$ is a yes-instance.
\end{lemma}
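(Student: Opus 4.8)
The plan is to exhibit an explicit polynomial-time construction and then check the two directions of the equivalence. Given an instance $(\III, k)$ of \mincsp{<,=}, I would build a directed graph $G$ with $V(G) = V(\III)$ by turning each constraint $x < y$ into a \emph{red} arc $(x,y)$ (so $R$ is exactly the set of these arcs) and each constraint $x = y$ into a \emph{pair} of \emph{black} arcs $(x,y)$ and $(y,x)$, and output the \subdfas-instance $(G, R, k)$ with the \emph{same} budget $k$. This is clearly computable in polynomial time and leaves the parameter untouched, so it suffices to prove that $(\III, k)$ is a yes-instance of \mincsp{<,=} if and only if $(G, R, k)$ is a yes-instance of \subdfas.

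For the forward implication, suppose $X \subseteq C(\III)$ with $|X| \le k$ is such that $\III - X$ has a satisfying assignment $\varphi : V(\III) \to \mathbb{Q}$. I would form $Z \subseteq A(G)$ by picking, for each $x < y$ in $X$, the red arc $(x,y)$, and for each $x = y$ in $X$, the black arc oriented from the endpoint with the larger $\varphi$-value to the one with the smaller $\varphi$-value (either orientation if the two values are equal); then $|Z| \le |X| \le k$. The key point to verify is that in $G - Z$ every arc is $\varphi$-nondecreasing and every red arc is strictly $\varphi$-increasing: a surviving red arc $(x,y)$ comes from a constraint $x < y \notin X$ satisfied by $\varphi$, so $\varphi(x) < \varphi(y)$; a surviving black arc comes either from an equality not in $X$ (hence its endpoints have equal $\varphi$-value) or from an equality in $X$ whose $\varphi$-decreasing orientation we deliberately deleted, leaving the $\varphi$-nondecreasing one. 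It follows that $G - Z$ has no directed cycle through a red arc, since the $\varphi$-values would have to strictly increase around such a cycle.

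For the converse, suppose $Z \subseteq A(G)$ with $|Z| \le k$ and $G - Z$ has no directed cycle through a red arc. Let $X$ collect every constraint of $\III$ that contributed at least one arc to $Z$; then $|X| \le |Z| \le k$, and every constraint outside $X$ keeps all of its arcs in $G - Z$ (both black arcs, for an equality). I would then produce a satisfying assignment of $\III - X$ by partitioning $V(\III)$ into the strongly connected components of the subgraph of $G - Z$ formed by the surviving black arcs, observing that no surviving red arc has both endpoints in one such component (else that red arc plus a black return path would be a forbidden cycle), contracting these components, arguing the result is acyclic (a cycle in it would lift either to a red cycle in $G - Z$ or to a set of black arcs merging two components), fixing a topological order, and assigning each vertex the integer rank of its component. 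This assignment satisfies every constraint of $\III - X$: a surviving equality keeps both its black arcs, forcing its endpoints into one component and hence equal values, while a surviving $<$-constraint keeps its red arc, putting its endpoints in different components ordered the right way.

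The only genuine obstacle is the bookkeeping around equality constraints: because one $=$-constraint becomes two arcs, a reduction that deleted both arcs (or counted one arc per constraint without care) would break the size correspondence. The resolution I would highlight is the asymmetric use of the satisfying assignment in the forward direction — it tells us which arc of each equality pair to drop — together with the fact that in the backward direction we are handed a \subdfas solution leaving \emph{no} red cycle whatsoever, which is precisely the hypothesis needed to recover a consistent assignment. (If preferred, one can shortcut the backward direction by invoking the standard fact that a $\{<,=\}$-instance is satisfiable iff its primal graph has no cycle with all $<$-edges oriented the same way, but the component-contraction argument keeps the proof self-contained.)
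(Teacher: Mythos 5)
Your proposal is correct and follows essentially the same route as the paper: the identical construction (a red arc for each $<$-constraint, a pair of antiparallel black arcs for each $=$-constraint, unchanged budget $k$), a forward direction that deletes one arc per violated constraint, and a backward direction that contracts the black strongly connected components of $G-Z$ and reads off an assignment from a topological order. The only difference is that you make explicit, via the satisfying assignment, which of the two black arcs of a deleted equality to remove — a detail the paper's forward direction leaves implicit — so nothing further is needed.
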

\begin{proof}
Let $(\III, k)$ be an instance of $\mincsp{<,=}$.
Construct a directed graph $G$ with the variables in $V(\III)$ as vertices.
If $\III$ contains a constraint $x = y$,
add black arcs $(x,y)$ and $(y,x)$ to $A(G)$.
If $\III$ contains a constraint $x < y$,
add a red arc $(x, y)$ to $A(G)$ and $R$.
This completes the construction. The colours of the arcs are
added only to make the proof easier to read.

For an intuition, one can think of the black arcs in $G$
as $\leq$ relations, and the red arcs as $<$ relations.
Then two black arcs $(x,y)$ and $(y,x)$
enforce equality $x = y$.
A directed cycle in $G$ consisting 
solely of black arcs (i.e. $\leq$ relations)
corresponds to a satisfiable subset of constraints --
indeed, we can assign the same value to all variables
and satisfy all these constrains.
On the other hand, if a directed cycle 
contains a red arc (i.e. $<$ relation),
then it corresponds to an inconsistent instance
since it implies $x < y$ and $x \geq y$ for a pair of variables.

To formally prove correctness, 
we first assume $X$ to be a set of constraints
such that $\III - X$ is consistent.
Create a set of arcs $Z$ as follows:
if $X$ contains the constraint $x = y$,
then add one black arc $(x, y)$ to $Z$;
if $X$ contains the constraint $x < y$,
then add the red arc $(x, y)$ to $Z$.
Clearly, $|X| = |Z|$.
We claim that $Z$ intersects every cycle that
contains an arc from $S \setminus Z$.
Assume to the contrary that $G - Z$
contains a cycle with a red arc,
i.e. a directed $(x,y)$-path $P$ and 
a red arc $(y,x) \in S \setminus Z$.
The constraints along $P$ imply $x \leq y$,
while the red arc $(y, x)$ corresponds to a constraint $x > y$,
contradicting that $\III - X$ is satisfiable.

For the opposite direction, suppose that
$Z'$ is a set of arcs such that 
$G - Z'$ does not contain any cycle with at least one red arc.
Construct a set of constraints $X'$ as follows:
if $Z'$ contains a black arc $(x, y)$,
then add $x = y$ to $X'$, and
if $Z'$ contains a red arc $(x, y)$,
then add $x < y$ to $X'$.
This construction implies that $|X'| \leq |Z'|$.
We show that $\III - X'$ is consistent by defining an explicit
assignment $\varphi : V(\III) \rightarrow \rationals$.
To this end, construct a graph $H$ from $G - Z'$ by identifying
every pair of vertices $u$ and $v$ that are 
strongly connected by black arcs.
We know that there is no cycle of black arcs in $H$.
Moreover, there is no cycle with a red arc because
it would correspond to a cycle with a red arc in $G - Z'$.
Hence, $H$ is acyclic.
Let $\varphi$ be a linear ordering of $H$, i.e.
an assignment that satisfies the following property:
if we have $\varphi(v) > \varphi(u)$, 
then there is no $(u,v)$-path in $H$.
We claim that $\varphi$ satisfies $\III - X'$.
If there is a constraint $x = y$ in $\III - X'$, then
$G - Z'$ contains two arcs $(x, y)$ and $(y, x)$,
making $x$ and $y$ strongly connected by black arcs.
Then $x$ and $y$ are identified in $H$ and $\varphi(x) = \varphi(y)$.
If $\III - X'$ contains a constraint $x < y$, then
$x$ and $y$ are not identified since there is no $(y, x)$-path in $G - Z'$.
Moreover, arc $(x, y)$ is present in $G - Z'$ and $H$,
and, since $\varphi$ is a linear ordering of $H$,
we have $\varphi(x) < \varphi(y)$, and the constraint $x < y$ is satisfied.
This concludes the proof.
\end{proof}

Now we are ready to prove that $\mincsp{\m, \p}$ is in \FPT.

\begin{proof}[Proof of Lemma~\ref{lem:mp-fpt}]
Lemma~\ref{lem:implementations} implies that we can replace
a $\p$-constraint with an implementation
using two $\m$ constraints.
This observation, together with Lemma~\ref{lem:impl-reduction}, 
implies that ${\mincsp{\m, \p}}$ is in \FPT if $\mincsp{\m}$ is in \FPT.
We show how to reduce the latter problem to ${\mincsp{<, =}}$.

Let $(\III, k)$ be an instance of $\mincsp{\m}$.
Construct an instance $(\III', k)$ of ${\mincsp{<, =}}$
by introducing two variables $x^-$ and $x^+$ in 
$V(\III')$ for every $x \in V(\III)$.
Add crisp constraints $x^- < x^+$ to $\III'$ for all $x \in V(\III)$, 
and soft constraints $x^+ = y^-$ to $\III'$ for every constraint
$x \{\m\} y$ in $\III$.
This completes the reduction and 
it can clearly be performed in polynomial time.

To prove correctness, we show that $\III$ and $\III'$ have the same cost.
To show that $\cost(\III) \leq \cost(\III')$, let $\varphi$
be an arbitrary assignment of non-empty 
intervals to the variables in $V(\III)$.
We construct an assignment $\varphi'$ to $\III'$ with the same cost.
In fact, one can view $\varphi$ as an assignment of rationals to the endpoints,
i.e. $\varphi' : V(\III') \to \rationals$ such that $\varphi(x) = [\varphi'(x^-), \varphi'(x^+)]$.
Clearly, we have $\varphi(v^-) < \varphi(v^+)$ for all $v \in V(\III)$
since the intervals are non-empty.
Furthermore, a constraint $x \{\m\} y$ holds if and only if 
$\varphi(x^+) = \varphi(y^-)$.
Hence, $\varphi$ and $\varphi'$ violate the same number of constraints in $\III$ and $\III'$, respectively.

To show that $\cost(\III') \leq \cost(\III)$, 
assume $\psi'$ is an assignment 
of rational values to $V(\III')$
that satisfies all crisp constraints.
Define an interval assignment $\psi$ as
$\psi(v) = [\psi(v^-), \psi(v^+)]$ for all $v \in V(\III)$.
The intervals are non-empty because $\psi(v^-) < \psi(v^+)$
is enforced by the crisp constraints.
Moreover, $\psi'$ satisfies $x^+ = y^-$ if and only if
$\psi$ satisfies $x \{\m\} y$.
Hence, $\psi$ and $\psi'$ violate the same number of constraints in $\III$ and $\III'$, respectively.
We conclude that $\cost(\III) = \cost(\III')$.
\end{proof}

\fi

\iflong
\subsection{Algorithm by Reduction to Bundled Cut}
\label{ssec:lsmfas-fpt}

In this section we prove fixed-parameter tractability of
$\mincsp{\rr_1, \rr_2, \e}$, where $\rr_1 \in \{\s, \f\}$ and
$\rr_2 \in \{\p, \o, \d\}$.
Lemma~\ref{lem:bad-cycles}.4 suggests that all six fragments 
allow uniform treatment.
Indeed, to check whether an instance of $\csp{\rr_1, \rr_2, \e}$
is consistent, one can identify 
all variables constrained to be equal.
This corresponds exactly to contracting all edges in the graph $G_\III$.
Then $\III$ becomes an instance of
$\csp{\rr_1, \rr_2}$, and the criterion of 
Lemma~\ref{lem:bad-cycles}.4 applies.
This observation allows us to formulate $\mincsp{\rr_1, \rr_2, \e}$
as a variant of feedback arc set on mixed graphs.

\begin{definition}
Consider a mixed graph $G$ with arcs of two types -- short and long --
and a walk $W$ in $G$ from $u$ to $v$ that may ignore direction of the arcs.
The walk $W$ in $G$ is \emph{undirected} if it only contains edges.
The walk $W$ is \emph{short} if it contains a short arc but no long arcs.
The walk $W$ is \emph{long} if it contains a long arc.
The walk $W$ is \emph{directed} if it is either short and all short arcs are directed from $u$ to $v$
or if it is long and all long arcs are directed from $u$ to $v$.
If $W$ is short or long, but not directed, it is \emph{mixed}.
\end{definition}

Note that short arcs on a directed long walk may be directed arbitrarily.

\pbDefP{Mixed Feedback Arc Set with Short and Long Arcs (\lsmfas)}
{A mixed graph $G$ with the arc set $A(G)$ partitioned into
\emph{short} $A_s$ and \emph{long} $A_\ell$, and an integer $k$.}
{$k$.}
{Is there a set $Z \subseteq E(G) \cup A(G)$ such that
$G - Z$ contains neither short directed cycles nor long directed cycles?}

The main result of this section is the following theorem.

\begin{theorem}\label{thm:lsmfas-algo}
  \lsmfas can be solved in time $O^*(2^{O(k^8 \log k)})$.
\end{theorem}

Lemma~\ref{lem:sf-pod-fpt} positing that
$\mincsp{\rr_1, \rr_2, \e}$ is in FPT whenever
$\rr_1 \in \{\s, \f\}$ and $\rr_2 \in \{\p, \o, \d\}$
is an immediate consequence of 
Lemma~\ref{lem:bad-cycles}.4 and Theorem~\ref{thm:lsmfas-algo}.

It is informative to understand the structure of mixed graphs
without bad cycles in the sense of \lsmfas.
One way of characterizing \lsmfas (with a pronounced connection to interval
constraints) is as a problem of placing 
vertices of the input graph $G$ on the number line
under certain distance constraints represented by the edges and arcs.
An edge $uv$ requires $u$ and $v$ to be placed at the same point.
An arc $(u, v)$ requires that $u$ is placed before $v$.
If the arc is long, then the distance from $u$ to $v$ should be big
(say, $2|V(G)|$ units of length), while if the arc is short,
the distance should be small (say, at most a unit of length).
We make the intuition precise in the following lemma.

\begin{lemma} \label{lem:lsmfas-struct}
  Let $G$ be a mixed graph with long and short arcs.
  Then $G$ contains no directed long cycles nor directed short cycles if and only if there exists a pair of mappings
  $\sigma_1,\sigma_2 : V(G) \to \mathbb{N}$ such that
  \begin{enumerate}
    \item for every $u,v \in V(G)$, $u$ and $v$ are connected by an undirected walk if and only if $(\sigma_1,\sigma_2)(u) = (\sigma_1,\sigma_2)(v)$;
    \item for every $u,v \in V(G)$, there exists a short $(u,v)$-walk in $G$ if and only if $\sigma_1(u) = \sigma_1(v)$;
    \item for every $u,v \in V(G)$, if there exists a directed short $(u,v)$-walk in $G$, then $\sigma_2(u) < \sigma_2(v)$;
    \item for every $u,v \in V(G)$, if there exists a directed long $(u,v)$-walk in $G$, then $\sigma_1(u) < \sigma_1(v)$.
  \end{enumerate}    
\end{lemma} 
\begin{proof}
{\bf Forward direction.} Consider a graph $G_1$ 
constructed from $G$ as follows: we replace every short arc with an undirect edge, and then contract
all undirect edges. While performing the contraction, we do not delete long arcs, even if they
become loops or multiple arcs. In other words, 
the vertex set of $G_1$ is the set of (weak) connected components of the subgraph of $G$
consisting of undirected edges and short arcs, and every long arc $(u,v)$ of $G$ yields
an arc in $G_1$ from the component containing $u$ to the component containing $v$.
Then, the assumption that $G$ has no directed long cycles implies that $G_1$ is acyclic: any cycle $C_1$ in $G_1$ can be lifted
to a directed long cycle $C$ in $G$ by padding with contracted undirected edges and short arcs.
Fix a topological ordering of $G_1$ and define $\sigma_1(v)$ for $v \in V(G)$ as the index of the image of $v$ in $G_1$ in the said topological
ordering. This gives the second and the fourth property.

Now, consider a graph $G_2$ defined as the graph $G$ with first all long arcs deleted and then all undirected edges contracted. (Similarly as for $G_1$, we do not delete short arcs while doing the
contractions, keeping all arising loops and multiple arcs.)
Note that the connected components of $G_2$ are in one-to-one correspondence with the vertices of $G_1$. 
Furthermore, similarly as in the case of $G_1$, the assumption that $G$ has no directed short cycles implies that $G_2$ is acyclic.
For every connected component $C$ of $G_2$, fix a topological ordering of $C$ and for $v \in C$ define $\sigma_2(v)$ as the index
of the image of $v$ in this topological ordering. This gives the first and the third properties.

\medskip

\noindent
{\bf Backward direction.}
Let $\sigma_1,\sigma_2$ be two mappings as in the statement of the lemma.
Assume first that $G$ contains a directed short cycle $C$. By the second property, $\sigma_1$ is constant on the vertices of $C$. 
By the third property, for every short arc $(u,v)$ on $C$ we have $\sigma_2(u) < \sigma_2(v)$ while
by the first property, for every undirected edge $uv$ on $C$ we have $\sigma_2(u) = \sigma_2(v)$. 
This is a contradiction with the fact that $C$ contains at least one short arc and no long arcs.

Assume now that $G$ contains a directed long cycle $C$. By the second property, for every undirected edge $uv$
or short arc $(u,v)$ on $C$, we have $\sigma_1(u) = \sigma_1(v)$. By the fourth property, for every long arc $(u,v)$ on $C$
we have $\sigma_1(u) < \sigma_1(v)$. This is a contradiction with the fact that $C$ contains at least one long arc.
\end{proof}

Now we introduce the technical machinery used in our algorithm for \lsmfas.
We start by iterative compression, a standard method in parameterized 
algorithms~(see~e.g.~Chapter~4~in~\cite{cygan2015parameterized}).
The idea is to solve the problem starting from an empty graph
and iteratively build up the graph by add edges and arcs, maintaining a solution of size $k$ in the process.
We clearly have a solution while the size of the graph is at most $k$.
Now consider an iteration where we have a graph $G$ together with a solution $Z$, and a new edge/arc $e$ is added.
Observe that $Z \cup \{e\}$ is a solution to $(V(G), E(G) \cup \{e\})$, thus we may assume access
to an approximate solution of size $k + 1$ at each iteration.

The problem resulting from iterative compression reduces to {\sc Bundled Cut} 
with pairwise linked deletable edges,
defined in~\cite{Kim:etal:FA3} and solved using the flow-augmentation technique of~\cite{Kim:etal:stoc2022}.
To describe the problem, let $G$ be a directed graph with two distinguished vertices $s,t \in V(G)$.
Let $\bundles$ be a family of pairwise disjoint subsets of $E(G)$, which we call \emph{bundles}.
The edges of $\bigcup \bundles$ are called \emph{soft} and the edges of $E(G) \setminus \bigcup \bundles$
are \emph{crisp}. A set $Z \subseteq \bigcup \bundles$ \emph{violates} a bundle $B \in \bundles$
if $Z \cap B \neq \emptyset$ and \emph{satisfies} $B$ otherwise. 

\pbDefP{Bundled Cut}
{A directed graph $G$, two distinguished vertices $s,t \in V(G)$, 
  a family $\bundles$ of pairwise disjoint subsets of $E(G)$, and an integer $k$.}
{$k$.}
{Is there an $st$-cut $Z \subseteq \bigcup \bundles$ that violates at most $k$ bundles?}

In general, \probbc is W[1]-hard even if all bundles are of size $2$. However,
there is a special case of \probbc that is tractable.
Let $(G,s,t,\bundles,k)$ be a \probbc instance. 
A soft arc $e$ is \emph{deletable} if there is no crisp copy of $e$ in $G$.
An instance $(G,s,t,\bundles,k)$ has \emph{pairwise linked deletable arcs}
if for every $B \in \bundles$ and every two deletable arcs $e_1,e_2 \in B$, there exists
in $G$ a path from an endpoint of one of the arcs $e_1,e_2$ to an endpoint of the second of those arcs
that does not use any arcs of $\bundles \setminus \{B\}$.

The assumption of pairwise linked deletable arcs makes \probbc tractable.
\begin{theorem}[Theorem~4.1 of~\cite{kim2023flowaugmentation}]\label{thm:probbc}
\probbc instances with pairwise linked deletable arcs can be solved
in time $O^*(2^{O(k^4 d^4 \log(kd))})$ where $d$ is the maximum number of deletable arcs in a single bundle.
\end{theorem}

Armed with Lemma~\ref{lem:lsmfas-struct} and Theorem~\ref{thm:probbc}, 
we are ready to prove the main result.

\begin{proof}[Proof of Theorem~\ref{thm:lsmfas-algo}]
Let $(G, A_s, A_\ell, k)$ be an instance of \lsmfas.
By iterative compression, we can assume that 
we have access to a set 
\iflong $X \subseteq E(G) \cup A(G)$ \fi 
\ifshort $Y \subseteq V(G)$ \fi
of 
size at most $k+1$ that intersects all bad cycles.
\iflong
Subdivide every edge and arc in $X$,
i.e. for every edge $uv \in X$ (arc $(u,v) \in X$, respectively),
replace it by edges $uy_{uv}$ and $y_{uv}v$
(arcs $(u, y_{uv})$ and $(y_{uv}, v)$, respectively),
where $y_{uv}$ is a fresh variable.
Let $Y = \{ y_{uv} : uv \in X \text{ or } (u, v) \in X \}$.
Note that $Y \subseteq V(G)$ intersects all bad cycles in $G$.
Moreover, the resulting graph has a set of $k$ edges
intersecting all bad cycles if and only if
the original graph has one.
\fi
We will refer to the vertices of $Y$ as \emph{terminals}.

Fix a hypothetical solution $Z \subseteq A(G) \cup E(G)$.
Guess which pairs of terminals are
connected by undirected paths in $G - Z$ and identify them.
Define \emph{ordering} $\ordering : Y \to \naturals \times \naturals$ 
that maps terminals to
\begin{equation*}
  \{
  (1, 1), \dots, (1, q_1), \cdots,
  (i, 1), \dots, (i, q_i), \cdots,
  (p, 1), \dots, (p, q_p)
  \}
\end{equation*}
such that the following hold.
For every pair of terminals $y,y' \in Y$,
let $\ordering(y) = (i, j)$ and
$\ordering(y') = (i', j')$.
\begin{itemize}
  \item If $y$ and $y'$ are connected by 
  a short path in $G - Z$, then $i = i'$.
  \item If $y$ reaches $y'$ by 
  a directed short path in $G - Z$, then $j < j'$.
  \item If $y$ reaches $y'$ by
  a directed long path in $G - Z$, then $i < i'$.
\end{itemize}
Note that $\ordering$ exists by Lemma~\ref{lem:lsmfas-struct}.
If an ordering satisfies the conditions above,
we say that it is \emph{compatible with} $G - Z$.
In the sequel we write $(i,j) < (i',j')$
to denote that $(i, j)$ 
lexicographically precedes $(i',j')$,
i.e. either $i = i'$ and $j < j'$ or $i < i'$.

\begin{figure}
  \begin{center}
  \begin{tikzpicture}[
roundnode/.style={circle, draw, minimum size=10mm},
]
  \def \dx {1.5}
  \def \dy {1.5}

  \node[roundnode] (r) at (2  * \dx,  2 * \dy) {\large $r$};
  \node[roundnode] (s) at (4  * \dx,  2 * \dy) {\large $s$};
  \node[roundnode] (t) at (8  * \dx,  2 * \dy) {\large $t$};

  \node[roundnode] (r11) at (2  * \dx,  1 * \dy) {$1,1$};
  \node[roundnode] (s12) at (4  * \dx,  1 * \dy) {$1,2$};
  \node[roundnode] (t21) at (8  * \dx,  1 * \dy) {$2,1$};  

  \draw[->] (r) -- node [left] {$\ordering$} (r11);
  \draw[->] (s) -- node [left] {$\ordering$} (s12);
  \draw[->] (t) -- node [left] {$\ordering$} (t21);

  \node[roundnode]                    (x11) at (0  * \dx,  0 * \dy) {$(1,1)$};
  \node[roundnode, fill=yellow!80]    (x21) at (1  * \dx,  0 * \dy) {$(2,1)$};
  \node[roundnode, fill=cyan!60]      (x22) at (2  * \dx,  0 * \dy) {$(2,2)$};
  \node[roundnode, fill=yellow!80]    (x23) at (3  * \dx,  0 * \dy) {$(2,3)$};
  \node[roundnode, fill=cyan!60]      (x24) at (4  * \dx,  0 * \dy) {$(2,4)$};
  \node[roundnode, fill=yellow!80]    (x25) at (5  * \dx,  0 * \dy) {$(2,5)$};
  \node[roundnode]                    (x31) at (6  * \dx,  0 * \dy) {$(3,1)$};
  \node[roundnode, fill=yellow!80]    (x41) at (7  * \dx,  0 * \dy) {$(4,1)$};
  \node[roundnode, fill=cyan!60]      (x42) at (8  * \dx,  0 * \dy) {$(4,2)$};
  \node[roundnode, fill=yellow!80]    (x43) at (9  * \dx,  0 * \dy) {$(4,3)$};
  \node[roundnode]                    (x51) at (10 * \dx,  0 * \dy) {$(5,1)$};
\end{tikzpicture}      
  \end{center}

  \vspace{1cm}
  
  \begin{center}
  \begin{tikzpicture}[
  roundnode/.style={circle, draw, minimum size=10mm}
]
  \def \dx {6}
  \def \dy {1.5}

  \node[roundnode]                    (x11) at (0 * \dx,  0 * \dy) {$x_{1,1}$};
  \node[roundnode, fill=yellow!80]    (x21) at (0 * \dx,  1 * \dy) {$x_{2,1}$};
  \node[roundnode, fill=cyan!60]      (x22) at (0 * \dx,  2 * \dy) {$x_{2,2}$};
  \node[roundnode, fill=yellow!80]    (x23) at (0 * \dx,  3 * \dy) {$x_{2,3}$};
  \node[roundnode, fill=cyan!60]      (x24) at (0 * \dx,  4 * \dy) {$x_{2,4}$};
  \node[roundnode, fill=yellow!80]    (x25) at (0 * \dx,  5 * \dy) {$x_{2,5}$};
  \node[roundnode]                    (x31) at (0 * \dx,  6 * \dy) {$x_{3,1}$};
  \node[roundnode, fill=yellow!80]    (x41) at (0 * \dx,  7 * \dy) {$x_{4,1}$};
  \node[roundnode, fill=cyan!60]      (x42) at (0 * \dx,  8 * \dy) {$x_{4,2}$};
  \node[roundnode, fill=yellow!80]    (x43) at (0 * \dx,  9 * \dy) {$x_{4,3}$};
  \node[roundnode]                    (x51) at (0 * \dx, 10 * \dy) {$x_{5,1}$};
  \node[roundnode, thick]             (x)   at (0 * \dx, 11 * \dy) {\large $x$};

  \node[roundnode]                    (y11) at (1 * \dx,  0 * \dy) {$y_{1,1}$};
  \node[roundnode, fill=yellow!80]    (y21) at (1 * \dx,  1 * \dy) {$y_{2,1}$};
  \node[roundnode, fill=cyan!60]      (y22) at (1 * \dx,  2 * \dy) {$y_{2,2}$};
  \node[roundnode, fill=yellow!80]    (y23) at (1 * \dx,  3 * \dy) {$y_{2,3}$};
  \node[roundnode, fill=cyan!60]      (y24) at (1 * \dx,  4 * \dy) {$y_{2,4}$};
  \node[roundnode, fill=yellow!80]    (y25) at (1 * \dx,  5 * \dy) {$y_{2,5}$};
  \node[roundnode]                    (y31) at (1 * \dx,  6 * \dy) {$y_{3,1}$};
  \node[roundnode, fill=yellow!80]    (y41) at (1 * \dx,  7 * \dy) {$y_{4,1}$};
  \node[roundnode, fill=cyan!60]      (y42) at (1 * \dx,  8 * \dy) {$y_{4,2}$};
  \node[roundnode, fill=yellow!80]    (y43) at (1 * \dx,  9 * \dy) {$y_{4,3}$};
  \node[roundnode]                    (y51) at (1 * \dx, 10 * \dy) {$y_{5,1}$};
  \node[roundnode, thick]             (y)   at (1 * \dx, 11 * \dy) {\large $y$};

  \draw[-{Latex[length=2mm]}] (x) -- node[above] {\large short} (y);

  \draw[-{Latex[length=2mm]}] (x51.south) -- (x43.north);
  \draw[-{Latex[length=2mm]}] (x43.south) -- (x42.north);
  \draw[-{Latex[length=2mm]}] (x42.south) -- (x41.north);
  \draw[-{Latex[length=2mm]}] (x41.south) -- (x31.north);
  \draw[-{Latex[length=2mm]}] (x31.south) -- (x25.north);
  \draw[-{Latex[length=2mm]}] (x25.south) -- (x24.north);
  \draw[-{Latex[length=2mm]}] (x24.south) -- (x23.north);
  \draw[-{Latex[length=2mm]}] (x23.south) -- (x22.north);
  \draw[-{Latex[length=2mm]}] (x22.south) -- (x21.north);
  \draw[-{Latex[length=2mm]}] (x21.south) -- (x11.north);

  \draw[-{Latex[length=2mm]}] (y51.south) -- (y43.north);
  \draw[-{Latex[length=2mm]}] (y43.south) -- (y42.north);
  \draw[-{Latex[length=2mm]}] (y42.south) -- (y41.north);
  \draw[-{Latex[length=2mm]}] (y41.south) -- (y31.north);
  \draw[-{Latex[length=2mm]}] (y31.south) -- (y25.north);
  \draw[-{Latex[length=2mm]}] (y25.south) -- (y24.north);
  \draw[-{Latex[length=2mm]}] (y24.south) -- (y23.north);
  \draw[-{Latex[length=2mm]}] (y23.south) -- (y22.north);
  \draw[-{Latex[length=2mm]}] (y22.south) -- (y21.north);
  \draw[-{Latex[length=2mm]}] (y21.south) -- (y11.north);

  \draw[-{Latex[length=2mm]}] (x11) -- (y11); 
  \draw[-{Latex[length=2mm]}] (x21) -- (y21); 
  \draw[-{Latex[length=2mm]}] (x22) -- (y23); 
  \draw[-{Latex[length=2mm]}] (x23) -- (y23); 
  \draw[-{Latex[length=2mm]}] (x24) -- (y25); 
  \draw[-{Latex[length=2mm]}] (x25) -- (y25); 
  \draw[-{Latex[length=2mm]}] (x31) -- (y31); 
  \draw[-{Latex[length=2mm]}] (x41) -- (y41); 
  \draw[-{Latex[length=2mm]}] (x42) -- (y43); 
  \draw[-{Latex[length=2mm]}] (x43) -- (y43); 
  \draw[-{Latex[length=2mm]}] (x51) -- (y51); 

  \draw[-{Latex[length=2mm]}, dashed] (y11) edge [out=190, in=-10] (x11); 
  \draw[-{Latex[length=2mm]}, dashed] (y21) edge [out=170, in=10] (x21); 
  \draw[-{Latex[length=2mm]}, dashed] (y22) edge [out=190, in=10] (x21); 
  \draw[-{Latex[length=2mm]}, dashed] (y23) edge [out=210, in=10] (x21); 
  \draw[-{Latex[length=2mm]}, dashed] (y24) edge [out=210, in=10] (x21); 
  \draw[-{Latex[length=2mm]}, dashed] (y25) edge [out=210, in=10] (x21); 
  \draw[-{Latex[length=2mm]}, dashed] (y31) edge [out=190, in=-10] (x31); 
  \draw[-{Latex[length=2mm]}, dashed] (y41) edge [out=190, in=-10] (x41); 
  \draw[-{Latex[length=2mm]}, dashed] (y42) edge [out=180, in=10]   (x41); 
  \draw[-{Latex[length=2mm]}, dashed] (y43) edge [out=210, in=10]   (x41); 
  \draw[-{Latex[length=2mm]}, dashed] (y51) edge [out=190, in=-10] (x51); 

\end{tikzpicture}
  \hspace{1cm}
  \begin{tikzpicture}[
  roundnode/.style={circle, draw, minimum size=10mm}
]
  \def \dx {6}
  \def \dy {1.5}

  \node[roundnode]                    (x11) at (0 * \dx,  0 * \dy) {$x_{1,1}$};
  \node[roundnode, fill=yellow!80]    (x21) at (0 * \dx,  1 * \dy) {$x_{2,1}$};
  \node[roundnode, fill=cyan!60]      (x22) at (0 * \dx,  2 * \dy) {$x_{2,2}$};
  \node[roundnode, fill=yellow!80]    (x23) at (0 * \dx,  3 * \dy) {$x_{2,3}$};
  \node[roundnode, fill=cyan!60]      (x24) at (0 * \dx,  4 * \dy) {$x_{2,4}$};
  \node[roundnode, fill=yellow!80]    (x25) at (0 * \dx,  5 * \dy) {$x_{2,5}$};
  \node[roundnode]                    (x31) at (0 * \dx,  6 * \dy) {$x_{3,1}$};
  \node[roundnode, fill=yellow!80]    (x41) at (0 * \dx,  7 * \dy) {$x_{4,1}$};
  \node[roundnode, fill=cyan!60]      (x42) at (0 * \dx,  8 * \dy) {$x_{4,2}$};
  \node[roundnode, fill=yellow!80]    (x43) at (0 * \dx,  9 * \dy) {$x_{4,3}$};
  \node[roundnode]                    (x51) at (0 * \dx, 10 * \dy) {$x_{5,1}$};
  \node[roundnode, thick]             (x)   at (0 * \dx, 11 * \dy) {\large $x$};

  \node[roundnode]                    (y11) at (1 * \dx,  0 * \dy) {$y_{1,1}$};
  \node[roundnode, fill=yellow!80]    (y21) at (1 * \dx,  1 * \dy) {$y_{2,1}$};
  \node[roundnode, fill=cyan!60]      (y22) at (1 * \dx,  2 * \dy) {$y_{2,2}$};
  \node[roundnode, fill=yellow!80]    (y23) at (1 * \dx,  3 * \dy) {$y_{2,3}$};
  \node[roundnode, fill=cyan!60]      (y24) at (1 * \dx,  4 * \dy) {$y_{2,4}$};
  \node[roundnode, fill=yellow!80]    (y25) at (1 * \dx,  5 * \dy) {$y_{2,5}$};
  \node[roundnode]                    (y31) at (1 * \dx,  6 * \dy) {$y_{3,1}$};
  \node[roundnode, fill=yellow!80]    (y41) at (1 * \dx,  7 * \dy) {$y_{4,1}$};
  \node[roundnode, fill=cyan!60]      (y42) at (1 * \dx,  8 * \dy) {$y_{4,2}$};
  \node[roundnode, fill=yellow!80]    (y43) at (1 * \dx,  9 * \dy) {$y_{4,3}$};
  \node[roundnode]                    (y51) at (1 * \dx, 10 * \dy) {$y_{5,1}$};
  \node[roundnode, thick]             (y)   at (1 * \dx, 11 * \dy) {\large $y$};

  \draw[-{Latex[length=2mm]}] (x) -- node[above] {\large long} (y);

  \draw[-{Latex[length=2mm]}] (x51.south) -- (x43.north);
  \draw[-{Latex[length=2mm]}] (x43.south) -- (x42.north);
  \draw[-{Latex[length=2mm]}] (x42.south) -- (x41.north);
  \draw[-{Latex[length=2mm]}] (x41.south) -- (x31.north);
  \draw[-{Latex[length=2mm]}] (x31.south) -- (x25.north);
  \draw[-{Latex[length=2mm]}] (x25.south) -- (x24.north);
  \draw[-{Latex[length=2mm]}] (x24.south) -- (x23.north);
  \draw[-{Latex[length=2mm]}] (x23.south) -- (x22.north);
  \draw[-{Latex[length=2mm]}] (x22.south) -- (x21.north);
  \draw[-{Latex[length=2mm]}] (x21.south) -- (x11.north);

  \draw[-{Latex[length=2mm]}] (y51.south) -- (y43.north);
  \draw[-{Latex[length=2mm]}] (y43.south) -- (y42.north);
  \draw[-{Latex[length=2mm]}] (y42.south) -- (y41.north);
  \draw[-{Latex[length=2mm]}] (y41.south) -- (y31.north);
  \draw[-{Latex[length=2mm]}] (y31.south) -- (y25.north);
  \draw[-{Latex[length=2mm]}] (y25.south) -- (y24.north);
  \draw[-{Latex[length=2mm]}] (y24.south) -- (y23.north);
  \draw[-{Latex[length=2mm]}] (y23.south) -- (y22.north);
  \draw[-{Latex[length=2mm]}] (y22.south) -- (y21.north);
  \draw[-{Latex[length=2mm]}] (y21.south) -- (y11.north);

  \draw[-{Latex[length=2mm]}] (x11) -- (y11); 
  \draw[-{Latex[length=2mm]}] (x21) -- (y31); 
  \draw[-{Latex[length=2mm]}] (x22) -- (y31); 
  \draw[-{Latex[length=2mm]}] (x23) -- (y31); 
  \draw[-{Latex[length=2mm]}] (x24) -- (y31); 
  \draw[-{Latex[length=2mm]}] (x25) -- (y31); 
  \draw[-{Latex[length=2mm]}] (x31) -- (y31); 
  \draw[-{Latex[length=2mm]}] (x41) -- (y51); 
  \draw[-{Latex[length=2mm]}] (x42) -- (y51); 
  \draw[-{Latex[length=2mm]}] (x43) -- (y51); 
  \draw[-{Latex[length=2mm]}] (x51) -- (y51); 

\end{tikzpicture}
  \caption{An example with three terminals $r,s,t$ where
  $\ordering(r) = (1,1)$,
  $\ordering(s) = (1,2)$,
  $\ordering(t) = (2,1)$.}
  \label{fig:drawing}
  \end{center}
\end{figure}

\begin{table}
  \begin{center}
    \begin{tabular}{| c || c | c | c |} 
      \hline
                                        & $i$ odd            & $i$ even, $j$ odd          & $i$ even, $j$ even  \\ 
      \hline
      \multirow{2}{*}{Edge $uv$}        & $(i,j) \to (i,j)$ &    $(i,j) \to (i,j)$ & $(i,j) \to (i,j)$                           \\
                                        & $(i,j) \leftarrow (i,j)$ &    $(i,j) \leftarrow (i,j)$ &       $(i,j) \leftarrow (i,j)$              \\
      \hline
      \multirow{2}{*}{Short $(u,v)$}    & $(i,j) \to (i,j)$ &  $(i,j) \to (i,j)$        & $(i,j) \to (i,j+1)$ \\
                                       & $(i,1) \leftarrow (i,j)$ &   $(i,1) \leftarrow (i,j)$      & $(i,1) \leftarrow (i,j)$ \\
      \hline
      Long $(u,v)$                      & $(i,1) \to (i,1)$  & $(i,j) \to (i+1,1)$ &    $(i,j) \to (i+1,1)$     \\            
      \hline
    \end{tabular}  
  \end{center}
  \caption{Correspondence between edges, short arcs and long arcs of the \lsmfas instance
  and the arcs introduced in the reduction to \textsc{Bundled Cut} in Theorem~\ref{thm:lsmfas-algo}. }
  \label{tb:reduction}
\end{table}

For the algorithm,
proceed by guessing an ordering $\ordering$, 
creating $2^{O(k \log k)}$ branches in total.
For each $\ordering$, create an instance 
$(H := H(G, \ordering), \bundles := \bundles(G, \ordering), k)$ of \probbc as follows.
Introduce two distinguished vertices $s$ and $t$ in $H$.
For every vertex $v \in V(G)$, create vertices $v^{i}_{1}$ in $H$  
for all odd $i \in [2p + 1]$ and vertices $v^{i}_{j}$ in $H$
for all even $i \in [2p + 1]$ and all $j \in [2q_i + 1]$.
Connect vertices created above by \emph{downward} arcs
$(v^{i}_{j}, v^{i'}_{j'})$ for all $(i, j) > (i', j')$.
For every terminal $y$, let $\sigma(y) = (i, j)$, and
add arcs $(s, y^{2i}_{2j})$ and $(y^{2i}_{2j+1}, t)$ in $H$.
Using the rules below, create a bundle $B_e$ in $\bundles$
for every $e \in E(G) \cup A(G)$,
add the newly created arcs to $H$.
\begin{itemize}
  \item For an edge $e = uv$, let $B_e$ consist of
  arcs $(u^{i}_{j}, v^{i}_{j})$ and $(v^{i}_{j}, u^{i}_{j})$ for all $(i, j)$.
  \item For short arcs $e = (u, v)$, let $B_e$ consist of
  arcs $(u^{i}_{j}, v^{i}_{j})$ for all $i, j$ such that $i$ or $j$ is odd,
  arcs $(u^{i}_{j}, v^{i}_{j+1})$ for all even $i, j$, and
  arcs $(v^{i}_{j}, u^{i}_{1})$ for all $i,j$.  
  \item For long arcs $e = (u, v)$, let $B_e$ consist of
  arcs $(u^{i}_{1}, v^{i}_{1})$ for all odd $i$, and arcs
  arcs $(u^{i}_{j}, v^{i+1}_{1})$ for all even $i$ and all $j$.
\end{itemize}
This completes the construction.
Bundle construction rules are summarized in Table~\ref{tb:reduction},
and Figure~\ref{fig:drawing} contains an illustrated example.

Observe that the downward arcs ensure 
that $(H, \bundles, k)$ has the pairwise-linked deletable arc property.
Moreover, the bundle size is $O(k)$, so we can
solve $(H, \bundles, k)$ in $O^*(2^{O(k^8 \log{k})})$ time.
It remains to show that the reduction is correct.

\smallskip

\noindent
{\bf Forward direction.}
Suppose $Z$ is a solution to $(G, A_s, A_\ell, k)$, and 
$\ordering$ is an ordering of terminals $Y$ compatible with $G - Z$. 
Consider the instance $(H := H(G, \ordering), \bundles := \bundles(G, \ordering), k)$. 
Let $Z' = \bigcup_{e \in Z} B_e$.
Clearly, $Z'$ intersects at most $|Z| \leq k$ bundles.
We claim that $Z'$ is an $(s, t)$-cut in $H$.

Suppose for the sake of contradiction that
there is an $(s, t)$-path $P$ in $H - Z'$.
By construction, the out-neighbour of $s$ on $P$ is $x^{2i}_{2j}$, and 
the in-neighbour of $t$ is $y^{2i'}_{2j'+1}$, where $x$ and $y$ are terminals
such that $\ordering(x) = (i,j)$ and $\ordering(y) = (i', j')$. 
Let $W_P$ be the $(x, y)$-walk in $G - Z$ formed by
the arcs $\{e : P \cap B_e \neq \emptyset\}$
traversed in the same order as $P$ does.
We consider two cases, and in each case deduce
that either $\ordering$ is incompatible with $G - Z$,
or that $G - Z$ contains a short or a directed long cycle.

Suppose $i = i'$.
If $W_P$ contains a long arc,
then, by construction of $H$,
all long arcs are in the same direction.
Since $G - Z$ has no directed long cycles,
$x$ reaches $y$ by a directed long path.
However, we have $i \nless i'$, 
which implies that $\ordering$ is incompatible with $G - Z$.
Now assume $W_P$ has no long arcs.
Then $W_P$ is a directed short walk:
indeed, if $W_P$ uses a short arc against its direction,
then $P$ contains a vertex with lower index $1$,
and the lower index of all subsequent vertices is also $1$
by construction of $H$;
however, $P$ reaches $y^{2i}_{2j'+1}$ and $2j' + 1 > 1$.
If $x = y$, then $W_P$ is a closed directed short walk,
and it contains a directed short cycle.
If $x \neq y$, then $x$ reaches $y$ in $G - Z$
by a directed short walk, hence $j < j'$
by the definition of $\ordering$.
However, indices of the vertices on the path $P$
cannot exceed $(2i, 2j+1)$ by construction of $H$,
hence $(2i', 2j'+1) \leq (2i, 2j+1)$ and $j' \leq j$.

Suppose $i \neq i'$.
By construction of $H$,
since $x^{2i}_{2j}$ appears on $P$,
the index of every other variable on $P$
is at most $(2i+1, 1)$.
Hence, $(2i + 1, 1) \geq (2i', 2j'+1)$ and $i \geq i'$.
Together with $i \neq i'$, this implies $i > i'$.
Clearly, $x \neq y$ in this case,
and the walk $W_P$ contains a long arc.
By construction of $H$, all long arcs on $W_P$
are in the same direction, i.e. from $x$ to $y$.
However, this implies that $\ordering$
is incompatible with $G - Z$ since
there is a directed long $(x,y)$-path in this graph,
$\ordering(x) = (i, j)$, $\ordering(y) = (i', j')$,
but $i > i'$.

\smallskip

\noindent
{\bf Backward direction.}
Now suppose $W$ is a solution to 
$(H := H(G, \ordering), \bundles := \bundles(G, \ordering), k)$
for some ordering $\ordering$.
Define $W' \subseteq E(G) \cup A(G)$
as the set of all edges and arcs $e$ such that 
$B_e \cap W \neq \emptyset$.
We claim that $W'$ is a solution to $(G, A_s, A_\ell, k)$.
Clearly, $|W'| \leq |W| \leq k$ since $W$ intersects
at most $k$ bundles in $\bundles$.
Now suppose $G - W'$ contains a bad cycle $C$.
Observe that $C$ contains at least one terminal.
We let $x$ and $y$ be two (not necessarily distinct)
terminals on $C$.
We have two cases to consider, depending on 
whether $C$ is long or short.
In each case, we deduce that $H - W$ contains an $(s,t)$-path.

In case $C$ is directed short,
assume by symmetry that $C$ consists of
a directed short $(x, y)$-path and 
a $(y,x)$-path that is either undirected or directed short.
Let $\ordering(x) = (i, j)$ and $\ordering(y) = (i', j')$.
Since $x$ reaches $y$ by a directed short path,
$H - W$ contains a path
$$s \to x_{2i,2j} \to \dots \to y_{2i,2j+1}.$$
If $y_{2i,2j+1}$ reaches $t$ in $H - W$, then we are done.
If not, then $x \neq y$.
Since $y$ reaches $x$ by either an undirected
or a directed short path in $G -W'$,
we have a path 
$$s \to \dots \to y_{2i,2j+1} \to \dots \to x_{2i,2j+1} \to t$$ 
in $H - W$.

In case $C$ is directed long,
assume by symmetry that $C$ consists of
a directed long $(x, y)$-path
and a $(y, x)$-path that is either short or directed long.
Let $\ordering(x) = (i, j)$ and $\ordering(y) = (i', j')$.
Since $G - W'$ contains a directed long $(x, y)$-path,
$H - W$ contains a path
$$s \to x_{2i,2j} \to \dots \to y_{2i+1,1}.$$
If $y_{2i+1, 1}$ reaches $t$, then we are done.
Otherwise, $x \neq y$.
Since $y$ reaches $x$ in $G - W'$ without using long arcs against their direction,
$H - W$ contains a path
$$s \to \dots \to y_{2i+1,1} \to \dots \to x_{a,b} \xrightarrow{\text{down}} x_{2i,2j+1} \to t$$
where $(a, b) \geq (2i+1, 1) > (2i, 2j+1)$.
\end{proof} 
\fi

\ifshort

Lemma~\ref{lem:bad-cycles}.4 suggests that all six remaining fragments 
allow uniform treatment.
Indeed, to check whether an instance of $\csp{\rr_1, \rr_2, \e}$
is consistent, one can identify 
all variables constrained to be equal.
This corresponds exactly to contracting all edges in the graph $G_\III$.
Then $\III$ becomes an instance of
$\csp{\rr_1, \rr_2}$, and the criterion of 
Lemma~\ref{lem:bad-cycles}.4 applies.
This observation allows us to formulate $\mincsp{\rr_1, \rr_2, \e}$
as a variant of feedback arc set on mixed graphs.

\begin{definition}
Consider a mixed graph $G$ with arcs of two types -- short and long --
and a walk $W$ in $G$ from $u$ to $v$ that may ignore direction of the arcs.
The walk $W$ is \emph{undirected} if it only contains edges, it is
\emph{short} if it contains a short arc but no long arcs, and it is
\emph{long} if it contains a long arc.
The walk $W$ is \emph{directed} if it is either short and all short arcs are directed from $u$ to $v$
or if it is long and all long arcs are directed from $u$ to $v$.
If $W$ is short or long, but not directed, it is \emph{mixed}.
\end{definition}

Note that short arcs on a directed long walk may be directed arbitrarily.

\pbDefP{Mixed Feedback Arc Set with Short and Long Arcs (\lsmfas)}
{A mixed graph $G$ with the arc set $A(G)$ partitioned into
\emph{short} $A_s$ and \emph{long} $A_\ell$, and an integer $k$.}
{$k$.}
{Is there a set $Z \subseteq E(G) \cup A(G)$ such that
$G - Z$ contains neither short directed cycles nor long directed cycles?}

The main result of this section is the following theorem.

\begin{theorem}\label{thm:lsmfas-algo}
  \lsmfas can be solved in time $O^*(2^{O(k^8 \log k)})$.
\end{theorem}

\iflong
Lemma~\ref{lem:sf-pod-fpt} positing that
$\mincsp{\rr_1, \rr_2, \e}$ is in FPT whenever
$\rr_1 \in \{\s, \f\}$ and $\rr_2 \in \{\p, \o, \d\}$
is an immediate consequence of 
Lemma~\ref{lem:bad-cycles}.4 and Theorem~\ref{thm:lsmfas-algo}.
\fi
\ifshort
We see that 
Lemma~\ref{lem:bad-cycles}.4 and Theorem~\ref{thm:lsmfas-algo} imply
$\mincsp{\rr_1, \rr_2, \e}$ being in FPT whenever
$\rr_1 \in \{\s, \f\}$ and $\rr_2 \in \{\p, \o, \d\}$.
\fi
It is informative to understand the structure of mixed graphs
without bad cycles in the sense of \lsmfas.
\iflong
One way of characterizing \lsmfas (with a pronounced connection to interval
constraints) is as a problem of placing 
vertices of the input graph $G$ on the number line
under certain distance constraints represented by the edges and arcs.
An edge $uv$ requires $u$ and $v$ to be placed at the same point.
An arc $(u, v)$ requires that $u$ is placed before $v$.
If the arc is long, then the distance from $u$ to $v$ should be big
(say, $2|V(G)|$ units of length), while if the arc is short,
the distance should be small (say, at most a unit of length).
\fi
The proof of the following lemma is fairly easy with 
the placing-vertices-on-the-number-line intuition from the introduction.

\begin{lemma} \label{lem:lsmfas-struct}
  Let $G$ be a mixed graph with long and short arcs.
  Then $G$ contains no directed long cycles nor directed short cycles if and only if there exists a pair of mappings
  $\sigma_1,\sigma_2 : V(G) \to \mathbb{N}$ such that
  \begin{enumerate}
    \item for every $u,v \in V(G)$, $u$ and $v$ are connected by an undirected walk if and only if $(\sigma_1,\sigma_2)(u) = (\sigma_1,\sigma_2)(v)$;
    \item for every $u,v \in V(G)$, there exists a short $(u,v)$-walk in $G$ if and only if $\sigma_1(u) = \sigma_1(v)$;
    \item for every $u,v \in V(G)$, if there exists a directed short $(u,v)$-walk in $G$, then $\sigma_2(u) < \sigma_2(v)$;
    \item for every $u,v \in V(G)$, if there exists a directed long $(u,v)$-walk in $G$, then $\sigma_1(u) < \sigma_1(v)$.
  \end{enumerate}    
\end{lemma}

Now we introduce the technical machinery used in our algorithm for \lsmfas.
We start by iterative compression, a standard method in parameterized 
algorithms~(see~e.g.~Chapter~4~in~\cite{cygan2015parameterized}).
\iflong
The idea is to solve the problem starting from an empty graph
and iteratively build up the graph by add edges and arcs, maintaining a solution of size $k$ in the process.
We clearly have a solution while the size of the graph is at most $k$.
Now consider an iteration where we have a graph $G$ together with a solution $Z$, and a new edge/arc $e$ is added.
Observe that $Z \cup \{e\}$ is a solution to $(V(G), E(G) \cup \{e\})$, thus we may assume access
to an approximate solution of size $k + 1$ at each iteration.

\fi
\ifshort
It allows us to assume access to a set of $k + 1$ edges and arcs
intersecting every bad cycle.
\fi
The problem resulting from iterative compression reduces to {\sc Bundled Cut} 
with pairwise linked deletable edges,
defined in~\cite{Kim:etal:FA3} and solved using the flow-augmentation technique of~\cite{Kim:etal:stoc2022}.
To describe \textsc{Bundled Cut}, let $G$ be a directed graph with two distinguished vertices $s,t \in V(G)$.
Let $\bundles$ be a family of pairwise disjoint subsets of $E(G)$, which we call \emph{bundles}.
The edges of $\bigcup \bundles$ are called \emph{soft} and the edges of $E(G) \setminus \bigcup \bundles$
are \emph{crisp}. A set $Z \subseteq \bigcup \bundles$ \emph{violates} a bundle $B \in \bundles$
if $Z \cap B \neq \emptyset$ and \emph{satisfies} $B$ otherwise. 

\pbDefP{Bundled Cut}
{A directed graph $G$, two distinguished vertices $s,t \in V(G)$, 
  a family $\bundles$ of pairwise disjoint subsets of $E(G)$, and an integer $k$.}
{$k$.}
{Is there an $st$-cut $Z \subseteq \bigcup \bundles$ that violates at most $k$ bundles?}

In general, \probbc is W[1]-hard even if all bundles are of size $2$. However,
there is a special case of \probbc that is tractable.
Let $(G,s,t,\bundles,k)$ be a \probbc instance. 
A soft arc $e$ is \emph{deletable} if there is no crisp copy of $e$ in $G$.
An instance $(G,s,t,\bundles,k)$ has \emph{pairwise linked deletable arcs}
if for every $B \in \bundles$ and every two deletable arcs $e_1,e_2 \in B$, there exists
in $G$ a path from an endpoint of one of the arcs $e_1,e_2$ to an endpoint of the second of those arcs
that does not use any arcs of $\bundles \setminus \{B\}$.
The assumption of pairwise linked deletable arcs makes \probbc tractable.
\begin{theorem}[Theorem~4.1 of~\cite{kim2023flowaugmentation}]\label{thm:probbc}
\probbc instances with pairwise linked deletable arcs can be solved
in time $O^*(2^{O(k^4 d^4 \log(kd))})$ where $d$ is the maximum number of deletable arcs in a single bundle.
\end{theorem}

Armed with Lemma~\ref{lem:lsmfas-struct} and Theorem~\ref{thm:probbc}, 
we are ready to prove the main result.

\begin{proof}[Proof of Theorem~\ref{thm:lsmfas-algo}]
Let $(G, A_s, A_\ell, k)$ be an instance of \lsmfas.
By iterative compression, we can assume that 
we have access to a set 
\iflong $X \subseteq E(G) \cup A(G)$ \fi 
\ifshort $Y \subseteq V(G)$ \fi
of size at most $k+1$ that intersects all bad cycles.
\iflong
Subdivide every edge and arc in $X$,
i.e. for every edge $uv \in X$ (arc $(u,v) \in X$, respectively),
replace it by edges $uy_{uv}$ and $y_{uv}v$
(arcs $(u, y_{uv})$ and $(y_{uv}, v)$, respectively),
where $y_{uv}$ is a fresh variable.
Let $Y = \{ y_{uv} : uv \in X \text{ or } (u, v) \in X \}$.
Note that $Y \subseteq V(G)$ intersects all bad cycles in $G$.
Moreover, the resulting graph has a set of $k$ edges
intersecting all bad cycles if and only if
the original graph has one.
\fi
We refer to the vertices of $Y$ as \emph{terminals}.

Fix a hypothetical solution $Z \subseteq A(G) \cup E(G)$.
Guess which pairs of terminals are
connected by undirected paths in $G - Z$ and identify them.
Define \emph{ordering} $\ordering : Y \to \naturals \times \naturals$ 
that maps terminals to
\begin{equation*}
  \{
  (1, 1), \dots, (1, q_1), \cdots,
  (i, 1), \dots, (i, q_i), \cdots,
  (p, 1), \dots, (p, q_p)
  \}
\end{equation*}
\iflong
such that the following hold.
For every pair of terminals $y,y' \in Y$,
let $\ordering(y) = (i, j)$ and
$\ordering(y') = (i', j')$.
\begin{itemize}
  \item If $y$ and $y'$ are connected by 
  a short path in $G - Z$, then $i = i'$.
  \item If $y$ reaches $y'$ by 
  a directed short path in $G - Z$, then $j < j'$.
  \item If $y$ reaches $y'$ by
  a directed long path in $G - Z$, then $i < i'$.
\end{itemize}
\fi
\ifshort
such that the following hold.
For every pair of terminals $y,y' \in Y$,
let $\ordering(y) = (i, j)$ and
$\ordering(y') = (i', j')$ where (1) $i=i'$ if
$y$ and $y'$ are connected by 
  a short path in $G - Z$, (2) $j < j'$ if
  $y$ reaches $y'$ by 
  a directed short path in $G - Z$, and (3) $i < i'$
  if $y$ reaches $y'$ by
  a directed long path in $G - Z$.
\fi
Note that $\ordering$ exists by Lemma~\ref{lem:lsmfas-struct}.
If an ordering satisfies the conditions above,
we say that it is \emph{compatible with} $G - Z$.
In the sequel we write $(i,j) < (i',j')$
to denote that $(i, j)$ 
lexicographically precedes $(i',j')$,
i.e. either $i = i'$ and $j < j'$ or $i < i'$.
\iflong
\begin{figure}
  \begin{center}
  \includegraphics[scale=0.6, trim = {3cm, 2cm, 2cm, 3cm}]{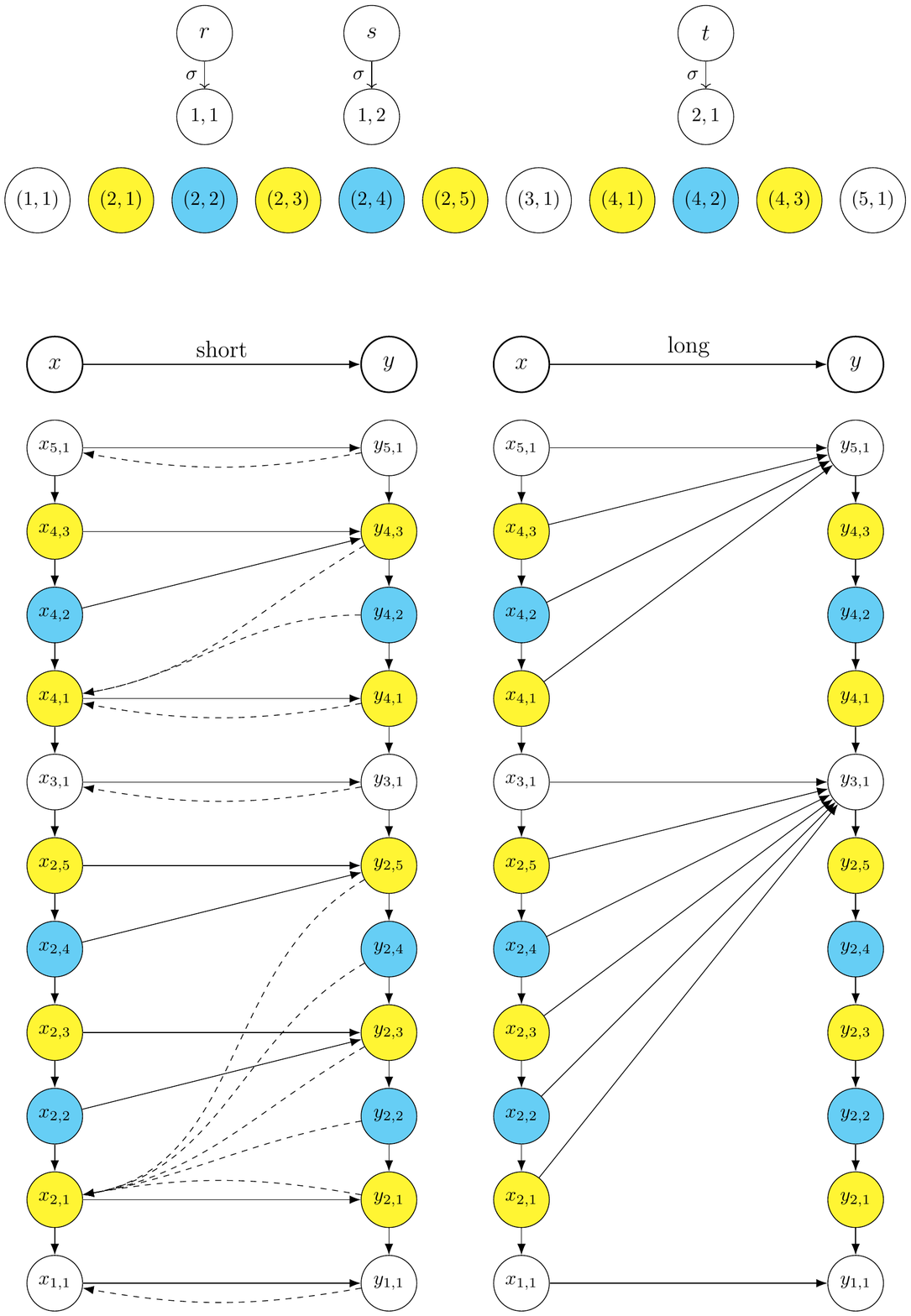}
  \caption{An example with three terminals $r,s,t$ where
  $\ordering(r) = (1,1)$,
  $\ordering(s) = (1,2)$,
  $\ordering(t) = (2,1)$.}
  \label{fig:drawing}
  \end{center}
\end{figure}
\fi

\begin{table}
  \begin{center}
    \begin{tabular}{| c || c | c | c |} 
      \hline
                                        & $i$ odd            & $i$ even, $j$ odd          & $i$ even, $j$ even  \\ 
      \hline
      \multirow{2}{*}{Edge $uv$}        & $(i,j) \to (i,j)$ &    $(i,j) \to (i,j)$ & $(i,j) \to (i,j)$                           \\
                                        & $(i,j) \leftarrow (i,j)$ &    $(i,j) \leftarrow (i,j)$ &       $(i,j) \leftarrow (i,j)$              \\
      \hline
      \multirow{2}{*}{Short $(u,v)$}    & $(i,j) \to (i,j)$ &  $(i,j) \to (i,j)$        & $(i,j) \to (i,j+1)$ \\
                                       & $(i,1) \leftarrow (i,j)$ &   $(i,1) \leftarrow (i,j)$      & $(i,1) \leftarrow (i,j)$ \\
      \hline
      Long $(u,v)$                      & $(i,1) \to (i,1)$  & $(i,j) \to (i+1,1)$ &    $(i,j) \to (i+1,1)$     \\            
      \hline
    \end{tabular}  
  \end{center}
  \caption{Correspondence between edges, short arcs and long arcs of the \lsmfas instance
  and the arcs introduced in the reduction to \textsc{Bundled Cut} in Theorem~\ref{thm:lsmfas-algo}. }
  \label{tb:reduction}
\end{table}

For the algorithm,
proceed by guessing an ordering $\ordering$, 
creating $2^{O(k \log k)}$ branches in total.
For each $\ordering$, create an instance 
$(H := H(G, \ordering), \bundles := \bundles(G, \ordering), k)$ of \probbc as follows.
Introduce two distinguished vertices $s$ and $t$ in $H$.
For every vertex $v \in V(G)$, create vertices $v^{i}_{1}$ in $H$  
for all odd $i \in [2p + 1]$ and vertices $v^{i}_{j}$ in $H$
for all even $i \in [2p + 1]$ and all $j \in [2q_i + 1]$.
Connect vertices created above by \emph{downward} arcs
$(v^{i}_{j}, v^{i'}_{j'})$ for all $(i, j) > (i', j')$.
For every terminal $y$, let $\sigma(y) = (i, j)$, and
add arcs $(s, y^{2i}_{2j})$ and $(y^{2i}_{2j+1}, t)$ in $H$.
Using the rules below, create a bundle $B_e$ in $\bundles$
for every $e \in E(G) \cup A(G)$,
add the newly created arcs to $H$.
\begin{itemize}
  \item For an edge $e = uv$, let $B_e$ consist of
  arcs $(u^{i}_{j}, v^{i}_{j})$ and $(v^{i}_{j}, u^{i}_{j})$ for all $(i, j)$.
  \item For short arcs $e = (u, v)$, let $B_e$ consist of
  arcs $(u^{i}_{j}, v^{i}_{j})$ for all $i, j$ such that $i$ or $j$ is odd,
  arcs $(u^{i}_{j}, v^{i}_{j+1})$ for all even $i, j$, and
  arcs $(v^{i}_{j}, u^{i}_{1})$ for all $i,j$.  
  \item For long arcs $e = (u, v)$, let $B_e$ consist of
  arcs $(u^{i}_{1}, v^{i}_{1})$ for all odd $i$, and arcs
  arcs $(u^{i}_{j}, v^{i+1}_{1})$ for all even $i$ and all $j$.
\end{itemize}
This completes the construction.
\iflong
Bundle construction rules are summarized in Table~\ref{tb:reduction},
and Figure~\ref{fig:drawing} contains an illustrated example.
\fi
\ifshort
Bundle construction rules are summarized in Table~\ref{tb:reduction}.
The full version of the paper features a full-page illustrated example.
\fi
Observe that the downward arcs ensure 
that $(H, \bundles, k)$ has the pairwise-linked deletable arc property.
Moreover, the bundle size is $O(k)$, so we can
solve $(H, \bundles, k)$ in $O^*(2^{O(k^8 \log{k})})$ time.

A correctness proof can be found in the full version of the paper.
Here, we only give some intuition. 
Fix a guessed ordering $\sigma$. 
For any candidate solution $W$ in $(H := H(G,\sigma), \bundles = \bundles(G, \sigma), k)$,
the existence of downward arcs imply that for every $v \in V(G)$ there is a threshold
$(i_v, j_v)$ such that $v_i^j$ is reachable from $s$ in $H-W$ if and only if
$(i,j) \leq (i_v,j_v)$. This threshold is meant to indicate that $v$ should be placed
on the line somewhere around the terminal $x$ for which $\sigma(x) = (\lfloor i_v/2 \rfloor, \lfloor j_v/2 \rfloor)$. 
A short walk from $u$ to $v$ in $G$ projects, for every even $i$ and even $j$, to a walk
from $u_i^j$ to $v_i^{j+1}$. 
A long walk from $u$ to $v$ in $G$ projects, for every odd $i$, to a walk from $u_i^1$ to $v_i^1$.
Together with the fact that terminals intersect all forbidden cycles in $G$, this
gives a correspondence between forbidden cycles in $G$ and $st$-paths in $H$.
\end{proof} 
\fi

\section{W[1]-hard Problems} \label{sec:w-hard}

\ifshort
Here, we show Lemmas~\ref{lem:me-ms-mf-hard} and \ref{lem:do-po-dp-hard}. As the first and most challenging step, we show \W{1}-hardness for variants of paired and simultaneous graph cut problems from which we then reduce to the hard variants of \textsc{MinCSP}$(\Gamma)$.
\fi
\iflong
Here, we show Lemmas~\ref{lem:me-ms-mf-hard} and \ref{lem:do-po-dp-hard}. Our \W{1}-hardness results for various \textsc{MinCSP}$(\Gamma)$ problems are
based on reductions from graph problems.
These graph problems fall into two categories: {\em paired} and
{\em simultaneous} cut problems.
In paired cut problems, the input consists of two graphs
together with a pairing of their edges, and
the goal is to compute cuts in both graphs 
using at most $k$ pairs.
While the problems on individual graphs are solvable in polynomial time,
the pairing requirement leads to \W{1}-hardness.
In simultaneous cut problems, the two input graphs
share some arcs/edges that 
can be deleted simultaneously at unit cost, and
the goal is to compute a set of $k$ arcs/edges
that is a cut in both graphs.
As for the paired problems, computing cuts for individual graphs
is feasible in polynomial time.
However, the possibility of choosing
common arcs/edges while deleting them at a unit cost
correlates the choices in one graph and the other,
and leads to \W{1}-hardness.
The hardness results for paired and simultaneous cut problems
can be found in Sections~\ref{ssec:pair-graph} and \ref{ssec:sim-graph}, respectively.
We use paired cut problems for proving hardness
of $\mincsp{\m, \e}$ and $\mincsp{\m,\s}$
in Section~\ref{ssec:me-ms-hard}, and
we use simultaneous cut problems for proving
hardness of
$\mincsp{\d, \p}$, $\mincsp{\d, \o}$, and $\mincsp{\p, \o}$,
in Section~\ref{ssec:do-dp-op-hard}.
\fi
Our reductions will make use of the following well-known problem, whose \W{1}-hardness follows by a simple reduction from \textsc{Multicolored Clique}~(see~e.g.~Exercise~13.3~in~\cite{cygan2015parameterized}).

\pbDefP{Multicolored Biclique (\mbicl)}
{An undirected graph $G$ with a partition $V(G) =
  A_1 \uplus \ldots \uplus A_k \uplus B_1 \uplus \ldots \uplus B_k$,
  where $|A_i|=|B_i|=n$ for each $i \in [k]$ and both $\uplus_{i \in
    [k]}A_i$ and $\uplus_{i \in [k]}B_i$ form independent sets in $G$.}
{$k$.}
{Does $G$ contain $K_{k,k}$ as a subgraph, a.k.a. a \emph{multicolored biclique}?}

\iflong
\begin{proposition} \label{prop:biclique-hard}
  \mbicl\ is \textnormal{\W{1}}-hard.
\end{proposition}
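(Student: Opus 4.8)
The plan is to give a parameterized reduction from \textsc{Multicolored Clique}, which is well-known to be \W{1}-hard (see e.g.~\cite{cygan2015parameterized}). Recall that an instance of \textsc{Multicolored Clique} consists of a graph~$H$ together with a partition $V(H) = V_1 \uplus \cdots \uplus V_k$, and one asks whether $H$ contains a clique with exactly one vertex in each part~$V_i$. By padding each part with fresh isolated vertices we may assume $|V_1| = \cdots = |V_k| = n$; this is harmless, since an isolated vertex cannot lie in a multicolored clique once $k \geq 2$, and the cases $k \leq 1$ are trivial.

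First I would construct the target graph~$G$ of the \mbicl\ instance. For each colour $i \in [k]$ introduce vertex sets $A_i = \{a^i_v : v \in V_i\}$ and $B_i = \{b^i_v : v \in V_i\}$, so that $|A_i| = |B_i| = n$. All edges of~$G$ will go between the $A$-side and the $B$-side, which immediately guarantees that $\bigcup_{i} A_i$ and $\bigcup_{i} B_i$ are independent sets, as required. The edge set of~$G$ consists of exactly the following: a single \emph{consistency} edge $a^i_v b^i_v$ for every $i \in [k]$ and every $v \in V_i$ (and \emph{no} other edge between $A_i$ and $B_i$); and, for every pair $i \neq j$, an edge $a^i_v b^j_w$ whenever $vw \in E(H)$. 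This construction is clearly polynomial-time computable and keeps the parameter equal to~$k$.

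The key step is the equivalence argument. Any copy of $K_{k,k}$ in~$G$ selects one vertex $a^i_{v_i} \in A_i$ and one vertex $b^i_{w_i} \in B_i$ for each~$i$, with all $k^2$ cross-edges present in~$G$. The edge between $a^i_{v_i}$ and $b^i_{w_i}$ being present forces $v_i = w_i$, since the only edge joining $A_i$ to $B_i$ is the consistency edge; hence the biclique is described by a single choice $v_i \in V_i$ per colour. Moreover, for $i \neq j$ the presence of the edge $a^i_{v_i} b^j_{v_j}$ forces $v_i v_j \in E(H)$, so $\{v_1, \dots, v_k\}$ is a multicolored clique in~$H$. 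Conversely, a multicolored clique $\{v_1, \dots, v_k\}$ of~$H$ yields the multicolored biclique $\{a^i_{v_i} : i \in [k]\} \cup \{b^i_{v_i} : i \in [k]\}$ in~$G$: the consistency edges supply the $k$ cross-edges with $i = j$, and the clique edges supply the $k^2-k$ cross-edges with $i \neq j$.

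I do not expect a genuine obstacle here, as the reduction is essentially folklore. The only two points requiring a little care are the normalisation to equal part sizes (handled by padding with isolated vertices, as above) and the insistence that no edge other than the consistency edge is placed between $A_i$ and $B_i$ — otherwise a would-be biclique could satisfy the $i = j$ constraint without committing to $v_i = w_i$, which would break soundness of the reduction.
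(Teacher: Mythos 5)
Your reduction is correct and is exactly the folklore reduction from \textsc{Multicolored Clique} that the paper invokes by citing Exercise~13.3 of~\cite{cygan2015parameterized}; the paper gives no separate proof of its own. The only point worth flagging is that your backward direction silently uses the intended reading of \mbicl\ (consistent with how the paper uses it in Lemma~\ref{lem:pc-hard-aux} and Theorem~\ref{thm:sim-sep-hard}) that the $K_{k,k}$ must take exactly one vertex from each $A_i$ and each $B_j$; under the literal phrasing ``does $G$ contain $K_{k,k}$ as a subgraph'' you would additionally need to exclude bicliques concentrated in few colour classes, which your construction does not do.
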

\fi

\subsection{Paired Problems} \label{ssec:pair-graph}

\newcommand{\pc}{\textsc{Paired Cut}}
\newcommand{\pcfas}{\textsc{PCFAS}}

We consider the problems \pc\ and \textsc{Paired Cut Feedback Arc Set} (\pcfas) in the sequel.

\pbDefP{Paired Cut}
{Undirected graphs $G_1$ and $G_2$, vertices $s_i, t_i \in V(G_i)$,
a set of disjoint edge pairs 
$\bundles \subseteq E(G_1) \times E(G_2)$,
and an integer $k$.}
{$k$.}
{Is there a subset $X \subseteq \bundles$ such that
$|X| \leq k$ and $X_i=\{ e_i \; | \; \{e_1,e_2\} \in X \}$ 
is an $\{s_i,t_i\}$-cut in $G_i$ for both $i \in \{1,2\}$?} 

\ifshort
The \pcfas\ problem is similar, but $G_2$ is directed and $X_2$
is required to be such that $G_2-X_2$ is acyclic (instead of being an $\{s_2,t_2\}$-cut).
\fi
\iflong
\pbDefP{Paired Cut FAS (\pcfas)}
{An undirected graph $G_1$ and a digraph $G_2$, vertices $s, t \in V(G_1)$,
a set of disjoint edge pairs 
$\bundles \subseteq E(G_1) \times E(G_2)$,
and an integer $k$.}
{$k$.}
{Is there a subset $X \subseteq \bundles$ such that
$|X| \leq k$, $X_1$
is an $\{s,t\}$-cut in $G_1$ and $G_2-X_2$ is acyclic,
where $X_i=\{ e_i \; | \; \{e_1,e_2\} \in X \}$ for both $i \in \{1,2\}$?}
\fi
We show \W{1}-hardness of both problems.
Since both reductions are from \mbicl{} and quite similar, we start to
decribe the common part of both reductions.
Let $I=(G,A_1,\dotsc,A_k,B_1,\dotsc,B_k,k)$ be an instance of
\mbicl{}. We define two directed graphs $G_A$ and $G_B$ as follows.
$G_A$ contains the vertices $s_A$ and $t_A$. Moreover, for every $i \in [k]$, $G_A$
contains the vertices in $P_i^A=\{v_{i,1},\dotsc,v_{i,n-1}\}$. For
convinence, we use that $v_{i,0}=s_A$ and $v_{i,n}=t_A$ for every
$i \in [k]$. Moreover, for every vertex $a_{i,j}$ and
every $i' \in [k]$, $G_A$ contains the directed path $P^A_{i,j,i'}$ from
$v_{i,j-1}$ to $v_{i,j}$ that has one edge (using fresh auxiliary
vertices) for every edge between $a_{i,j}$ and a vertex in
$B_{i'}$. Therefore, we can assume in the following that there is a
bijection between the edges of $P_{i,j,i'}^A$ and the edges between
$a_{i,j}$ and a vertex in $B_{i'}$. This concludes the description of
$G_A$.
\iflong \fi
$G_B$ is defined very similarily to $G_A$ with the roles of the
sets $A_1,\dotsc,A_k$ and $B_1,\dotsc,B_k$ being reversed. \iflong In
particular, $G_B$ contains the vertices $s_B$ and $t_B$ and for every $i \in [k]$, $G_B$
contains the vertices in $P_i^B=\{u_{i,1},\dotsc,u_{i,n-1}\}$. For
convinence, we will assume that $u_{i,0}=s_B$ and $u_{i,n}=t_B$ for every
$i \in [k]$. Moreover, for every vertex $b_{i,j}$ and
every $i' \in [k]$, $G_B$ contains the directed path $P_{i,j,i'}^B$ from
$u_{i,j-1}$ to $u_{i,j}$ that has one edge (using novel auxiliary
vertices) for every edge between $b_{i,j}$ and a vertex in
$A_{i'}$. Therefore, we can assume in the following that there is a
bijection between the edges of $P_{i,j,i'}^B$ and the edges between
$b_{i,j}$ and a vertex in $A_{i'}$.\fi

Finally, define a set $\bundles\subseteq E(G_A)\times E(G_B)$ of
bundles as follows. For
every edge $e=\{a_{i,j},b_{i',j'}\} \in E(G)$, $\bundles$ contains the
pair $(e^A,e^B)$, where $e^A$ is the edge corresponding to $e$ on the
path $P_{i,j,i'}^A$ and $e^B$ is the edge corresponding to $e$ on the
path $P_{i',j',i}^B$. This concludes the construction and the following
lemma shows its main property.
\begin{lemma}\label{lem:pc-hard-aux}
  $I=(G,A_1,\dotsc,A_k,B_1,\dotsc,B_k,k)$ is a yes-instance of \mbicl{} if and
  only if there is a set $X\subseteq \bundles$ with $|X|=k^2$ and
  $X_A=\{ e \mid (e,e') \in X\}$ ($X_B=\{ e' \mid (e,e') \in X\}$) is
  an $(s_A,t_A)$-cut ($(s_B,t_B)$-cut) in $G_A$ ($G_B$).
\end{lemma}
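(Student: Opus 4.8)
The plan is to prove the two directions of the equivalence separately; the forward direction is a direct construction, while the reverse (``hard'') direction rests on a rigidity analysis of minimum cuts in $G_A$ and $G_B$. Throughout I use the structural picture that $G_A$ decomposes into $k$ \emph{rows} that are pairwise disjoint except at $s_A$ and $t_A$, the $i$-th row being the series composition over $j \in [n]$ of the \emph{$j$-th segment}, which is itself the parallel composition of the $k$ paths $P^A_{i,j,1},\dots,P^A_{i,j,k}$ between $v_{i,j-1}$ and $v_{i,j}$; the same holds for $G_B$ with the roles of the $A_i$'s and $B_{i'}$'s swapped. Note also that $e\mapsto e^A$ and $e\mapsto e^B$ are injective, so for $X\subseteq\bundles$ we have $|X_A|=|X_B|=|X|$.

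For the forward direction, suppose $I$ has a multicolored biclique given by vertices $a_{i,\alpha_i}\in A_i$ and $b_{i',\beta_{i'}}\in B_{i'}$ with all edges $e_{i,i'}:=\{a_{i,\alpha_i},b_{i',\beta_{i'}}\}\in E(G)$. I would take $X:=\{(e_{i,i'}^A,e_{i,i'}^B):i,i'\in[k]\}\subseteq\bundles$, which has exactly $k^2$ pairs. For each row $i$ of $G_A$, the $k$ edges $e_{i,1}^A,\dots,e_{i,k}^A$ of $X_A$ lie one each on the $k$ parallel paths of segment $\alpha_i$, so they disconnect $v_{i,\alpha_i-1}$ from $v_{i,\alpha_i}$ and hence cut that row; since every simple $s_A$--$t_A$ path stays within one row, $X_A$ is an $(s_A,t_A)$-cut, and the argument for $X_B$ is symmetric.

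For the reverse direction, suppose $X\subseteq\bundles$ with $|X|=k^2$ and both $X_A$ and $X_B$ are cuts. The key claim is that $X_A$ is forced into a rigid shape: a row $i$ is disconnected by $X_A$ only if some whole segment $j$ of it is disconnected, which forces all $k$ of that segment's parallel paths to be hit, so each row consumes at least $k$ edges of $X_A$; as the rows partition $E(G_A)$ and there are $k$ of them while $|X_A|=k^2$, each row consumes exactly $k$ edges, and with this tight budget those $k$ edges must all lie in a single segment $j_i$, hitting each of its $k$ parallel paths exactly once. Setting $\alpha_i:=j_i$ and recalling that the edges of $P^A_{i,j_i,i'}$ correspond to $G$-edges between $a_{i,j_i}$ and $B_{i'}$, we conclude that every pair of $X$ has its $A$-endpoint equal to $a_{i,\alpha_i}$ for the appropriate $i$. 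Running the identical argument on $G_B$ produces $\beta_{i'}:=j'_{i'}$ such that every pair of $X$ has its $B$-endpoint equal to $b_{i',\beta_{i'}}$. Hence each edge $e$ underlying a pair of $X$ is of the form $\{a_{i,\alpha_i},b_{i',\beta_{i'}}\}$ for some $i,i'\in[k]$, so $X$ corresponds to a subset of the $k^2$ candidate biclique edges; since $|X|=k^2$, all of them lie in $G$, giving the desired $K_{k,k}$.

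I expect the main obstacle to be making the rigidity step fully precise --- in particular, arguing that the $k$ edges charged to a row cannot be spread across several segments, and that the segment analyses for $G_A$ and $G_B$ can be carried out independently yet then combined consistently to pin down both the $A$-selection and the $B$-selection. A secondary technical point is the degenerate case of a path $P^A_{i,j,i'}$ with no edges (when $a_{i,j}$ has no neighbour in $B_{i'}$): such a segment can never be disconnected, hence is never the cut segment, and one checks that a vertex participating in a biclique automatically has a neighbour in every opposite colour class, so this causes no trouble in either direction.
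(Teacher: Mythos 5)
Your proof is correct and takes essentially the same route as the paper's: the identical forward construction (delete the $k^2$ bundles of the biclique edges, which fully cuts one segment per row), and the same counting/rigidity argument in the backward direction (a row is cut only if some segment is fully cut, forcing at least $k$ deletions per row, hence exactly $k$ concentrated in a single segment, which pins down one vertex per class on each side and forces all $k^2$ cross edges to exist). Your explicit handling of the degenerate case of an empty path $P^A_{i,j,i'}$ is a detail the paper's proof silently assumes away; under your reading it does not affect the argument.
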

\iflong
\begin{proof}
  {\bf Forward direction.} Let $K$ be a biclique in $G$
  with vertices
  $$a_{1,j_1},\dotsc,a_{k,j_k},b_{1,\ell_1},\dotsc,b_{k,\ell_k}.$$
  Let $X \subseteq \bundles$ be the set of bundles
  containing the pair $(e^A,e^B)$ for every edge $e$ of $K$. Then,
  $|X|=k^2$ and we claim that $X_A$ ($X_B$) is an $(s_A,t_A)$-cut ($(s_B,t_B)$-cut)
  in $G_A$ ($G_B$). This holds because:
  \begin{itemize}
  \item For every $a_{i,j_i}$ and $i' \in [k]$, $X_A$ contains an edge
    from the path $P_{i,j_i,i'}^A$ and therefore there is no path from
    $a_{i,j_i-1}$ to $a_{i,j_i}$ in $G_A-X_A$.
  \item For every $b_{i,\ell_i}$ and $i' \in [k]$, $X_B$ contains an edge
    from the path $P_{i,\ell_i,i'}^B$ and therefore there is no path from
    $b_{i,\ell_i-1}$ to $b_{i,\ell_i}$ in $G_B-X_B$.    
  \end{itemize}

\medskip

\noindent
  {\bf Backward direction.} Let $X \subseteq \bundles$
  with $|X|=k^2$ such that
  $X_A=\{ e \mid (e,e') \in X\}$ ($X_B=\{ e' \mid (e,e') \in X\}$) is
  an $(s,t)$-cut in $G_A$ ($G_B$). Then, for every $i \in [k]$
  there must exist a $j \in[k]$ such that for every $i' \in [k]$, $X_A$ contains an edge
  from the path $P_{i,j,i'}^A$. Similarily, for every $i \in [k]$
  there must exist a $j \in[k]$ such that for every $i' \in [k]$, $X_B$ contains an edge
  from the path $P_{i,j,i'}^B$. Because $|X|=k^2$, $X_A$ and $X_B$ cannot
  contain any other edges. Therefore, for every $i \in [k]$ there is a
  unique vertex $a_{i,j_i}$ in $A_i$ such that all paths between
  $a_{i,j_i-1}$ and $a_{i,j_i}$ are cut in $G_A-X_A$. Similarily,
  for every $i \in [k]$ there is a
  unique vertex $b_{i,\ell_{i}}$ in $B_{i'}$ such that all paths between
  $b_{i,\ell_{i}-1}$ and $b_{i,\ell_{i}}$ are cut in $G_B-X_B$. But
  due to the definition of $\bundles$, this is only possible if there is an edge in $G$ between every
  vertex $a_{i,j_i}$ and every vertex $b_{i,\ell_i}$ so
  $\{a_{1,j_1},\dotsc,a_{k,j_k},b_{1,j_1},\dotsc,b_{k,j_k}\}$ is indeed a
  biclique in $G$.
\end{proof}
\fi

The lemma above makes it relatively straightforward to show \W{1}-hardness for \pc{} and \pcfas{}.

\ifshort
\begin{lemma}\label{lem:pcpcfas-hard} 
  \pc{} and \pcfas{} are \textnormal{\W{1}}-hard.
\end{lemma}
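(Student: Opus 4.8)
The plan is to reduce from \mbicl{} (\W{1}-hard by Proposition~\ref{prop:biclique-hard}), reusing the directed graphs $G_A$, $G_B$ with terminals $s_A,t_A$ and $s_B,t_B$ and the bundle family $\bundles \subseteq E(G_A)\times E(G_B)$ constructed above, for which Lemma~\ref{lem:pc-hard-aux} already supplies the key equivalence. In both target problems the produced instance will have parameter $k^2$ (so the parameter bound is respected) and be computable in polynomial time, so only correctness needs work, and the only delicate point will be a structural claim about $st$-connectivity in $G_A$ and $G_B$.

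For \pc{} I would take $G_1,G_2$ to be the underlying undirected multigraphs of $G_A,G_B$, keep the same terminals and the same $\bundles$, and set the budget to $k^2$. The main thing to verify is that forgetting orientations does not change the $st$-cut structure, which I would deduce from the claim that \emph{every simple undirected $s_A$--$t_A$ path in $G_1$ is in fact a directed $s_A$--$t_A$ path of $G_A$} (and likewise for $G_B$). Here the hubs $v_{i,j}$ (with $v_{i,0}=s_A$, $v_{i,n}=t_A$) are the only vertices of degree more than $2$, every internal vertex of a path $P^A_{i,j,i'}$ has degree $2$, and each hub $v_{i,j}$ with $1 \le j \le n-1$ meets only the paths of levels $j$ and $j+1$ of super-chain $i$; hence a simple undirected path, once it enters $P^A_{i,j,i'}$ at $v_{i,j-1}$, is forced monotonically through it to $v_{i,j}$ (any turn-back revisits $v_{i,j-1}$) and must then move on to a level-$(j+1)$ path of the same super-chain, so it is a directed path $s_A \to^{P^A_{i,1,\cdot}} v_{i,1} \to \cdots \to^{P^A_{i,n,\cdot}} t_A$. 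Since an edge set is an undirected $\{s,t\}$-cut iff it meets every simple undirected $s$--$t$ path, undirected and directed $\{s_A,t_A\}$-cuts coincide in $G_A$, and similarly in $G_B$. I would then combine this with two easy observations: any $\{s_A,t_A\}$-cut of $G_A$ has at least $k^2$ edges (each super-chain needs an entire level of $k$ parallel paths destroyed, and the $k$ super-chains are edge-disjoint), and $|X_A|=|X_B|=|X|$ for every $X\subseteq\bundles$. Consequently an $X$ of size $\le k^2$ whose projections cut $G_1$ and $G_2$ must have $|X|=k^2$ and yields directed cuts of $G_A,G_B$, while conversely an $X$ from Lemma~\ref{lem:pc-hard-aux} has $|X|=k^2$ and its projections are undirected cuts as well; Lemma~\ref{lem:pc-hard-aux} then gives the equivalence, proving \W{1}-hardness of \pc{}.

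For \pcfas{} I would keep $G_1$ as the undirected version of $G_A$ with terminals $s_A,t_A$, and let $G_2$ be $G_B$ together with one extra arc $(t_B,s_B)$ that is deliberately left out of every bundle; the bundle family is unchanged and the budget is again $k^2$. Since $G_B$ is a DAG (it consists solely of forward-directed paths), every directed cycle of $G_2$ uses the arc $(t_B,s_B)$ and hence consists of it together with a directed $s_B$--$t_B$ path of $G_B$; and because $(t_B,s_B)\notin\bigcup\bundles$ it can never lie in $X_2$. Therefore $G_2 - X_2$ is acyclic if and only if $X_2$ hits every directed $s_B$--$t_B$ path of $G_B$, i.e.\ iff $X_2$ is a directed $(s_B,t_B)$-cut of $G_B$. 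Plugging this in together with the structural claim for $G_A$ and the size argument above, the \pcfas{} condition ``$X_1$ is an $\{s_A,t_A\}$-cut of $G_1$ and $G_2-X_2$ is acyclic'' with $|X|\le k^2$ becomes exactly ``$|X|=k^2$ and $X_A$, $X_B$ are directed cuts of $G_A$, $G_B$'', so Lemma~\ref{lem:pc-hard-aux} again yields the equivalence and hence \W{1}-hardness of \pcfas{}.

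I expect the structural claim — that undirected and directed $s$--$t$ connectivity agree in $G_A$ and $G_B$ — to be the only nontrivial ingredient; the size bound and the back-arc gadget (used only to convert the acyclicity requirement of \pcfas{} into a genuine $st$-cut requirement) are routine.
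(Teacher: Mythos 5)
Your proposal is correct and follows essentially the same route as the paper: both reductions reuse $G_A$, $G_B$ and $\bundles$ via Lemma~\ref{lem:pc-hard-aux}, pass to the underlying undirected graphs for the cut side, and append an undeletable back arc to $G_B$ to turn acyclicity into an $(s_B,t_B)$-cut condition for \pcfas{}. The only difference is that you spell out explicitly (via the degree/hub argument and the $k^2$ lower bound on cut size) why undirected and directed $st$-cuts coincide in $G_A$ and $G_B$, a step the paper merely asserts.
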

\fi
\iflong
\begin{lemma}\label{lem:pc-hard}%
  \pc{} is \textnormal{\W{1}}-hard.
\end{lemma}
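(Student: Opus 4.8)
The plan is to reuse the common construction above together with Lemma~\ref{lem:pc-hard-aux}, the only extra work being to bridge the gap that \pc{} is phrased for \emph{undirected} graphs while the construction produces directed $G_A$ and $G_B$. Given an instance $I=(G,A_1,\dotsc,A_k,B_1,\dotsc,B_k,k)$ of \mbicl{}, I would build $G_A$, $G_B$ and $\bundles$ exactly as described, and output the \pc{} instance whose first graph $G_1$ is the undirected graph underlying $G_A$ with terminals $s_1:=s_A$, $t_1:=t_A$, whose second graph $G_2$ is the undirected graph underlying $G_B$ with $s_2:=s_B$, $t_2:=t_B$, with the same pairing $\bundles$, and with budget $k^2$. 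This is a polynomial-time computation and the new parameter $k^2$ depends only on $k$, so it is an fpt-reduction; combined with Proposition~\ref{prop:biclique-hard} it gives the claimed \W{1}-hardness.

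Correctness rests on two simple observations. First, I would check that a set of edges is an undirected $\{s_A,t_A\}$-cut in $G_A$ if and only if it is a directed $(s_A,t_A)$-cut in $G_A$ (and symmetrically for $G_B$). This is because $G_A$ is the union of $k$ internally vertex-disjoint ``super-paths'' from $s_A$ to $t_A$, the $j$-th segment of the $i$-th super-path being the union of the $k$ internally vertex-disjoint $v_{i,j-1}$--$v_{i,j}$ paths $P^A_{i,j,1},\dotsc,P^A_{i,j,k}$; since distinct super-paths meet only in $s_A,t_A$ and distinct parallel paths of a segment meet only in their endpoints, every simple undirected $s_A$--$t_A$ path is forced to traverse a single super-path segment by segment, hence coincides with a directed $s_A\to t_A$ path. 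Second, since $e\mapsto e^A$ and $e\mapsto e^B$ are bijections between $E(G)$ and $E(G_A)$, respectively $E(G_B)$, every $X\subseteq\bundles$ satisfies $|X_A|=|X_B|=|X|$, and any $\{s_A,t_A\}$-cut has size at least $k^2$: for each $i$ it must fully cut some segment $j_i$, which costs at least one edge in each of the $k$ pairwise edge-disjoint paths $P^A_{i,j_i,1},\dotsc,P^A_{i,j_i,k}$, and the contributions for different $i$ are disjoint.

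Combining these: the constructed \pc{} instance is a yes-instance iff there is $X\subseteq\bundles$ with $|X|\le k^2$ such that $X_1$ and $X_2$ are undirected cuts, iff (by the first observation) $X_A:=X_1$ and $X_B:=X_2$ are directed cuts in $G_A$ and $G_B$, in which case the second observation forces $|X|=k^2$, and then Lemma~\ref{lem:pc-hard-aux} makes this equivalent to $I$ being a yes-instance of \mbicl{}. I do not anticipate a real difficulty here: Lemma~\ref{lem:pc-hard-aux} already contains all the combinatorial content, and the only thing needing care is the directed-versus-undirected equivalence (ruling out ``zigzagging'' undirected $s$--$t$ paths through the layered structure) together with the budget bookkeeping between ``$\le k^2$'' and ``$=k^2$''.
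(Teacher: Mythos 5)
Your proposal is correct and follows essentially the same route as the paper: reuse the common construction and Lemma~\ref{lem:pc-hard-aux}, pass to the underlying undirected graphs of $G_A$ and $G_B$, and conclude via Proposition~\ref{prop:biclique-hard}. The only difference is that you explicitly verify what the paper asserts in one line (that the lemma survives the move to undirected cuts, via the layered structure forcing undirected $s$--$t$ paths to be directed, plus the $\le k^2$ versus $=k^2$ bookkeeping), and your budget $k^2$ is the correct one (the paper's stated budget $k$ is a typo).
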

\begin{proof}
  Let $I=(G,A_1,\dotsc,A_k,B_1,\dotsc,B_k,k)$ be an instance of
  \mbicl\ and let $G_A$, $G_B$, and $\bundles$ be the corresponding
  gadgets defined in Lemma~\ref{lem:pc-hard-aux}. First note that Lemma~\ref{lem:pc-hard-aux}
  still holds if we consider the underlying undirected graphs $G_A'$
  and $G_B'$ of $G_A$ and $G_B$, respectively. Therefore, using the
  same lemma, we obtain that $I$ is equivalent to the instance
  $(G_A',G_B',s_A,t_A,s_B,t_B,\bundles,k)$ of \pc{}, as required.
\end{proof}

\begin{lemma}\label{lem:pcfas-hard}
    \pcfas{} is \textnormal{\W{1}}-hard.
\end{lemma}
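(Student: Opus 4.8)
The plan is to recycle the gadget $G_A,G_B,\bundles$ that was built from a \mbicl{} instance $I=(G,A_1,\dotsc,A_k,B_1,\dotsc,B_k,k)$ in Lemma~\ref{lem:pc-hard-aux}, and to replace the ``$X_B$ is an $(s_B,t_B)$-cut'' requirement by an equivalent acyclicity requirement via the standard ``add a back arc'' trick. Concretely, I would take $G_1$ to be the underlying undirected graph of $G_A$ with $s:=s_A$ and $t:=t_A$; take $G_2$ to be $G_B$ together with one extra arc $(t_B,s_B)$ that lies in no bundle; keep the same family $\bundles$, now read inside $E(G_1)\times E(G_2)$ (legitimate, since $E(G_1)=E(G_A)$ as edge sets and all arcs of $G_B$ survive in $G_2$); and set the budget to $k':=k^2$. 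This is computable in polynomial time and $k'$ is a function of $k$ only, so it is a parameterized reduction; as \mbicl{} is \W{1}-hard by Proposition~\ref{prop:biclique-hard}, it only remains to prove that $I$ is a yes-instance iff $(G_1,G_2,s,t,\bundles,k')$ is a yes-instance of \pcfas{}.

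First I would isolate the two structural facts that glue the two sides together. Because $G_B$ is acyclic (all its arcs point ``forward'' along the paths $P^B_{i,j,i'}$), every directed cycle of $G_2$ must use $(t_B,s_B)$, hence for $X_2\subseteq A(G_B)$ the graph $G_2-X_2$ is acyclic iff $X_2$ is a directed $(s_B,t_B)$-cut of $G_B$. Symmetrically, in $G_1$ each $v_{i,j}$ with $j\in[n-1]$ is a cut vertex inside ``structure $i$'', so any simple $\{s,t\}$-path stays inside one structure $i$, visits $v_{i,1},\dots,v_{i,n-1}$ in order, and between two consecutive levels traverses exactly one of the parallel paths $P^A_{i,j,\cdot}$ in full; consequently an edge set $Z$ is an $\{s,t\}$-cut of $G_1$ iff it is a directed $(s_A,t_A)$-cut of $G_A$ iff for every $i\in[k]$ there is a level $j$ with $Z\cap E(P^A_{i,j,i'})\neq\emptyset$ for all $i'\in[k]$; in particular any such $Z$ has $|Z|\ge k^2$.

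With these two facts the equivalence follows Lemma~\ref{lem:pc-hard-aux} almost verbatim. For the forward direction, a multicoloured biclique of $G$ gives, exactly as there, a set $X\subseteq\bundles$ with $|X|=k^2$ whose $G_A$-side fully cuts one level of every structure --- so $X_1$ is an $\{s,t\}$-cut of $G_1$ --- and whose $G_B$-side is a directed $(s_B,t_B)$-cut of $G_B$ --- so $G_2-X_2$ is acyclic. For the backward direction, let $X\subseteq\bundles$ with $|X|\le k^2$, $X_1$ an $\{s,t\}$-cut of $G_1$, and $G_2-X_2$ acyclic. The bundles are pairwise disjoint, so $|X|=|X_1|=|X_2|$; since $X_1$ is an $\{s,t\}$-cut of $G_1$ we get $|X_1|\ge k^2$, hence $|X|=k^2$, and moreover $X_1$ is a directed $(s_A,t_A)$-cut of $G_A$ and (by the acyclicity fact) $X_2$ is a directed $(s_B,t_B)$-cut of $G_B$. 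Feeding this $X$ into the backward direction of Lemma~\ref{lem:pc-hard-aux} produces a multicoloured biclique of $G$.

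The only content that is genuinely new relative to Lemma~\ref{lem:pc-hard-aux} is the second structural fact: passing to the underlying undirected graph of $G_A$ must neither introduce new $s$--$t$ connections (handled by the cut-vertex/monotonicity argument) nor relax the $k^2$ lower bound on a cut, so that the ``the solution has exactly $k^2$ bundles, one per level per structure'' reasoning inside Lemma~\ref{lem:pc-hard-aux} remains valid. I expect this undirected-versus-directed-cut bookkeeping to be the main (though routine) obstacle; everything else is a direct reuse of the earlier lemma together with the trivial ``back arc equals acyclicity'' observation.
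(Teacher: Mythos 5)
Your proposal is correct and follows essentially the same route as the paper's proof: reuse the gadget and Lemma~\ref{lem:pc-hard-aux}, take the underlying undirected graph of $G_A$ for the undirected side, and add a back arc from $t_B$ to $s_B$ so that acyclicity of $G_2-X_2$ is equivalent to $X_2$ being an $(s_B,t_B)$-cut of $G_B$. Your write-up is in fact somewhat more careful than the paper's (explicitly handling the budget $k^2$ versus ``exactly $k^2$'' issue and the undirected-versus-directed cut equivalence in $G_A$), but the underlying argument is the same.
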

\begin{proof}
  Let $I=(G,A_1,\dotsc,A_k,B_1,\dotsc,B_k,k)$ be an instance of
  \mbicl\ and let $G_A$, $G_B$, and $\bundles$ be the corresponding
  gadgets defined in Lemma~\ref{lem:pc-hard-aux}. First note that Lemma~\ref{lem:pc-hard-aux}
  still holds if we consider the underlying undirected graphs $G_A'$
  of $G_A$. Let $G_B'$ be the directed graph obtained from $G_B$ after
  adding the edge from $t_A$ to $s_A$. Because $G_B$ is acyclic, it
  holds that a set $X_B$ of edges of $G_B$ is an $(s_B,t_B)$-cut if
  and only if $X_B$ is a FAS for $G_B'$. Therefore, using
  Lemma~\ref{lem:pc-hard-aux}, we obtain that $I$ is equivalent to the
  instance $(G_A,G_B',s_A,t_A,\bundles,k)$ of \pcfas{}, as required.
\end{proof}
\fi

\subsection{Simultaneous Problems} \label{ssec:sim-graph}

In this section we prove \W{1}-hardness of several simultaneous cut problems.
Our basis is the following problem.

\pbDefP{Simultaneous Separator (\simsep{})}
{Two directed graphs $D_1$ and $D_2$ with $V=V(D_1) = V(D_2)$, vertices $s,t \in V$, and an integer $k$.}
{$k$.}
{Is there a subset $X \subseteq V\setminus \{s,t\}$ of size at most $k$ such that
 neither $D_1-X$ nor $D_2-X$ contain a path from $s$ to $t$?}

We begin by proving that this problem is W[1]-hard in Theorem~\ref{thm:sim-sep-hard}.
We will then prove that simultaneous variants of {\sc Directed st-Cut}
and {\sc Directed Feedback Arc Set} are W[1]-hard via reductions from \simsep{}; these results can be found in
Theorems~\ref{thm:sim-cut-hard} and \ref{thm:sim-dfas-hard}, respectively.
It will be convenient to use the term $st$-{\em separator} when working with directed graphs:
given a directed graph $G=(V,E)$ and two vertices $s,t \in V$,
we say that $X \subseteq V \setminus \{s,t\}$ is an $st$-{\em separator} if 
the graph $G - X$ contains no directed path from $s$ to $t$ and
no directed path from $t$ to $s$.

\begin{theorem} \label{thm:sim-sep-hard}
  \simsep{} is \textnormal{\W{1}}-hard even if both input digraphs are acyclic.
\end{theorem}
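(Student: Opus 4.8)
The plan is to reduce from \mbicl, which is \W{1}-hard~\cite{cygan2015parameterized}, re-using the column-gadget idea of Lemma~\ref{lem:pc-hard-aux} but in a vertex-deletion setting and with the two graphs coupled through shared vertices rather than through bundles. Given an instance $(G, A_1,\dotsc,A_k,B_1,\dotsc,B_k,k)$ of \mbicl with $A_i=\{a_{i,1},\dotsc,a_{i,n}\}$ and $B_i=\{b_{i,1},\dotsc,b_{i,n}\}$, I would set the budget to $k':=k^2$ and build $D_1$ out of $k$ vertex-disjoint (apart from $s,t$) \emph{columns}, column $i$ being a spine $s=v_{i,0},v_{i,1},\dotsc,v_{i,n}=t$ in which each internal spine vertex is replaced by $k^2+1$ parallel copies, and where between the layers $v_{i,j-1}$ and $v_{i,j}$ one places, for every $i'\in[k]$, a directed path $P^A_{i,j,i'}$ whose internal vertices are exactly the vertices $w_e$ for the edges $e$ of $G$ between $a_{i,j}$ and $B_{i'}$ (a direct connection between the two layers when there is no such edge), with arcs from every copy of $v_{i,j-1}$ to the first internal vertex of the path and from the last internal vertex to every copy of $v_{i,j}$. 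The graph $D_2$ is built symmetrically from the $B_i$'s, with spines $s=u_{i',0},\dotsc,u_{i',n}=t$ and paths $P^B_{i',j',i}$ whose internal vertices are the $w_e$ for edges between $b_{i',j'}$ and $A_i$; crucially, $w_e$ for $e=\{a_{i,j},b_{i',j'}\}$ is the \emph{same} vertex on $P^A_{i,j,i'}$ in $D_1$ and on $P^B_{i',j',i}$ in $D_2$. Finally $V(D_1)$ and $V(D_2)$ are padded with isolated copies of the other graph's private vertices so that $V(D_1)=V(D_2)$. Each $D_\ell$ is acyclic: listing $s$, then column by column the $P$-path internal vertices and spine-copy layers in increasing row order, and then $t$, gives a topological order. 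The construction is polynomial and $k'=k^2$ is a function of $k$, so it is an fpt-reduction.

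For the easy direction, given a multicoloured biclique on $a_{i,j_i}\in A_i$ and $b_{i',\ell_{i'}}\in B_{i'}$, I would delete $X=\{w_e : e=\{a_{i,j_i},b_{i',\ell_{i'}}\},\ i,i'\in[k]\}$, a set of $k^2$ vertices: in every column $i$ of $D_1$ all $k$ of the paths $P^A_{i,j_i,i'}$ lose an internal vertex, so row $j_i$ is destroyed and $s$ cannot reach $t$ through column $i$; the same holds in $D_2$, so $X$ is a valid solution of size $k'$.

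The substantive work — and the step I expect to be the main obstacle — is the converse: showing that any solution $X$ with $|X|\le k^2$ breaking both graphs must be exactly such a biclique set. The key rigidity observations I would establish are: (i) an $s$-$t$ path inside a column traverses its spine layers in order and uses one $P$-path per row, so to disconnect a column one must either delete an entire spine layer (cost $k^2+1$, impossible within budget) or completely break some \emph{good} row, i.e.\ hit all $k$ of its $P$-paths (cost $\ge k$, and possible only when the corresponding $A$-vertex has a neighbour in every $B_{i'}$); (ii) columns are vertex-disjoint off $\{s,t\}$, so breaking all $k$ columns of $D_1$ costs $\ge k^2$, forcing $|X|=k^2$, forcing $X$ to consist of $w$-vertices only, and forcing each column $i$ to be broken by a single chosen good row $j_i$ with exactly one deleted $w_e$ per $P^A_{i,j_i,i'}$; (iii) applying the same analysis to $D_2$ (the \emph{same} set $X$) yields, for each column $i'$, a chosen good row $\ell_{i'}$ with one deleted $w_e$ per $P^B_{i',\ell_{i'},i}$; and (iv) since each deleted $w_e$ is now pinned from the $D_1$-side as lying between $a_{i,j_i}$ and $B_{i'}$ and from the $D_2$-side as lying between $b_{i',\ell_{i'}}$ and $A_i$, matching endpoints forces $e=\{a_{i,j_i},b_{i',\ell_{i'}}\}$ for all $i,i'$, so $\{a_{i,j_i}\}_i\cup\{b_{i',\ell_{i'}}\}_{i'}$ is a multicoloured biclique. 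The delicate part is a clean proof of (i)–(ii): that the $(k^2+1)$-fold blow-up really makes the spine layers ``undeletable'' within the budget, and that consequently the only way to spend the budget is to fully destroy one good row per column, leaving no slack that could decouple the two graphs.
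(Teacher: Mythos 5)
Your reduction is correct, but it is not the route the paper takes for this theorem, so let me compare. The paper also starts from \mbicl{}, but its gadget for \simsep{} uses, for each colour class, an explicit \emph{selector} path through vertices $\alpha^i_q$ (resp.\ $\beta^j_r$) in addition to, for each ordered pair $(i,j)$, a ``pair'' path through an $n\times n$ grid of vertices $p^{i,j}_{q,r}$; the two digraphs are coupled by identifying $p^{i,j}_{q,r}$ with $\hat{p}^{i,j}_{r,q}$ (a transposition of the grid), \emph{block arcs} let a path detour from block $q$ of the pair path through $\alpha^i_q$, and deletability is restricted to vertices incident to adjacency-encoder arcs (undeletability enforced by false twins). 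The budget is $2k+k^2$, and the rigidity argument is ``if $p^{i,j}_{q,r}$ is deleted but $\alpha^i_q$ is not, the block-arc bypass survives,'' which forces the deleted grid vertices in all pair paths of colour $i$ to share the row of the single deleted $\alpha^i$, and symmetrically in $D_2$. Your construction instead adapts the paper's \emph{paired-cut} column gadget (the graphs $G_A,G_B$ with paths $P^A_{i,j,i'}$ from Lemma~\ref{lem:pc-hard-aux}) to the simultaneous vertex-deletion setting: the choice of $a_{i,j_i}$ is encoded by \emph{where} column $i$ is cut rather than by deleting a selector vertex, the per-level parallel paths replace the block-arc coupling, the shared subdivision vertices $w_e$ (one per edge of $G$) replace the transposed grid identification, and undeletability of the spine is obtained by $(k^2+1)$-fold copies. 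This buys a leaner gadget and budget ($k'=k^2$ instead of $2k+k^2$), no bypass argument, and a cleaner counting step (columns are disjoint, each needs $k$ deletions, so exactly one $w_e$ per (column, colour) pair, and the $D_2$-side analysis pins its $B$-endpoint); the two points you flagged as delicate do go through exactly as you sketch, provided you keep the uncuttable ``direct connection'' for missing edges (you do) and note that the $k$ paths at a level are internally vertex-disjoint, so hitting them all costs $k$ distinct deletions. What the paper's version buys in exchange is that the biclique vertices are read off directly from the deleted $\alpha$'s and $\beta$'s, and its grid-transposition identification is the mechanism the authors reuse conceptually when they describe how simultaneity encodes arbitrary pairings; both constructions yield acyclic digraphs, as required by the statement.
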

\begin{proof}

Given an instance $(G,k)$ of \mbicl{} we construct an instance $(D_1,D_2,\sss,\ttt,k')$ of  \simsep{} in polynomial time such that $k' =O(k^2)$ and $(G,k)$ is a \yes-instance of \mbicl{} if and only if $(D_1,D_2,\sss,\ttt,k')$ is a \yes-instance of \simsep{}. Moreover, the digraphs $D_1$ and $D_2$ are both acyclic.

\iflong
\paragraph*{Construction.} 
\fi
Recall that $V(G) = A_1 \uplus \cdots \uplus A_k \uplus B_1 \uplus \cdots \uplus B_k$.
Let $A_i= \{a^i_1, \ldots, a^i_n\}$ and $B_i=\{b^i_1, \ldots, b^i_n\}$. 
The digraphs $D_1,D_2$ (as per the problem definition) should be
defined on the same vertex set. For convenience we define the vertex
set of $D_1$ and $D_2$ separately, state which of the vertices are
common to both, and for the others we provide a one-to-one-map between those in
$D_1$ and those in $D_2$. The vertex set of the digraphs $D_1, D_2$
are described in the following five steps.

\medskip

\iflong
\noindent
{\bf (V-1)} Corresponding to each $A_i$, $i \in [k]$, there
are two vertex sets in $D_1$ and $D_2$: 
$\Ac^i =\{\alpha^i_1, \ldots, \alpha^i_n\}$ and a set of dummy vertices 
$$\D^{i,A} =\{d^{i,A}_{1,\inn}, d^{i,A}_{1,\out}, \ldots, d^{i,A}_{n,\inn}, d^{i,A}_{n,\out}\}.$$ (The vertex sets $\Ac^i$ and $\D^{i,A}$ are common in $D_1$ and $D_2$.)
\fi

\ifshort
\noindent
{\bf (V-1)} Corresponding to each $A_i$, $i \in [k]$, there
are two vertex sets in $D_1$ and $D_2$: 
$\Ac^i =\{\alpha^i_1, \ldots, \alpha^i_n\}$ and a set of dummy vertices 
$\D^{i,A} =\{d^{i,A}_{1,\inn}, d^{i,A}_{1,\out}, \ldots, d^{i,A}_{n,\inn}, d^{i,A}_{n,\out}\}.$ The vertex sets $\Ac^i$ and $\D^{i,A}$ are common in $D_1$ and $D_2$.
\fi

\smallskip

\noindent
{\bf (V-2)} \iflong Corresponding to each $B_i$, $i \in [k]$,  there
are two vertex sets in $D_1$ and $D_2$: 
$\B^i =\{\beta^i_1, \ldots, \beta^i_n\}$ and a set of dummy vertices $$\D^{i,B} =\{d^{i,B}_{1,\inn}, d^{i,B}_{1,\out}, \ldots, d^{i,B}_{n,\inn}, d^{i,B}_{n,\out}\}.$$ (The vertex sets $\B^i$ and $\D^{i,B}$ are common in $D_1$ and $D_2$.)
\fi
\ifshort
Similarily to \textbf{V-2}, for each $B_i$ the two vertex sets $\B^i =\{\beta^i_1, \ldots, \beta^i_n\}$ and $\D^{i,B} =\{d^{i,B}_{1,\inn}, d^{i,B}_{1,\out}, \ldots, d^{i,B}_{n,\inn}, d^{i,B}_{n,\out}\}.$ are common in $D_1$ and $D_2$.\fi

\smallskip

\noindent
{\bf (V-3)} For each ordered pair $i,j \in [k] \times [k]$, there is a set of dummy vertices in $D_1$ and $D_2$: $\D^{i,j}= \{d^{i,j}_{1,\inn}, d^{i,j}_{1,\out}, \ldots, \allowbreak d^{i,j}_{n,\inn}, \allowbreak d^{i,j}_{n,\out} \}$. (The sets $\D^{i,j}$ are common in $D_1$ and $D_2$.)

\smallskip

\noindent
{\bf (V-4)} For each ordered pair $i,j \in [k] \times [k]$, there is a set of vertices $\PP^{i,j}_A=\{p^{i,j}_{1,1}, \ldots, p^{i,j}_{1,n}, \allowbreak p^{i,j}_{2,1}, \ldots p^{i,j}_{2,n}, \ldots, \allowbreak  p^{i,j}_{n,1}, \ldots, p^{i,j}_{n,n}\}$ in $D_1$ and 
a set of vertices $\PP^{i,j}_B=\{\hat{p}^{i,j}_{1,1}, \ldots, \hat{p}^{i,j}_{1,n}, \allowbreak \hat{p}^{i,j}_{2,1}, \ldots \hat{p}^{i,j}_{2,n}, \ldots, \allowbreak  \hat{p}^{i,j}_{n,1}, \ldots, \hat{p}^{i,j}_{n,n}\}$ in $D_2$. Since in the construction we desire the vertex set of $D_1$ to be the same as the vertex set of $D_2$, we map 
the vertices of $\PP^{i,j}_A$ with $\PP^{i,j}_B$ such that the mapped vertices corresponds to the same vertex (which is just called by different names in the two digraphs). The desired map, maps the vertex $p^{i,j}_{q,r}$ to the vertex $\hat{p}^{i,j}_{r,q}$.

\smallskip

\noindent
{\bf (V-5)} In addition to the above vertices, there is a source vertex $\sss$ and a sink vertex $\ttt$ in both digraphs $D_1$ and $D_2$.

\medskip

\noindent
The arc set of $D_1$ is constructed as follows. See Figure~\ref{fig:full-construction} for an illustration.

\medskip

\noindent
{\bf (A$_1$-1)} For each $i \in [k]$, there is a directed path on the vertex set $\Ac^i \cup \D^{i,A}$ defined as $\pth^A_i = (d^{i,A}_{1,\inn}, \alpha^i_1,d^{i,A}_{1,\out}, \allowbreak  d^{i,A}_{2,\inn},  \alpha^i_2,d^{i,A}_{2,\out}, \allowbreak \ldots, \allowbreak d^{i,A}_{n,\inn}, \allowbreak \alpha^i_n, \allowbreak d^{i,A}_{n,\out})$ of length $3n$ (here length is the number of vertices on the path).

\smallskip

\iflong
\noindent
{\bf (A$_1$-2)} For each ordered pair $i,j \in [k] \times [k]$, there is a directed path $\pth_{i,j}$ (of length $n^2 + 2n$) on the vertex set $\PP^{i,j}_A \cup \D^{i,j}$ defined as $$(d^{i,j}_{1,\inn}, p^{i,j}_{1,1}, \allowbreak \ldots, \allowbreak p^{i,j}_{1,n}, d^{i,j}_{1,\out}\allowbreak d^{i,j}_{2,\inn}, p^{i,j}_{2,1}, \ldots p^{i,j}_{2,n}, d^{i,j}_{2, \out}, \ldots, \allowbreak  d^{i,j}_{n,\inn}, p^{i,j}_{n,1}, \ldots, p^{i,j}_{n,n}, d^{i,j}_{n,\out}).$$ 
\fi

\ifshort
\noindent
{\bf (A$_1$-2)} For each ordered pair $i,j \in [k] \times [k]$, there is a directed path $\pth_{i,j}$ (of length $n^2 + 2n$) on the vertex set $\PP^{i,j}_A \cup \D^{i,j}$ defined as $(d^{i,j}_{1,\inn}, p^{i,j}_{1,1}, \allowbreak \ldots, \allowbreak p^{i,j}_{1,n}, d^{i,j}_{1,\out}\allowbreak d^{i,j}_{2,\inn}, p^{i,j}_{2,1}, \ldots p^{i,j}_{2,n}, d^{i,j}_{2, \out}, \ldots, \allowbreak  d^{i,j}_{n,\inn}, p^{i,j}_{n,1}, \ldots, p^{i,j}_{n,n}, d^{i,j}_{n,\out}).$
\fi

\smallskip

\noindent
{\bf (A$_1$-3)} The vertex $\sss$ is a source and has as out-neighbours the first vertices of each of the paths $\pth^A_i$ and $\pth_{i,j}$. The vertex $\ttt$ is a sink and has as in-neighbours the last vertices of each of the paths $\pth^A_i$ and $\pth_{i,j}$.

\smallskip

\noindent
{\bf (A$_1$-4)} For each $i,j \in [k]\times [k]$ and $q \in [n]$, $(d^{i,j}_{q, \inn}, d^{i,A}_{q,\inn})$ and $(d^{i,j}_{q,\out}, d^{i,A}_{q, \out})$ are arcs in $D_1$. These are the {\em block} arcs.

\smallskip

\noindent
{\bf (A$_1$-5)} For each $i,j \in [k]\times [k]$ and $q,r \in [n]$, $(p^{i,j}_{q,r}, \alpha^i_q) \in A(D_1)$ if and only if $(a^i_q,b^j_r) \in E(G)$. These edges are called {\em adjacency-encoders} in $D_1$.

\medskip

\noindent
Observe that for each $i \in [k]$, the vertices of $\B_i \cup \D^{i,B}$ are isolated vertices in $D_1$.
\ifshort The arc set of $D_2$ is defined analogously with occurences of $\alpha^i_j$, $d^{i,A}_{j,\in}$, $d^{i,A}_{j,\out}$, and $p^{i,j}_{x,y}$
replaced by $\beta^i_j$, $d^{i,B}_{j,\in}$, $d^{i,B}_{j,\out}$, and $\hat{p}^{i,j}_{x,y}$, respectively, resulting in the paths
$\pth^B_i$ and $\hat{\pth}_{i,j}$.\fi
\iflong We continue by constructing the arc set of $D_2$.

\medskip

\noindent
{\bf (A$_2$-1)} For each $i \in [k]$, there is a directed path on the vertex set $\B^i \cup \D^{i,B}$ defined as $\pth^B_i = (d^{i,B}_{1,\inn}, \beta^i_1, d^{i,B}_{1,\out},  \allowbreak d^{i,B}_{2,\inn}, \beta^i_2,d^{i,B}_{2,\out}, \allowbreak \ldots, \allowbreak d^{i,B}_{n,\inn}, \allowbreak \beta^i_n, \allowbreak d^{i,B}_{n,\out})$ of length $3n$.

\medskip

\noindent
{\bf (A$_2$-2)} For each ordered pair $i,j \in [k] \times [k]$, there is a directed path $\hat{\pth}_{i,j}$ (of length $n^2 + 2n$)
on the vertex set $\PP^{i,j}_B \cup \D^{i,B}$ defined as $$(d^{i,j}_{1,\inn}, \hat{p}^{i,j}_{1,1}, \ldots, \hat{p}^{i,j}_{1,n}, d^{i,j}_{1,\out}\allowbreak d^{i,j}_{2,\inn}, \hat{p}^{i,j}_{2,1}, \ldots \hat{p}^{i,j}_{2,n}, d^{i,j}_{2, \out}, \ldots, \allowbreak  d^{i,j}_{n,\inn}, \hat{p}^{i,j}_{n,1}, \ldots, \hat{p}^{i,j}_{n,n}, d^{i,j}_{n,\out}).$$ 

\medskip

\noindent
{\bf (A$_2$-3)} The vertex $\sss$ is a source and has as out-neighbours the first vertices of each of the paths $\pth^B_i$ and $\hat{\pth}_{i,j}$. The vertex $\ttt$ is a sink and has as in-neighbours the last vertices of each of the paths $\pth^B_i$ and $\hat{\pth}_{i,j}$.

\medskip

\noindent
{\bf (A$_2$-4)} For each $i,j \in [k]\times [k]$ and $q \in [n]$, $(d^{i,j}_{q, \inn}, d^{i,B}_{q,\inn})$ and $(d^{i,j}_{q,\out}, d^{i,B}_{q, \out})$ are arcs in $D_2$. 

\medskip

\noindent
{\bf (A$_2$-5)}  For each $i,j \in [k]\times [k]$ and $q,r \in [n]$, $(\hat{p}^{i,j}_{q,r}, \beta^j_q) \in A(D_1)$ if and only if $(b^j_q,a^i_r) \in E(G)$.
\fi
Also observe that, for each $i \in [k]$, the vertices of $\Ac_i \cup \D^{i,A}$ are isolated in $D_2$.
\iflong We refer to the arcs introduced in {\bf (A$_2$-4)} and {\bf (A$_2$-5)}
as block arcs and adjacency-encoders, respectively, just as we did in $D_1$.
\fi
All the vertices that are not incident to the adjacency-encoder edges should be undeletable. This can be obtained by adding
$k'+1$ copies of these vertices which have the same neighbourhood as the original vertex (that is they form a set of pairwise false twins). It is straightforward  to verify (with the aid of Figure~\ref{fig:full-construction}) that the digraphs $D_1$ and $D_2$ are indeed
acyclic.
 Finally, we set $k'= 2k+k^2$ and note the that resulting instance of \simsep{} can be computed in polynomial time.
 
We continue by proving the correctness of the reduction.
Observe from the adjacency of the vertices $\sss$ and $\ttt$ that any $\sss\ttt$-separator in $D_1$ picks at least one vertex from each path $\pth^A_i$, $i \in [k]$, and from each path $\pth_{i,j}$, $i,j \in [k]$. 
Similarly, any $\sss\ttt$-separator in $D_2$ picks at least one vertex from each path $\pth^B_i$ and from each path $\hat{\pth}_{i,j}$. Further, since the vertices of $\pth^A_i$ and $\pth^B_i$ are distinct, and $k '=2k+k^2$, the solution for \simsep{} picks exactly one vertex from each $\pth^A_i$, one vertex from each $\pth^B_i$, one vertex from $\pth_{i,j}$ and one vertex from $\hat{\pth}_{i,j}$. Further, for each $i,j \in [k] \times [k]$, if the solution picks $p^{i,j}_{q,r}$ from $\pth_{i,j}$ then it is picking $\hat{p}^{i,j}_{r,q}$ from $\hat{\pth}_{i,j}$.
\ifshort
It is now relatively straightforward to verify \fi\iflong We show \fi that $G$ has a multicolored biclique
if and only if there is a set $X \subseteq V(D_1)$ ($= V(D_2)$) such that $|X| \leq k'$ and
$D_1 - X, D_2 - X$ contain no st-paths.
\iflong

\smallskip

\noindent
{\bf Forward direction.} If $G$ has a multicolored biclique consisting of vertices $\{a^i_{f(i)} : i \in [k], f(i) \in [n]\}$ and $\{b^i_{g(i)} : i \in [k], g(i) \in [n]\}$, then construct $X \subseteq V(D_1) (=V(D_2))$ as follows: 

\begin{enumerate}
\item
$X$ contains the vertex set $\{\alpha^i_{f(i)} : i \in [k]\} \cup \{\beta^j_{g(j)} : j \in [k]\}$, and

\item
for each $i , j \in [k] \times [k]$, $X$ contains $p^{i,j}_{f(i),q(j)}$. 
\end{enumerate}

Recall that $p^{i,j}_{f(i),g(j)} = \hat{p}^{i,j}_{g(j),f(i)}$, 
none of the vertices of $X$ are undeletable, and $|X| = 2k + k^2 =k'$.
We will now show that $X$ is an $\sss\ttt$-separator in $D_1$ and in $D_2$.

For the sake of contradiction, we assume that $D_1-X$ contains an $\sss\ttt$-path.
We know that $X$ picks a vertex from each of the paths $\pth^A_i, \pth^B_i, \pth_{i,j}$ and $\hat{\pth}_{i,j}$. We also know that there are no direct connections (edges) between the paths $\pth^A_i$ and $\pth^A_j$, and between the paths $\pth_{i,j}$ and $\pth_{k,\ell}$. Hence, if there is an $st$-path in $D_1-X$, then such a path intersects $\pth^A_i$ and $\pth_{i,j}$ for some $i,j \in [k]$. In fact, such a path intersects 
only $\pth^A_i$ and $\pth_{i,j}$ for some $i,j \in [k]$. 
Since there is no arc from a vertex before $\alpha^i_{f(i)}$ in $\pth^A_i$ to a vertex after $p^{i,j}_{f(i),g(j)}$, and $\alpha^i_{f(i)}, p^{i,j}_{f(i),g(i)} \in X$, we conclude that there is no $\sss\ttt$-path in $D_1-X$.

Using symmetric arguments, it is immediate that there is no $\sss\ttt$-path in $D_2-X$ either.

\smallskip

\noindent
{\bf Backward direction.} Let $X \subseteq V(D_1) (=V(D_2))$ such that $|X| \leq k'$ and, $D_1 -X, D_2-X$ have no $\sss\ttt$-paths. Recall that since $|X| \leq k'$, $X$ intersects each of the paths $\pth^A_i,\pth^B_i, \pth_{i,j}, \hat{\pth}_{i,j}$ in exactly one vertex. In fact, the bijection between the vertices of the paths $\pth_{i,j}$ and $\hat{\pth_{i,j}}$ implies that if $p^{i,j}_{q,r} \in X$, then $\hat{p}^{i,j}_{r,q} \in X$, too.

We first claim that if $p^{i,j}_{q,r} \in X$, then $\alpha^i_q \in X$. Assume to the contrary that this is not true. Then consider the following $\sss\ttt$-path in $D_1-X$. The path starts from $\sss$, traverses the path $\pth_{i,j}$ until $d^{i,j}_{q,\inn}$, uses the block edge $(d^{i,j}_{q,\inn},d^{i,A}_{q,\inn})$, traverses the $(d^{i,A}_{q,\inn}, d^{i,A}_{q,\out})$-subpath of $\pth^{A}_i$ which exists because $d^{i,A}_{q,\inn},d^{i,A}_{q,\out}$ are undeletable and $\alpha^i_q \not \in X$ by assumption, traverses the block edge $(d^{i,A}_{q,\out},d^{i,j}_{q,\out})$, and then traverses the $(d^{i,j}_{q,\out},d^{i,j}_{n,\out})$-subpath of $\pth_{i,j}$, and finally reaches $\ttt$ from $d^{i,j}_{n,\out}$.

Using symmetric arguments, one can show that if $\hat{p}^{i,j}_{r,q} \in X$, then $\beta^j_r \in X$. 
Using these claims, say $p^{i,j}_{q,r} \in X$ then $\alpha^i_q \in X$. 
Since $p^{i,j}_{q,r} = \hat{p}^{i,j}_{r,q}$, then $\beta^j_r \in X$. Also $(a^i_q,b^j_r) \in E(G)$ since $p^{i,j}_{q,r}, \alpha^i_q, \beta^j_r$ are not undeletable and hence they are incident to adjacency-encoder arcs. Thus, the set $\{a^i_q : \alpha^i_q \in X \} \cup \{b^j_r : \beta^j_r \in X\}$ induces a multicolored biclique $K_{k,k}$ in $G$. 

\medskip

\noindent
Since \mbicl{} is W[1]-hard parameterized by the solution
size (Proposition~\ref{prop:biclique-hard}), the above reduction shows that the
\simsep{} is \Whard\ parameterized by the solution size even when the
input digraphs are acyclic.
\fi
\end{proof}

\newcommand{\xmin}{1}
\newcommand{\xmax}{15}
\newcommand{\ymin}{1}
\newcommand{\ymax}{\ymin +15}
\begin{figure}
\centering
\begin{minipage}[c]{0.2\textwidth}
\begin{tikzpicture}[%
scale=.8,
transform shape,
myvertex/.style={circ,scale=2}
]

\node[myvertex,label=left:{$a^i_1$}] (69) at (6,6) {};
\node[myvertex,label=left:{$a^i_2$}] (610) at (6,5) {};
\node[myvertex,label=left:{$a^i_3$}] (611) at (6,4) {};

\node[myvertex,label=right:{$b^j_1$}] (89) at (8,6) {};
\node[myvertex,label=right:{$b^j_2$}] (810) at (8,5) {};
\node[myvertex,label=right:{$b^j_3$}] (811) at (8,4) {};

\draw[thick] (69) -- (810);
\draw[thick] (610) -- (810);
\draw[thick] (610) -- (811);
\draw[thick] (611) -- (89);
\end{tikzpicture}
\end{minipage}%
\begin{minipage}{0.8\textwidth}
\begin{tikzpicture}[%
scale=.8,
transform shape,
myvertex/.style={circ,scale=2}
]

\node[myvertex,label=above:{$\alpha^i_{1}$}] (32) at (3,2) {};
\node[myvertex,label=above:{$\alpha^i_{2}$}] (82) at (8,2) {};
\node[myvertex,label=above:{$\alpha^i_{3}$}] (132) at (13,2) {};

\node[myvertex,fill=brown,label=above:{$d^{i,A}_{1,\inn}$}] (12) at (1,2) {};
\node[myvertex,fill=brown,label=above:{$d^{i,A}_{1,\out}$}] (52) at (5,2) {};
\node[myvertex,fill=brown,label=above:{$d^{i,A}_{2,\inn}$}] (62) at (6,2) {};
\node[myvertex,fill=brown,label=above:{$d^{i,A}_{2,\out}$}] (102) at (10,2) {};
\node[myvertex,fill=brown,label=above:{$d^{i,A}_{3,\inn}$}] (112) at (11,2) {};
\node[myvertex,fill=brown,label=above:{$d^{i,A}_{3,\out}$}] (152) at (15,2) {};

\draw[->,>=stealth,thick] (12) -- (32);
\draw[->,>=stealth,thick] (32) -- (52);
\draw[->,>=stealth,thick] (52) -- (62);
\draw[->,>=stealth,thick] (62) -- (82);
\draw[->,>=stealth,thick] (82) -- (102);
\draw[->,>=stealth,thick] (102) -- (112);
\draw[->,>=stealth,thick] (112) -- (132);
\draw[->,>=stealth,thick] (132) -- (152);

\node[myvertex,fill=brown,label=below:{$d^{i,j}_{1,\inn}$}] (11) at (1,1) {}; 
\node[myvertex,fill=brown,label=below:{$d^{i,j}_{1,\out}$}] (51) at (5,1) {}; 
\node[myvertex,fill=brown,label=below:{$d^{i,j}_{2,\inn}$}] (61) at (6,1) {};
\node[myvertex,fill=brown,label=below:{$d^{i,j}_{2,\out}$}] (101) at (10,1) {};  
\node[myvertex,fill=brown,label=below:{$d^{i,j}_{3,\inn}$}] (111) at (11,1) {}; 
\node[myvertex,fill=brown,label=below:{$d^{i,j}_{3,\out}$}] (151) at (15,1) {};

\node[myvertex,label=below:{$p^{i,j}_{1,1}$}] (21) at (2,1) {};
\node[myvertex,label=below:{$p^{i,j}_{1,2}$}] (31) at (3,1) {};
\node[myvertex,label=below:{$p^{i,j}_{1,3}$}] (41) at (4,1) {};
\node[myvertex,label=below:{$p^{i,j}_{2,1}$}] (71) at (7,1) {};
\node[myvertex,label=below:{$p^{i,j}_{2,2}$}] (81) at (8,1) {};
\node[myvertex,label=below:{$p^{i,j}_{2,3}$}] (91) at (9,1) {};
\node[myvertex,label=below:{$p^{i,j}_{3,1}$}] (121) at (12,1) {};
\node[myvertex,label=below:{$p^{i,j}_{3,2}$}] (131) at (13,1) {};
\node[myvertex,label=below:{$p^{i,j}_{3,3}$}] (141) at (14,1) {};

\foreach \i in {1,...,14}
{\pgfmathtruncatemacro{\j}{1}
  \pgfmathtruncatemacro{\z}{\i+1}
    \draw[->,>=stealth,thick] (\i\j) -- (\z\j);}

\draw[->,>=stealth,thick] (31) -- (32);
\draw[->,>=stealth,thick] (81) -- (82);
\draw[->,>=stealth,thick] (91) -- (82);
\draw[->,>=stealth,thick] (121) -- (132);

\draw[->,>=stealth,thick,color=brown] (11) -- (12);
\draw[->,>=stealth,thick,color=brown] (52) -- (51);
\draw[->,>=stealth,thick,color=brown] (61) -- (62);
\draw[->,>=stealth,thick,color=brown] (102) -- (101);
\draw[->,>=stealth,thick,color=brown] (111) -- (112);
\draw[->,>=stealth,thick,color=brown] (152) -- (151);

\node[myvertex,label=below:{$\sss$}] (55) at (0,1.5) {};
\node[myvertex,label=below:{$\ttt$}] (66) at (16,1.5) {};

\draw[->,>=stealth,thick] (55) -- (11);
\draw[->,>=stealth,thick] (55) -- (12);
\draw[->,>=stealth,thick] (151) -- (66);
\draw[->,>=stealth,thick] (152) -- (66);


\node[myvertex,label=above:{$\beta^j_{1}$}] (35) at (3,-2) {};
\node[myvertex,label=above:{$\beta^j_{2}$}] (85) at (8,-2) {};
\node[myvertex,label=above:{$\beta^j_{3}$}] (135) at (13,-2) {};

\node[myvertex,fill=brown,label=above:{$d^{j,B}_{1,\inn}$}] (15) at (1,-2) {};
\node[myvertex,fill=brown,label=above:{$d^{j,B}_{1,\out}$}] (55) at (5,-2) {};
\node[myvertex,fill=brown,label=above:{$d^{j,B}_{2,\inn}$}] (65) at (6,-2) {};
\node[myvertex,fill=brown,label=above:{$d^{j,B}_{2,\out}$}] (105) at (10,-2) {};
\node[myvertex,fill=brown,label=above:{$d^{j,B}_{3,\inn}$}] (115) at (11,-2) {};
\node[myvertex,fill=brown,label=above:{$d^{j,B}_{3,\out}$}] (155) at (15,-2) {};

\draw[->,>=stealth,thick] (15) -- (35);
\draw[->,>=stealth,thick] (35) -- (55);
\draw[->,>=stealth,thick] (55) -- (65);
\draw[->,>=stealth,thick] (65) -- (85);
\draw[->,>=stealth,thick] (85) -- (105);
\draw[->,>=stealth,thick] (105) -- (115);
\draw[->,>=stealth,thick] (115) -- (135);
\draw[->,>=stealth,thick] (135) -- (155);

\node[myvertex,fill=brown,label=below:{$d^{i,j}_{1,\inn}$}] (14) at (1,-3) {}; 
\node[myvertex,fill=brown,label=below:{$d^{i,j}_{1,\out}$}] (54) at (5,-3) {}; 
\node[myvertex,fill=brown,label=below:{$d^{i,j}_{2,\inn}$}] (64) at (6,-3) {};
\node[myvertex,fill=brown,label=below:{$d^{i,j}_{2,\out}$}] (104) at (10,-3) {};  
\node[myvertex,fill=brown,label=below:{$d^{i,j}_{3,\inn}$}] (114) at (11,-3) {}; 
\node[myvertex,fill=brown,label=below:{$d^{i,j}_{3,\out}$}] (154) at (15,-3) {};

\node[myvertex,label=below:{$\hat{p}^{i,j}_{1,1}$}] (24) at (2,-3) {};
\node[myvertex,label=below:{$\hat{p}^{i,j}_{1,2}$}] (34) at (3,-3) {};
\node[myvertex,label=below:{$\hat{p}^{i,j}_{1,3}$}] (44) at (4,-3) {};
\node[myvertex,label=below:{$\hat{p}^{i,j}_{2,1}$}] (74) at (7,-3) {};
\node[myvertex,label=below:{$\hat{p}^{i,j}_{2,2}$}] (84) at (8,-3) {};
\node[myvertex,label=below:{$\hat{p}^{i,j}_{2,3}$}] (94) at (9,-3) {};
\node[myvertex,label=below:{$\hat{p}^{i,j}_{3,1}$}] (124) at (12,-3) {};
\node[myvertex,label=below:{$\hat{p}^{i,j}_{3,2}$}] (134) at (13,-3) {};
\node[myvertex,label=below:{$\hat{p}^{i,j}_{3,3}$}] (144) at (14,-3) {};

\foreach \i in {1,...,14}
{\pgfmathtruncatemacro{\j}{4}
  \pgfmathtruncatemacro{\z}{\i+1}
    \draw[->,>=stealth,thick] (\i\j) -- (\z\j);}


\draw[->,>=stealth,thick,color=brown] (14) -- (15);
\draw[->,>=stealth,thick,color=brown] (55) -- (54);
\draw[->,>=stealth,thick,color=brown] (64) -- (65);
\draw[->,>=stealth,thick,color=brown] (105) -- (104);
\draw[->,>=stealth,thick,color=brown] (114) -- (115);
\draw[->,>=stealth,thick,color=brown] (155) -- (154);

\draw[->,>=stealth,thick] (44) -- (35);
\draw[->,>=stealth,thick] (74) -- (85);
\draw[->,>=stealth,thick] (84) -- (85);
\draw[->,>=stealth,thick] (134) -- (135);

\node[myvertex,label=below:{$\sss$}] (88) at (0,-2.5) {};
\node[myvertex,label=below:{$\ttt$}] (99) at (16,-2.5) {};

\draw[->,>=stealth,thick] (88) -- (14);
\draw[->,>=stealth,thick] (88) -- (15);
\draw[->,>=stealth,thick] (154) -- (99);
\draw[->,>=stealth,thick] (155) -- (99);


\draw[draw=red,line width=0.3mm,fill=none] (8,2) circle[radius=3mm];

\draw[draw=red,line width=0.3mm,fill=none] (9,1) circle[radius=3mm];

\draw[draw=red,line width=0.3mm,fill=none] (13,-3) circle[radius=3mm];

\draw[draw=red,line width=0.3mm,fill=none] (13,-2) circle[radius=3mm];
\end{tikzpicture}
\end{minipage}
\caption{An illustration of the construction for $n=3$. 
The left figure is a snippet of the adjacency between the set $A_i$ and the set $B_j$ in $G$ corresponding to which the next sets of paths have been constructed in $D_1$ and $D_2$ respectively.
The first set of two paths are $\pth^A_i$ and $\pth_{i,j}$ respectively in $D_1$. The next set of two paths are $\pth^B_j$ and $\hat{\pth}_{i,j}$ in $D_2$. The brown vertices are the dummy vertices. The vertices marked with red circles are in the solution.}
\label{fig:full-construction}
\end{figure}
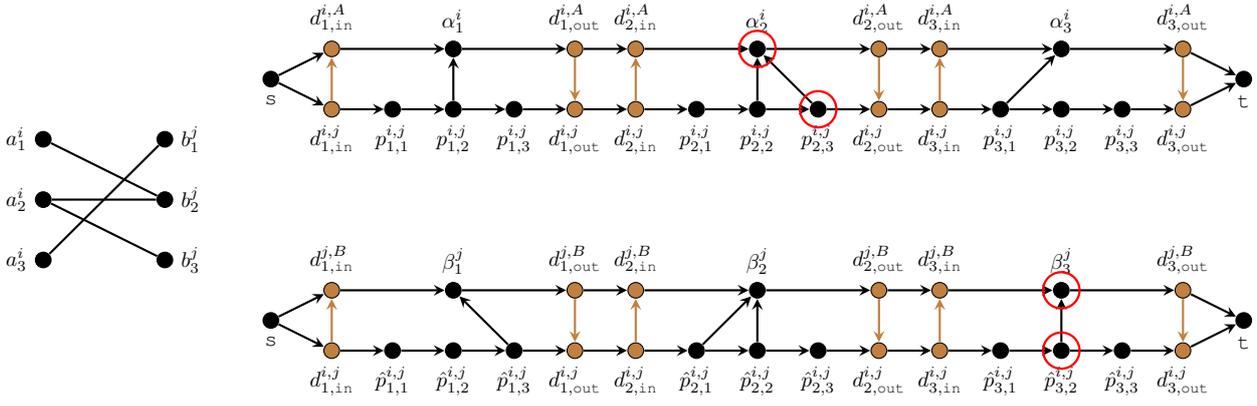

We continue by using the \W{1}-hardness of  \simsep{} for proving \W{1}-hardness of the following
two problems.

\pbDefP{Simultaneous Directed st-Cut (\simcut{})}
{Two directed graphs $D_1$ and $D_2$ with $V=V(D_1) = V(D_2)$, vertices $s,t \in V$, and an integer $k$.}
{$k$.}
{Is there a subset $X \subseteq E(D_1) \cup E(D_2)$ of size at most $k$ such that
 neither $D_1-X$ nor $D_2-X$ contain a path from $s$ to $t$?}

\pbDefP{Simultaneous Directed Feedback Arc Set (\simdfas)}
{Directed graphs $D_1$, $D_2$ with $V=V(D_1) = V(D_2)$, and an integer $k$.}
{$k$.}
{Is there a subset $X \subseteq E(D_1) \cup E(D_2)$ of size at most $k$ such that
both $D_1 - X$ and $D_2 - X$ are acyclic?}

\begin{theorem} \label{thm:sim-cut-hard}
  \simcut{} is \textnormal{\W{1}}-hard even if both input digraphs are acyclic.
\end{theorem}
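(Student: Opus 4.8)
The plan is to reduce from \simsep{}, which is \W{1}-hard on acyclic digraphs by Theorem~\ref{thm:sim-sep-hard}. The reduction is the standard ``vertex-splitting'' gadget, adapted to the simultaneous setting so that deleting a vertex in both input digraphs of a separator instance corresponds to deleting a single \emph{common} arc in the resulting cut instance, while the structural arcs of the construction are made effectively undeletable.

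Concretely, first I would preprocess the \simsep{} instance $(D_1,D_2,s,t,k)$ by deleting every arc entering $s$ and every arc leaving $t$ in each $D_i$; this changes neither the $st$-paths nor acyclicity, so we may assume $s$ is a source and $t$ a sink in both digraphs. Then build $D_1'$ and $D_2'$ on the common vertex set $\{s,t\}\cup\{v^{\mathrm{in}},v^{\mathrm{out}}: v\in V\setminus\{s,t\}\}$ as follows: for every $v\in V\setminus\{s,t\}$ add an arc $(v^{\mathrm{in}},v^{\mathrm{out}})$ lying in \emph{both} $D_1'$ and $D_2'$ (a common arc, deleted from both at unit cost); for every arc $(u,w)\in E(D_i)$ add $k+1$ parallel arcs from $u^{\mathrm{out}}$ to $w^{\mathrm{in}}$ to $D_i'$ only, with the convention $s^{\mathrm{out}}=s$ and $t^{\mathrm{in}}=t$ (if the model forbids parallel arcs, subdivide each copy and add the new vertices to both digraphs as isolated vertices). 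The new budget is $k'=k$. This map is clearly polynomial-time and preserves the parameter, so it is a valid fpt-reduction.

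For correctness I would argue three things. First, $D_1'$ and $D_2'$ are acyclic, since contracting the split arcs $(v^{\mathrm{in}},v^{\mathrm{out}})$ and identifying parallel copies turns any cycle of $D_i'$ into a cycle of $D_i$. Second, if $Z\subseteq V\setminus\{s,t\}$ is a simultaneous $st$-separator of size at most $k$, then $X=\{(v^{\mathrm{in}},v^{\mathrm{out}}):v\in Z\}$ has size at most $k$, and since every $st$-path in $D_i'$ traverses its internal vertices through their split arcs and projects to an $st$-walk in $D_i$ that $Z$ must hit, $X$ cuts $s$ from $t$ in both $D_i'$. Third, conversely, given a solution $X$ of size at most $k$, let $Z=\{v:(v^{\mathrm{in}},v^{\mathrm{out}})\in X\}$; since $X$ cannot contain all $k+1$ copies of any parallel bundle, any $st$-path in $D_i-Z$ lifts to an $st$-path in $D_i'$ using a surviving copy of each bundle and avoiding the split arcs of its internal vertices (which lie outside $Z$), contradicting that $X$ is a solution; hence $Z$ is a simultaneous $st$-separator of size at most $k$.

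I do not expect a serious obstacle here: the construction and both directions are routine. The only point requiring care is the bookkeeping of \emph{common} versus \emph{private} arcs — ensuring that a single vertex deletion in the separator instance costs exactly one unit in the cut instance (handled by making the split arcs common) while the arcs encoding the graph structure remain effectively uncuttable within budget $k$ (handled by the $k+1$ parallel copies). Acyclicity of the output is inherited from the input for free, so the claim ``even if both input digraphs are acyclic'' needs no additional argument.
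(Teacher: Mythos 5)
Your proof is correct and takes essentially the same route as the paper: a vertex-splitting reduction from \simsep{} in which the splitting arcs are the only deletable (and shared) arcs, while the structural arcs are made effectively undeletable via $k+1$ copies (subdivided), with acyclicity inherited from the acyclic \simsep{} instances. The only cosmetic difference is that you reduce from a generic acyclic \simsep{} instance (handling $s,t$ by not splitting them), whereas the paper applies the same splitting operation directly to the graphs $D_1,D_2$ built in its \mbicl{}-based hardness construction for \simsep{}.
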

\iflong
\begin{proof}
  We use the \W{1}-hardness of \simsep{} (Theorem~\ref{thm:sim-sep-hard})
  and the construction in the proof
as our basis.
  Hence, assume that we have started with an
  instance $(G,k)$ of \mbicl{} and used the construction
for computing two acyclic directed graphs $D_1$ and $D_2$.
We perform the following operations
on $D_1$ and $D_2$.

  \begin{enumerate}
  \item
  Split
  each vertex $v$ in the graph into two vertices $v^-$ and $v^+$. 
  
  \item
  Add an arc
  $(v^-,v^+)$, which we refer to as the {\em splitting arc}. 

  \item
  Let all
  in-neighbours of $v$ be in-neighbours of $v^-$ and all 
  out-neighbours of $v$ be out-neighbours of $v^+$. 

  \item
  Make all arcs in the graph, except the splitting arcs,
  undeletable. This can be achieved by making $k'+1$ copies of each of
  these arcs and subdividing them once.
\end{enumerate}

  Let $D'_1,D'_2$ denote the resulting graphs
  and note that they are acyclic.
  It is now easy to see that $(D'_1,D'_2,k)$ is a yes-instance of \simcut{}
  if and only if $(G,k)$ is a yes-instance
  of \mbicl{}.
  We conclude that \simcut{} is \W{1}-hard.
\end{proof}
\fi

\begin{theorem}\label{thm:sim-dfas-hard}
  \simdfas{} is \textnormal{\W{1}}-hard.
\end{theorem}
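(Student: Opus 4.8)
The statement to prove is that \simdfas{} is \W{1}-hard, so I will give an fpt-reduction from \simcut{}, using the fact (Theorem~\ref{thm:sim-cut-hard}) that \simcut{} is \W{1}-hard \emph{even when both input digraphs are acyclic}. That acyclicity hypothesis is exactly what makes the reduction work: a feedback arc set must destroy \emph{every} directed cycle, not merely the cycles passing through a prescribed pair of terminals, so without it one cannot hope to turn a cut problem into a DFAS problem. The reduction is the directed analogue of the classical ``add a back-arc from $t$ to $s$'' trick for converting minimum $st$-cut into feedback arc set, adapted to the simultaneous setting.

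\textbf{Construction.} Given an instance $(D_1, D_2, s, t, k)$ of \simcut{} with $D_1, D_2$ acyclic and $V := V(D_1) = V(D_2)$, I will build $D_1', D_2'$ as follows. Add $k+1$ fresh common vertices $w_1, \dots, w_{k+1}$, and for each $j \in [k+1]$ add the arcs $(t, w_j)$ and $(w_j, s)$ to \emph{both} $D_1'$ and $D_2'$. Thus $V(D_1') = V(D_2') = V \cup \{w_1,\dots,w_{k+1}\}$, and each $D_i'$ contains $k+1$ internally vertex-disjoint directed paths from $t$ to $s$; I will call these the \emph{return paths} and their arcs the \emph{return arcs}. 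The output is the \simdfas{} instance $(D_1', D_2', k)$. This is computable in polynomial time and leaves the parameter unchanged, so it is an fpt-reduction provided it is correct.

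\textbf{Correctness.} The pivotal observation, which I would state and prove first, is structural: since $D_i$ is acyclic and each $w_j$ admits only the arcs $t \to w_j \to s$, every simple directed cycle of $D_i'$ decomposes as a directed $(s,t)$-path inside $D_i$ followed by exactly one return path. Because the $k+1$ return paths are internally disjoint, no arc set of size at most $k$ can destroy all of them. From here the two directions are routine. For $(\Rightarrow)$: a solution $X \subseteq E(D_1) \cup E(D_2)$ to \simcut{} contains no return arcs and leaves no $(s,t)$-path in $D_i - X$, hence no cycle in $D_i' - X$. For $(\Leftarrow)$: given a \simdfas{} solution $X$ with $|X| \le k$, set $X^\circ := X \setminus \{\text{return arcs}\} \subseteq E(D_1) \cup E(D_2)$, so $|X^\circ| \le k$; if $D_i - X^\circ$ still contained an $(s,t)$-path $P$, then (using $X \cap E(D_i) \subseteq X^\circ$, since return arcs lie outside $E(D_i)$) $P$ would survive in $D_i' - X$, and concatenating $P$ with one of the return paths that survives in $D_i' - X$ would give a cycle, contradicting acyclicity; hence $X^\circ$ is a simultaneous $st$-cut. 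Invoking \W{1}-hardness of \simcut{} on acyclic inputs then yields the theorem.

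\textbf{Main obstacle.} There is no deep difficulty here; the only thing needing care is the bookkeeping about membership of arcs in $E(D_1), E(D_2), E(D_1'), E(D_2')$ — in particular that the shared return arcs are genuinely undeletable within budget $k$, and, in the $(\Leftarrow)$ direction, that passing from $X$ to $X^\circ$ can only \emph{remove} cycles from $D_i'$ and never manufacture an $(s,t)$-path in $D_i$ that was not already present in $D_i' - X$. Making the cycle-decomposition claim precise (restricting attention to simple cycles, and checking that the $s$-to-$t$ portion cannot revisit a return vertex) is the one spot where I would be slightly careful in the write-up.
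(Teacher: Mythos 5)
Your proposal is correct and follows essentially the same route as the paper: the paper also starts from the acyclic instance behind Theorem~\ref{thm:sim-cut-hard} and adds an undeletable back arc $(t,s)$ to both digraphs, so that hitting all cycles is equivalent to hitting all $st$-paths. Your $k+1$ internally disjoint return paths are just an explicit implementation of that undeletability (the same copying/subdivision device the paper uses), so there is no substantive difference.
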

\iflong
\begin{proof}
Recall the proof of Theorem~\ref{thm:sim-cut-hard}.
The graphs $D_1,D_2$ that are constructed in the proof are both
acyclic.
  To each of these graphs,
  add the arc $(\ttt, \sss)$ and make it undeletable.
  Then all directed cycles in $D_1$ and $D_2$ must use this undeletable
  arc $(\ttt,\sss)$. Thus, a set of arcs hits all cycles in $D_1$ and $D_2$ if and
  only if it hits all $\sss\ttt$-paths in $D_1$ and $D_2$.
\end{proof}
\fi

\iflong
\subsection{Intractable Fragments Containing $\m$} \label{ssec:me-ms-hard}
\fi

\ifshort
\subsection{Intractable Fragments} \label{ssec:me-ms-hard}
\fi

\iflong
We will now show that the problems $\mincsp{\m, \e}$ and $\mincsp{\m,
  \s}$ are \W{1}-hard. 
\fi
\ifshort
We begin by proving that $\mincsp{\m, \e}$ and $\mincsp{\m,
  \s}$ are \W{1}-hard. 
\fi
We introduce two
binary relations:
let $\e^-$ denote the \emph{left-equals} relation and $\e^+$ denote the
\emph{right-equals} relation, which hold for any pair of intervals
with matching left endpoints and right endpoints, respectively.
Both relations can be implemented using only $\m$ as follows:
$\{ z \m x, z \m y \}$ implements $x \e^- y$
where $z$ is a fresh variable; similarly, 
 $\{ x \m z, y \m z \}$ implements $x \e^+ y$.
Thus, we may assume that relations $\e^-$ and $\e^+$ are available whenever we have access to the $\m$ relation.
We are now ready to present the reduction for $\mincsp{\m,
  \e}$, which will be from the \pc{} problem that was shown to be \W{1}-hard in Lemma\iflong~\ref{lem:pc-hard}\fi\ifshort~\ref{lem:pcpcfas-hard}\fi.
\iflong The reduction uses a gadget that is also used in the proof of Theorem~6.2~in~\cite{Dabrowski:etal:soda2023} 
(see also Figure\iflong~\ref{fig:gadget}\fi\ifshort~\ref{fig:gadget-short}\fi).\fi

\iflong
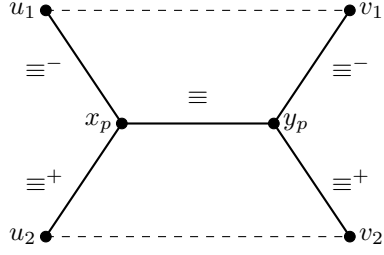
\begin{figure}
  \centering
  \begin{tikzpicture}
    \coordinate (u1) at (0,3);
    \coordinate (v1) at (4,3);
    \coordinate (u2) at (0,0);
    \coordinate (v2) at (4,0);
    \coordinate (xp) at (1,3/2);
    \coordinate (yp) at (3,3/2);
    
    \filldraw[black] (u1) circle (2pt) node[anchor=east]{$u_1$};
    \filldraw[black] (v1) circle (2pt) node[anchor=west]{$v_1$};
    \filldraw[black] (u2) circle (2pt) node[anchor=east]{$u_2$};
    \filldraw[black] (v2) circle (2pt) node[anchor=west]{$v_2$};
    \filldraw[black] (xp) circle (2pt) node[anchor=east]{$x_p$};
    \filldraw[black] (yp) circle (2pt) node[anchor=west]{$y_p$};
    
    \draw[dashed] (u1) -- (v1);
    \draw[dashed] (u2) -- (v2);
    \draw[thick] (u1) -- node[midway, label=left:$\e^{-}$] () {} (xp);
    \draw[thick] (u2) -- node[midway, label=left:$\e^+$] () {} (xp);
    \draw[thick] (xp) -- node[midway, label=above:$\e$] () {} (yp);
    \draw[thick] (yp) -- node[midway, label=right:$\e^-$] () {} (v1);
    \draw[thick] (yp) -- node[midway, label=right:$\e^+$] () {} (v2);
\end{tikzpicture}
  \caption{Gadget from Theorem~\ref{thm:me-hard}.}
  \label{fig:gadget}
\end{figure} 
\fi

\ifshort
\begin{figure}
  \centering
  \begin{minipage}{0.5\textwidth}
  \begin{tikzpicture}
    \coordinate (u1) at (0,3);
    \coordinate (v1) at (4,3);
    \coordinate (u2) at (0,0);
    \coordinate (v2) at (4,0);
    \coordinate (xp) at (1,3/2);
    \coordinate (yp) at (3,3/2);
    
    \filldraw[black] (u1) circle (2pt) node[anchor=east]{$u_1$};
    \filldraw[black] (v1) circle (2pt) node[anchor=west]{$v_1$};
    \filldraw[black] (u2) circle (2pt) node[anchor=east]{$u_2$};
    \filldraw[black] (v2) circle (2pt) node[anchor=west]{$v_2$};
    \filldraw[black] (xp) circle (2pt) node[anchor=east]{$x_p$};
    \filldraw[black] (yp) circle (2pt) node[anchor=west]{$y_p$};
    
    \draw[dashed] (u1) -- (v1);
    \draw[dashed] (u2) -- (v2);
    \draw[thick] (u1) -- node[midway, label=left:$\e^{-}$] () {} (xp);
    \draw[thick] (u2) -- node[midway, label=left:$\e^+$] () {} (xp);
    \draw[thick] (xp) -- node[midway, label=above:$\e$] () {} (yp);
    \draw[thick] (yp) -- node[midway, label=right:$\e^-$] () {} (v1);
    \draw[thick] (yp) -- node[midway, label=right:$\e^+$] () {} (v2);
\end{tikzpicture}
  \end{minipage}%
  \begin{minipage}{0.5\textwidth}
\begin{tikzpicture}[%
      scale=.8,
      transform shape,
      myvertex/.style={circ,scale=2}
      ]

      \draw
      node[myvertex,label=below:{$u$}] (u) {}
      node[myvertex, right of=u, label=below:{$h_{u,v}$}] (h1) {}
      node[myvertex, right of=h1, label=below:{$v$}] (v) {}
      ;
      
      \draw[thick,-latex] (u) -- node[midway, label=above:$\p$] () {} (h1);
      \draw[thick,-latex] (h1) -- node[midway, label=above:$\d$] () {}
      (v);
      \draw
      (u.west) +(-1cm, 0cm) node[anchor=east] (c1) {$(u,v) \in A_1\setminus A_2$}
      ;
      
      \draw
      (u) +(0cm,-1.5cm)  node[myvertex,label=below:{$u$}] (u1) {}
      node[myvertex, right of=u1, label=below:{$h_{u,v}$}] (h11) {}
      node[myvertex, right of=h11, label=below:{$v$}] (v1) {}
      ;
      
      \draw[thick,-latex] (h11) -- node[midway, label=above:$\p$] () {} (u1);
      \draw[thick,-latex] (h11) -- node[midway, label=above:$\d$] () {}
      (v1);
      \draw
      (u1.west) +(-1cm, 0cm) node[anchor=east] (c2) {$(u,v) \in A_2\setminus A_1$}
      ;
      
      \draw
      (u1) +(0cm,-1.5cm)  node[myvertex,label=below:{$u$}] (u2) {}
      node[myvertex, right of=u2, label=below:{$v$}] (v2) {}
      ;
      
      \draw[thick,-latex] (u2) -- node[midway, label=above:$\d$] () {} (v2);
      \draw
      (u2.west) +(-1cm, 0cm) node[anchor=east] (c2) {$(u,v) \in A_2\cap A_1$}
      ;

    \end{tikzpicture}
    \end{minipage}
    \caption{\textbf{Left:} Gadget from Theorem~\ref{thm:me-hard}. \textbf{Right:} The variables and constraints introduced in the
      construction of the proof of Theorem~\ref{thm:pdWh} for every
      arc $(u,v)$ of the \simdfas{} instance.}
      \label{fig:gadget-short}
\end{figure}
\fi

\begin{theorem} \label{thm:me-hard}
  \mincsp{\m, \e} is \textnormal{\W{1}}-hard.
\end{theorem}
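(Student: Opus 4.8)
The plan is a parameter-preserving reduction from \pc{}, which is \W{1}-hard by Lemma~\ref{lem:pc-hard}, using the gadget shown in Figure~\ref{fig:gadget}. Recall that over $\{\m\}$ we can implement the relations $\e^-$ and $\e^+$ (by $\{z\m x,z\m y\}$ and $\{x\m z,y\m z\}$ respectively) and that $\{\m\}$ supports crisp $\m$-constraints by Lemma~\ref{lem:crisp}, so crisp $\e^-$- and $\e^+$-constraints are available as well. Given a \pc{} instance $(G_1,G_2,s_1,t_1,s_2,t_2,\bundles,k)$, I would construct a $\mincsp{\m,\e}$ instance with one interval variable $X_v$ per vertex $v\in V(G_1)$, one interval variable $Y_w$ per vertex $w\in V(G_2)$, and two fresh variables $x_p,y_p$ per pair $p\in\bundles$; the left endpoints of the $X_v$ will encode connectivity in $G_1$ and the right endpoints of the $Y_w$ will encode connectivity in $G_2$. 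For each pair $p=\{e_1,e_2\}\in\bundles$ with $e_1=u_1v_1\in E(G_1)$ and $e_2=u_2v_2\in E(G_2)$, add the gadget of Figure~\ref{fig:gadget}: crisp constraints $u_1\e^- x_p$, $u_2\e^+ x_p$, $y_p\e^- v_1$, $y_p\e^+ v_2$, together with a single soft constraint $x_p\e y_p$; an edge of $G_i$ that belongs to no pair (there are none in the instances produced by Lemma~\ref{lem:pc-hard}) gets instead a crisp $\e^-$ (if in $G_1$) or $\e^+$ (if in $G_2$) between its endpoints. Finally add the crisp constraints $s_1\m t_1$ and $t_2\m s_2$ and keep the budget $k$.

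The heart of the argument is the behaviour of the gadget. While $x_p\e y_p$ is present, the crisp constraints force $u_1^-=x_p^-=y_p^-=v_1^-$ and $u_2^+=x_p^+=y_p^+=v_2^+$, so the gadget transmits left-endpoint equality along $e_1$ and right-endpoint equality along $e_2$; deleting the single soft constraint $x_p\e y_p$ at cost one severs both transmissions at once, which is exactly what ``using the pair $p$'' should do. Since the only soft constraints of the instance are the $x_p\e y_p$ (deleting constraints inside the crisp gadgets is useless by the definition of ``supports crisp constraints''), a solution of size at most $k$ corresponds to a set $X\subseteq\bundles$ with $|X|\le k$ of deleted pairs, and I would show that the residual instance is satisfiable iff $X_1$ is an $\{s_1,t_1\}$-cut in $G_1$ and $X_2$ is an $\{s_2,t_2\}$-cut in $G_2$. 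If some path from $s_1$ to $t_1$ survived $G_1-X_1$, the $\e^-$-constraints transmitted along it would force $s_1^-=t_1^-$, contradicting $s_1\m t_1$ (which forces $s_1^-<s_1^+=t_1^-$); symmetrically $t_2\m s_2$ (which forces $t_2^+=s_2^-<s_2^+$) rules out a surviving path from $s_2$ to $t_2$. Conversely, from cuts $X_1,X_2$ I would build a satisfying assignment by assigning to the left endpoints of the vertices of each connected component of $G_1-X_1$ a distinct \emph{negative} value, to the right endpoints of the vertices of each component of $G_2-X_2$ a distinct \emph{positive} value, setting $x_p=[X_{u_1}^-,Y_{u_2}^+]$ and $y_p=[X_{v_1}^-,Y_{v_2}^+]$ (well-formed intervals, since a negative number precedes a positive one), and choosing the remaining free endpoints so that $s_1\m t_1$ and $t_2\m s_2$ are satisfied.

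The step I expect to be the main obstacle is the ``conversely'' direction, specifically verifying that the subsystem governing the left endpoints of the $X_v$ and the subsystem governing the right endpoints of the $Y_w$ are genuinely independent --- each soft constraint $x_p\e y_p$ couples two left-values or two right-values, never a left-value with a right-value --- so that they can be placed on opposite sides of $0$, every auxiliary interval $x_p,y_p$ comes out valid, and the whole assignment stays consistent with the two terminal-separating constraints $s_1\m t_1$ and $t_2\m s_2$. The rest is bookkeeping: the reduction is clearly polynomial-time and maps $k$ to $k$, so \W{1}-hardness of \pc{} transfers to $\mincsp{\m,\e}$.
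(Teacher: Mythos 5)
Your proposal is correct and takes essentially the same route as the paper: the same reduction from \textsc{Paired Cut}, the same gadget with one soft $x_p \e y_p$ per bundle and crisp $\e^-$/$\e^+$ links, and the same correspondence between deleted pairs and deleted soft constraints, with surviving paths refuted via the terminal $\m$-constraints. The only differences are cosmetic: the paper chains the terminals by $s_1\m t_1$, $t_1\m s_2$, $s_2\m t_2$ and assigns them $[0,1],\dots,[3,4]$ (rather than your negative/positive split of left and right endpoints) and uses crisp $\e$ for non-bundled edges, neither of which changes the argument.
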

\begin{proof}
  Let $(G_1, G_2, s_1, t_1, s_2, t_2, \bundles, k)$
  be an arbitrary instance of \textsc{Paired Cut}.
  We construct an instance $(\III, k)$ of $\mincsp{\m, \e}$ with 
  the same parameter.
  Start by introducing a variable for every vertex of $V(G_1) \cup V(G_2)$ and adding a crisp constraint \ifshort $u \e v$ \fi
  \iflong \begin{equation} \label{eq:crisp-edges}
    u \e v
  \end{equation}\fi
  for every edge $uv \in E(G_1)\cup E(G_2)$ with $uv \notin \bigcup \bundles$, i.e.
  edges that are not included in any bundle pair and are thus undeletable.
  We also add the following crisp constraints: \ifshort $s_1 \m t_1$, $t_1 \m s_2$, $s_2 \m t_2$.\fi
  \iflong \begin{equation} \label{eq:sts}
    s_1 \m t_1, t_1 \m s_2, s_2 \m t_2.
  \end{equation}\fi
  For every pair $p \in \bundles$ with edges
  $(u_i, v_i)$ for $i \in \{1,2\}$, 
  introduce two new variables $x_p$ and $y_p$, 
  and add the following constraints as illustrated in Figure\iflong~\ref{fig:gadget}\fi\ifshort~\ref{fig:gadget-short}\fi:
  \ifshort
  crisp constraints $u_1 \e^- x_p$ and $u_2 \e^+ x_p$, crisp constraints $v_1 \e^- y_p$ and $v_2 \e^+ y_p$, soft constraint $x_p \equiv y_p$.
  \fi
  \iflong
  \begin{itemize}
  \item crisp constraints $u_1 \e^- x_p$ and $u_2 \e^+ x_p$,
  \item crisp constraints $v_1 \e^- y_p$ and $v_2 \e^+ y_p$,
  \item soft constraint $x_p \equiv y_p$.
  \end{itemize}
  \fi
  This completes the reduction.
  Clearly, it can be implemented in polynomial time.
  The equivalence of the instances $(G_1, G_2, s_1, t_1, s_2, t_2,
  \bundles, k)$ and $(\III, k)$ now follows relatively
  straightforwardly after the observation that removing a soft
  constraint $x_p\e y_p$ corresponds to removing any (or both) edge(s) in bundle $p$.
\iflong
  We claim that $(G_1, G_2, s_1, t_1, s_2, t_2, \bundles, k)$ is a yes-instance
  of \textsc{Paired Cut} if and only if $(\III, k)$ is a yes-instance
  of \mincsp{\m, \e}. 

\smallskip

\noindent
  {\bf Forward direction.} Suppose
  $X \subseteq \bundles$ is a subset of $k$ pairs
  such that $X_i = \{ e_i : (e_1, e_2) \in X \}$
  is an $\{s_i,t_i\}$-cut in $G_i$ 
  for $i \in \{1,2\}$.
  Let $X'$ be the set of constraints that contains
  $x_p \e y_p$ for all $p \in X$.
  We claim that $I - X'$ is satisfied by 
  assignment $\alpha$ constructed as follows.
  First, set 
  \begin{gather*}
    \alpha(s_1) = [0,1], \\
    \alpha(t_1) = [1,2], \\
    \alpha(s_2) = [2,3], \\
    \alpha(t_2) = [3,4].
  \end{gather*}
  For every vertex $u_i \in V(G_i)$, set 
  \begin{equation*}
    \alpha(u_i) = \begin{cases}
      \alpha(s_i) & \text{there is an } s_iu_i\text{-path in } G_i - X, \\
      \alpha(t_i) & \text{otherwise}.
    \end{cases}
  \end{equation*}
  For every pair $p \in \bundles$ consisting of two edges
  $u_i v_i$ for $i \in \{1,2\}$, assign 
  \begin{gather*}
    \alpha(x_p) = [\alpha(u_1)^-, \alpha(u_2)^+], \\
    \alpha(y_p) = [\alpha(v_1)^-, \alpha(v_2)^+].    
  \end{gather*}
  Observe that $\alpha(u_1)^-,\alpha(v_1)^- \leq 1$
  and $\alpha(u_2)^+,\alpha(v_2)^+ \geq 2$,
  so $\alpha$ assigns non-empty intervals to all variables.
  By definition, $\alpha$ satisfies all crisp constraints in $I$.
  All soft constraints are of the form $x_p \e y_p$ for some pair $p \in \bundles$.
  If such a constraint is not in $X'$, then the edges 
  $u_1 v_1$ and $u_2 v_2$ are present in $G - X$.
  Then $\alpha(u_i) = \alpha(v_i)$ for $i \in \{1,2\}$ so
  $\alpha(x_p) = \alpha(y_p)$.
  This implies that $\alpha$ can leave at most $|X| \leq k$ 
  constraints unsatisfied.

\smallskip

\noindent
{\bf Backward direction.}
  Suppose $Y$ is a set of 
  at most $k$ soft constraints such that $I - Y$ is consistent.
  Let $Y' \subseteq \bundles$ contain all pairs of edges $p$
  such that the constraint $x_p \e y_p$ is in $Y$.
  We claim that $Y'$ is an $st$-cut in $G_1$ and $G_2$.
  For the sake of contradiction, assume there is a path
  in $G_1 - Y'$ from $s_1$ to $t_1$.
  This path corresponds to a chain of constraints in $I$
  linking $s_1$ and $t_1$ by $\e$-constraints and $\e^{-}$-constraints.
  This implies that the left endpoints of $s_1$ and $t_1$ match and this
  contradicts the crisp constraint $s_1 \m t_1$. 
  The case with $G_2$, $s_2$, $t_2$ is similar.
  \fi
\end{proof}

We continue by showing that $\mincsp{\m, \s}$ is \W{1}-hard.
First note even though we no longer have access to $\e$, 
we can add constraints
$x \e^- y$ and $x \e^+ y$ which imply $x \e y$.
As previously, relations $\e^-$ and $\e^+$ can be implemented using only $\m$.
We remark that $\{x \e^- y, x \e^+ y\}$ is \emph{not} an implementation of $\e$,
so we can only use $\e$ in crisp constraints.
Our reduction is based on the
\pcfas{} problem, which was shown to be \W{1}-hard in Lemma\iflong~\ref{lem:pcfas-hard}\fi\ifshort~\ref{lem:pcpcfas-hard}\fi.
While the reduction is quite similar to the reduction for $\mincsp{\m, \e}$, it is non-trivial to replace the
role of $\e$ with $\s$. \iflong We therefore provide the reduction here in its entirety.\fi

\begin{theorem}
  \mincsp{\m, \s} is \textnormal{\W{1}}-hard.
\end{theorem}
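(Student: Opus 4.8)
\emph{Proof proposal.} The plan is to give an fpt-reduction from \pcfas{}, which is \W{1}-hard by Lemma~\ref{lem:pcfas-hard}. Given an instance consisting of an undirected graph $G_1$ with terminals $s,t$, a digraph $G_2$, a set $\bundles \subseteq E(G_1) \times E(G_2)$ of disjoint edge pairs, and parameter $k$, I will construct an instance $(\III,k)$ of $\mincsp{\m,\s}$ whose only soft constraints are in bijection with $\bundles$. As preparation, recall that crisp $\m$-constraints are available by Lemma~\ref{lem:crisp}, and that crisp $\e^-$ and $\e^+$ can be built from them: to force $x^-=y^-$ crisply, add a fresh variable $z$ with crisp constraints $z \m x$, $z \m y$ (so $z^+=x^-$ and $z^+=y^-$, hence $x^-=y^-$ always), and symmetrically $x \m z$, $y \m z$ force $x^+=y^+$; together they yield a crisp $\e$-constraint. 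Note also that the relation $\s$ itself behaves, on a directed arc, like equality on left endpoints combined with a strict ``$<$'' on right endpoints: $a \s b$ enforces $a^-=b^-$ and $a^+<b^+$.

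\emph{The reduction.} Introduce a variable for each vertex of $G_1$ and each vertex of $G_2$ (the two vertex sets being disjoint). For every non-bundle edge $uv\in E(G_1)$ add a crisp constraint $u \e v$, and add a crisp constraint $s \m t$. For every non-bundle arc $(a,b)$ of $G_2$ add a crisp constraint $a \s b$ (in the canonical \pcfas{} instance the only such arc is the back arc used to turn an $st$-cut requirement into a feedback arc set requirement). For every bundle $p=(e_1,e_2)\in\bundles$ with $e_1=u_1v_1\in E(G_1)$ and $e_2=(u_2,v_2)\in A(G_2)$, add fresh variables $x_p,y_p$, the crisp constraints $u_1 \e^- x_p$, $v_1 \e^- y_p$, $u_2 \e^+ x_p$, $v_2 \e^+ y_p$, and a single soft constraint $x_p \s y_p$. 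The point is that when $x_p \s y_p$ is present it forces $x_p^-=y_p^-$, hence $u_1^-=v_1^-$ (this plays the role of the undirected edge $e_1$, acting on left endpoints), and $x_p^+<y_p^+$, hence $u_2^+<v_2^+$ (this plays the role of the arc $e_2$ in $G_2$, acting on right endpoints); when $x_p \s y_p$ is deleted, no relation between $u_1,v_1$ or between $u_2,v_2$ is imposed. Deleting the soft constraint thus corresponds exactly to choosing the bundle $p$, i.e.\ to deleting $e_1$ from $G_1$ and $e_2$ from $G_2$. The reduction is polynomial and keeps the parameter equal to $k$.

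\emph{Correctness.} For the forward direction, given $X\subseteq\bundles$ with $|X|\le k$ such that $X_1$ is an $st$-cut in $G_1$ and $G_2-X_2$ is acyclic, delete the soft constraints $\{x_p \s y_p : p\in X\}$ and build a satisfying assignment: assign the left endpoint of each $G_1$-vertex one of two well-separated values according to whether it is reachable from $s$ in $G_1-X_1$ (choosing $s^+=t^-$ to honour $s \m t$); assign right endpoints of $G_2$-vertices by a topological order of $G_2-X_2$, shifted far to the right so that all intervals stay non-empty, and fix their left endpoints at a common small value; set $x_p=[u_1^-,u_2^+]$ and $y_p=[v_1^-,v_2^+]$; and extend to the auxiliary variables of the crisp gadgets using Lemma~\ref{lem:crisp}. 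A kept soft constraint $x_p \s y_p$ ($p\notin X$) is satisfied because an undeleted edge of $G_1$ cannot cross the cut (so $u_1,v_1$ have equal reachability status, giving $u_1^-=v_1^-$) and $(u_2,v_2)$ is an arc of $G_2-X_2$ (so $u_2$ precedes $v_2$ topologically, giving $u_2^+<v_2^+$). For the backward direction, given $Y$ with $|Y|\le k$ and $\III-Y$ satisfiable, put $Y'=\{p : x_p \s y_p\in Y\}$. If some $s$–$t$ path survives in $G_1-Y'_1$, the crisp $\e$-constraints on its non-bundle edges and the kept gadgets on its bundle edges force $s^-=t^-$, contradicting $s \m t$ together with $s^-<s^+$; hence $Y'_1$ is an $st$-cut. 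If some directed cycle survives in $G_2-Y'_2$, the crisp $\s$-constraints on its non-bundle arcs and the kept gadgets on its bundle arcs force a strictly increasing cyclic chain of right endpoints, a contradiction; hence $G_2-Y'_2$ is acyclic. Thus $(G_1,G_2,s,t,\bundles,k)$ is a yes-instance of \pcfas{} iff $(\III,k)$ is a yes-instance of $\mincsp{\m,\s}$, and \W{1}-hardness follows.

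\emph{Main obstacle.} The genuine difference from the $\mincsp{\m,\e}$ case is that $\s$ is asymmetric: it couples an equality on one endpoint with a strict inequality on the other. The gadget must therefore route the equality to the undirected side $G_1$ and the strict inequality to the directed side $G_2$ through the crisp $\e^-$/$\e^+$ constraints, and one must verify both that these propagations compose into precisely the intended relations and that the interval-valued assignment produced in the forward direction never collapses to an empty interval — which is handled by placing the two ``halves'' of the construction on widely separated parts of the line.
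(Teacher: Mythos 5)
Your reduction is correct and follows essentially the same route as the paper's proof: a parameter-preserving reduction from \pcfas{} in which $G_1$ is encoded on left endpoints via crisp ($\e$-type) constraints, $G_2$ on right endpoints via $\s$ and crisp $\e^+$ constraints, and each bundle is represented by two auxiliary variables tied to the bundle's endpoints with a single soft $\s$-constraint. The only differences are cosmetic wiring choices (you attach the $G_1$-side via crisp $\e^-$ where the paper uses crisp $\s$-constraints to the auxiliaries, and you omit the paper's extra crisp $t \m v$ constraints, compensating with the well-separated placement in the forward direction), which do not affect the argument.
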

\iflong
\begin{proof}
  Let $(G_1, G_2, s, t, \bundles, k)$ be an arbitrary instance of
  \pcfas{} and let $\bundles_i=\{ e_i \mid (e_1,e_2) \in \bundles \}$
  for every $i \in \{1,2\}$. We construct an instance $(\III, k)$ of
  $\mincsp{\m, \s}$ with the same parameter.
  
  We start by introducing a variable for every vertex of $V(G_1) \cup V(G_2)$
  and add the following crisp constraints:
  \begin{itemize}
  \item $u \e v$ for every $uv \in E(G_1)\setminus \bundles_1$,
  \item $u \s v$ for every $(u,v) \in E(G_2)\setminus \bundles_2$,
  \item $s \m t$,
  \item $t \m v$ for every $v \in V(G_2)$.
  \end{itemize}
  Moreover, for every bundle $b=(xy, (u,v)) \in \bundles$, we introduce
  two fresh variables $z_b$ and $z_b'$ and the following constraints:
  \begin{itemize}
  \item crisp constraints $x \s z_b$ and $y \s z_b'$,
  \item crisp constraints $z_b \e^+ u$ and $z_b' \e^+ v$ and
  \item the soft constraint $ z_b \s z_b'$.
  \end{itemize}
  This completes the reduction.
  Clearly, it can be implemented in polynomial time.
  We claim that $(G_1, G_2, s, t, \bundles, k)$ is a yes-instance
  of \pcfas{} if and only if $(\III, k)$ is a yes-instance
  of \mincsp{\m, \s}. 

\smallskip

\noindent
{\bf Forward direction.} Suppose
  $X \subseteq \bundles$ is a subset of $k$ pairs
  such that $X_1$
  is an $st$-cut in $G_1$ and $X_2$ is a FAS for $G_2$, where
  $X_i=\{ e_i \mid (e_1,e_2) \in X\}$. 
  Let $X'$ be the set of constraints that contains
  $z_b \s z_b'$ for all $b \in X$.
  We claim that $I - X'$ is satisfied by 
  assignment $\alpha$ constructed as follows.
  First, set 
  \begin{gather*}
    \alpha(s) = [0,1], \\
    \alpha(t) = [1,2], 
  \end{gather*}
  For every vertex $u \in V(G_1)$, set 
  \begin{equation*}
    \alpha(u) = \begin{cases}
      \alpha(s) & \text{there is an } su\text{-path in } G_1 - X_1, \\
      \alpha(t) & \text{otherwise}.
    \end{cases}
  \end{equation*}
  We know that $G_2-X_2$ is acyclic. Hence, there is an ordering $u_1,\dotsc,u_\ell$ of the
  vertices in $G_2$ such that there is no arc in $G_2-X_2$ from $u_i$ to $u_j$ for any $j<i$. Then, we set $\alpha(u_i)=[2,2+i]$ for every $i$ with $1 \leq i \leq \ell$.
  For every pair $b=(xy,uv) \in \bundles$, assign 
  \begin{gather*}
    \alpha(z_b) = [\alpha(x)^-, \alpha(u)^+], \\
    \alpha(z_b') = [\alpha(y)^-, \alpha(v)^+].    
  \end{gather*}

  By definition, $\alpha$ satisfies all crisp constraints in $I$.
  All soft constraints are of the form $z_b \s z_b'$ for some pair $b=(xy,(u,v) \in \bundles$.
  If such a constraint is not in $X'$, then the edge 
  $xy$ is present in $G_1 - X_1$.
  Therefore, $\alpha(x) = \alpha(y)$ and $\alpha(z_b)^-=\alpha(z_b')^-$. Similarily,
  the arc $(u,v)$ is present in $G_2-X_2$, which implies that
  $\alpha(u)^+< \alpha(v)^+$ and therefore $\alpha(z_b)^+ <\alpha(z_b')^+$. Therefore,
  the assignment $\alpha$ satisfies the soft constraint $z_b \s z_b'$.

  \medskip

\noindent
  {\bf Backward direction.}
  Suppose $Y$ is a set of 
  at most $k$ soft constraints such that $\III - Y$ is consistent.
  Let $Y' \subseteq \bundles$ contain all pairs of edges $b=(xy,(u,v)\})$
  such that constraint $z_b \s z_b'$ is in $Y$. Let $Y_i'=\{ e_i \mid (e_1,e_2)\}$.
  We start by showing that $Y_1'$ is an $st$-cut in $G_1$.
  For the sake of contradiction, assume there is a path
  in $G_1 - Y_1'$ from $s$ to $t$.
  This path corresponds to a chain of constraints in $I$
  linking $s$ and $t$ by $\e$-constraints and $\s$-constraints.
  This implies that the left endpoints of $s$ and $t$ match and this
  contradicts the crisp constraint $s \m t$.

  We finish by showing that $Y_2'$ is a FAS for $G_2$.
  Suppose to the contrary that this is not the case and let $C$ be a
  directed cycle in $G_2-Y_2'$.
  The cycle $C$ corresponds to a cycle of constraints in $I$
  linked by $\s$-constraints and $\e^+$-constraints.
  The constraints implied on the right endpoints of the intervals on $C$
  are not satisfiable since there is at least one $\s$-constraint on the cycle,
  hence $C$ is inconsistent.
\end{proof}
\fi

\iflong
\subsection{Intractable Fragments Containing $\d$, $\o$ or $\p$} 
\fi
\label{ssec:do-dp-op-hard}

\iflong
The following three theorems, which show \W{1}-hardness of
$\mincsp{\d, \p}$, $\mincsp{\d, \o}$, and $\mincsp{\p, \o}$,
are all based on fairly similar parameterized reductions from \simdfas{} (which is a
\W{1}-hard problem by Theorem~\ref{thm:sim-dfas-hard}). 
\fi
\ifshort
We finally show the \W{1}-hardness of
$\mincsp{\d, \p}$, $\mincsp{\d, \o}$, and $\mincsp{\p, \o}$
via parameterized reductions from \simdfas{} (which is a
\W{1}-hard problem by Theorem~\ref{thm:sim-dfas-hard}). 
\fi
\iflong
In all three
cases, we reduce an instance $\III=(D_1,D_2,k)$
with $V=V(D_1)=V(D_2)$ of \simdfas{} to an instance of the respective
fragment of \textsc{MinCSP}.
Our reductions preserve the parameter $k$
exactly and they introduce one
variable $v$ for every vertex in $V$ together with additional variables and
constraints for every arc $(u,v) \in A_1\cup A_2$.
\fi

\begin{theorem}\label{thm:pdWh}
\iflong
  $\mincsp{\d, \p}$ is \textnormal{\W{1}}-hard.
\fi
\ifshort
$\mincsp{\d, \p}$,  $\mincsp{\d, \o}$, and $\mincsp{\p, \o}$ are \textnormal{\W{1}}-hard.
\fi
\end{theorem}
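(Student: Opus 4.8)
I would prove all three \W{1}-hardness results by parameterized reductions from \simdfas{}, which is \W{1}-hard by Theorem~\ref{thm:sim-dfas-hard}. Let $\III = (D_1, D_2, k)$ be an instance of \simdfas{} with common vertex set $V = V(D_1) = V(D_2)$. In each case I would build an instance of \mincsp{\Gamma} (for $\Gamma \in \{\{\d,\p\}, \{\d,\o\}, \{\p,\o\}\}$) with the same parameter $k$, introducing one variable per vertex of $V$ and a small gadget of constraints for each arc, where the arcs common to $A(D_1)$ and $A(D_2)$ are encoded by a single soft constraint (deletable at unit cost) and the arcs of $A(D_1) \setminus A(D_2)$ and $A(D_2) \setminus A(D_1)$ are encoded by arc-gadgets that are ``read'' by only one of the two cyclic-order structures. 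The key idea, consistent with the endpoint view in Table~\ref{tb:allen}, is that a constraint like $\d$ or $\o$ in \mincsp{\Gamma} simultaneously imposes constraints on the left endpoints and the right endpoints of the intervals, and these two endpoint-orders will serve as proxies for the two digraphs $D_1$ and $D_2$. Using Lemma~\ref{lem:bad-cycles} (parts 1, 2 and 3 respectively), an instance is satisfiable iff its arc-labelled mixed graph has no bad cycle, and I would design the gadgets so that bad cycles of one flavour correspond exactly to directed cycles in $D_1$ and bad cycles of the other flavour correspond exactly to directed cycles in $D_2$.

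Concretely, for $\mincsp{\d,\p}$ I would, for every arc $(u,v)$ in $A_1 \cap A_2$, add a single soft $\d$-constraint (or a short gadget) between $u$ and $v$; for every arc in $A_1 \setminus A_2$, use a gadget with a fresh helper vertex $h_{u,v}$ and constraints $h_{u,v}\,\p\,u$ and $h_{u,v}\,\d\,v$ (so that this gadget only contributes to ``right-endpoint'' cycles), and symmetrically for $A_2 \setminus A_1$, as sketched in Figure~\ref{fig:gadget-short}. By Lemma~\ref{lem:bad-cycles}.1, the bad cycles are those with all $\p$-arcs in one direction and no two $\d$-arcs meeting head-to-head; I would verify that such bad cycles project precisely onto directed cycles of $D_1$ (via one endpoint) or $D_2$ (via the other endpoint), so that a set of $k$ deleted soft constraints destroys all bad cycles iff the corresponding $k$ arcs form a simultaneous feedback arc set. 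The cases $\mincsp{\d,\o}$ and $\mincsp{\p,\o}$ are analogous, using Lemma~\ref{lem:bad-cycles}.2 and~\ref{lem:bad-cycles}.3 respectively: for $\{\d,\o\}$ the bad cycles have all $\d$-arcs one way and all $\o$-arcs one way (two independent direction constraints, one per digraph), and for $\{\p,\o\}$ the two flavours of bad cycle (directed $\o$-cycles, and cycles with all $\p$-arcs forward and $\o$-arcs suitably separated) again split cleanly between the two digraphs. Throughout, helper vertices and all non-soft constraints would be made crisp using Lemma~\ref{lem:crisp} / the crisp-support remark, so that the only deletions the solver can afford are the arc-encoding soft constraints, keeping the budget exactly $k$.

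The main obstacle I anticipate is getting the gadgets to correctly \emph{separate} the two digraphs: a bad cycle in the arc-labelled graph is a single combinatorial object, and I must ensure that no bad cycle mixes arc-gadgets of $D_1$-only arcs with arc-gadgets of $D_2$-only arcs in a way that creates a spurious obstruction not corresponding to any cycle in either $D_i$ — and conversely that every directed cycle of $D_i$ really does lift to a bad cycle. For $\{\d,\p\}$ this requires careful attention to the ``no $\d$-arcs head-to-head'' condition, which constrains how $\d$-gadgets may be traversed and is exactly what allows the left-endpoint structure and right-endpoint structure to be decoupled; for $\{\p,\o\}$ the interleaving condition on $\o$-arcs in the second type of bad cycle is the delicate point, since it forces a $\p$-arc between consecutive reversed $\o$-arcs, and I would exploit this to force the $D_2$-encoding through $\p$-segments. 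I would handle this by making the two endpoint-structures literally use disjoint ``tracks'' of helper vertices, and by a direct case analysis on which arc-labels a putative bad cycle can contain, showing it must stay within one track plus shared soft arcs. Once the separation is established, the forward and backward directions of the reduction are routine: a simultaneous feedback arc set of size $k$ yields a consistent assignment by topologically ordering both $D_i - X$ and placing left/right endpoints accordingly, and a deletion set of size $k$ for the \mincsp{} instance yields a simultaneous feedback arc set by reading off which shared soft arcs (and hence common arcs) were deleted. By Lemma~\ref{lem:impl-reduction} together with Lemma~\ref{lem:implementations}, these three hardness results then also propagate to $\mincsp{\m,\o}$, $\mincsp{\m,\d}$ and $\mincsp{\s,\f}$, completing the intractable side of Theorem~\ref{thm:classification}.
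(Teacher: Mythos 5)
Your proposal takes essentially the same route as the paper: a reduction from \simdfas{} with one variable per vertex, a helper variable per non-shared arc whose $\p$-arc (resp.\ $\o$-arc) orientation separates the two digraphs, a single $\d$-constraint for shared arcs, and correctness argued via the bad-cycle characterization of Lemma~\ref{lem:bad-cycles} with exactly the case analysis you outline (no head-to-head $\d$-arcs forces directed projections, opposite $\p$-directions forbid mixing the two tracks). The only deviations --- swapping which gadget serves $A_1\setminus A_2$ versus $A_2\setminus A_1$ (so your ``right-endpoint'' parenthetical should read ``left-endpoint''), and making all but one constraint per gadget crisp where the paper simply leaves every constraint soft and lets any deleted constraint of a gadget stand for the arc --- are immaterial to correctness.
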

\begin{proof}
\iflong
    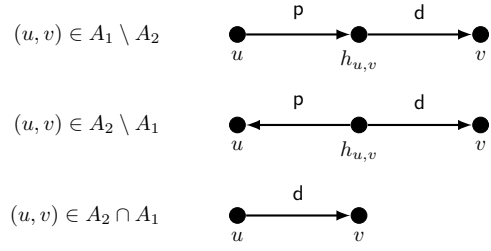
\begin{figure}
    \centering
      \begin{tikzpicture}[%
      scale=.8,
      transform shape,
      myvertex/.style={circ,scale=2}
      ]

      \draw
      node[myvertex,label=below:{$u$}] (u) {}
      node[myvertex, right of=u, label=below:{$h_{u,v}$}] (h1) {}
      node[myvertex, right of=h1, label=below:{$v$}] (v) {}
      ;
      
      \draw[thick,-latex] (u) -- node[midway, label=above:$\p$] () {} (h1);
      \draw[thick,-latex] (h1) -- node[midway, label=above:$\d$] () {}
      (v);
      \draw
      (u.west) +(-1cm, 0cm) node[anchor=east] (c1) {$(u,v) \in A_1\setminus A_2$}
      ;
      
      \draw
      (u) +(0cm,-1.5cm)  node[myvertex,label=below:{$u$}] (u1) {}
      node[myvertex, right of=u1, label=below:{$h_{u,v}$}] (h11) {}
      node[myvertex, right of=h11, label=below:{$v$}] (v1) {}
      ;
      
      \draw[thick,-latex] (h11) -- node[midway, label=above:$\p$] () {} (u1);
      \draw[thick,-latex] (h11) -- node[midway, label=above:$\d$] () {}
      (v1);
      \draw
      (u1.west) +(-1cm, 0cm) node[anchor=east] (c2) {$(u,v) \in A_2\setminus A_1$}
      ;
      
      \draw
      (u1) +(0cm,-1.5cm)  node[myvertex,label=below:{$u$}] (u2) {}
      node[myvertex, right of=u2, label=below:{$v$}] (v2) {}
      ;
      
      \draw[thick,-latex] (u2) -- node[midway, label=above:$\d$] () {} (v2);
      \draw
      (u2.west) +(-1cm, 0cm) node[anchor=east] (c2) {$(u,v) \in A_2\cap A_1$}
      ;

    \end{tikzpicture}
    \caption{The variables and constraints introduced in the
      construction of the proof of Theorem~\ref{thm:pdWh} for every
      arc $(u,v)$ of the \simdfas{} instance.}
    \label{fig:w1-pd}
  \end{figure}
\fi
  \ifshort
 We present a proof outline for  $\mincsp{\d, \p}$; the other two cases are similar.
  \fi
  Let $\III=(D_1,D_2,k)$
  with $V=V(D_1)=V(D_2)$ be an instance of \simdfas{}.
  We construct an instance $\III'$ of $\mincsp{\d,
    \p}$ with the same parameter $k$ and where we have one variable $v$ for
  every vertex in $V$. Additionally, we introduce the following variables and
  constraints.
   For every arc $(u,v) \in (A_1\cup
  A_2)\setminus (A_1\cap A_2)$, we add a fresh variable $h_{u,v}$.
  Moreover, for every arc $(u,v) \in
  A_1\cup A_2$, we distinguish the following cases, which are illustrated in Figure\iflong~\ref{fig:w1-pd}\fi\ifshort~\ref{fig:gadget-short}\fi:
  \begin{itemize}
  \item If $(u,v) \in A_1\setminus A_2$, then we add the constraints
    $u \p h_{u,v}$ and $h_{u,v} \d  v$.
  \item If $(u,v) \in A_2\setminus A_1$, then we add the constraints:
    $h_{u,v} \p u$ and $h_{u,v} \d  v$.
  \item If $(u,v) \in A_1\cap A_2$, then we add the constraint
    $u \d v$.

  \end{itemize}
  This completes the construction of $\III'$, which is clearly a
  polynomial-time reduction. \iflong It remains to show that $\III$ and $\III'$
  are equivalent. \fi Bad cycles in this case
  are cycles with $\p$-arcs in the same direction and
  no $\d$-arcs meeting head-to-head by  
  Lemma~\ref{lem:bad-cycles}.\ifshort The proof of the equivalence of
  $\III$ and $\III'$ is now relatively straightforward after observing
  that every directed cycle in $D_i$ gives rise to a bad
  cycle in $\III'$ but this is not the case for cycles that use at
  least one arc from $A_1\setminus A_2$ and at least one arc from
  $A_2\setminus A_1$.
  \fi
\iflong  
\smallskip

\noindent
  {\bf Forward direction.}  Let $X \subseteq A_1\cup A_2$
  be a solution for $\III$. We claim that any set $X'$ that for every arc in $X'$ contains an
  arbitrary constraint introduced for the arc is a solution
  for $\III'$. Suppose for a contradiction that this is not the case and
  let $C'$ be a bad cycle of $\III'-X'$ witnessing this.
  Let $C$ be the sequence of arcs corresponding to
  $C'$ in $(V,A_1\cup A_2)$. If $C$ is not a directed cycle, then it
  contains 2 consecutive arcs that are head-to-head and therefore
  also $C'$ contains 2 consequitive $d$-arcs that are head-to
  head, contradicting our assumption that $C'$ is a bad cycle.
  Therefore, $C$ is directed and contains an arc in
  $A_1\setminus A_2$ and an arc in $A_2\setminus A_1$. However, then
  $C'$ is not a bad cycle because it contains two $p$-arcs in opposite directions.
  Therefore, $C$ is a directed cycle in $D_1$ or in $D_2$ and this
  contradicts our assumption that $X$ is a solution for $\III$. 

  \smallskip

 \noindent
  {\bf Backward direction.}  Let $X'$ be a solution for
  $\III'$ and let $X\subseteq A_1\cup A_2$ be the set obtained from $X'$
  by taking every arc such that $X'$ contains a corresponding
  constraint. We claim that $X$ is a solution for $\III$. Suppose for a
  contradiction that this is not the case and let $C$ be a cycle in
  $D_i-X$ witnessing this. Then, the corresponding cycle $C'$ in $\III'$ must be a bad
  cycle and this contradicts our assumption that $X'$ is a solution for $\III$.
  \fi
\end{proof}
\ifshort
\fi
\iflong
\begin{theorem}\label{thm:doWh}
  $\mincsp{\d, \o}$ is \textnormal{\W{1}}-hard.
\end{theorem}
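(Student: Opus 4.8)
The plan is to mirror the proof of Theorem~\ref{thm:pdWh}: I would give a parameterized reduction from \simdfas{}, which is \W{1}-hard by Theorem~\ref{thm:sim-dfas-hard}, preserving the parameter. Starting from an instance $\III=(D_1,D_2,k)$ with $V=V(D_1)=V(D_2)$ and $A_i=A(D_i)$, I construct an instance $\III'$ of $\mincsp{\d,\o}$ that has one variable $v$ for every $v\in V$, one fresh variable $h_{u,v}$ for every arc $(u,v)\in(A_1\cup A_2)\setminus(A_1\cap A_2)$, and the following constraints: for $(u,v)\in A_1\setminus A_2$ the constraints $u \o h_{u,v}$ and $h_{u,v} \d v$; for $(u,v)\in A_2\setminus A_1$ the constraints $h_{u,v} \o u$ and $h_{u,v} \d v$; and for $(u,v)\in A_1\cap A_2$ the constraint $u \d v$. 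This is exactly the gadget used for Theorem~\ref{thm:pdWh} with every $\p$-constraint replaced by an $\o$-constraint, so the reduction is computable in polynomial time and keeps the parameter equal to $k$.

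The correctness argument would go through the second item of Lemma~\ref{lem:bad-cycles}, which says that a bad cycle of $\csp{\d,\o}$ is one in which all $\d$-arcs agree in orientation and all $\o$-arcs agree in orientation. Every arc gadget contributes exactly one $\d$-arc, oriented towards the head of the corresponding arc, so if a cycle of $G_{\III'}$ is oriented so that its $\d$-arcs are aligned, it must traverse every gadget it uses from the tail-end to the head-end of the corresponding arc (using that each $h_{u,v}$ has degree two); hence the closed walk $C$ it induces on $(V,A_1\cup A_2)$ is a directed cycle. For such a $C$ a gadget of an arc of $A_1\setminus A_2$ contributes an $\o$-arc aligned with the orientation, whereas a gadget of an arc of $A_2\setminus A_1$ contributes an $\o$-arc opposite to it; so if $C$ used arcs of both kinds the cycle would contain $\o$-arcs pointing both ways and, again by the second item of Lemma~\ref{lem:bad-cycles}, would not be bad. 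I would conclude that the bad cycles of $\III'$ are exactly the directed cycles of $D_1$ and of $D_2$ (in particular, a cycle built only from shared gadgets is a directed $\d$-cycle, which is a bad cycle and corresponds to a directed cycle present in both $D_1$ and $D_2$). The two directions then follow as in Theorem~\ref{thm:pdWh}: for the forward direction, given a \simdfas{}-solution $X\subseteq A_1\cup A_2$ I pick one gadget constraint per arc of $X$ to form $X'$, and since deleting either constraint of a non-shared gadget leaves $h_{u,v}$ of degree at most one while deleting the single constraint of a shared gadget deletes the edge $uv$, no cycle of $G_{\III'}-X'$ can route through a gadget of an arc of $X$, so a bad cycle of $G_{\III'}-X'$ would lift to a directed cycle of $D_i-X$, a contradiction; for the backward direction, given a solution $X'$ of $\III'$ I set $X$ to be the set of arcs having a gadget constraint in $X'$ (so $\abs{X}\le\abs{X'}\le k$) and observe that any directed cycle of $D_i-X$ lifts to a bad cycle of $G_{\III'}-X'$.

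I expect the main obstacle—and the only step that genuinely differs from the $\{\d,\p\}$ reduction of Theorem~\ref{thm:pdWh}—to be verifying that a cycle mixing arcs of $A_1\setminus A_2$ and of $A_2\setminus A_1$ is \emph{not} bad, so that it creates no spurious obstruction: for $\{\d,\p\}$ this used the fact that two oppositely-oriented $\p$-arcs make a cycle satisfiable, whereas here I must argue directly from the second item of Lemma~\ref{lem:bad-cycles} that once the $\d$-arcs pin down a common orientation the $\o$-arcs coming from the two gadget types point in opposite directions. The remaining verifications—that the helper vertices let a single constraint deletion destroy a whole gadget, and that cycles of $G_{\III'}$ correspond to cycles of $(V,A_1\cup A_2)$—are routine transcriptions of the corresponding parts of the proof of Theorem~\ref{thm:pdWh}.
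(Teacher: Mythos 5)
Your reduction is correct and takes essentially the same route as the paper: a parameter-preserving reduction from \simdfas{} with a per-arc gadget consisting of one $\d$-constraint and one $\o$-constraint through a degree-two helper variable, and correctness argued via case (2) of Lemma~\ref{lem:bad-cycles} by showing that bad cycles of $G_{\III'}$ correspond exactly to directed cycles of $D_1$ or of $D_2$. The paper's gadget is oriented slightly differently (it uses $u \d h_{u,v}$ together with $v \o h_{u,v}$ for $A_1\setminus A_2$ and $h_{u,v} \o v$ for $A_2\setminus A_1$, rather than your $\p\mapsto\o$ substitution in the gadget of Theorem~\ref{thm:pdWh}), but both gadgets encode the two graphs by the relative orientation of the $\o$-arcs and the argument is the same.
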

\begin{proof}
    \begin{figure}
    \centering
    \begin{tikzpicture}[%
      scale=.8,
      transform shape,
      myvertex/.style={circ,scale=2}
      ]

      \draw
      node[myvertex,label=below:{$u$}] (u) {}
      node[myvertex, right of=u, label=below:{$h_{u,v}$}] (h1) {}
      node[myvertex, right of=h1, label=below:{$v$}] (v) {}
      ;
      
      \draw[thick,-latex] (u) -- node[midway, label=above:$\d$] () {} (h1);
      \draw[thick,-latex] (v) -- node[midway, label=above:$\o$] () {} (h1);
      \draw
      (u.west) +(-1cm, 0cm) node[anchor=east] (c1) {$(u,v) \in A_1\setminus A_2$}
      ;
      
      \draw
      (u) +(0cm,-1.5cm)  node[myvertex,label=below:{$u$}] (u1) {}
      node[myvertex, right of=u1, label=below:{$h_{u,v}$}] (h11) {}
      node[myvertex, right of=h11, label=below:{$v$}] (v1) {}
      ;
      
      \draw[thick,-latex] (u1) -- node[midway, label=above:$\d$] () {} (h11);
      \draw[thick,-latex] (h11) -- node[midway, label=above:$\o$] () {} (v1);
      \draw
      (u1.west) +(-1cm, 0cm) node[anchor=east] (c2) {$(u,v) \in A_2\setminus A_1$}
      ;
      
      \draw
      (u1) +(0cm,-1.5cm)  node[myvertex,label=below:{$u$}] (u2) {}
      node[myvertex, right of=u2, label=below:{$v$}] (v2) {}
      ;
      
      \draw[thick,-latex] (u2) -- node[midway, label=above:$\d$] () {} (v2);
      \draw
      (u2.west) +(-1cm, 0cm) node[anchor=east] (c2) {$(u,v) \in A_2\cap A_1$}
      ;

    \end{tikzpicture}
    \caption{The variables and constraints introduced in the
      construction of the proof of Theorem~\ref{thm:doWh} for every
      arc $(u,v)$ of the \simdfas{} instance.}
    \label{fig:w1-do}
  \end{figure}
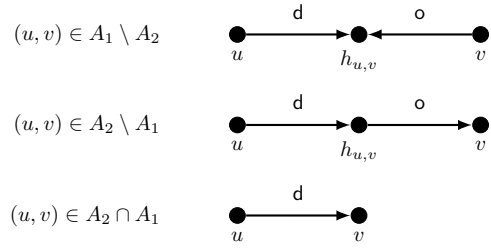

Let $\III=(D_1,D_2,k)$
with $V=V(D_1)=V(D_2)$ be an instance of \simdfas{}.
  We construct an instance $\III'$ of $\mincsp{\d,\o}$ with the same parameter $k$ and where we have one variable $v$ for
  every vertex in $V$. Additionally, we introduce the following variables and
  constraints.
 For every arc $(u,v) \in (A_1\cup
  A_2)\setminus (A_1\cap A_2)$, we add the variable $h_{u,v}$.
  Moreover, for
  every arc $(u,v) \in A_1\cup A_2$, we distinguish the following
  cases, which are illustrated in Figure~\ref{fig:w1-do}:
  \begin{itemize}
  \item If $(u,v) \in A_1\setminus A_2$, we add the constraints
    $u \d h_{u,v}$ and $v \o h_{u,v}$.
  \item If $(u,v) \in A_2\setminus A_1$, we add the constraints
    $u \d h_{u,v}$ and $h_{u,v} \o v$.
  \item If $(u,v) \in A_1\cap A_2$, we add the constraint
    $u \d v$.
  \end{itemize}
  This completes the construction of $\III'$, which is clearly a
  polynomial-time reduction. It remains to show that $\III$ and $\III'$
  are equivalent. Bad cycles here are
  cycles with all $\d$-arcs in the same direction and 
  all $\o$-arcs in the same direction 
  (the directions of a $\d$-arc and an $\o$-arc may differ) by Lemma~\ref{lem:bad-cycles}.

\smallskip

\noindent
  {\bf Forward direction.}  Let $X \subseteq A_1\cup A_2$
  be a solution for $\III$. We claim that any set $X'$ that for every arc in $X$ contains
  an
  arbitrary constraint introduced for the arc is a solution
  for $\III'$. Suppose for a contradiction that this is not the case and
  let $C'$ be a bad cycle of $\III'-X'$ witnessing this.
  Let $C$ be the sequence of arcs corresponding to
  $C'$ in $(V,A_1\cup A_2)$. If $C$ is not a directed cycle, then it
  contains 2 arcs in opposite directions, which implies that $C'$ contains 2 $\d$-arcs
  in opposite directions and therefore contradicts our assumption that $C'$ is
  a bad cycle.
  Therefore, $C$ is directed. Suppose now that $C$ contains an arc in
  $A_1\setminus A_2$ and an arc in $A_2\setminus A_1$. However, then
  $C'$ is not a bad cycle because it contains two $\o$-arcs in opposite directions.
  Therefore, $C$ is a directed cycle in $D_1$ or in $D_2$,
  contradicting our assumption that $X$ is a solution for $\III$. 

  \smallskip

 \noindent
  {\bf Backward direction.} Let $X'$ be a solution for
  $\III'$ and let $X\subseteq A_1\cup A_2$ be the set obtained from $X'$
  by taking every arc such that $X'$ contains a corresponding
  constraint. We claim that $X$ is a solution for $\III$. Suppose for a
  contradiction that this is not the case and let $C$ be a cycle in
  $D_i-X$ witnessing this. Then, the corresponding cycle $C'$ in $\III'$ must be a bad
  cycle, a contradiction to our assumption that $X'$ is a solution for
  $\III$.
\end{proof}

\begin{theorem}\label{thm:poWh}
  $\mincsp{\p, \o}$ is 
  \textnormal{\W{1}}-hard.
\end{theorem}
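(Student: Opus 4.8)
The plan is to reduce from \simdfas{}, which is \textnormal{\W{1}}-hard by Theorem~\ref{thm:sim-dfas-hard}, following the same template as the proofs of Theorems~\ref{thm:pdWh} and~\ref{thm:doWh}. Given an instance $\III=(D_1,D_2,k)$ with $V=V(D_1)=V(D_2)$, I would build an instance $\III'$ of $\mincsp{\p,\o}$ with the same parameter $k$, one variable $v$ per vertex of $V$, one helper variable $h_{u,v}$ for every arc $(u,v)\in(A_1\cup A_2)\setminus(A_1\cap A_2)$, and the following per-arc gadgets: for $(u,v)\in A_1\setminus A_2$ the constraints $u\p h_{u,v}$ and $v\o h_{u,v}$; for $(u,v)\in A_2\setminus A_1$ the constraints $h_{u,v}\p u$ and $v\o h_{u,v}$ (i.e.\ the same $\o$-constraint but the $\p$-arc reversed); and for $(u,v)\in A_1\cap A_2$ the single constraint $u\o v$. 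The $\o$-arcs act as a common ``backbone'' (their direction being the same in the two gadget types), while the orientation of the $\p$-arc is what distinguishes arcs of $D_1$ from arcs of $D_2$. As in the two earlier proofs, deleting one arc of the \simdfas{} instance corresponds to deleting an arbitrary one of the at most two constraints of its gadget, so the parameter is preserved exactly.

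For correctness I would use the description of bad cycles for $\mincsp{\p,\o}$ from Lemma~\ref{lem:bad-cycles}. In one direction, a directed cycle of $D_1$ (or $D_2$) projects to a cycle of $G_{\III'}$ that is either a directed cycle of $\o$-arcs (when it runs entirely through common arcs) or a cycle in which all $\p$-arcs point the same way around and every reverse $\o$-arc is immediately flanked, inside its own gadget, by a $\p$-arc; in both cases Lemma~\ref{lem:bad-cycles} certifies it as bad, so any solution of $\III'$ must hit every directed cycle of $D_1$ and of $D_2$. Conversely, take a bad cycle $C'$ of $G_{\III'}$ minus a candidate solution $X'$ and project it to a closed walk $C$ in $D_1\cup D_2$; since $X'$ contains a constraint of the gadget of every arc $C$ ``uses'', $C$ avoids the corresponding solution $X$ of \simdfas{}. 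I would then argue that $C$ is a directed cycle inside $D_1$ or inside $D_2$: if $C$ used arcs of both $A_1\setminus A_2$ and $A_2\setminus A_1$ then, since a bad cycle has all its $\p$-arcs pointing the same way, the two kinds of gadget would have to be traversed in opposite senses, producing two $\p$-arcs in opposite directions and hence, by Claim~\ref{clm:oppposite-p-arcs}, contradicting bad-ness; and if $C$ were not a directed cycle, a change of direction in $C$ would produce two consecutive reverse $\o$-arcs with no $\p$-arc between them, again forbidden in a bad cycle by Lemma~\ref{lem:bad-cycles}. Together with Theorem~\ref{thm:sim-dfas-hard} and Lemma~\ref{lem:cycles} this yields \textnormal{\W{1}}-hardness of $\mincsp{\p,\o}$.

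The hard part will be making the gadgets — especially the single-constraint gadget for arcs in $A_1\cap A_2$ — fit together correctly. In the proofs for $\mincsp{\d,\p}$ and $\mincsp{\d,\o}$, any two $\o$-arcs (resp.\ $\d$-arcs) pointing ``head to head'' already force the cycle to be non-bad, which makes the analysis of non-directed projected cycles immediate. Here the family of bad cycles is considerably more permissive: forward $\o$-arcs coexist freely with forward $\p$-arcs, and reverse $\o$-arcs are tolerated as long as they are separated by $\p$-arcs. Consequently the argument has to control the global cyclic direction of the $\p$-arcs and the spacing of the reverse $\o$-arcs simultaneously, and one must verify that cycles mixing the two digraphs through common arcs do not accidentally fall back into the bad family. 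This is also why the naive encoding — $D_1$ by $\o$-arcs and $D_2$ by $\p$-arcs — fails: a cycle using one $\o$-arc of $D_1$ and one $\p$-arc of $D_2$ would already be bad even though it is a directed cycle of neither digraph, so the distinguishing mechanism must be placed so that such ``cross'' cycles always exhibit a pair of oppositely oriented $\p$-arcs or a pair of adjacent reverse $\o$-arcs.
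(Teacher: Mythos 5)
Your overall strategy (reduce from \simdfas{}, use the bad-cycle characterization of Lemma~\ref{lem:bad-cycles}) matches the paper, but your gadgets do not work: the backward direction of the equivalence fails. Consider the \simdfas{} instance on $V=\{v_1,v_2,v_3\}$ with $(v_1,v_2),(v_2,v_3)\in A_1\cap A_2$ and $(v_3,v_1)\in A_2\setminus A_1$, and $k=0$; here $D_2$ has a directed cycle, so it is a no-instance. Your construction produces the constraints $v_1\o v_2$, $v_2\o v_3$, $h\p v_3$, $v_1\o h$, which are satisfiable: take $v_1=[0,2]$, $v_2=[1,4]$, $v_3=[3,5]$, $h=[1,2.5]$. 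The reason is visible in the cycle structure: traversing the projected cycle in the orientation that makes the $\p$-arc forward, the two common arcs become \emph{consecutive reverse $\o$-arcs with no $\p$-arc between them}, so the cycle is not bad. In general, any directed cycle of $D_2$ that contains two adjacent arcs of $A_1\cap A_2$ together with at least one arc of $A_2\setminus A_1$ projects to a satisfiable cycle under your encoding, so a solution of $\III'$ need not hit all directed cycles of $D_2$. The flaw is exactly in the interaction you flagged as "the hard part'': your mechanism for distinguishing $D_1$-arcs from $D_2$-arcs is to flip the $\p$-arc, and this forces $D_2$-cycles to be read in the orientation where the shared $\o$-arcs point backwards.

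The paper's construction avoids this by never letting common arcs become reverse $\o$-arcs in the relevant traversal. It keeps the $\p$-arc oriented \emph{along} the original arc in both gadget types and instead distinguishes $D_1$ from $D_2$ by where the single reverse $\o$-arc sits inside the gadget: for $(u,v)\in A_1\setminus A_2$ one uses $u\o h^1$, $h^1\p h^2$, $v\o h^2$ (forward $\o$, forward $\p$, reverse $\o$ when read from $u$ to $v$), for $(u,v)\in A_2\setminus A_1$ one uses $h^1\o u$, $h^1\p h^2$, $h^2\o v$ (reverse $\o$, forward $\p$, forward $\o$), and a common arc becomes two forward $\o$-arcs through a helper. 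Then every directed cycle of $D_1$ or $D_2$, read in its own direction, has all $\p$-arcs forward and every reverse $\o$-arc adjacent to a $\p$-arc of its own gadget, hence is bad; while a directed "cross'' cycle using arcs of both $A_1\setminus A_2$ and $A_2\setminus A_1$ exhibits two reverse $\o$-arcs separated only by forward $\o$-arcs (coming from common arcs), hence is not bad. So the cross-cycles are killed by the spacing of reverse $\o$-arcs rather than by opposite $\p$-arcs, which is what lets the common arcs be encoded harmlessly. To repair your proof you would need to redesign the $A_2\setminus A_1$ (and common-arc) gadgets along these lines rather than by reversing the $\p$-arc.
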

\begin{proof}

  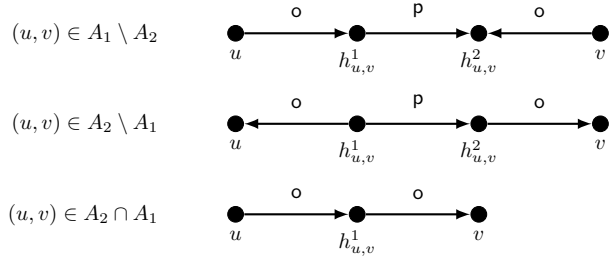
\begin{figure}
    \centering
    \begin{tikzpicture}[%
      scale=.8,
      transform shape,
      myvertex/.style={circ,scale=2}
      ]

      \draw
      node[myvertex,label=below:{$u$}] (u) {}
      node[myvertex, right of=u, label=below:{$h_{u,v}^1$}] (h1) {}
      node[myvertex, right of=h1, label=below:{$h_{u,v}^2$}] (h2) {}
      node[myvertex, right of=h2, label=below:{$v$}] (v) {}
      ;
      
      \draw[thick,-latex] (u) -- node[midway, label=above:$\o$] () {} (h1);
      \draw[thick,-latex] (h1) -- node[midway, label=above:$\p$] () {} (h2);
      \draw[thick,-latex] (v) -- node[midway, label=above:$\o$] () {} (h2);
      \draw
      (u.west) +(-1cm, 0cm) node[anchor=east] (c1) {$(u,v) \in A_1\setminus A_2$}
      ;
      
      \draw
      (u) +(0cm,-1.5cm)  node[myvertex,label=below:{$u$}] (u1) {}
      node[myvertex, right of=u1, label=below:{$h_{u,v}^1$}] (h11) {}
      node[myvertex, right of=h11, label=below:{$h_{u,v}^2$}] (h12) {}
      node[myvertex, right of=h12, label=below:{$v$}] (v1) {}
      ;
      
      \draw[thick,-latex] (h11) -- node[midway, label=above:$\o$] () {} (u1);
      \draw[thick,-latex] (h11) -- node[midway, label=above:$\p$] () {} (h12);
      \draw[thick,-latex] (h12) -- node[midway, label=above:$\o$] () {} (v1);
      \draw
      (u1.west) +(-1cm, 0cm) node[anchor=east] (c2) {$(u,v) \in A_2\setminus A_1$}
      ;
      
      \draw
      (u1) +(0cm,-1.5cm)  node[myvertex,label=below:{$u$}] (u2) {}
      node[myvertex, right of=u2, label=below:{$h_{u,v}^1$}] (h21) {}
      node[myvertex, right of=h21, label=below:{$v$}] (v2) {}
      ;
      
      \draw[thick,-latex] (u2) -- node[midway, label=above:$\o$] () {} (h21);
      \draw[thick,-latex] (h21) -- node[midway, label=above:$\o$] () {} (v2);
      \draw
      (u2.west) +(-1cm, 0cm) node[anchor=east] (c2) {$(u,v) \in A_2\cap A_1$}
      ;

    \end{tikzpicture}
    \caption{The variables and constraints introduced in the
      construction of the proof of Theorem~\ref{thm:poWh} for every
      arc $(u,v)$ of the \simdfas{} instance.}
    \label{fig:w1-po}
  \end{figure}

Let $\III=(D_1,D_2,k)$ 
with $V=V(D_1)=V(D_2)$ be an instance of \simdfas{}.
  We construct an instance $\III'$ of $\mincsp{\p,
    \o}$ with the same parameter $k$ and where we have one variable $v$ for
  every vertex in $V$. Additionally, we introduce the following variables and
  constraints. For every arc $(u,v) \in (A_1\cup A_2)\setminus (A_1\cap
  A_2)$, we add two variables $h_{u,v}^1$ and
  $h_{u,v}^2$. Finally, for every arc $(u,v) \in A_1\cup A_2$, we
  distinguish the following cases, which are illustrated in Figure~\ref{fig:w1-po}:
  \begin{itemize}
  \item If $(u,v) \in A_1\setminus A_2$, we add the constraints
    $u \o h_{u,v}^1$, $h_{u,v}^1 \p h_{u,v}^2$, and $v \o
    h_{u,v}^2$.
  \item If $(u,v) \in A_2\setminus A_1$, we add the constraints
    $h_{u,v}^1 \o u$, $h_{u,v}^1 \p h_{u,v}^2$, and $h_{u,v}^2 \o
    v$.
  \item If $(u,v) \in A_1\cap A_2$, we add one fresh
    variable $h_{u,v}$ and the constraints
    $u \o h_{u,v}$ and $h_{u,v} \o v$.    
  \end{itemize}
  This completes the construction of $\III'$ and this is clearly a
  polynomial-time reduction. It remains to show that $\III$ and $\III'$
  are equivalent. Before the proof, we recall the definition of bad cycles for the case of
  $\mincsp{\p, \o}$ from Lemma~\ref{lem:bad-cycles}: bad cycles are (1)
  directed cycles of $\o$-arcs and (2) 
  cycles with all $\p$-arcs in the same (forward) direction with
  every consecutive pair of $\o$-arcs in the reverse direction
  separated by a $\p$-arc. Note that (2) implies
  that directed cycles of $\p$-arcs are bad.
    
\smallskip

   \noindent
  {\bf Forward direction.} 
  Let $X \subseteq A_1\cup A_2$
  be a solution for $\III$. We claim that every set $X'$ that for every arc $a$ in $X$ contains any
  of the constraints introduced for $a$ is a solution
  for $\III'$. Suppose to the contrary that this is not the case and
  let $C'$ be a bad cycle in $\III'-X'$ witnessing this.
  Let $C$ be the sequence of arcs corresponding to
  $C'$ in $(V,A_1\cup A_2)$. We first show that $C$ is a directed cycle.
  Otherwise, $C$
  contains 2 arcs $a$ and $a'$ that are pointing in opposite directions. If
  $a\in A_1\setminus A_2$ and $a' \in A_2\setminus A_1$ (or vice
  versa), then $C'$ contains two $p$-arcs in opposite directions,
  contradicting our assumption that $C'$ is a bad cycle. Otherwise, $a
  \in A_1\cap A_2$ and $a' \in A_1\cup A_2$, but then $C'$ contains
  two consecutive $o$-arcs in reverse direction;
  this contradicts once again
  our assumption that $C'$ is bad.
  This shows that $C$ is a directed cycle.
  Therefore, if $C$ is also in $D_1$ or in $D_2$, we
  obtain a contradiction to our assumption that $X$ is a solution for
  $\III$.
  Otherwise, $C$ uses at least one arc from
  $A_1\setminus A_2$ and at least one arc from $A_2\setminus A_1$.
  Hence, $C$ contains an arc $(u_1,v_1)\in A_1\setminus A_2$ and an
  arc $(u_2,v_2)\in A_2\setminus A_1$ such that $C$ contains only arcs from
  $A_1\cap A_2$ in the path from $a_1$ to $a_2$ in $C$. But then,
  $C'$ contains the constraints $v_1 \o h_{u_1,v_1}^2$ and
  $h_{u_2,v_2}^1 \o u_2$. Moreover, between these two constraints,
  $C'$ contains only forward $o$ constraints, which contradicts our
  assumption that $C'$ is a bad cycle in $\III'$.

\smallskip

\noindent
  {\bf Backward direction.} Let $X'$ be a solution for
  $\III'$ and let $X\subseteq A_1\cup A_2$ be the set obtained from $X'$
  by taking every arc such that $X'$ contains a corresponding
  constraint. We claim that $X$ is a solution for $\III$. Suppose for a
  contradiction that this is not the case and let $C$ be a cycle in
  $D_i-X$ witnessing this. Then, the corresponding cycle $C'$ in $\III'$ is a bad
  cycle contradicting our assumption that $X'$ is a solution for $\III'$.
\end{proof}
\fi

\iflong
\section{Approximation} \label{sec:approx}

Our previous results imply that \mincsp{\A} is W[1]-hard.
We will now take a brief look on the approximability of this problem, and show that
\mincsp{\rr} is not approximable in polynomial time within any constant when $\rr \in \A \setminus \{\equiv\}$,
but approximable in fpt time within a factor of~$2$.
We refer the reader to the textbook by Ausiello et. al.~\cite{ausiello2012complexity} 
for more information about polynomial-time approximability, and 
to the surveys by Marx~\cite{Marx:tcj2008} and Feldmann et al.~\cite{Feldmann:etal:algorithms2020}
for an introduction to parameterized approximability.
We will exclusively consider minimisation problems in the sequel.
Formally speaking, a {\em minimisation problem} is a 3-tuple $(X, sol, cost)$ where

\begin{enumerate}
\item
$X$ is the set of instances. 

\item
For an instance $x \in X$, $sol(x)$ is the set of feasible solutions
for $x$, the length of each $y \in sol(x)$ is polynomially bounded in $\norm{x}$, and it is decidable in
polynomial time (in $\norm{x}$) whether $y \in sol(x)$ holds for given $x$ and $y$. 

\item
For an instance
$x \in X$ with feasible solution $y$, $cost(x, y)$ is a polynomial-time computable positive integer. 
\end{enumerate}

The objective of the minimisation problem $(X, sol, cost)$ is to find an solution 
$z$ of minimum cost for a given instance $x \in X$ 
i.e. a solution $z$ with $cost(x, z) = opt(x)$ where $opt(x) = \min \{cost(x, y) \mid y \in sol(x)\}$.
If $y$ is a solution for the instance $x$, then the performance ratio of $y$ is defined as 
$cost(x, y)/opt(x)$. For $c \geq 1$, we say that an algorithm is a factor-$c$ 
{\em approximation} algorithm if it always computes a solution with performance ratio at most $c$
i.e. the cost of solutions returned by the algorithm is at most $c$ times greater than the minimum cost.
One often replaces the constant $c$ with a function of instance $x$
but this is not needed in this paper.

The starting point of the polynomial-time hardness-of-approximation 
results is the following consequence
of the Unique Games Conjecture (UGC) of Khot~\cite{khot2002power}.

\begin{theorem}[Corollary~1.2~in~\cite{Guruswami:etal:sicomp2011}]
  Assuming UGC, it is \NP-hard to $c$-approximate
  \textsc{Directed Feedback Arc Set} for any $c > 1$.
\end{theorem}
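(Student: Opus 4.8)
This is the statement of Corollary~1.2 of~\cite{guruswami2008beating} (the SICOMP version being~\cite{Guruswami:etal:sicomp2011}), so the proof I would reproduce is theirs, organised in two layers: a routine gap-translation from \textsc{Maximum Acyclic Subgraph} to \textsc{Directed Feedback Arc Set}, resting on the genuinely hard ``Beating the Random Ordering'' theorem that \textsc{Maximum Acyclic Subgraph} is approximation resistant under the UGC of Khot~\cite{khot2002power}. So the plan is: (i) invoke the approximation-resistance theorem as a black box; (ii) turn it into a super-constant gap for \textsc{Directed Feedback Arc Set}; (iii) recall where the real difficulty in (i) lies and cite it.

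\textbf{Steps (i)--(ii): the gap translation.} For a directed graph $G=(V,E)$ with $|E|=m$, write $\mathrm{mas}(G)$ for the maximum number of arcs in an acyclic subgraph and $\mathrm{fas}(G)=m-\mathrm{mas}(G)$ for the minimum feedback arc set size. A uniformly random total order of $V$ keeps each arc with probability exactly $\tfrac12$, so $\mathrm{mas}(G)\ge m/2$ always. The theorem of~\cite{guruswami2008beating} that I would invoke states: for every $\varepsilon>0$, assuming the UGC it is \NP-hard to distinguish graphs with $\mathrm{mas}(G)\ge(1-\varepsilon)m$ (equivalently $\mathrm{fas}(G)\le\varepsilon m$) from graphs with $\mathrm{mas}(G)\le(\tfrac12+\varepsilon)m$ (equivalently $\mathrm{fas}(G)\ge(\tfrac12-\varepsilon)m$). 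Now fix any constant $c>1$ and set $\varepsilon:=\tfrac{1}{4c}$, so that $c\varepsilon=\tfrac14<\tfrac12-\varepsilon$. If a polynomial-time factor-$c$ approximation algorithm $\mathcal{A}$ for \textsc{Directed Feedback Arc Set} existed, running it on an instance $G$ from the promise problem above and outputting its value $v$ would decide the promise: in the \textsc{Yes} case $v\le c\cdot\mathrm{fas}(G)\le c\varepsilon m<(\tfrac12-\varepsilon)m$, while in the \textsc{No} case $v\ge\mathrm{fas}(G)\ge(\tfrac12-\varepsilon)m$, so thresholding $v$ at $(\tfrac12-\varepsilon)m$ separates the two cases. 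This contradicts the \NP-hardness (under the UGC) of the promise problem. As $c$ was an arbitrary constant, the statement follows.

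\textbf{Step (iii): the genuine obstacle.} The approximation resistance of \textsc{Maximum Acyclic Subgraph} is the hard part and I would cite~\cite{guruswami2008beating} for it, but the shape of the argument is the standard UGC-hardness template applied to the $2$-ary ordering predicate ``$x<y$'', whose random-ordering value is $\tfrac12$: from a Unique Games instance $\Phi$ one builds a ``dictatorship test'' directed graph in which each variable of $\Phi$ is blown up into a cloud of positions indexed by a product space, with arcs between positions of adjacent clouds sampled from a carefully designed distribution. Completeness is easy: ``dictator'' orderings (order each cloud by a single coordinate) keep a $(1-o(1))$-fraction of arcs when $\Phi$ is almost satisfiable. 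The difficulty is soundness — showing that any ordering of the blown-up graph all of whose coordinates have small influence keeps at most a $(\tfrac12+o(1))$-fraction of arcs. Since an ordering is not a Boolean function, plain Fourier analysis does not apply; the innovation of~\cite{guruswami2008beating} is to embed the ordering into real-valued functions and prove an \emph{invariance principle for orderings} (in the spirit of Mossel--O'Donnell--Oleszkiewicz) that replaces the discrete product space by correlated Gaussians, reducing soundness to a ``noise stability of thresholds'' computation in Gaussian space whose value is exactly $\tfrac12$. The Unique Games structure then forces any globally good ordering to exhibit an influential coordinate in a constant fraction of clouds, from which one decodes a labeling of $\Phi$ of value $\Omega_{\varepsilon}(1)$. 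I expect this invariance-principle-for-orderings together with the Gaussian threshold estimate to be the main obstacle; the composition with Unique Games and the gap translation of Step~(ii) are entirely standard.
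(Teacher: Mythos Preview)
Your proposal is correct and follows the argument of the cited source~\cite{Guruswami:etal:sicomp2011}. Note, however, that the paper under review does not prove this theorem at all: it is stated purely as a citation of an external result and used as a black box, so there is no ``paper's own proof'' to compare against. Your sketch of the gap translation and the invariance-principle-based soundness argument is an accurate summary of Guruswami et al.'s original proof, which is exactly what one would cite here.
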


By Lemma~\ref{lem:bad-cycles},
an instance $I$ of $\csp{\rr}$ for $\rr \in \A \setminus \{\e, \m\}$ is consistent if and only if the arc-labelled graph $G_I$ has no directed cycles.
Hence, $\mincsp{\rr}$ is equivalent to DFAS in these cases.
By Lemma~\ref{lem:impl-reduction}, the relation $\m$ implements $\p$ so there is a polynomial-time reduction from $\mincsp{\p}$
to $\mincsp{\m}$ that preserves approximation factors.
Thus, we conclude the following.

\begin{observation} \label{thm:poly-apx-hard}
  Let $\Gamma \subseteq \A$ be interval constraint language.
  If $\Gamma \neq \{\e\}$, then,
  assuming UGC, it is \NP-hard to $c$-approximate
  \mincsp{\Gamma} for any $c > 1$.
\end{observation}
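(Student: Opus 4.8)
The plan is to deduce Observation~\ref{thm:poly-apx-hard} as an essentially immediate consequence of the combinatorial characterisation of consistency (Lemma~\ref{lem:bad-cycles}), the known inapproximability of \textsc{Directed Feedback Arc Set} under the UGC, and the approximation-preserving nature of the implementation reduction (Lemma~\ref{lem:impl-reduction}). The case analysis splits $\A \setminus \{\e\}$ into three groups: the relations for which $\mincsp{\rr}$ \emph{is} \textsc{DFAS} on the nose, the relation $\m$ which is handled via $\p$, and there is nothing else to check since $\A = \{\p,\m,\o,\d,\s,\f,\e\}$.

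First I would argue that for every $\rr \in \A \setminus \{\e, \m\}$, $\mincsp{\rr}$ is polynomial-time equivalent to \textsc{DFAS} in a way that preserves approximation ratios. For this, note that Lemma~\ref{lem:bad-cycles}, specialised to a single relation (take $\rr_1 = \rr_2 = \rr$, or invoke the directed-cycle cases directly), shows that an instance $\III$ of $\csp{\rr}$ is satisfiable if and only if the arc-labelled graph $G_\III$ — which, since there is only one label, is just an ordinary directed graph with one arc per constraint — contains no directed cycle. Hence a set $X$ of constraints makes $\III$ satisfiable exactly when the corresponding arc set makes $G_\III$ acyclic, so $\mincsp{\rr}$ and \textsc{DFAS} have literally the same optimal value on corresponding instances and a $c$-approximate solution to one yields a $c$-approximate solution to the other. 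The reverse direction (encode a \textsc{DFAS} instance $G$ as a $\csp{\rr}$ instance with one $\rr$-constraint per arc) is trivial. Combined with the UGC-hardness of $c$-approximating \textsc{DFAS} for every $c > 1$ (the cited Corollary~1.2 of~\cite{Guruswami:etal:sicomp2011}), this gives the claim for all $\rr \in \A \setminus \{\e, \m\}$, and in particular for any $\Gamma$ containing such an $\rr$ as well, since crisp-free extra relations only make the problem harder — more precisely, $\mincsp{\rr}$ trivially reduces to $\mincsp{\Gamma}$ with ratio preserved by just not using the other relations.

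Next I would dispatch the remaining relation $\m$. Here I invoke Lemma~\ref{lem:implementations}.1: $\{\m\}$ implements $\p$. The implementation $\CCC_\p = \{x_1 \m y, y \m x_2\}$ replaces each $\p$-constraint by two $\m$-constraints, an auxiliary variable, and — crucially for approximation — the property that violating the implemented constraint costs exactly one deleted constraint. One should check that the standard implementation reduction (Proposition~5.2 of~\cite{kim2021solving}, our Lemma~\ref{lem:impl-reduction}) is in fact a polynomial-time, approximation-ratio-preserving reduction from $\mincsp{\p}$ to $\mincsp{\m}$: replacing every $\p$-constraint by its two-$\m$-constraint gadget multiplies neither the optimum nor the cost of any solution. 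This is the one place where I would be slightly careful, since Lemma~\ref{lem:impl-reduction} as stated only asserts preservation of FPT/W[1]-hardness, not of approximation ratios; however the same construction plainly preserves optima up to equality when gadgets are ``single-fault'' as here, so a one-sentence remark suffices. Since $\mincsp{\p}$ is UGC-hard to $c$-approximate for all $c>1$ by the previous paragraph, so is $\mincsp{\m}$, and hence so is $\mincsp{\Gamma}$ for any $\Gamma \ni \m$.

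Finally I would assemble the two cases: given any $\Gamma \subseteq \A$ with $\Gamma \neq \{\e\}$, pick some $\rr \in \Gamma \setminus \{\e\}$; either $\rr = \m$ (second paragraph) or $\rr \in \A \setminus \{\e,\m\}$ (first paragraph), and in both cases $\mincsp{\rr}$, and a fortiori $\mincsp{\Gamma}$, is UGC-hard to $c$-approximate for every $c > 1$. The main (and frankly only) obstacle is the bookkeeping verification that the implementation-based reduction of Lemma~\ref{lem:impl-reduction} preserves approximation factors rather than merely the decision complexity; everything else is a direct translation through Lemma~\ref{lem:bad-cycles} and a citation. I would therefore phrase the proof as two short paragraphs: one establishing the equivalence with \textsc{DFAS} for the relations other than $\m$, and one handling $\m$ through $\p$ with the remark about ratio preservation, followed by the one-line combination.
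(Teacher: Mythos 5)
Your proposal is correct and follows essentially the same route as the paper: for every $\rr \in \A \setminus \{\e,\m\}$ the bad-cycle characterization (Lemma~\ref{lem:bad-cycles}) makes $\mincsp{\rr}$ coincide with \textsc{Directed Feedback Arc Set}, whose constant-factor inapproximability under the UGC is imported from the cited result, and the case $\m$ is handled via the implementation of $\p$ by $\m$. Your extra remark that the implementation reduction preserves approximation factors (not just FPT/W[1] status) is exactly the point the paper also relies on, stated there with the same brevity, so there is nothing to add.
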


We continue with
fpt-approximability of minimization problems. 
For a minimization problem $(X, sol, cost)$ parameterized by the solution
size $k$, a factor-$c$ {\em fpt-approximation algorithm} 
is an algorithm that 
\begin{enumerate}
\item
takes a tuple $(x,k)$ as input where $x \in X$ is an instance and $k \in \naturals$,

\item
either returns that there is no solution of size at most $k$ or returns a solution of size at
most $c \cdot k$, and

\item
runs in time $f(k) \cdot \norm{x}^{O(1)}$ where $f: \naturals \rightarrow \naturals$ 
is some computable function.
\end{enumerate}

\begin{theorem}[\cite{lokshtanov2021fpt}] \label{thm:sdfas-approx}
  \subdfas is $2$-approximable in $O^*(2^{O(k)})$ time. 
\end{theorem}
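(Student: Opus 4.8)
The statement is quoted verbatim from \cite{lokshtanov2021fpt}; what follows is how I would reconstruct a proof of it. The first step is to remove the arc-versus-vertex distinction. Subdivide every arc $e=(a,b)$ of $G$ into a path $a\to m_e\to b$, make the original vertices undeletable (take $k+2$ false twins of each) and each subdivision vertex $m_e$ deletable, and declare $m_e$ \emph{special} exactly when $e\in R$. A set of $k$ arcs meeting every directed cycle through a red arc then corresponds bijectively, and size-preservingly, to a set of $k$ deletable vertices meeting every directed cycle through a special vertex, so it suffices to $2$-approximate \textsc{Subset Directed Feedback Vertex Set} (S-DFVS) in $2^{O(k)}n^{O(1)}$ time. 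For that I would run iterative compression over the vertex set: processing vertices one at a time and maintaining, for each prefix $G_i$, either a solution of size $\le 2k$ or a certificate that $\mathrm{opt}(G_i)>k$; the step is, given a solution $W$ of size $\le 2k+1$ for $G_{i+1}$, to produce one of size $\le 2k$ or reject.

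Next I would pass to the \emph{disjoint} compression instance in the usual way: guess $W_X:=W\cap Z^\star$ for a hypothetical optimum $Z^\star$ with $|Z^\star|\le k$ (there are $2^{O(k)}$ guesses), delete $W_X$, set $k':=k-|W_X|$, and let $W_Y:=W\setminus W_X$. Now $W_Y$ is itself an S-DFVS of $G':=G-W_X$ and is forbidden to delete, and I must find an S-DFVS $Z'$ of $G'$ with $Z'\cap W_Y=\emptyset$ and $|Z'|\le 2k'$ (or reject). Since $W_Y$ meets every bad cycle of $G'$, a bad cycle surviving in $G'-Z'$ must leave and re-enter $W_Y$ while touching a special vertex in between and staying outside $W_Y\cup Z'$ otherwise. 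Splitting each $w\in W_Y$ into an in-copy $w^-$ and an out-copy $w^+$ (and using a two-layer copy of the graph to enforce ``passes through a special vertex''), destroying all such cycles becomes a directed vertex multicut task with $O(k^2)$ terminal pairs on $G'-W_Y$.

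The crux is approximating this cut instance. General directed multicut has no constant-factor approximation, and the excerpt itself records that \textsc{DFAS} is not polynomial-time constant-approximable; so the whole point is to exploit the compression structure — every cycle we must kill is forced through the small forbidden set $W_Y$ and through a special vertex — to obtain an instance that \emph{does} admit a factor-$2$ cut. Here the factor-$2$ allowance is what buys single-exponential time: instead of the important-separator-plus-shadow-removal pipeline of \cite{chitnis2015directed} (which forces guessing a topological order of $W_Y$ and costs $k^{O(k)}$, or $2^{O(k^3)}$ in the bound quoted there), one works with the LP relaxation of the covering polytope ``hit every $W_Y$-to-$W_Y$ walk through a special vertex'': on the compression instance this relaxation has a half-integral / laminar-cut structure, so deterministically rounding the integral coordinates and branching on the $\le 2k'$ half-fractional ones produces, after $2^{O(k)}$ branches, a solution of size $\le 2k'$ disjoint from $W_Y$; the ``no disjoint solution of size $\le k'$'' verdict can be certified via a flow-augmentation \cite{Kim:etal:stoc2022} argument of the kind used for the exact algorithm in \cite{kim2022weighted}. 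Undoing the two compression layers multiplies the running time by $n\cdot 2^{O(k)}$ and preserves the ratio $2$.

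I expect the last paragraph to be the real obstacle: establishing a half-integrality (or an equivalent combinatorial $2$-approximability) statement for a \emph{directed} cut instance — something that is simply false without the parameterization — and running the accompanying branching inside a $2^{O(k)}$ budget. By contrast the arc-to-vertex reduction and the two iterative-compression layers are routine, and once the $2$-approximate subroutine for the disjoint compression instance is available the theorem follows by bookkeeping.
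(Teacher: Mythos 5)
This theorem is not proved in the paper at all: it is quoted, with a citation, from Lokshtanov et al.~\cite{lokshtanov2021fpt}, so there is no internal proof to compare your sketch against; the only meaningful comparison is with the cited work, and as a reconstruction your outline does not yet amount to a proof.

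The routine outer layers are fine (the arc-to-vertex reduction, iterative compression maintaining a factor-$2$ solution, and the $2^{O(k)}$ guess of the intersection with a hypothetical optimum), but there are two concrete gaps, and you have correctly located the second one yourself. First, recasting the disjoint compression step as ``a directed vertex multicut task with $O(k^2)$ terminal pairs'' already over-constrains the problem: hitting \emph{every} $W_Y$-to-$W_Y$ path whose interior avoids $W_Y$ and contains a special vertex is sufficient to destroy all bad cycles but not necessary, and the optimum of that multicut can exceed $2k'$ even when the disjoint instance has a solution of size $k'$. This is exactly the ordering issue that exact algorithms handle by guessing a topological order of $W_Y$ and solving a skew cut problem, and a factor-$2$ argument must confront it rather than absorb it into the reformulation. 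Second, the crux of your plan --- that the LP relaxation of this directed covering problem is half-integral/laminar on compression instances, so that rounding plus branching on at most $2k'$ half-coordinates yields a $2$-approximation in $2^{O(k)}$ time, with a flow-augmentation-based certificate on the ``no'' side --- is asserted, not argued. Directed multicut relaxations are not half-integral in general, nothing in the sketch shows that the compression structure restores this property, and no such statement can simply be cited; note also that the cited approximation result predates flow augmentation, which enters the present paper only through the exact weighted \subdfas algorithm of~\cite{kim2022weighted}, so invoking it for the rejection certificate is neither justified nor needed. Since this pivotal lemma is missing, the proposal does not establish the theorem.
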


\begin{theorem}
  \mincsp{\A} is $2$-approximable in $O^*(2^{O(k^3)})$ time
  and $4$-approximable in $O^*(2^{O(k)})$ time.
\end{theorem}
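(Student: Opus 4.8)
\textbf{Proof plan for Theorem~\ref{thm:fpt-approx-result}.}

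The plan is to reduce $\mincsp{\A}$ to \subdfas, losing a factor of $2$ in the solution size, and then invoke the known (exact or approximate) fpt algorithms for \subdfas. Concretely, first I would use Lemma~\ref{lem:implementations} to replace every $\o$-constraint by its implementation in $\{\s,\f\}$; by Lemma~\ref{lem:impl-reduction} (the approximation-factor-preserving direction of implementations, applied with care to track costs exactly) this does not change the optimum. Then every remaining relation in $\A$ is of the form $\{\p,\m,\s,\f,\d,\e\}$, and by Table~\ref{tb:allen} each such constraint $x\rr y$ is logically equivalent to a conjunction of \emph{exactly two} atomic point constraints of the form $u<v$ or $u=v$ on the endpoints $x^-,x^+,y^-,y^+$ (together with the omnipresent $x^-<x^+$, which we make crisp). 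So from the instance $(\III,k)$ I would build a point-algebra instance: introduce variables $x^-,x^+$ for each interval variable $x$, add crisp constraints $x^-<x^+$, and for each soft interval constraint $c=(x\rr y)$ record the (at most) two atomic $\{<,=\}$-constraints $a_1(c),a_2(c)$ it defines. Let $S$ be the multiset of all these atomic constraints, treated as an instance of $\mincsp{<,=}$ of size at most $2|C(\III)|$, and then reduce $\mincsp{<,=}\to\mincsp{\le,=}\to\subdfas$ using Lemma~\ref{lem:less-eq=sdfas} (the $<$/$=$ to \subdfas reduction already present in the excerpt) plus the trivial replacement of $\le$ by $<$ that never increases cost.

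The cost bookkeeping is the heart of the argument. One direction: if $X\subseteq C(\III)$ with $|X|\le k$ makes $\III-X$ satisfiable, then deleting, for each $c\in X$, both atomic constraints $a_1(c),a_2(c)$ from $S$ yields a satisfiable point-algebra instance with at most $2k$ deletions; hence there is a \subdfas solution of size at most $2k$. Other direction: if $Y$ is a set of at most $2k$ atomic constraints (equivalently arcs) with $S-Y$ satisfiable, define $X'=\{c\in C(\III): a_1(c)\in Y\text{ or }a_2(c)\in Y\}$; then $|X'|\le|Y|\le 2k$ and $\III-X'$ is satisfiable because the restriction of a satisfying assignment of $S-Y$ to the endpoint variables satisfies every interval constraint not in $X'$ (both of its defining atomic constraints survive). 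Thus the optimum of the \subdfas instance lies between $\mathrm{opt}(\III)$ and $2\,\mathrm{opt}(\III)$, in the following precise sense: if $\mathrm{opt}(\III)\le k$ then the \subdfas instance with budget $2k$ is a yes-instance, and from any \subdfas solution of size $s$ we extract an interval solution of size at most $s$.

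Now I would run, on the \subdfas instance with budget $2k$, either the exact algorithm of Chitnis et al.\ (Theorem~\ref{thm:sdfas-exact}), which runs in $O^*(2^{O((2k)^3)})=O^*(2^{O(k^3)})$ time and returns a solution of size at most $2k$ whenever one exists, yielding a factor-$2$ fpt-approximation; or the $2$-approximation algorithm of Lokshtanov et al.\ (Theorem~\ref{thm:sdfas-approx}), which runs in $O^*(2^{O(2k)})=O^*(2^{O(k)})$ time and returns a \subdfas solution of size at most $2\cdot 2k=4k$ when the optimum is at most $2k$, yielding a factor-$4$ fpt-approximation. In both cases, pulling the solution back through the chain of reductions (arc $\to$ atomic constraint $\to$ interval constraint, and undoing the $\o$-implementation, which only ever merges auxiliary-variable deletions back into the original $\o$-constraint at no extra cost) gives a set of interval constraints of the claimed size whose deletion makes $\III$ satisfiable; and if $\mathrm{opt}(\III)>k$ the \subdfas algorithm correctly reports infeasibility. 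The main obstacle I anticipate is purely in the details of cost-preservation: I must make sure the $\o$-implementation step and the $\{<,=\}$-to-$\subdfas$ steps are handled so that (i) no auxiliary variable forces an unavoidable deletion, and (ii) a single interval constraint is never ``double-counted'' — which is exactly why the reduction is allowed to inflate the budget from $k$ to $2k$ and why we cannot hope to do better than factor $2$ with this approach.
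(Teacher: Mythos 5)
Your proposal is correct and follows essentially the same route as the paper's proof: replace $\o$-constraints by their $\{\s,\f\}$-implementation, rewrite every constraint as a conjunction of at most two $\{<,=\}$ endpoint constraints, solve the resulting $\mincsp{<,=}$ instance with budget $2k$ via the reduction to \subdfas, and invoke either the exact algorithm (factor $2$, $O^*(2^{O(k^3)})$) or the $2$-approximation (factor $4$, $O^*(2^{O(k)})$), with the same cost bookkeeping in both directions. Your explicit inclusion of the crisp constraints $x^-<x^+$ is a welcome extra bit of care that the paper's write-up leaves implicit.
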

\begin{proof}
Let $(\III, k)$ be an instance of \mincsp{\A}.
Replace every constraint $x\{\o\}y$ by
its implementation in $\{\s,\f\}$
according to Lemma~\ref{lem:implementations}.
By Lemma~\ref{lem:impl-reduction}, this does not change
the cost of the instance.
Using Table~\ref{tb:allen}, we can rewrite all constraints of $\III'$
as conjunctions of two atomic constraints of the form $x < y$ and $x = y$.
Let $S$ be the set of all atomic constraints.
Note that we can view $S$ as an instance of $\mincsp{<,=}$.
The algorithm is then to check whether $(S, 2k)$ is 
a yes-instance of $\mincsp{<,=}$ by casting it as an instance
of \subdfas according to Lemma~\ref{lem:less-eq=sdfas}.
For correctness, observe that deleting $k$ constraints in $\III'$
corresponds to deleting at most $2k$ constraints in $S$,
hence if $(\III',k)$ is a yes-instance, then so is $(S, 2k)$.
On the other hand, if deleting any subset of $2k$ constraints 
from $S$ does not make it consistent, then deleting any subset
of $k$ constraints from $\III'$ cannot make it consistent either.

The parameter $k$ is unchanged so 
the time complexity follows from 
Theorems~\ref{thm:sdfas-exact}~and~\ref{thm:sdfas-approx}, respectively.
\end{proof}

\textsc{Weighted MinCSP} is a generalization of \textsc{MinCSP}
where every constraint in the instance $(\III, k)$ is assigned 
a positive integer weight, and we also have a weight budget $W$.
The goal is now to find a set $X$ of at most $k$ constraints
with total weight at most $W$ such that $I - X$ is consistent.
The parameter is still $k$ -- the size of deletion set.
This strictly generalizes \textsc{MinCSP} by
assigning unit weights to all constraints and setting the
weight budget $W = k$.
\textsc{Weighted Subset Directed Feedback Arc Set} has been
shown to be in \FPT~\cite{kim2022weighted} using 
recently introduced directed flow augmentation technique.
While the running time is not explicitly stated in the corresponding theorem of~\cite{kim2022weighted},
it is analysed in Section~5 of~\cite{kim2022weighted}.
\begin{theorem}[\cite{kim2022weighted}]
\textsc{Weighted Subset Directed Feedback Arc Set} can be solved in time
$O^*(2^{O(k^8 \log k)})$.
\end{theorem}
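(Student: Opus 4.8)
The plan is to reconstruct the directed flow-augmentation argument of~\cite{kim2022weighted}, routed through \textsc{Bundled Cut} with pairwise linked deletable arcs in the same way the \lsmfas{} algorithm of Section~\ref{sec:fpt-algs} is; the only new ingredient relative to that section is that \textsc{Bundled Cut} will be asked for an $st$-cut of bounded \emph{size and weight}, a requirement the flow-augmentation toolbox supports natively, which is exactly what lets the weight budget $W$ ride along for free. First I would apply iterative compression, adding the arcs of $G$ one at a time and maintaining at each stage a set $Z_0$ of at most $k+1$ arcs whose removal destroys all red cycles (the previous solution plus the new arc always works); here $Z_0$ serves only as a witness set of bounded \emph{cardinality}, so its weight is irrelevant, and this is what keeps iterative compression compatible with the weight budget.

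Next I would branch on $Z\cap Z_0$ ($2^{O(k)}$ choices) for a hypothetical optimal $Z$, delete it, decrease $k$ and $W$, and declare $Y:=Z_0\setminus Z$ crisp. Subdividing the arcs of $Y$ produces a set $T$ of at most $k+1$ terminal vertices through which every red cycle of the residual instance must pass. As in the \lsmfas{} algorithm I would then guess a cyclic/topological arrangement $\sigma\colon T\to\naturals$ compatible with the acyclic graph obtained from $G-Z$ by contracting its black strongly connected components ($2^{O(k\log k)}$ choices), since red-cycle-freeness is equivalent to the existence of such an arrangement (an analogue of Lemma~\ref{lem:lsmfas-struct} in the special case with no long arcs).

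For each fixed $\sigma$ I would build a \textsc{Bundled Cut} instance essentially as in the proof of Theorem~\ref{thm:lsmfas-algo}: make $O(k)$ copies of each vertex of $G$ indexed by the positions of $\sigma$, add ``downward'' arcs between copies of a common vertex to enforce the pairwise-linked-deletable-arcs property, replace each arc $e$ of $G$ by a bundle $B_e$ of $O(k)$ arcs among the copies encoding how $e$ may be traversed relative to $\sigma$ with every arc of $B_e$ inheriting the weight $w(e)$, and wire fresh $s,t$ to the terminal copies so that $st$-paths correspond exactly to red cycles realized through a forbidden configuration. The two correctness directions would follow the pattern of Theorem~\ref{thm:lsmfas-algo}: an optimal $Z$ yields the cut $\bigcup_{e\in Z}B_e$, which violates at most $|Z|$ bundles and has total weight at most $w(Z)$; conversely a bounded-size bounded-weight $st$-cut reassembles into a feedback arc set via the copy-index-equals-line-position correspondence. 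Invoking Theorem~4.1 of~\cite{kim2023flowaugmentation} with maximum bundle size $d=O(k)$ gives $O^*(2^{O(k^4d^4\log(kd))})=O^*(2^{O(k^8\log k)})$ per call, and the $2^{O(k)}\cdot2^{O(k\log k)}$ branchings together with the $O(n)$ compression rounds are absorbed into the $O^*(\cdot)$.

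The main obstacle is the \textsc{Bundled Cut} construction and its correctness for the \emph{cyclic} feedback-arc-set setting rather than a plain $st$-cut: one must keep all bundles of size $O(k)$ (so that the exponent stays $k^8$), verify the pairwise-linked-deletable-arcs hypothesis, and---most delicately---prove that ``$G-Z$ has no red cycle'' matches the existence of the $st$-cut exactly in both directions, which forces one through the order-theoretic characterization of red-cycle-freeness and a careful argument that the guessed $\sigma$ together with the cut recovers a consistent arrangement. Once that skeleton is in place the weight budget needs no extra work, since each bundle arc carries the weight of its originating arc and \textsc{Bundled Cut} already optimizes over bounded total weight.
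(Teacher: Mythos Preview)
The paper does not give its own proof of this theorem: it is stated as a citation of~\cite{kim2022weighted}, with the remark that the running time ``is analysed in Section~5 of~\cite{kim2022weighted}''. So there is no in-paper argument to compare against.

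That said, your reconstruction is faithful to the approach of~\cite{kim2022weighted} and to the pattern this very paper uses in Theorem~\ref{thm:lsmfas-algo} for the strictly more general \lsmfas{} problem (Subset DFAS being the special case with no long arcs). Iterative compression, terminalization by subdivision, guessing the relative order of the $k+1$ terminals in the black-SCC quotient of $G-Z$, and then a reduction to \textsc{Bundled Cut} with pairwise linked deletable arcs and bundle size $O(k)$ is exactly the intended route, and plugging $d=O(k)$ into Theorem~\ref{thm:probbc} yields the $2^{O(k^8\log k)}$ bound. Your handling of weights---carrying $w(e)$ onto every arc of $B_e$ and relying on the weighted variant of \textsc{Bundled Cut}---is also the right idea; just note that the statement of Theorem~\ref{thm:probbc} as quoted in this paper is the unweighted one, so you would need to invoke the weighted form from~\cite{kim2023flowaugmentation,kim2022weighted} explicitly. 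The extra branching on $Z\cap Z_0$ is harmless (the paper's \lsmfas{} proof skips it and works directly with the subdivided set as terminals), and the ordering you guess is one-dimensional here rather than the pair $(\sigma_1,\sigma_2)$ needed for \lsmfas, which only simplifies matters.
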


The reduction in Lemma~\ref{lem:less-eq=sdfas} readily
implies that \textsc{Weighted \mincsp{<,=}} is in \FPT and we obtain the following.

\begin{observation}
  \textsc{Weighted \mincsp{\A}} is $2$-approximable in time $O^*(2^{O(k^8 \log k)})$.
\end{observation}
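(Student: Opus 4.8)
The plan is to re-run the argument behind the unweighted factor-$2$ approximation proved above, now carrying the constraint weights through the reductions from $\mincsp{\A}$ to $\mincsp{<,=}$ and thence to \subdfas. The only lossy step is that a single interval constraint may decompose into two atomic $\{<,=\}$-constraints on endpoints, and this is exactly what forces a factor $2$ both in the number of deletions and in the weight budget, matching the ratio in the statement.

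First I would preprocess the input weighted instance $(\III,k)$ with weight budget $W$. For every constraint $x\{\o\}y$ of weight $w$, delete it and insert in its place the implementation of $\o$ in $\{\s,\f\}$ from Lemma~\ref{lem:implementations} (two constraints and one fresh variable), giving weight $w$ to each of the two new constraints. Inspecting that implementation directly shows that (i) deleting either one of its two constraints already destroys it, while the surviving constraint can be satisfied by placing the fresh variable appropriately no matter how the primary variables are set, and (ii) if the primary variables satisfy $x\{\o\}y$ then the fresh variable fits for free; hence this replacement preserves, constraint by constraint, both the optimal cost and the optimal deletion size, so it suffices to approximate Weighted $\mincsp{\p,\m,\d,\s,\f,\e}$. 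By Table~\ref{tb:allen}, each of these relations is a conjunction of at most two atomic constraints of the form $p<q$ or $p=q$ over the endpoint variables $\{\,x^-,x^+ : x\in V(\III)\,\}$.

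Next I would build a weighted instance $S$ of $\mincsp{<,=}$: include a crisp constraint $x^-<x^+$ for every variable $x$ (crisp constraints being simulated in the usual way, e.g.\ by a weight larger than the budget), and for every remaining interval constraint $c$ of weight $w_c$ include each of its at most two defining atomic constraints in $S$ with weight $w_c$. As noted above, Weighted $\mincsp{<,=}$ is solvable in time $O^*(2^{O(k^8\log k)})$ via the weighted form of the reduction of Lemma~\ref{lem:less-eq=sdfas} to Weighted \subdfas and the algorithm of~\cite{kim2022weighted}; I would invoke it on $S$ with deletion bound $2k$ and weight bound $2W$, and, if it succeeds, return the set $X'$ of interval constraints that have at least one defining atomic constraint in the returned deletion set $Y$. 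For correctness, if $\III-X$ is consistent for some $X$ with $|X|\le k$ and $w(X)\le W$, then fixing a witnessing interval assignment $\varphi$, the atomic constraints of $S$ violated by $\varphi$ all belong to constraints of $X$, so deleting precisely those is feasible for $S$ within the bounds $2|X|\le 2k$ and $2w(X)\le 2W$. Conversely, if $Y$ is returned then $S-Y$ is consistent; the crisp constraints force $x^-<x^+$ everywhere, so the endpoint values describe genuine intervals, and every interval constraint outside $X'$ keeps all of its atomic constraints in $S-Y$ and is therefore satisfied --- so $\III-X'$ is consistent, with $|X'|\le|Y|\le 2k$ and $w(X')\le w(Y)\le 2W$, the latter because each $c\in X'$ contributes at least $w_c$ to the weight of $Y$.

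The running time is that of Weighted \subdfas at parameter $2k$, i.e.\ $O^*(2^{O((2k)^8\log(2k))})=O^*(2^{O(k^8\log k)})$, as claimed. The one step that needs care --- and the reason the guarantee is a $2$-approximation rather than exact --- is the weight accounting: a constraint such as $x\{\s\}y$ can be violated either through its left-endpoint equality or through its right-endpoint inequality, and the reduction cannot know in advance which, so both atomic constraints must be independently deletable and weighted; this forces the factor $2$ in both the deletion count and the weight, and one should take care not to silently assume a tighter encoding into Weighted \subdfas.
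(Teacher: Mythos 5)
Your proposal is correct and takes essentially the same route as the paper: replace $\o$ by its $\{\s,\f\}$ implementation, decompose every remaining relation into at most two atomic $\{<,=\}$ endpoint constraints carrying the original weights, and solve the resulting weighted \mincsp{<,=} instance with budgets $2k$ and $2W$ via the weighted form of the reduction in Lemma~\ref{lem:less-eq=sdfas} and the Weighted \subdfas algorithm of~\cite{kim2022weighted}, giving the $O^*(2^{O(k^8\log k)})$ bound. You additionally spell out details the paper leaves implicit (the crisp $x^-<x^+$ constraints and the explicit weight accounting), and these are handled correctly.
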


\fi

\section{Discussion}
\label{sec:discussion}

We have initiated a study of the parameterized complexity of \textsc{MinCSP} for 
Allen's interval algebra.
We prove that \textsc{MinCSP} restricted to the relations in $\A$ exhibits a dichotomy:
\textsc{MinCSP}$(\Gamma)$ is either fixed-parameter tractable or \W{1}-hard when
$\Gamma \subseteq \A$.
Even though the restriction to the relations in $\A$ may seem severe, one should keep in mind
that a CSP instance over $\A$ is sufficient for representing {\em definite} information
about the relative positions of intervals. In other words, such an instance can be viewed as a data set of interval
information and the \textsc{MinCSP} problem can be viewed as a way of filtering out
erroneous information (that may be the result of
contradictory sources of information, noise in the
measurements, human mistakes etc.) Various ways of ``repairing'' inconsistent data sets of qualitative information have been thoroughly
discussed by many authors;
see, for instance, \cite{bertossi2004query,chomicki2005minimal,condotta2016sat}
and the
references therein.

Proving a full parameterized complexity classification for Allen's interval algebra
is hindered by a barrier: such a classification would settle the parameterized
complexity of \textsc{Directed Symmetric Multicut}, and this problem
is considered to be one of
the main open problems in the area of directed graph separation
problems~\cite{EibenRW22,kim2022weighted}.
This barrier comes into play even in very restricted cases: as an example, it is not difficult
to see that \textsc{MinCSP} for the two Allen relations
$(\f \: \cup \: \fii)$
and $(\f \: \cup \: \e)$ is equivalent to the \textsc{MinCSP} problem for the
two PA relations $\neq$ and $\leq$ and thus equivalent to 
\textsc{Directed Symmetric Multicut}.

One way of continuing this work without necessarily settling the parameterized
complexity of \textsc{Directed Symmetric Multicut} is to consider fpt approximability: it is known that
\textsc{Directed Symmetric Multicut} is $2$-approximable in fpt time~\cite{EibenRW22}.
Thus, a possible research direction is to analyse the fpt approximability for {\sc MinCSP}$(\Gamma)$
when $\Gamma$ is a subset of $2^{\A}$ or, more ambitiously, when $\Gamma$ is first-order definable in $\A$.
A classification
that separates the cases that are constant-factor fpt approximable from those that are not
may very well be easier to obtain than mapping the \FPT/\W{1} borderline.
There is at least one technical reason for optimism here, and we introduce some definitions to outline
this idea.
\iflong
An $n$-ary relation $R$
is said
to have a {\em primitive positive definition} (pp-definition) in
a structure $\Gamma$ if 

 \[R(x_1, \ldots, x_{n}) \equiv \exists x_{n+1}, \ldots,
x_{n+n'} \colon R_1(\mathbf{x}_1) \wedge \ldots \wedge
R_m({\mathbf{x}_m}),\]
where each $R_i \in \Gamma \cup \{=\}$ and
each $\mathbf{x}_i$ is a tuple of variables over $x_1,\ldots, x_{n}$, $x_{n+1}, \ldots, x_{n+ n'}$ with the same length as the arity of $R_i$.
If, in addition, each $R_i
\in \Gamma$ then we say that $R$ has an {\em equality-free primitive
  positive definition} (efpp-definition) in $\Gamma$.
\fi
\ifshort
An $n$-ary relation $R$
is said
to have a {\em primitive positive definition} (pp-definition) in
a structure $\Gamma$ if it can be first-order defined by only using the relations
in $\Gamma$ together with the equality relation and the operators
existential quantification and conjunction.
If the equality relation is not needed then we say that $R$ has an {\em equality-free primitive
  positive definition} (efpp-definition) in $\Gamma$.
  \fi
Bonnet et al.~\cite[Lemma 10]{Bonnet:etal:esa2016} have shown that constant-factor fpt approximability is preserved by efpp-definitions~\cite{Bonnet:etal:esa2016}, i.e.
if $R$ is efpp-definable in $\Gamma$ and \textsc{MinCSP}$(\Gamma)$ is constant-factor fpt approximable,
then \textsc{MinCSP}$(\Gamma \cup \{R\})$ is also constant-factor fpt approximable.
Bonnet et al. focus on Boolean domains, but it is clear
that their Lemma~10 works for problems with arbitrarily large domains.
Lagerkvist~\cite[Lemma~5]{Lagerkvist:ismvl2020} has shown that in most cases one can use pp-definitions instead of efpp-definitions. This
implies that the standard algebraic approach via polymorphisms (that for instance underlies the full complexity classification of finite-domain CSPs~\cite{Bulatov:focs2017,Zhuk:jacm2020}) often becomes applicable when analysing constant-factor
fpt approximability. One should note that, on the other, the exact complexity of {\sc MinCSP} is only preserved
by much more limited constructions such as {\em proportional implementations}~(see Section~5.2.~in~\cite{kim2021solving}).
We know from the literature that this may be an important difference: it took several years after Bonnet et al.'s
classification of approximability before
the full classification of exact parameterized complexity
was obtained using a much more complex framework~\cite{Kim:etal:FA3}.
It is also worth noting that parameterized approximation results for \textsc{MinCSP}
may have very interesting consequences, e.g.
\cite{Bonnet:etal:esa2016} resolved the parameterized complexity of 
\textsc{Even Set}, which was a long-standing open problem.

\bibliographystyle{plain}
\bibliography{references}

\end{document}